\newtheorem{theorem}{Theorem}
\newtheorem{remark}[theorem]{Remark}
\newtheorem{corollary}[theorem]{Corollary}
\newtheorem{example}[theorem]{Example}
\newtheorem{definition}[theorem]{Definition}
\newtheorem{lemma}[theorem]{Lemma}
\newif \ifconf \conftrue
\newif\iffinal\finalfalse
\renewcommand{\poly}{\mathrm{poly}}
\newcommand*{\da@rightarrow}{\mathchar"0\hexnumber@\symAMSa 4B }
\newcommand*{\da@leftarrow}{\mathchar"0\hexnumber@\symAMSa 4C }
\newcommand*{\xdashrightarrow}[2][]{%
  \mathrel{%
    \mathpalette{\da@xarrow{#1}{#2}{}\da@rightarrow{\,}{}}{}%
  }%
}
\newcommand{\xdashleftarrow}[2][]{%
  \mathrel{%
    \mathpalette{\da@xarrow{#1}{#2}\da@leftarrow{}{}{\,}}{}%
  }%
}
\newcommand*{\da@xarrow}[7]{%
  \sbox0{$\ifx#7\scriptstyle\scriptscriptstyle\else\scriptstyle\fi#5#1#6\m@th$}%
  \sbox2{$\ifx#7\scriptstyle\scriptscriptstyle\else\scriptstyle\fi#5#2#6\m@th$}%
  \sbox4{$#7\dabar@\m@th$}%
  \dimen@=\wd0 %
  \ifdim\wd2 >\dimen@
    \dimen@=\wd2 %
  \fi
  \count@=2 %
  \def\da@bars{\dabar@\dabar@}%
  \@whiledim\count@\wd4<\dimen@\do{%
    \advance\count@\@ne
    \expandafter\def\expandafter\da@bars\expandafter{%
      \da@bars
      \dabar@ 
    }%
  }%
  \mathrel{#3}%
  \mathrel{%
    \mathop{\da@bars}\limits
    \ifx\\#1\\%
    \else
      _{\copy0}%
    \fi
    \ifx\\#2\\%
    \else
      ^{\copy2}%
    \fi
  }%
  \mathrel{#4}%
}
\newcommand{\semi}[1]{\xdashrightarrow{#1}}
\newcommand{\problemx}[3]{
	\vspace{0.2cm}
\par\noindent\underline{\sc#1}\par\nobreak\vskip.2\baselineskip
\begingroup\clubpenalty10000\widowpenalty10000
\setbox0\hbox{\bf INPUT:\ }\setbox1\hbox{\bf QUESTION:\ }
\dimen0=\wd0\ifnum\wd1>\dimen0\dimen0=\wd1\fi
\vskip-\parskip\noindent
\hbox to\dimen0{\box0\hfil}\hangindent\dimen0\hangafter1\ignorespaces#2\par
\vskip-\parskip\noindent
\hbox to\dimen0{\box1\hfil}\hangindent\dimen0\hangafter1\ignorespaces#3\par
\endgroup
	\vspace{-0.2cm}
}
\newcounter{claimcounter}
\newtheorem{subclaim}{Subclaim}{}
\newtheorem{claim}{Claim}{}
\newcommand\sg[1]{\todo[inline,size=\scriptsize]{#1 - \textbf{Stefan}}}
\newcommand\mh[1]{\todo[inline,size=\scriptsize]{#1 - \textbf{Mathieu}}}
\renewcommand{\A}{\mathcal{A}}
\newcommand{\B}{\mathcal{B}}
\renewcommand{\C}{\mathcal{C}}
\renewcommand{\G}{\mathcal{G}}
\renewcommand{\R}{\mathbb{R}}
\newcommand{\Z}{\mathbb{Z}}
\newcommand{\N}{\mathbb{N}}
\newcommand{\Const}{\mathsf{Consts}}
\newcommand{\macro}{Z}
\newcommand{\Op}{\mathsf{Op}}
\renewcommand\sg[1]{}
\renewcommand\mh[1]{}
\newif\ifdraft\drafttrue
\title{Reachability in two-parametric timed automata with one parameter is EXPSPACE-complete}
\author{Stefan G\"oller}
\author{Stefan G\"oller\thanks{University of Kassel\newline
School of Electrical Engineering and Computer Science\newline
34121 Kassel, Germany\newline
\url{stefan.goeller@uni-kassel.de}\newline
Stefan G\"oller was supported by the Agence nationale de la recherche
grant no. ANR-17-CE40-0010.} 
\and Mathieu Hilaire\thanks{
	Laboratoire Spécification et Vérification (LSV)\newline
ENS Paris-Saclay\newline
CNRS\newline
Université Paris-Saclay\newline
91190 Gif-sur-Yvette, France\newline
\url{mathieu.hilaire@lsv.fr}}
}
\begin{document}

\maketitle

\begin{abstract}
	Parametric timed automata (PTA) have been introduced by Alur,
	Henzinger, and Vardi as an extension of timed automata in
	which clocks can be compared against parameters. 
	The reachability problem asks for the existence of an assignment of the parameters to
	the non-negative integers such that reachability holds
	in the underlying timed automaton.
	The reachability problem for PTA is long known to be undecidable,
	already over three parametric clocks.

	A few years ago, Bundala and Ouaknine proved that for PTA over two parametric clocks 
	and one parameter the 
	reachability problem is decidable and also showed a lower bound for the complexity class
	$\mathsf{PSPACE}^{\mathsf{NEXP}}$. 
	Our main result is that the reachability problem for two-parametric timed automata with one
	parameter is $\mathsf{EXPSPACE}$-complete.

	Our contribution is two-fold.

	For the $\EXPSPACE$ lower bound, inspired by \cite{GHOW10,GollerL13},
	we make use of deep results from complexity theory,
	namely a serializability characterization of $\EXPSPACE$ (in turn
	based on Barrington's Theorem) 
	and a logspace translation of numbers in chinese
	remainder representation to binary representation due to Chiu, Davida, and Litow. 
	It is shown that with small PTA over two parametric clocks
	and one parameter one can simulate serializability computations.

	For the $\EXPSPACE$ upper bound, we first give a careful
	exponential time reduction from PTA over two parametric clocks 
	and one parameter to a (slight subclass of) parametric one-counter automata over 
	one parameter
	based on a minor adjustment of a construction due to Bundala and Ouaknine.
	For solving the reachability problem for parametric one-counter automata with one parameter,
	we provide a series of techniques to partition a fictitious run into several 
	carefully chosen subruns 
	that allow us to prove that it is sufficient to consider a parameter value of 
	exponential magnitude only. This allows us to show a doubly-exponential upper
	bound on the value of the only parameter of a PTA over two parametric clocks
	and one parameter.
	We hope that extensions of our techniques lead to finally establishing decidability 
of the long-standing open problem of reachability in 
parametric timed automata with two parametric clocks
(and arbitrarily many parameters) and, if decidability holds, 
	determinining its precise computational complexity.
\end{abstract}

\medskip

\noindent
{\em Acknowledgments.}
We thank Benedikt Bollig and Karin Quaas for discussions and feedback.

\newpage
\tableofcontents
\newpage

\section{Introduction}

\noindent
{\bf Background.} In the 1990's {\em timed automata} have been introduced by Alur and Dill \cite{alur1994theory}.
They extend finite automata by clocks that can be compared against integer constants
and provide a popular formalism to reason about the behavior of real-time systems
with desirable algorithmic properties; for intance the reachability/emptiness
problem is decidable and in fact $\PSPACE$-complete~\cite{AlurCD90}.

For a more general means to specify the behavior of under-specified systems, such as 
embedded systems, Alur, Henzinger and Vardi~\cite{AHV93-stoc} have introduced 
{\em parametric timed automata (PTA)} only a few years later. 
Here, the clocks can additionally be compared
against parameters that can take unspecified non-negative integer values.
Towards the verification of safety properties, or loosely speaking ruling out the existence of 
an execution to a bad state, the {\em reachability problem for PTA} in turn asks for 
the existence of an assignment of the parameters to the
non-negative integers such that reachability holds in the resulting timed automaton.

A clock of a PTA that is being compared to at least one parameter is called {\em parametric}.
On the negative side, it has been shown in~\cite{AHV93-stoc} 
that already for PTA that contain {\em three parametric clocks} reachability is
undecidable --- even in the presence of one parameter~\cite{BenesBLS15}.
On the positive side however,  Alur, Henzinger and Vardi have shown
 in~\cite{AHV93-stoc} that reachability
is decidable for PTA that contain only {\em one parametric clock}, yet by an algorithm whose running
time is non-elementary.

Reachability in PTA with two or less parametric clocks has 
not attracted much attention for many years, up until recently.

For PTA over {\em one parametric clock},
Bundala and Ouaknine have shown a first elementary complexity upper bound
for the reachability problem; it is shown to be
$\mathsf{NEXP}$-hard and in $2\mathsf{NEXP}$~\cite{BundalaO17}. 
The matching $\mathsf{NEXP}$ upper bound has been proven by Bene\v{s} et al. in~\cite{BenesBLS15}
(also in the continuous time setting),
we refer to~\cite{BQS-lmcs19} for an alternative proof by
Bollig, Quaas and Sangnier using alternating 2-way automata.

Bundala and Ouaknine~\cite{BundalaO17} have recently advanced the decidability and complexity status
of the reachability problem for PTA over {\em two parametric clocks}~\cite{BundalaO17}:
it is shown that in presence
of {\em one parameter} the reachability problem is decidable and hard for
the complexity class $\mathsf{PSPACE}^{\mathsf{NEXP}}$.
To the best of our knowledge, this  is in fact the
largest subclass of PTA for which reachability
is known to be decidable.
For showing the above-mentioned decidability result~\cite{BundalaO17} provides a reduction 
from PTA over two parametric clocks
to a suitable formalism of {\em parametric one-counter automata}.
Such an approach via parametric one-counter automata has already successfully been applied to model
checking freeze-LTL as shown by Demri and Sangnier \cite{DemriS10}
and Lechner et al. \cite{LechnerMOPW18},
yet notably over a weaker model of parametric one-counter automata than the one
 introduced in \cite{BundalaO17}.
On this note, it is worth mentioning that inter-reductions between the reachability problem
of (non-parametric) timed automata involving two clocks and one-counter automata
have already been established by Haase et al.~\cite{haase2012complexity,HaaseOW16}.

Decidability of reachability in PTA over two parametric clocks
(without parameter restrictions)
is still considered to be a challenging open
problem to the best of our knowledge.
For instance, as already remarked 
in~\cite{AHV93-stoc}, there is an easy reduction
from the existential fragment of 
Presburger Arithmetic with divisibility to 
reachability in PTA over two parametric clocks.

\medskip

\noindent
{\bf Our contribution.} Our main result (Theorem~\ref{main result}) states that 
reachability in parametric timed automata over two parametric clocks
and one parameter is $\EXPSPACE$-complete. Our contribution is two-fold.

	Inspired by~\cite{GHOW10,GollerL13}, for the $\EXPSPACE$ lower bound we 
make use of deep results from complexity theory,
	namely a serializability characterization of $\EXPSPACE$ (in turn
originally based on Barrington's Theorem~\cite{Bar89}) 
	and a logspace translation of numbers in chinese
	remainder representation to binary representation due to Chiu, Davida, and 
	Litow~\cite{ChDaLi01}. 
	It is shown that with small PTA over two parametric clocks
	and one parameter one can simulate serializability computations.

	For the $\EXPSPACE$ upper bound, we first give a careful
	exponential time reduction from PTA over two parametric clocks and one parameter to 
	a (slight subclass of) parametric one-counter automata over one parameter
	based on a minor adjustment of a construction due to Bundala and Ouaknine~\cite{BundalaO17}.
	In solving the reachability problem for parametric one-counter automata with one parameter,
	we provide a series of techniques to partition a fictitious run into several 
	carefully chosen subruns 
	that allow us to prove that it is sufficient to consider a parameter value of 
	exponential magnitude. This allows us to show a doubly-exponential upper
	bound on the value of the only parameter of PTA with two parametric clocks
	and one parameter.
We hope that extensions of our techniques lead to finally establishing decidability 
of the long-standing open problem of reachability in 
parametric timed automata with two parametric clocks
(and arbitrarily many parameters) and, if decidability holds, 
determinining its precise computational complexity.

As the results in \cite{AHV93-stoc}, our results hold for PTA over discrete time.
Indeed,  for PTA with closed (i.e., non-strict) clock constraints and parameters 
ranging over integers, techniques~\cite{HenzingerMP92,OuaknineW03} exist that
allow to reduce the
 reachability problem over continuous time to discrete time.
There is a plethora of variants of PTA that have recently been studied, 
we refer to \cite{Andre19} for an extensive overview by André.

\medskip

\noindent
{\bf Overview of this paper.} 
In Section~\ref{defs section} we introduce general notations and state our main result.
Our $\EXPSPACE$ lower bound can be found in Section~\ref{lower section}.
Section~\ref{ptatopoca section} introduces parametric one-counter automata
and states an exponential time reduction from 
reachability in parametric timed automata with two parametric clocks and one parameter
to reachability in parametric one-counter automata.
Section~\ref{upper section} states the $\EXPSPACE$ upper bound and the central
Small Parameter Theorem (Theorem~\ref{theorem upper}). 
It also gives an overview of the proof of the Small Parameter Theorem, which itself
stretches over 
Sections~\ref{semirun section},\ref{hill section},\ref{5-6 section}, and \ref{application section}.

\section{Preliminaries}\label{defs section}

\newcommand{\range}{\mathsf{range}}
\newcommand{\LCM}{\mathsf{LCM}}
\newcommand{\LOGSPACE}{\mathsf{LOGSPACE}}

We assume the reader is familiar with 
Turing machines and standard complexity
classes such as $\LOGSPACE$, $\PSPACE$ and
$\EXPSPACE$. We refer to \cite{Papa94,AroBar09}
for further details on complexity.
We also assume the reader is familiar with (deterministic) finite automata and regular languages,
we refer to \cite{Har78} for more details on this.

By $\Z $ we denote the {\em integers} and by $\N=\{0,1,\ldots\}$ we denote the {\em non-negative integers}.
For every $a,b\in\Z$ with $a\leq b$ we define $[a,b]=\{k\in\Z\mid a\leq k\leq b\}$.
For every $n\geq 1$ we define $n\Z=\{n\cdot z\mid z\in\Z\}$.
For every number $n\in\N$ we define $\log(n)=\min\{i+1\mid i\in\N, n\leq 2^i\}$,
which is the smallest number of bits necessary to write down $n$ in binary.
For every finite alphabet $A$ we denote by $A^*$ the set of finite
words over $A$ and denote the empty word by $\varepsilon$.
For all $a\in A$ and all $w\in A^*$ let $|w|_a$ denote the number of occurrences
of the letter $a$ in $w$.
For every finite set $M\subset\N\setminus\{0\}$ let 
$\LCM(M)=\min\{n\geq 1\mid \forall m\in M\setminus\{0\}: m|n\}$
denote the least common multiple of the elements in $M$. 
For any $j \in \N$ let 
$\LCM(j)=\LCM([1,j])$ denote the least 
common multiple of the numbers $\{1,\ldots,j\}$.
 For any two sets $X$ and $S$, let $X^S$ denote the set of all functions from $S$ to $X$.
For any set $S$ let 
$\mathscr{P}(S) = \{ X \mid X \subseteq S \}$
denote the power set of $S$.

A {\em guard} over a finite set of clocks $\Omega$ and a finite set of parameters $P$ 
is a comparison of the form
$g=\omega \bowtie e$, where $ \omega \in \Omega$, $e\in P\cup\N$,
and $\bowtie\in\{<,\leq,=,\geq,>\}$;
in case $e\in P$ we call $g$ {\em parametric}, and 
{\em non-parametric} otherwise.
We denote by $\G(\Omega,P)$ the {\em set of guards} over the set of 
clocks $\Omega$ and the set of parameters $P$.
The {\em size} $|g|$ of a guard 
$g=\omega \bowtie e$ is defined as
$$
|g|=\begin{cases}\log(e)&\text{if $e\in\N$}\\
	1 &\text{otherwise}.
\end{cases}
$$
A {\em clock valuation} is a function from $\Omega$ to $\N$;
we write $\vec{0}$ to denote the clock valuation $\omega \mapsto 0$.
For each clock valuation $v$ and each $t\in\N$ we denote
by $v+t$ the clock valuation $\omega \mapsto v(\omega)+t$.
A {\em parameter valuation} is a function $\mu$ from $P$ to $\N$.
For every guard $g= \omega \bowtie p$ with $p\in P$ 
(resp. $g=\omega \bowtie k$ with $k\in\N$)
we write $v\models_{\mu} g$ if $v(\omega)\bowtie\mu(p)$
(resp. $v(\omega)\bowtie k$; in this case we may also simply write $v\models g$).
We define an {\em empty guard} $g_\epsilon$ over a non-empty finite set of clocks
$\Omega$ and a finite set of parameters $P$ to be of the form $\omega \geq 0$ for some 
$\omega \in \Omega$. In particular, we
defined $g_\epsilon$ such that for all $v \in \N^\Omega$ and all $\mu \in \N^P$
we have
$v \models_{\mu} g_\epsilon$, hence $g_\epsilon$ can be used as a guard that is always true.

\begin{center}
	\begin{figure}
\includegraphics[width=0.6\textwidth]{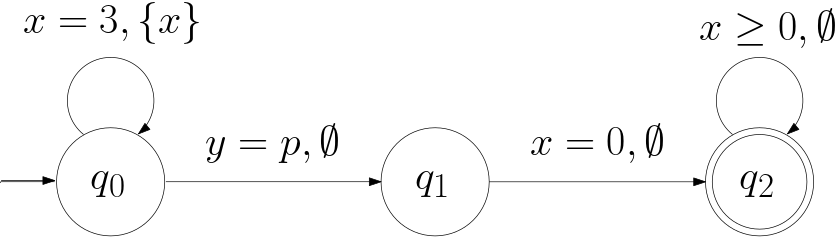}
	\caption{An example of a PTA. The automaton consists of three states, 
		the set of clocks is $\{x, y\}$, the set of parameters is $\{p\}$.
		The edges are represented by arrows labeled with the corresponding guard
		and the set of clocks $U$ to be reset.
		A parameter valuation $\mu$ witnesses that reachability holds
		for this PTA if, and only, if and only if, $\mu(p)\in3\mathbb{Z}$.}
		\label{example pta}
	\end{figure}
\end{center}
A parametric timed automaton as introduced in \cite{AHV93-stoc} is a finite automaton extended with a finite set of parameters $P$ and a finite set of clocks $\Omega$ that all progress at the same rate and that can be individually reset to zero. Moreover, every transition is labeled by a guard over 
$\Omega$ and $P$ and by a set of clocks to be reset.
Formally, a {\em parametric timed automaton } ({\em PTA} for short) is a tuple
$\A=(Q,\Omega,P,R,q_{init},F)$, where
\begin{itemize}
	\item $Q$ is a non-empty finite {\em set of control states}, 
	\item $\Omega$ is a non-empty finite {\em set of clocks},
	\item $P$ is a finite {\em set of parameters}, 
	\item $R\subseteq Q\times\G(\Omega,P)\times \mathscr{P}(\Omega) \times Q$
	is a finite {\em set of  rules},
	\item $q_{init}\in Q$ is an {\em initial control state}, and 
	\item $F\subseteq Q$ is a {\em set of final control states}.
\end{itemize}

A clock $\omega\in \Omega$ is called {\em parametric} if there exists some $(q,g,U,q')\in R$ such that
the guard $g$ is parametric.
	We also refer to $\A$ as a $(m,n)$-PTA
	if $m=|\{\omega\in\Omega\mid \omega\text{ is parametric}\}|$ is the number of parametric
clocks and $n=|P|$ is the number of parameters of $\A$ ---
sometimes we also just write $(m,*)$-PTA (resp. $(*,n)$-PTA) when $n$ (resp. $m$)
is a priori not fixed.

The {\em size} of $\A$ is defined as
$$
|\A| \ = \ |Q|+|\Omega|+|P|+|R|+\sum_{(q,g,U,q')\in R}|g|.
$$
Let 
$\Const(\A) = \{ c\in\N \mid \exists(q,g,U,q')\in R, \ \exists \omega \in \Omega : g = \omega \bowtie c \}$ denote the 
set of constants that appear in the guards of the rules of $\A$.
\newcommand{\Conf}{{\sf Conf}}
By $\Conf(\A)=Q\times\N^\Omega$ we denote the set of
{\em configurations} of $\A$. 
We prefer however to denote a configuration by $q(v)$ instead of $(q,v)$.

\begin{definition}
For each parameter valuation $\mu:P\rightarrow\N$ 
and each $(\delta,t)\in R\times\N$ with  	$\delta = (q,g,U,q')\in R$,
let $\xrightarrow{\delta,t,\mu}$ denote the binary relation
$\Conf(\A)$, where
$q(v)\xrightarrow{\delta,t, \mu} q'(v')$ if
	$v+t\models_{\mu} g$, 
	 $v'(u)=0$ for all $u \in U$ and $v'(\omega)=v(\omega)+t$ for all 
	$\omega \in \Omega \setminus U$.
\end{definition}

A {\em $\mu$-run} from $q_0(v_0)$ to $q_n(v_n)$ is a sequence 
$q_0(v_0)\xrightarrow{\delta_1,t_1,\mu}q_1(v_1)\cdots\xrightarrow{\delta_n,t_n,\mu}q_n(v_n)$;
it is called {\em reset-free} if the set appearing in the third component is empty
for all $\delta_i$.
In case $P=\{p\}$ is a singleton and $\mu(p)=N$ we prefer to say {\em $N$-run}
instead of $\mu$-run.
We say {\em reachability holds} for $\A$ if
there is a $\mu$-run from $q_{init}(\vec{0})$  
to some configuration $q(v)$ for some $q\in F$, some $v\in\N^\Omega$, and some
$\mu\in\N^P$.
We refer to Figure~\ref{example pta} for an instance
of a PTA for which reachability holds. \newline

In the particular case where $P=\{p\}$ is a singleton for some parameter $p$ and $\mu(p)=N$
we prefer to write $q(v)\xrightarrow{N}q'(v')$
to denote $q(v)\xrightarrow{\mu}q'(v')$ and prefer to write $\models_N$ to denote $\models_\mu$.

It is worth mentioning that there are further modes of time valuations and guards which exist in the literature, we refer
to \cite{Andre19} for a recent overview.

We are interested in the following decision problem.

\problemx{$(m,n)$-PTA-Reachability}
{A $(m,n)$-PTA $\A$.}
{Does reachability hold for $\A$?\newline}

Alur et al. have already shown in their seminal paper that {\sc PTA-Reachability} is in general undecidable,
already in presence of only three parametric clocks~\cite{AHV93-stoc},
Bene\v{s} et al. strengthened this when only one parameter is present~\cite{BenesBLS15}.

\begin{theorem}[\cite{BenesBLS15}]
	{\sc $(3,1)$-PTA-Reachability} is undecidable.
\end{theorem}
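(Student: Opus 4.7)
The plan is to reduce from the halting problem for deterministic two-counter Minsky machines, which is undecidable, to $(3,1)$-PTA-reachability. Given a Minsky machine $M$ with counters $c_1, c_2$, I would build a PTA $\mathcal{A}$ with three parametric clocks $x_1, x_2, x_3$ and a single parameter $p$ whose final state is reachable if and only if $M$ halts. The parameter $p$ is guessed non-deterministically at the beginning and is intended to play the role of a common time-base that is strictly larger than the maximum counter value reached during the halting computation (if any); since $M$'s halting run is of bounded length, such a witness $p$ exists precisely when $M$ halts.

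For the encoding, I would simulate $M$ round by round, each round having duration exactly $p$ (enforced by a guard $x_3 = p$ on a clock $x_3$ that is reset at the start of each round). Between rounds I maintain the invariant that, just after the round starts and $x_3$ is reset, the values of $x_1$ and $x_2$ equal $c_1$ and $c_2$ respectively (or, dually, $p - c_i$, which is often more convenient since parametric equality to $p$ encodes a zero test). Increment of $c_i$ is implemented by letting $x_3$ reach $p$, then resetting $x_i$ at a carefully chosen offset inside the next round so that at the next round boundary $x_i$ has advanced by exactly $1$; decrement is symmetric. The zero test ``$c_i = 0$?'' is implemented by a guard of the form $x_i = p$ (in the $p - c_i$ encoding) at the round boundary, which is satisfiable only when $c_i$ actually has the intended zero value. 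The third clock $x_3$ plays the role of a round timer and never carries counter data, which is the reason three parametric clocks suffice.

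The main obstacle is to perform an increment or a decrement on one of the counter-clocks without destroying the value stored in the other, since resetting a clock loses its information. I would address this by serializing the two updates inside a single round of length $p$: one carefully schedules two consecutive resets of $x_1$ and $x_2$ at offsets that differ from the round start by exactly the old values of $c_1$ and $c_2$ plus (or minus) one, using parametric guards such as $x_1 = p$ and $x_2 = p$ to pin the resets to the right absolute times. A second subtlety is the initial setup: one must establish the invariant $x_1 = x_2 = 0$ and $x_3 = 0$ simultaneously at the first round, which is easily done by resetting all three clocks on the initial transition. Once the gadgets for increment, decrement, and zero test are in place, faithfully controlled by the machine's finite control encoded in the PTA's states, reachability of the simulated halting state in $\mathcal{A}$ under the guessed parameter $p$ is equivalent to $M$ halting, yielding undecidability.
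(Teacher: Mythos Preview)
The paper does not give a proof of this theorem; it is simply quoted from \cite{BenesBLS15} (strengthening the original three-parametric-clock undecidability of Alur, Henzinger and Vardi to a single parameter), so there is no in-paper argument to compare against. Your plan---a reduction from two-counter Minsky machines via a round-based simulation of period $p$ with one clock per counter and one clock as a round timer---is exactly the classical approach underlying these results and is the right strategy.

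Two points in your sketch deserve sharpening before it becomes a proof. First, you never explain how the increment and decrement gadgets measure ``exactly one time unit'' when the only parametric comparison available is against $p$. The answer is that a $(3,1)$-PTA may contain arbitrarily many \emph{non}-parametric clocks, and an auxiliary non-parametric clock with a guard such as $z=1$ supplies the unit offset; without making this explicit, sentences like ``resetting $x_i$ at a carefully chosen offset so that at the next round boundary $x_i$ has advanced by exactly $1$'' are not yet implementable. Second, you argue only the completeness direction (if $M$ halts, a sufficiently large $p$ witnesses reachability). For soundness you must argue that for \emph{every} $p$, the PTA cannot reach its final state unless $M$ halts: either the simulation is faithful throughout (and $M$'s non-halting run never visits the halting state), or some counter would exceed $p$, in which case the corresponding copy/increment gadget deadlocks because the required guard $x_i=p$ cannot be met within the round. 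Stating this overflow-deadlock property explicitly closes the argument.
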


To the contrary, {\sc $(1,*)$-PTA-Reachability} has
recently been shown to be complete 
for $\mathsf{NEXP}$, where
a non-elementary upper bound was initially given by
Alur et al.~\cite{AHV93-stoc}.

\begin{theorem}[\cite{BundalaO17,BenesBLS15,BQS-lmcs19}]
	{\sc $(1,*)$-PTA-Reachability} is $\mathsf{NEXP}$-complete.
\end{theorem}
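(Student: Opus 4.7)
The plan is to establish both inclusions separately.

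For the $\mathsf{NEXP}$ lower bound, I would reduce from a standard $\mathsf{NEXP}$-complete problem such as succinct $3$-SAT or bounded exponential tiling. The idea is that parameters take values in $\N$ with no a priori bound, so a polynomial number of parameters provides a compact way of guessing an exponentially long certificate. Using the single parametric clock together with $O(n)$ parameters $p_1,\ldots,p_n$, I would design a gadget in which the parametric clock must traverse a sequence of guards of the form $x \bowtie p_i$ (possibly after being read through auxiliary non-parametric clocks acting as local offset timers) in such a way that the only way to reach an accepting state is for $\mu$ to encode a satisfying assignment or a valid tiling. The main design challenge is routing all the parametric comparisons through the unique parametric clock, and this is where small gadgets that reuse the clock after resets become essential.

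For the $\mathsf{NEXP}$ upper bound, the key is a small-parameter theorem: if reachability holds for a $(1,*)$-PTA $\A$, then it already holds under some parameter valuation $\mu$ with $\mu(p)\leq 2^{\poly(|\A|)}$ for every parameter $p$. Given such a bound, a nondeterministic exponential-time algorithm proceeds by (i) guessing $\mu(p)$ in binary for each $p\in P$, using only exponentially many bits in total, and (ii) solving reachability in the resulting non-parametric timed automaton whose constants are now of exponential magnitude. Step (ii) is a $\mathsf{PSPACE}$ problem relative to the (now exponential) input, and therefore fits into $\mathsf{NEXP}$ overall. To prove the small-parameter theorem, I would decompose any witnessing $\mu$-run at the reset points of the unique parametric clock. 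Between two consecutive resets, the parametric clock grows linearly with elapsed time and the sequence of rules taken constrains it by a family of inequalities $x \bowtie c$ with $c \in \Const(\A)$ and $x \bowtie p_i$ with $p_i \in P$; the feasibility of such a segment depends only on the comparative ordering of $\{\mu(p_i) \mid p_i \in P\} \cup \Const(\A)$. A pumping analysis on the region graph of the non-parametric fragment then lets one shrink $\mu$ to a valuation of singly-exponential magnitude without destroying reachability.

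The main obstacle is the small-parameter theorem itself. The interaction between arbitrarily many parameters and a single parametric clock is delicate because along one reset-free segment the clock must simultaneously satisfy several inequalities involving different parameters, and the linear order imposed on $\{\mu(p_i) \mid p_i \in P\} \cup \Const(\A)$ by the valuation can enable or block very long runs in subtle ways. The straightforward analysis of \cite{BundalaO17}, based on bounding the number of parametric regions plus quantifier elimination, only delivers a $2\mathsf{NEXP}$ bound; reaching the optimal $\mathsf{NEXP}$ upper bound of \cite{BenesBLS15,BQS-lmcs19} requires a finer combinatorial control of which order patterns and which reset schedules can occur along a shortest witnessing run, so that parameter values can be compressed down to a singly-exponential bound.
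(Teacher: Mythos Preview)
The paper does not give its own proof of this theorem: it is stated as a background result with citations to \cite{BundalaO17,BenesBLS15,BQS-lmcs19}, and the surrounding text explains that the $\mathsf{NEXP}$-hardness and a $2\mathsf{NEXP}$ upper bound are from \cite{BundalaO17}, while the matching $\mathsf{NEXP}$ upper bound is due to \cite{BenesBLS15} with an alternative proof in \cite{BQS-lmcs19}. There is therefore no in-paper argument to compare your proposal against.

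On the substance of your sketch: the overall two-part plan (hardness by encoding an exponential certificate via parameter values, membership via a small-parameter theorem followed by reachability in the instantiated timed automaton) matches the shape of the results in the cited literature. However, what you have written is a plan rather than a proof, and you explicitly concede the gap yourself: you note that the ``straightforward analysis'' only yields $2\mathsf{NEXP}$ and that obtaining $\mathsf{NEXP}$ ``requires a finer combinatorial control,'' without supplying that finer argument. That missing step is precisely the content of \cite{BenesBLS15} and \cite{BQS-lmcs19}, and neither your region-ordering observation nor the pumping idea you mention is, as stated, sufficient to close it. So your proposal correctly locates where the difficulty lies but does not resolve it; as a proof it is incomplete in exactly the place where the nontrivial work sits.
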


On the other end, decidability of {\sc $(2,*)$-PTA-Reachability} is
still open to the best of our knowledge. In presence of one parameter the following is known.

\begin{theorem}[\cite{BundalaO17}]
	{\sc $(2,1)$-PTA-Reachability} is decidable and $\mathsf{PSPACE}^{\mathsf{NEXP}}$-hard.
\end{theorem}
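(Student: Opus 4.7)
The plan is to establish the two parts of the theorem separately. For decidability, my approach is to reduce {\sc $(2,1)$-PTA-Reachability} to reachability in a suitable model of parametric one-counter automata with a single parameter, exploiting the rigid structure imposed by having only two parametric clocks and a single parameter $p$. For the $\mathsf{PSPACE}^{\mathsf{NEXP}}$-hardness, I would reduce from a canonical complete problem, using $p$ to represent exponentially-large witnesses while the finite control of the PTA simulates a polynomial-space computation that queries such a witness.

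For the decidability argument, let $M=\max\Const(\A)$ bound the non-parametric constants of the input PTA $\A$. The key structural observation is that between resets the two parametric clocks $x,y$ advance in lockstep, so their difference is invariant along reset-free segments and is updated by known amounts at resets. While a parametric clock is below $M$ it can be tracked through a classical region construction; once it surpasses $M$, the only information about it relevant to future guards is whether it has reached $p$. Using this, I would construct a parametric one-counter automaton $\B$ whose counter encodes the quantity $p-\omega$ for the currently ``active'' parametric clock $\omega\in\{x,y\}$: time elapses become counter decrements, guards of the form $\omega=p$ become zero-tests, and swapping the active clock is handled in the finite control since the integer difference between the two clocks is either small (storable in the state) or bounded by $p$ (storable as a counter offset). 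Reachability in $\A$ then reduces to reachability in $\B$.

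To finish the decidability direction, I would prove a small-parameter theorem for parametric one-counter automata with a single parameter, showing that if reachability holds for some value of $p$ then it already holds for some $p$ of elementarily bounded magnitude. Once such a bound is in hand, one instantiates $p$ and falls back to reachability in an ordinary one-counter automaton, a classically decidable problem. For $\mathsf{PSPACE}^{\mathsf{NEXP}}$-hardness, I would reduce from a succinct quantified Boolean formula problem augmented with an $\mathsf{NEXP}$ oracle (or, equivalently, from a serializable computation in $\mathsf{PSPACE}^{\mathsf{NEXP}}$). The parameter $p$ plays the role of an exponentially-large guess; carefully timed resets of $x$ and $y$ against $p$ allow the finite control to isolate individual bits or congruences of $p$, so that a polynomial-sized automaton can verify a $\mathsf{PSPACE}$ predicate whose arguments include $\mathsf{NEXP}$ oracle answers encoded inside $p$.

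The main obstacle is the small-parameter theorem underlying decidability: bounding a sufficient $p$ for reachability in parametric one-counter automata with one parameter requires a careful combinatorial decomposition of runs, since the counter can oscillate freely and the interaction between zero-tests, parameter-dependent updates, and pumpable loops is delicate. This quantitative analysis, rather than the reduction to parametric one-counter automata, is where the real work lies, and sharpening the bound it yields is precisely the contribution through which the present paper strengthens the Bundala--Ouaknine decidability result to an $\mathsf{EXPSPACE}$ upper bound.
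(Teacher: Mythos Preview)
This theorem is stated in the paper as a citation of Bundala--Ouaknine and is not proved there; the paper's own contribution is the sharper $\mathsf{EXPSPACE}$-completeness (Theorem~\ref{main result}). So there is no in-paper proof against which to compare your proposal.

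That said, your decidability outline is broadly the right architecture and coincides with what the paper later refines: reduce to parametric one-counter automata with one parameter (Theorem~\ref{theorem ptatopoca}, following~\cite{BundalaO17}), then bound the parameter. One concrete detail is off: in the reduction the counter does \emph{not} store $p-\omega$ for an ``active'' parametric clock; it stores the difference $v(x)-v(y)$, which is invariant along reset-free segments and whose update at a reset is handled via region-abstraction gadgets (see Appendix~\ref{appendix ptatopoca}). Your ``$p-\omega$'' encoding would make parametric comparisons $\omega\bowtie p$ easy but would not survive a reset of the \emph{other} clock without already knowing its value, which is exactly the difficulty the $v(x)-v(y)$ encoding circumvents. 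Also, attributing a small-parameter theorem to the Bundala--Ouaknine decidability proof is an overreach: the paper does not report any elementary bound from~\cite{BundalaO17}, and the Small Parameter Theorem (Theorem~\ref{theorem upper}) is precisely the new technical core here.

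Your lower-bound sketch is too thin to evaluate. ``Using $p$ to encode $\mathsf{NEXP}$ oracle answers and letting the finite control run a $\mathsf{PSPACE}$ computation'' is the right intuition, but you have not described any mechanism by which a $(2,1)$-PTA extracts bits or congruences of $p$, nor how the oracle queries are serialized; the actual reduction in~\cite{BundalaO17} is a specific construction that your proposal does not reproduce.
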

The following theorem states our main result.

\begin{theorem}\label{main result}
	{\sc $(2,1)$-PTA-Reachability} is $\EXPSPACE$-complete.
\end{theorem}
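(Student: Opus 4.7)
The plan is to prove matching $\EXPSPACE$ lower and upper bounds. For the lower bound, I would reduce from acceptance of an $\EXPSPACE$ Turing machine using a serializability characterization of $\EXPSPACE$ built on Barrington's theorem, in the spirit of \cite{GHOW10,GollerL13}. Concretely, a serializable computation is encoded as a sequence of polynomially-many instructions that manipulate small-dimensional data, where the ``heavy'' work of storing an exponentially long computation history is shifted into the numerical value of the single parameter $p$. The parameter is to be thought of as carrying a Chinese-remainder-style encoding of the tape: the Chiu--Davida--Litow logspace translation \cite{ChDaLi01} from Chinese remainder representation to binary representation is what lets a small PTA correctly ``read'' bits of the encoded configuration by comparing residues modulo small primes. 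The two parametric clocks are used in tandem so that one clock tracks elapsed time modulo the current small prime while the other is reset and compared against the parameter $p$ to test divisibility conditions of the form ``$t \in k\Z$'' and ``$t \bmod m = r$''. Guards with constants will be polynomially bounded so that divisibilities translate into reachability constraints. The net effect is a polynomial-size $(2,1)$-PTA for which reachability holds iff the serializable computation accepts.

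For the upper bound, I would proceed in two stages. First, apply a mild variant of the Bundala--Ouaknine construction \cite{BundalaO17} to reduce a $(2,1)$-PTA $\A$ in exponential time to a parametric one-counter automaton $\B$ over the same single parameter $p$; here the counter tracks the difference of the two parametric clocks (or an analogous quantity) and the control state absorbs the region-graph information of the non-parametric clocks and the ``active'' parametric clock, at an exponential blow-up. Second, I would prove a Small Parameter Theorem for parametric one-counter automata with one parameter: if there is any $N$-run from the initial to a target configuration, then there is one for some $N$ of exponential magnitude in $|\B|$. Combined with the exponential blow-up of the reduction, this yields a doubly-exponential bound on the value of $p$ sufficient to witness reachability in $\A$; since this value fits in exponentially many bits, it can be guessed and verified in exponential space by running a nondeterministic $\PSPACE$-in-$|\B|$ reachability check on the resulting non-parametric one-counter automaton.

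The Small Parameter Theorem is where the real work lies, and I expect it to be the main obstacle. My approach is to take a hypothetical $N$-run $\pi$ of $\B$ and decompose it canonically. First, separate $\pi$ into maximal reset-free segments and bound the number of segments by the control-state count. Within each segment the counter evolves linearly, so the run can be analyzed in terms of a ``hill'' structure: maximal subruns where the counter first ascends then descends past certain thresholds determined by the guards (both the integer constants in $\Const(\B)$ and the parameter-dependent thresholds like $p$, $2p$, $\ldots$). The goal is to show that any long hill contains a pair of ``congruent'' positions that can be excised or repeated without breaking a guard, provided the counter stays in the same interval delimited by consecutive multiples of $p$ and of the constants. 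A careful pumping argument, applied hill-by-hill, then lets one shrink both the length of the run and the value of $N$ down to something exponential in $|\B|$: long stretches of the counter above the largest constant can be replaced by shorter stretches controlled solely by $p$, while short stretches contribute polynomially. The delicate point is to maintain consistency across several hills simultaneously, since changing $N$ changes every parametric guard at once; this is exactly the kind of global bookkeeping that the partition into carefully chosen subruns is designed to enable, and is what the later sections of the paper are organized around.
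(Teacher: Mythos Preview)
Your high-level architecture matches the paper exactly: the lower bound via the serializability characterization of $\EXPSPACE$ combined with the Chiu--Davida--Litow CRR-to-binary translation, and the upper bound via an exponential-time reduction to a one-parameter POCA followed by a Small Parameter Theorem. One minor correction on the lower bound: in the paper the two parametric clocks are not assigned asymmetric roles (one for modular tracking, one for parameter comparison). Instead both $x$ and $y$ are repeatedly reset upon reaching $N$, and the doubly-exponential integer $B$ is stored as the difference $v(x)-v(y)\bmod N$; the residues $B\bmod p_i$ are then extracted by a dedicated gadget (Lemma~\ref{lemma modulo}) that counts down in blocks of a dynamically specified value.

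The genuine gap is in your Small Parameter Theorem sketch. First, the POCA $\B$ has no clock resets---resets were absorbed into the reduction from the PTA---so ``maximal reset-free segments'' is not a meaningful decomposition here. Second, the counter does \emph{not} evolve linearly within any segment: the presence of $+p$ and $-p$ operations means the counter can jump by $N$ in a single step, and this is precisely what makes the problem hard. A standard ``find two congruent positions and excise'' pumping argument breaks down because lowering $N$ to $N-\Gamma$ shifts \emph{every} $\pm p$ transition simultaneously, so the excisions must globally compensate for all of these shifts at once. The paper's actual machinery is considerably more elaborate: it relaxes runs to \emph{semiruns} (where comparison tests are ignored), projects $\pm p$ transitions to a bracket word in $\{\boldsymbol{[},\boldsymbol{]}\}^*$, and proves a chain of lemmas (Depumping, Bracket, Hill/Valley, and a $5/6$-Lemma) that together allow one to lower $N$ by a fixed quantum $\Gamma_\C$ while preserving source and target configurations via carefully matched excisions and shifts. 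Your sketch does not yet contain the idea that drives this---pairing $+p$ with $-p$ transitions (or with compensating subsemiruns found via the bracket structure) so that the net effect of lowering $N$ cancels out.
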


\section{Lower bounds}\label{lower section}
\newcommand{\bit}{\textsc{Bit}}
\newcommand{\bin}{\textsc{Bin}}
\newcommand{\val}{\textsc{Val}}

In this section we show an $\EXPSPACE$ lower bound for 
{\sc $(2,1)$-PTA-Reachability}.
Section~\ref{section computations} introduces some auxiliary gadgets that 
show that on small $(2,1)$-PTA with two parametric clocks $x$ and $y$ and one parameter
$p$ one can perform both
(i) $\PSPACE$ computations and 
(ii) compute $x-y\text{ mod }p$ modulo numbers that are
dynamically given in binary.
Section~\ref{section EXPSPACE} builds upon these auxiliary gadgets
and shows how to implement serializability
computations
in a leaf language characterization of 
$\EXPSPACE$~\cite{GHOW10}, which
is a simple padded variant of the leaf language characterization
of $\PSPACE$ due to Hertrampf et al.~\cite{HLSVW93}.

\subsection{PSPACE and modulo computations}\label{section computations}
For each $i,n\in\N$ let $\bit_i(n)$ denote the $i$-th least significant bit of the
binary presentation of $n$, where the least significant bit is on the left,
i.e. $n=\sum_{i\in\N}2^i\cdot\bit_i(n)$.
For each $m\geq1$, by $\bin_m(n)=\bit_0(n)\cdots\bit_{m-1}(n)$ 
we denote the sequence of the first $m$ least significant bits
 of the binary representation of $n$.
Conversely, given a binary string 
$w=w_0\cdots w_{m-1}\in\{0,1\}^{m}$ 
of length $m$
we denote by $\val(w)=\sum_{i=0}^{m-1}
2^i\cdot w_i\in[0,2^{m}-1]$ the value of
$w$ interpreted as a non-negative integer.

Let $\A$ be a parametric timed automaton over a set of clocks $\Omega$
with two parametric clocks $x$ and $y$.
We say a valuation $v:\Omega \rightarrow\N$ is {\em bit-compatible}
if $v(\zeta)\in\{0,1\}$ for all non-parametric clocks $\zeta\in \Omega$ of $\A$.
Assume moreover that $\Omega$ contains {\em non-parametric clocks} 
$\Theta_+\cup \Theta_-$,
where $\Theta$ is some set and $\Theta_+=\{\vartheta^+\mid \vartheta\in \Theta\}$ and
$\Theta_-=\{\vartheta^-\mid \vartheta\in \Theta\}$ are two disjoint corresponding copies of $\Theta$;
in this case, for any valuation $v:\Omega \rightarrow\N$ we define the mapping $\widehat{v}:\Theta \rightarrow\{0,1\}$
as 
$$\widehat{v}(\vartheta)=
\begin{cases}0 & \text{if $v(\vartheta^+)=v(\vartheta^-)$}\\
	1 & \text{otherwise}.
\end{cases}
$$
In the following we call such non-parametric clocks 
$\{\vartheta^+,\vartheta^-\mid \vartheta \in \Theta\}$, 
appearing as implicit pairs, {\em bit clocks} since
they are used to encode bits.
The following definition expresses
when a parametric timed automaton 
over two parametric clocks
and one parameter computes a function from $\N\times\{0,1\}^n$
to $\{0,1\}^m$. 
Notably, both before execution it assumes,
and after execution it guarantees, a 
bit-compatible valuation that assigns
its two parametric clocks values in the interval $[0,N-1]$,
where $N$ denotes the assigned value of its only parameter.

\begin{definition} \label{def computes}
A $(2,1)$-PTA 
	$\A=(Q,\Omega,\{p\},R,q_{init},\{q_{fin}\})$ 
	whose parametric clocks are $x$ and $y$ and whose one parameter is $p$ {\em computes a function 
	$f:\N\times\{0,1\}^n\rightarrow\{0,1\}^m$}
	if its set of clocks $\Omega$ contains two 
	disjoint sets of
	\begin{itemize}
		\item non-parametric ``input'' bit clocks
			$\{{in_0}^+,{{in}_0}^-,\ldots, {in}_{n-1}^+,{in}_{n-1}^-\}$,
		\item non-parametric ``output'' bit 
			clocks 
			$\{{{out}_0}^+,{{out}_0}^-,\ldots, {{out}_{m-1}}^+,{{out}_{m-1}}^-\}$,
	\end{itemize}
	such that for all $N\in\N$ and all bit-compatible $v_0:\Omega \rightarrow[0,N-1]$ we have
	\begin{enumerate}
		\item $q_{init}(v_0)\xrightarrow{N}^*q_{fin}(v')$ for some bit-compatible
			$v':\Omega \rightarrow[0,N-1]$ and
		\item for all $v':\Omega \rightarrow\N$ for which 
		$q_{init}(v_0)\xrightarrow{N}^*q_{fin}(v')$ we have
		\begin{itemize}
			\item $v'\in[0,N-1]^\Omega$ is bit-compatible,
			\item  $\widehat{v'}(in_i)=\widehat{v_0}(in_i)$ for all 
				$i\in[0,n-1]$,
			\item $v'(x)-v'(y)\equiv v_0(x)-v_0(y)\text{ mod } N$, and
			\item $\prod_{j=0}^{m-1}\widehat{v'}({out}_j)= 
				f(v_0(x)-v_0(y)
				\text{ mod }N,\prod_{i=0}^{n-1} \widehat{v_0}({in}_i))$,
				where $\prod$ denotes concatenation.
		\end{itemize}
	\end{enumerate}
\end{definition}
Importantly, the execution of any
$N$-run $q_{init}(v_0)\xrightarrow{N}q_{fin}(v')$
does not have any side effects on the binary interpretation
of the `input'' bit clocks, i.e. the
string 
$\prod_{i=0}^{n-1}\widehat{v_0}(in_i)$
equals 
$\prod_{i=0}^{n-1}\widehat{v'}(in_i)$.

The following lemma essentially has its
roots in the $\PSPACE$-hardness proof for
the emptiness problem for timed automata (without parameters) introduced
by Alur and Dill~\cite{alur1994theory},
however constructed
to satisfy the carefully chosen interface given by
Definition~\ref{def computes}.

\begin{lemma}\label{lemma PSPACE}
For every $\PSPACE$-computable function $g:\{0,1\}^n\rightarrow\{0,1\}^m$ 
one can compute in polynomial time in $n+m$ a 
	$(2,1)$-PTA computing the function 
	$f:\N\times\{0,1\}^n\rightarrow\{0,1\}^m$,
	where $f(k,w)=g(w)$ for all $(k,w)\in\N\times\{0,1\}^n$.
\end{lemma}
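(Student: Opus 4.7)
The plan is to build a $(2,1)$-PTA $\A_g$ that performs a standard Alur--Dill-style simulation of a polynomial-space Turing machine $M_g$ computing $g$ on the non-parametric bit clocks, while the two parametric clocks $x,y$ and the parameter $p$ are handled by a light interleaved book-keeping. Since $f(k,w)=g(w)$ does not depend on $k$, the parametric clocks and the parameter $p$ play no role in the simulation itself; they are merely ``carried along'', and the book-keeping needs only to maintain $v(x),v(y)\in[0,N-1]$ together with the congruence $v(x)-v(y)\equiv v_0(x)-v_0(y)\pmod{N}$ along the entire run.

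\textbf{Construction.} Using standard techniques from the $\PSPACE$-hardness proof for (non-parametric) timed automata of Alur and Dill~\cite{alur1994theory}, I would first construct, in polynomial time in $n+m$, a non-parametric sub-automaton $\A_g^0$ whose clocks are the input bit clocks $in_i^\pm$, the output bit clocks $out_j^\pm$ and polynomially many auxiliary bit-clock pairs $(\vartheta^+,\vartheta^-)$ encoding the tape contents, head position and control state of $M_g$. All guards of $\A_g^0$ use only the constants $0$ and $1$ and never mention $x$ or $y$; every elementary transition lets exactly zero or one unit of time pass; the initial valuation is bit-compatible with the input word read off from the $in_i^\pm$; and the final valuation is again bit-compatible, with the output clocks encoding $g(w)$ and the input clocks encoding the same bits as $v_0$. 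Input preservation is obtained by first copying the input bits onto working-tape clocks and by normalizing all bit clocks in a final clean-up phase that returns them to $\{0,1\}$. The target PTA $\A_g$ is then obtained by adding $x,y,p$ on top of $\A_g^0$ and inserting, after each unit-time transition, a zero-time gadget that non-deterministically fires rules with guard $x=p$ that reset $x$ (and analogously for $y$). Because a unit elapse increases $v(x)$ and $v(y)$ by at most one, a single application of the gadget restores $v(x),v(y)<N$. Time elapses preserve $v(x)-v(y)$, and a reset at $x=p$ (resp.\ $y=p$) changes $v(x)-v(y)$ by $-N$ (resp.\ $+N$), which is $0$ modulo $N$. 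Hence the desired congruence is invariant along the whole run, and at $q_{fin}$ all clauses of Definition~\ref{def computes} are satisfied. The whole construction is manifestly polynomial in $n+m$.

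\textbf{Main obstacle.} The main technical work lies in the careful bit-level design of $\A_g^0$, in particular the final clean-up phase that returns all non-parametric clocks to $\{0,1\}$ while encoding the prescribed bits and preserving the input encoding. This is tedious but standard: it essentially reuses the $\PSPACE$-hardness construction for timed automata from~\cite{alur1994theory} with explicit output and input-preservation bookkeeping on top. The handling of $x,y$ and $p$ is then a short syntactic overlay made possible by the simple but useful observation that a reset of a clock at value $p$ leaves its value unchanged modulo $N$, so the two-clock/one-parameter overhead is essentially free.
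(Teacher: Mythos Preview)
Your proposal is correct and follows essentially the same approach as the paper: an Alur--Dill-style Turing machine simulation on the non-parametric bit clocks, with the two parametric clocks $x,y$ kept in $[0,N-1]$ by resetting them whenever they hit value $N$ (guard $x=p$, resp.\ $y=p$), and a final zero-time verification gadget before $q_{fin}$ that filters out runs where some clock escaped its intended range. The paper integrates the non-parametric clock maintenance slightly differently (each bit clock is reset via self-loops whenever it reaches value $2$, rather than only in a final clean-up), but this is a presentational rather than a substantive difference.
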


\begin{proof}
Let us fix some $\PSPACE$-computable function $g:\{0,1\}^n\rightarrow\{0,1\}^m$.
Let us moreover fix some $t(n)$-space bounded deterministic Turing
	machine $\mathcal{M}$ computing $g$, where $t$ is some fixed polynomial.

We explicitly store the value of our input 
	by making use of our 
non-parametric ``input'' bit clocks 
			$\{{in_0}^+,{{in}_0}^-,\ldots, 
			{in}_{n-1}^+,{in}_{n-1}^-\}$.
Similarly, we explicitly store the value of our output with
the 
non-parametric ``output'' bit 
			clocks $\{{{out}_0}^+,{{out}_0}^-,\ldots, {{out}_{m-1}}^+,{{out}_{m-1}}^-\}$.
Since $f(k,w)=g(w)$ we need to provide 
a computation that presents $g(w)\in\{0,1\}^m$
using the ``output'' bit clocks.
	Let $\Omega$ denote the set of clocks
	of the $(2,1)$-PTA $\A$ whose construction we discuss next.

For every non-parametric clock in $\A$ we reset it once it has value $2$;
this is achieved by suitable self-loops in every state of the construction
except for the final control state $q_{fin}$.
	Similarly, we establish that both 
	of the parametric clocks $x$ and $y$ are being reset 
once they have reached value $N$. This way the difference between the values of $x$ and $y$ will stay
unchanged modulo the valuation $N$ of the only parameter $p$. 
	Importantly, other than that neither $x$ nor $y$ will be modified during the following construction.

	We will enforce that finally the values of all non-parametric clocks remain in $\{0,1\}$ and that 
	the two parametric clocks have a value
	in $[0,N-1]$ as follows.
	A final control state $q_{fin}$ 
	is preceded by a final
gadget in which no time elapses 
	that verifies via a sequence of suitable
guards that the parametric and non-parametric clocks 
	are as required.

Let us consider now any pair of bit clocks $\vartheta^+$ and $\vartheta^-$
and any current bit-compatible valuation $v:\Omega \rightarrow\N$.
We have $\widehat{v}(\vartheta)=1$ if, and only if, 
either $v(\vartheta^+) =0$ and $v(\vartheta^-) =1$ or conversely $v(\vartheta^+)=1$ and $v(\vartheta^-) =0$.
Similarly, when we want to set the value $\widehat{v}(\vartheta)$
to $0$, we reset both clocks $\vartheta^+$ and $\vartheta^-$ at the same time,
and when we want to set the value $\widehat{v}(\vartheta)$
to $1$, we reset $\vartheta^-$ when $v(\vartheta^+)= 1$ without resetting $\vartheta^+$.

	For simulating $\mathcal{M}$ our $(2,1)$-PTA
	$\A$ will also use 
 suitable $O(t(n))$ bit clocks, 
to store in binary the working tape of $\mathcal{M}$.

Given the current bit-compatible 
	valuation $v:\Omega \rightarrow\N$,
it is thus possible to inspect the input bit string $\prod_{i=0}^{n-1} \widehat{v}({in}_i)$,
read and write the polynomially sized working tape, 
and to write the output $\prod_{j=0}^{m-1}\widehat{v}({out}_j)$.
Let us discuss this in more detail.

For simulating $\mathcal{M}$, we choose the control states of our $(2,1)$-PTA $\A$
as $$S \times \{0, \ldots,n-1 \}
\times \{0, \ldots,m-1 \}\times
\{0, \ldots,t(n)-1\}
 \times \{0,1 \} 
\times \{0,1 \},$$ 
where $S$ is the set of states of $\mathcal{M}$.
We then simulate any step of $\mathcal{M}$ from a state $q$, current position $i$ on the input
tape, current position $j$ on the output tape, current position
$h$ on the working tape, reading letter $a$ on the input tape, reading letter $b$ on
the working tape, changing to a new state $q'$, new input head position $i'$, 
new output head position $j'$, and
new working head position $h'$.
To do that, we add to $\A$ sequences of suitable rules from
control state $(q,i,j,h,a,b)$ to control state $(q',i',j',h',a',b')$
 for all $a',b' \in \{0,1 \}$,
by using suitable guards and reset operations that serve two purposes:
 first, checking whether $a'$ and $b'$ are indeed the values of 
the $i'$-th  (resp. $h'$-th) cell of  the  input (resp. working)  tape
and second, writing on the $j$-th (resp. $h$-th) cell of  the  output (resp. working) tape.

Letting 
$q_{init}$ denote some suitable initial state 
one can thus achieve that for all bit-compatible $v_0: \Omega \rightarrow[0,N-1]$ and all $v':\Omega \rightarrow\N$, 
if $q_0(v_0)\xrightarrow{N}^*q_{fin}(v')$ then $v'$ is again a bit-compatible
valuation from $\Omega$ to $[0,N-1]$.
\end{proof}

\begin{remark}\label{remark wlog}
The proof of Lemma~\ref{lemma PSPACE} shows that 
if $g:\N\times\{0,1\}^n\rightarrow\{0,1\}^m$ is computable
	by a $(2,1)$-PTA, then so is
	the function $f:\N\times\{0,1\}^{n+\ell}\rightarrow\{0,1\}^m$, where
	$f(k,w)=g(k,w_1\cdots w_n)$ 
	for all $k\in\N$ and all $w=w_{1}\cdots 
	w_{n+\ell}\in\{0,1\}^{n+\ell}$:
	indeed, one can manipulate the $\ell$
	additional input bit clocks
by repeatedly resetting them once 
	they have value $2$, enforcing that the associated $\widehat{v}$-values stay throughout unchanged
and that their value is finally strictly smaller than $2$.
\end{remark}

The following lemma shows that $(2,1)$-PTA
can compute modulo dynamically given numbers in binary.

\begin{lemma}\label{lemma modulo}
	One can compute in polynomial time in $n+m$ a 
	$(2,1)$-PTA that
	computes the function 
	$f:\N\times\{0,1\}^n\rightarrow\{0,1\}^m$,
	where $f(k,w)=\bin_m(k\text{ mod }\val(w))$.
\end{lemma}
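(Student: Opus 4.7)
The plan is to construct a $(2,1)$-PTA $\A$ that iteratively counts down the modular difference $k' = (v(x)-v(y)) \bmod N$ from its initial value $k$ to $0$, while simultaneously counting up an auxiliary counter $c$ (stored in additional bit clocks, initially $0$) modulo $M = \val(w)$. When the loop terminates (at $k'=0$), we will have $c = k \bmod M$, and a final PSPACE subcomputation (from Lemma~\ref{lemma PSPACE}) will write the bits of $c$ onto the output bit clocks.

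Each iteration of the main loop would proceed in four sub-phases. First, a small \emph{normalisation gadget} brings the parametric clocks to the canonical configuration $(k',0)$: it uses a nondeterministic branch on whether currently $v(x) \geq v(y)$ or $v(x) < v(y)$, and in each case executes the appropriate sequence of ``elapse until $x=p$ or $y=p$'' steps and selective resets, with guards of the form $x \leq p$ and $y \leq p$ ensuring that only the branch matching reality can complete. Second, a \emph{termination test} elapses until $x=p$ and checks whether simultaneously $y=p$: if yes then $k'=0$ and both clocks are reset and the loop exits; otherwise only $x$ is reset, leaving the state $(0, N-k')$. Third, the \emph{decrement} elapses one further time unit and resets $x$, producing the state $(0, N-k'+1)$ whose modular difference is $k'-1$; in the boundary subcase $k'=1$ we additionally reset $y$, producing $(0,0)$ of difference $0$. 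Fourth, we would invoke Lemma~\ref{lemma PSPACE} as a subroutine to increment $c$ by one modulo $M$; this preserves the modular difference, even though it may disturb individual clock values, which is immaterial because the next iteration re-normalises.

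Since the loop then executes exactly $k$ iterations, $c$ ends at $k \bmod M$; a final application of Lemma~\ref{lemma PSPACE} writes $\bin_m(c)$ onto the output bit clocks. The terminal configuration is bit-compatible with its parametric clocks in $[0, N-1]$ by construction, and the resulting PTA has size polynomial in $n+m$. The main obstacle I expect is verifying that the decrement gadget really moves the modular difference from $k'$ to $k'-1$: time elapse combined with resets strictly at the parameter boundary preserves $(x-y) \bmod p$, so the decrement must exploit the asymmetry of resetting $x$ just after a one-unit elapse (when $x=1$, contributing $-1$ to the modular difference), with the boundary case $k'=1$ handled by additionally resetting $y$ (when $y=p$, contributing $0 \bmod p$, for a net change of $-1$). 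A secondary challenge is synchronising the bit-clock protocol of Lemma~\ref{lemma PSPACE} used for updating $c$ with the parametric-clock counting, which is cleanly handled by re-normalising at the top of each iteration.
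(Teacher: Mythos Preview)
Your proposal is correct but follows a genuinely different route from the paper. The paper's construction is a guess-and-verify: it nondeterministically writes a candidate answer $u \in \{0,1\}^m$ (with $\val(u) < \val(w)$) to the output clocks, waits until $x$ has just been reset at $p$, and then repeatedly lets exactly $\val(w)$ time units elapse while \emph{forbidding} $x$ from being reset in this gadget, finally letting $\val(u)$ further units elapse and checking that $y$ has just wrapped at $p$. Since $x$ is never reset in this phase and is verified to stay below $N$, the total elapsed time equals exactly $k = (v_0(x)-v_0(y)) \bmod N$, so the equation $s \cdot \val(w) + \val(u) = k$ with $0 \leq \val(u) < \val(w)$ forces $\val(u) = k \bmod \val(w)$.

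Your approach instead computes the answer operationally: decrement the modular difference $k'$ one step at a time while incrementing a bit-clock counter $c$ modulo $\val(w)$ via Lemma~\ref{lemma PSPACE}. The decrement gadget is sound (resetting $x$ when $x=1$ after the termination test lowers the difference by exactly one; the $k'=1$ boundary is detectable via $y=p$), and the normalisation to $(k',0)$ is achievable by racing $x$ and $y$ to $p$ as you sketch. The price is extra machinery: the decrement gadget, $O(n)$ additional bit clocks for $c$, and---since Lemma~\ref{lemma PSPACE}'s interface keeps input and output bit clocks disjoint---either two alternating copies of $c$ or an explicit copy-back step. The paper's approach sidesteps all of this by using $x$ itself as the working counter and letting nondeterminism absorb the search; yours is more explicitly algorithmic but heavier to implement.
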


\begin{proof}

We need to show that in time polynomial in $n+m$ one
	can construct a $(2,1)$-PTA 
	$\A$ whose set of clocks $\Omega$
contains the ``input'' bit clocks 
	$\{{in_0}^+,{{in}_0}^-,\ldots, {in}_{n-1}^+,{in}_{n-1}^-\}$
and ``output'' bit clocks
$\{{{out}_0}^+,{{out}_0}^-,\ldots, {{out}_{m-1}}^+,{{out}_{m-1}}^-\}$
that computes $f$.
Let us assume some parameter value $N\in\N$ and some
 bit-compatible valuation $v_0:\Omega \rightarrow[0,N-1]$
satisfying $w=\prod_{i=0}^{n-1}\widehat{v_0}({in}_i)$.

Again, we establish here also that the parametric clocks $x$ and $y$ are being reset 
	once they have reached value $N$ ---
	however we sometimes 
	explicitly disallow $x$ to reach value $N$ in
	certain gadgets mentioned below. 
	This will be the only modification of $x$ and $y$.
In the following, reading and writing the $\widehat{v}(\vartheta)$-value
for every pair of bit clocks $\vartheta^+,\vartheta^-$,
guaranteeing that $v(\vartheta^+),v(\vartheta^-)\in\{0,1\}$,
	and guaranteeing that the parametric clocks
	finally have values in $[0,N-1]$
can be done as in the proof of Lemma~\ref{lemma PSPACE}.

	We need the eventual output bit string
$\prod_{j=0}^{m-1}\widehat{v'}({out}_j)$
to be equal to
$$
f(v_0(x)-v_0(y)\text{ mod }N,w)
= \bin_m( (v_0(x)-v_0(y)\text{ mod }N) \text{ mod }\val(w)).$$

Our automaton starts in some initial control state $q_{init}$.
From $q_{init}$ we introduce a gadget that nondeterministically writes
some value $u\in\{0,1\}^m$ in our ``output'' bit clocks
	that satisfies $\val(u)<\val(w)$.
	From the end of the latter gadget we have a rule that checks if our parametric clock $x$ has value $0$ (just after being reset with value $N$), leading to a control state $q_{wait}$.
Assume our current valuation then is $v:\Omega \rightarrow\N$.
From $q_{wait}$ we have a rule to a state $q_{sub}$ letting no time elapse
from which we claim there is a gadget that allows us to loop in $q_{sub}$ for precisely
 $\val(w)=\val(\prod_{i=0}^{n-1} \widehat{v}({in}_i)))$
time units. One constructs the latter gadget as follows. 
	Subsequently for every $i\in[0,n-1]$ one reads $\widehat{v}({in}_{i})$
	and in case $\widehat{v}(in_{i})=1$ lets 
	precisely $2^i$ time units elapse via a suitable auxiliary
	clock and in case $\widehat{v}(in_{i})=0$ lets $0$ time 
	units elapse.
	The gadget ends with a sequence of rules leading back to $q_{sub}$ by
	letting $0$ time units 
	elapse
that verify that the parametric clock $x$ has a value strictly smaller than $N$.
	Importantly, the parametric clock $x$ is exceptionally {\em not reset} inside this gadget.

Finally, we add a rule from $q_{sub}$ to a suitable
gadget that lets precisely $\val(\prod_{j=0}^{m-1}\widehat{v}(out_j))$ time units 
	elapse (analogously as
	done above), followed by a test that verifies that the value of
	$y$ equals $0$ (after just being reset at value $N$).
	In addition, we append this latter gadget with a final sequence 
	of rules (again letting no time elapse) 
	to our final control state $q_{fin}$ 
	that test
if both $x$ and $y$ have a value strictly smaller than $N$
	and test if all 
	non-parametric clocks have a value strictly smaller than $2$.
Thus, every valuation $v':\Omega \rightarrow\N$ for which $q_{init}(v_0)\xrightarrow{N}^*q_{fin}(v')$
holds is a bit-compatible valuation from $\Omega$ to $[0,N-1]$.

It is worth noting that by construction precisely $v_0(x)-v_0(y)\text{ mod }N$ time units have passed 
in any computation $q_{wait}$ to $q_{fin}$.
Since we have repeatedly waited $\val(w)$ time units and finally verified that
the remaining time is the guessed value initially 
nondeterministically written to our ``output'' bit clocks, 
we have 

	\begin{eqnarray*}
		\prod_{j=0}^{m-1}\widehat{v'}(out_j)&=&
\bin_{m}((v_0(x)-v_0(y)\text{ mod }N)\text{ mod }\prod_{i=0}^{n-1}
\widehat{v_0}({in}_i)))\\
		&=&
f(v_0(x)-v_0(y)\text{ mod }N,\prod_{i=0}^{n-1}
\widehat{v_0}({in}_i))
	\end{eqnarray*}
		
for any valuation $v':\Omega \rightarrow\N$ with $q_{init}(v_0)\xrightarrow{N}^*q_{fin}(v')$, as required.
\end{proof}

\subsection{An EXPSPACE lower bound via serializability}\label{section EXPSPACE}

This section is devoted to showing the following lower bound.

\begin{theorem}
	{\sc$(2,1)$-PTA-Reachability} is $\EXPSPACE$-hard.
\end{theorem}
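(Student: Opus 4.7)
The plan is to reduce from the word problem of an arbitrary language $L \in \EXPSPACE$ via a serializability / leaf-language characterization of $\EXPSPACE$ from~\cite{GHOW10} (itself based on Barrington's theorem~\cite{Bar89}). Such a characterization yields a polynomial $p$, a $\PSPACE$-computable function $f : \{0,1\}^* \times \{0,1\}^{K} \to \{0,1\}$ with $K = 2^{p(|x|)}$, and a fixed DFA $A$ over $\{0,1\}$ such that $x \in L$ if and only if $A$ accepts the doubly-exponentially long word $f(x, \bin_K(0)) \cdot f(x, \bin_K(1)) \cdots f(x, \bin_K(M-1))$, where $M = \LCM(K)$; the prime number theorem ensures $M \geq 2^K$, so $M$ is doubly exponential.

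Given $x$, I would build in polynomial time a $(2,1)$-PTA $\A_x$ whose intended parameter value is $N = \LCM(K)$. The PTA treats the quantity $c := (v(x) - v(y)) \bmod N$ as an abstract counter ranging over $[0, N-1]$, initially $c = 0$, and runs a main loop of $N$ outer iterations. Each outer iteration performs (i) decoding $\bin_K(c)$ into dedicated bit clocks; (ii) computing $b := f(x, \bin_K(c))$ via Lemma~\ref{lemma PSPACE}; (iii) updating a bit-clock-encoded running DFA-state of $A$ by reading $b$, again via Lemma~\ref{lemma PSPACE}; and (iv) incrementing $c$ by letting exactly one time unit elapse. After $N$ outer iterations $c$ wraps back to $0$ and $\A_x$ accepts if and only if the stored DFA-state is accepting; completeness is then immediate by choosing $\mu(p) = N$.

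Step (i) is where the real work sits, and combines Lemma~\ref{lemma modulo} with the $\LOGSPACE$ CRR-to-binary conversion of Chiu, Davida and Litow~\cite{ChDaLi01}. Since every $i \in [1, K]$ divides $N = \LCM(K)$, Lemma~\ref{lemma modulo} correctly produces $c \bmod i = (c \bmod N) \bmod i$ for each such $i$. The CDL algorithm runs in $\LOGSPACE$ in its CRR input (which has polynomial length in $K$, hence polynomial in $|x|$), so its Turing-machine configuration fits into polynomially many bit clocks. I would realise CDL by a polynomial-size orchestrator that steps through the CDL machine using Lemma~\ref{lemma PSPACE}; whenever CDL queries a bit of its input, the orchestrator first invokes the modulo gadget with the appropriate index $i$ and then feeds the requested bit to the CDL configuration. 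Since both lemmas preserve the value of $c$, it remains consistent across all these inner calls.

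For soundness against adversarial choices of $\mu(p)$, I would prepend the main loop with a verification phase, using Lemma~\ref{lemma PSPACE} together with Lemma~\ref{lemma modulo}, that checks $i \mid N$ for every $i \in [1, K]$; replacing $A$ by a suitable idempotent power then makes the construction insensitive to the exact value of $N$ among positive multiples of $\LCM(K)$. The hardest step I expect is the tight choreography between the outer increment of $c$, which must advance by exactly one per outer iteration, and the many inner sub-gadget calls, which must leave $c$ untouched while yielding consistent values derived from its CRR; this should be handled in the style of the proofs of Lemmas~\ref{lemma PSPACE} and~\ref{lemma modulo}, namely by guarding every intermediate clock move to be a multiple of a divisor of $N$ and by repeatedly resetting auxiliary non-parametric clocks so that they remain bit-valued.
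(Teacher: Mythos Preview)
Your proposal follows the same blueprint as the paper: store a doubly-exponential index as $v(x)-v(y)\bmod N$, access its bits on-the-fly via Chinese Remainder Representation together with the Chiu--Davida--Litow conversion, and drive the DFA of the serializability characterization using Lemmas~\ref{lemma PSPACE} and~\ref{lemma modulo}. The engine is right.

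There is, however, a real gap in your termination/soundness argument. The serializability theorem (Theorem~\ref{thm serializability}) produces a serial word of length exactly $2^{2^{s(n)}}$, not $\LCM(K)$; writing $K=2^{s(n)}$, these two quantities are different and neither divides the other (the power of $2$ in $\LCM(K)$ is only $2^{\lfloor\log_2 K\rfloor}$, so $2^K\nmid\LCM(K)$). Consequently the length-$N$ word your loop feeds to the DFA when $N$ is merely a multiple of $\LCM(K)$ is a \emph{prefix} of the $2^K$-periodic extension of the basic serial word, not a power of it, and the ``idempotent power'' trick does not apply. A related issue is that CDL (Theorem~\ref{thm CRR}) needs $B<\prod_{i=1}^m p_i$ for the chosen primes; in your scheme $c$ ranges over $[0,N{-}1]$ with $N$ unbounded, so no fixed set of moduli determines $c$ (and the moduli $1,\dots,K$ are in any case not what CDL expects).

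The paper resolves both problems at once. Lemma~\ref{lemma big} only forces $N>2^{2^{s(n)}}$ (via divisibility conditions on $N{-}1$), and the main loop is terminated not by the wraparound of $c$ but by detecting---through the very CRR/CDL machinery you already invoke---the first moment the counter's $2^{s(n)}$-bit binary representation returns to all zeros, i.e.\ at counter value exactly $2^{2^{s(n)}}$. This makes the number of iterations exact and independent of $N$, keeps the counter below the fixed bound $2^{2^{s(n)}}$, and hence a fixed set of $m\in 2^{\poly(n)}$ primes suffices for the CRR. Replacing your ``run until wraparound $+$ idempotent DFA'' by this ``run until binary zero again'' is the missing piece.
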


For each language $L\subseteq A^*$ let $\chi_L:A^*\rightarrow\{0,1\}$ denote
its characteristic function.
By $\preceq_n$ we denote the lexicographic order on $n$-bit strings,
thus $w\preceq_n v$ if $\val(w)\leq\val(v)$, e.g.
$0101\preceq_4 0011$.

Our $\EXPSPACE$ lower bound proof makes use of the following
leaf language view of $\EXPSPACE$ from~\cite{GHOW10}, which is a 
padded adjustment of the
leaf-language characterization of $\PSPACE$
from \cite{HLSVW93}, which in turn has its roots in Barrington's Theorem~\cite{Bar89}.

\begin{theorem}[Theorem 2 in \cite{GHOW10}]\label{thm serializability}
	For every language $L\subseteq\{0,1\}^*$ in $\EXPSPACE$ there exists a 
	polynomial $s:\N\rightarrow\N$, a regular language
	$R\subseteq\{0,1\}^*$, and a language $U\in\LOGSPACE$ such that for all $w\in\{0,1\}^n$
	we have
\begin{eqnarray}
	w\in L\quad\Longleftrightarrow\quad
	\prod_{m=0}^{2^{2^{s(n)}-1}}\chi_U(w\cdot\bin_{2^{s(n)}}(m))\in R,
\label{def serializability}
\end{eqnarray}
where $\cdot$ and $\prod$ denote string concatenation.
\end{theorem}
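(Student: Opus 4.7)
The plan is to combine Savitch's theorem with Barrington's theorem~\cite{Bar89}, paralleling the classical proof of the analogous $\PSPACE$ serializability characterization by Hertrampf et al.~\cite{HLSVW93}. Fix $L\in\EXPSPACE$, decided by a deterministic $M$ using space $2^{s_0(n)}$ for some polynomial $s_0$. The configuration graph of $M$ on $w$ has $2^{O(2^{s_0(n)})}$ vertices, each encodable by $O(2^{s_0(n)})$ bits, and the edge predicate $(u,v)\in E$ is LOGSPACE-computable from $(u,v,w)$: it depends only on $M$'s transition function and on the single bit of $w$ scanned by $u$'s input head. Acceptance of $w$ reduces to reachability in this graph from the initial to the (WLOG unique) accepting configuration.

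Express reachability by Savitch's recursion
\[
\mathsf{Reach}(u,v,2^k) \ = \ \bigvee_{u'} \bigl(\mathsf{Reach}(u,u',2^{k-1}) \wedge \mathsf{Reach}(u',v,2^{k-1})\bigr),
\]
with base case ``$(u,v)\in E$'' and top level $k=\lceil\log N\rceil = O(2^{s_0(n)})$. Binarize each $\bigvee_{u'}$ over the $O(2^{s_0(n)})$ bits of $u'$ to obtain an AND/OR formula of depth $d = O(s_0(n)\cdot 2^{s_0(n)})$ whose input variables are the edge-tests. Apply Barrington's theorem to obtain a width-$5$ permutation branching program over $S_5$ of length $\ell \leq 4^d$ that, for a fixed non-identity $5$-cycle $\pi$, evaluates to $\pi$ iff the formula is true and to the identity otherwise. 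Choose the polynomial $s(n) = s_0(n) + \lceil\log s_0(n)\rceil + O(1)$ so that, with a constant-width $c$-bit binary encoding of $S_5$-elements, the bit-encoded PBP trace fits within $2^{2^{s(n)}}$ leaf positions; pad the PBP with identity-element blocks so that the total leaf-string length is exactly $2^{2^{s(n)}}$.

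Finally define $U$ and $R$. On input $w\cdot\bin_{2^{s(n)}}(m)$, $U$ parses $m$ as $(j,i)$ where $j$ selects a PBP step and $i\in[0,c)$ is a bit offset within the encoded group element; $U$ (i) computes the triple $(x_j,\sigma_j,\tau_j)$ specifying step $j$ by navigating the uniform Savitch/Barrington recursion, (ii) evaluates $x_j$ by locating the associated configurations $(u,v)$---which fit within the $2^{s(n)}$-bit index---and checking the edge predicate on $w$ locally, and (iii) outputs the $i$-th bit of the encoding of $\sigma_j$ if $x_j=1$ and of $\tau_j$ otherwise. This runs in $O(s(n))$ space, so $U\in\LOGSPACE$. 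Let $R$ be accepted by the DFA whose state space is $S_5$, which reads the leaf string in blocks of $c$ bits, updates its state by multiplication with the decoded element per block, and accepts iff the final state equals $\pi$; identity-element padding blocks leave the running product unchanged. The main technical obstacle is establishing LOGSPACE-uniformity of the map $j\mapsto(x_j,\sigma_j,\tau_j)$: one must verify that each level of the $d$-deep Savitch/Barrington construction is navigable using only $O(1)$ extra bits of state per level (base-$4$ digit of $j$ determining which quarter of the commutator, plus a base-$2$ choice within the binarized OR-tree), and that intermediate configuration bits can be retrieved on demand from $j$ within the $2^{s(n)}$-bit scratch budget; both follow from the highly regular recursive structure of Savitch's formula and the locality of Barrington's commutator construction.
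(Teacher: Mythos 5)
The paper does not prove this theorem---it is cited verbatim as Theorem~2 of \cite{GHOW10}, which in turn obtains it as a padded variant of the $\PSPACE$ serializability characterization of Hertrampf et al.~\cite{HLSVW93}: one applies the $\PSPACE$ theorem as a black box to the language of exponentially padded inputs, so that $\PSPACE$ on padded strings of length $2^{\poly(n)}$ corresponds to $\EXPSPACE$ on original strings of length $n$. You instead reconstruct the argument directly at the $\EXPSPACE$ scale by building a Savitch AND/OR formula for configuration-graph reachability and applying Barrington to it, with $U$ decoding a permutation-branching-program step index out of the second block of the input. Both routes are legitimate; yours is more self-contained and gives explicit handles on $s$, $R$, and $U$, while the padding route is shorter if one treats the $\PSPACE$ theorem as given.

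There is a bookkeeping error worth correcting. You set $\log N = O(2^{s_0(n)})$ for the number of Savitch levels, and you binarize each $\bigvee_{u'}$ over $O(2^{s_0(n)})$ bits, so each level contributes depth $\Theta(2^{s_0(n)})$; the total formula depth is therefore $d = O\bigl(2^{2 s_0(n)}\bigr)$, not $O\bigl(s_0(n) \cdot 2^{s_0(n)}\bigr)$ as you wrote. This forces $2^{s(n)} \geq 2d + O(1) = O\bigl(2^{2 s_0(n)}\bigr)$, i.e.\ $s(n) = 2 s_0(n) + O(1)$, rather than $s_0(n) + \lceil \log s_0(n)\rceil + O(1)$; the conclusion is unaffected since $s$ remains a polynomial, but your stated choice of $s$ does not cover the PBP length. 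Separately, in the uniformity paragraph you refer to the ``$2^{s(n)}$-bit scratch budget''---this conflates the input-tape portion carrying $\bin_{2^{s(n)}}(m)$ with the work-tape budget of $U$, which on inputs of length $n + 2^{s(n)}$ is only $O(s(n))$ bits. The right invariant is that $U$ never copies a configuration into working memory: it maintains only a level counter, two indices $i_L,i_R$ into the list of chosen intermediates recording which one is the current left/right Savitch endpoint, and an element of $S_5$ for the cumulative conjugator, each of which is $O(s(n))$ or $O(1)$ bits, and it reads configuration bits on demand from the input tape at positions it computes from those pointers. With these two fixes the argument is sound.
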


Let us fix any language $L$ in $\EXPSPACE$ and assume $L\subseteq\{0,1\}^*$ without
loss of generality.
Applying Theorem~\ref{thm serializability}, let us fix 
the regular language $R\subseteq\{0,1\}^*$ along with some fixed
deterministic finite automaton
$D=(Q_D,\{0,1\},q_0,\delta_D,F_D)$
with $L(D)=R$,
the fixed polynomial $s$ and the fixed language $U\in\LOGSPACE$.
Let us moreover fix an input $w\in\{0,1\}^n$ of length $n$ for $L$.
Figure~\ref{fig serializability} rephrases characterization (\ref{def serializability}) 
in Theorem~\ref{thm serializability}
in terms of an execution of a program that returns $1$ if, and only if, $w\in L$.

\begin{center}
\begin{figure}
\fbox{\parbox[t][5cm][c]{7cm}{
$\phantom{a}$\\
(1)\phantom{aaaaaa}\textbf{var} $q \in Q_D$\\
(2)\phantom{aaaaaa}\textbf{var} $b \in \{0,1\}$\\
(3)\phantom{aaaaaa}\textbf{var} $B\in\mathbb{N}$\\
	(4)\phantom{aaaaaa}$q := q_0$ \newline
	(5)\phantom{aaaaaa}$B:=0$ \newline
(6)\phantom{aaaaaa}\textbf{while} $B < 2^{2^n} $ \textbf{loop} \newline
	(7)\phantom{aaaaaa}$\phantom{aaaa}b:= \chi_U(w\cdot \bin_{2^{s(n)}}(B))$ \newline
(8)\phantom{aaaaaa}$\phantom{aaaa}q:= \delta_D(q,b)$ \newline
(9)\phantom{aaaaaa}$\phantom{aaaa}B:= B+1$ \newline
(10)\phantom{aaaaa}\textbf{end loop} \newline
	(11)\phantom{aaaaa}\textbf{return} $q \in F_D$ \newline
	}
}
	\caption{A program returning $1$ if, and only if, $w\in L$
	(using the characterization in Theorem~\ref{thm serializability}),
where $D=(Q_D,\{0,1\},q_0,\delta_D,F_D)$ is some deterministic finite automaton
	such that $L(D)=R$.}\label{fig serializability}
\end{figure}
\end{center}

Making use of $D$, $s$ and $U$ we will translate our input $w\in\{0,1\}^n$ in polynomial time
(in $|w|=n$) to some $(2,1)$-PTA
$\A=(Q,\Omega,P,R,q_{init},F)$ such that
\begin{itemize}
	\item $\Omega$ will contain precisely two parametric clocks $x$ and $y$ and
		further clocks that are non-parametric,
	\item $P=\{p\}$ is a singleton, and
	\item $w\in L$ if, and only, if reachability holds for $\A$.
\end{itemize}
The following lemma gives us a gadget $(2,1)$-PTA
that allows us to enforce that the parameter $p$ 
can only be evaluated to numbers that are larger than $2^{2^{s(n)}}$.

\begin{lemma}\label{lemma big}
	One can compute in polynomial time in $n$ some
	parametric timed automaton 
	$\A_{big}=(Q_{big},\Omega_{big},\{p\},R_{big},q_{big,init},\{q_{big,fin}\})$ 
	with two parametric clocks $x,y\in \Omega_{big}$
	and one parameter $p$ such that
	\begin{enumerate}
		\item $q_{big,init}(\vec{0})\xrightarrow{N}^*
			q_{big,fin}(v')$ for some $v':\Omega_{big}\rightarrow\N$ for some $N\in\N$, and
		\item for all $N\in\N$ and all $v':\Omega_{big}\rightarrow\N$ we have
			$q_{big,init}(\vec{0})\xrightarrow{N}^*q_{big,fin}(v')$ implies
			$N>2^{2^{s(n)}}$.
	\end{enumerate}
	
\end{lemma}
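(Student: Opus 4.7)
My plan is to construct $\A_{big}$ so that reaching $q_{big,fin}$ forces $N$ to be divisible by every prime $p \leq 2^{s(n)}$. By Chebyshev's effective prime number theorem (which gives $\theta(x) = \sum_{p \leq x} \ln p \geq \alpha x$ with some constant $\alpha > \ln 2$ for all sufficiently large $x$), the product $\prod_{p \leq 2^{s(n)}} p$ exceeds $2^{2^{s(n)}}$ for all $n$ above some absolute constant, finitely many small cases being hardcoded into $\A_{big}$. Any positive $N$ divisible by all these primes therefore satisfies $N > 2^{2^{s(n)}}$, giving the second item of the conclusion; the first item is witnessed by $N = \prod_{p \leq 2^{s(n)}} p$.

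To force $p \mid N$ I would check the equivalent condition $(N-1) \bmod p = p - 1$ using the modulo gadget of Lemma~\ref{lemma modulo}. Since that gadget depends on $k = v(x) - v(y) \bmod N$, I would begin $\A_{big}$ by letting one time unit elapse and then resetting $x$, producing $v(x)=0, v(y)=1$ and hence $k \equiv -1 \equiv N-1 \pmod N$. By the invariance in the third bullet of part~2 of Definition~\ref{def computes}, this value of $k$ is preserved by every subsequent invocation of a gadget from Lemma~\ref{lemma PSPACE} or Lemma~\ref{lemma modulo}, so throughout the run $k$ stays equal to $N-1$. The body of $\A_{big}$ is then a loop, reused across iterations and indexed by an $O(s(n))$-bit counter $j$, whose body performs: (i) a PSPACE gadget (Lemma~\ref{lemma PSPACE}) that computes from $j$ the binary representation of the $j$-th prime $p_j \leq 2^{s(n)}$, or signals that the enumeration is exhausted; (ii) the modulo gadget computing $(N-1) \bmod p_j$ into designated output bit clocks; (iii) a second PSPACE gadget that verifies this result equals $p_j - 1$ (no outgoing transition otherwise) and increments $j$. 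When step (i) signals exhaustion, the run transitions to $q_{big,fin}$.

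The main technical obstacle I anticipate is the plumbing between gadget invocations: maintaining at each interface that the non-parametric clocks are bit-compatible, that $v(x)-v(y) \equiv -1 \pmod N$, and that the two parametric clocks stay in $[0, N-1]$. Remark~\ref{remark wlog} is the right tool for widening the input bit clocks of one gadget to also carry the outputs of the previous one, so that the pieces can be connected without interference. The prime-enumeration in step (i) is routine for a polynomial-space machine, since primes up to $2^{s(n)}$ have $O(s(n))$ bits and both primality and successor-prime can be computed in $\PSPACE$. Finally, the Chiu--Davida--Litow transformation mentioned in the introduction is not needed for this lemma; it only enters later, when the full $\EXPSPACE$-hardness reduction must access individual bits of the implicit number $k$ rather than merely enforce divisibilities on $N$.
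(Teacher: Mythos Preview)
Your proposal is correct and close in spirit to the paper's proof, but the paper takes a simpler route. The paper tests whether $N-1$ is divisible by \emph{every} integer $I\in[1,2^{s(n)+1}-1]$, not just by primes: the loop simply increments $I$ (a trivial $\PSPACE$-computable function via Lemma~\ref{lemma PSPACE}), invokes the modulo gadget of Lemma~\ref{lemma modulo} to obtain $(N-1)\bmod I$, and checks the output bits are all zero. The number-theoretic input is Nair's bound $\LCM([1,k])\geq 2^k$ for $k\geq 9$, giving $N-1\geq\LCM([1,2^{s(n)+1}-1])\geq 2^{2^{s(n)+1}-1}>2^{2^{s(n)}}$.

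Compared to this, your construction adds two unnecessary complications: (i) a $\PSPACE$ gadget for enumerating primes rather than a plain increment, and (ii) a comparison of the modulo output against $p_j-1$ rather than against zero. Both are doable, but neither buys anything. The initialization (wait one time unit, reset $x$ to obtain $v(x)-v(y)\equiv -1\pmod N$) and the plumbing concerns you raise are exactly as in the paper. Your observation that Chiu--Davida--Litow is not needed here is correct; it enters only later when bits of the big counter must be read.
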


\begin{proof}
	Without loss of generality we may assume $2^{s(n)+1}\geq 10$.
	Letting $N$ denote the parameter value of its only parameter $p$,
	our 
	$(2,1)$-PTA $\A_{big}$ will test 
	whether $N-1$ is divisible by all numbers in the interval $[1,2^{s(n)+1}-1]$.
	This will be sufficient since $\LCM([1,k])\geq 2^k$ for all $k\geq 9$ by \cite{Nai82},
	thus implying $N>N-1\geq\LCM([1,2^{s(n)+1}-1])\geq2^{2^{s(n)+1}-1}>2^{2^{s(n)}}$.
Consider the following program which returns $1$ if, and only if, all numbers in
	$[1,2^{s(n)+1}-1]$ divide $N-1$.
	
\fbox{\parbox[t][4cm][c]{9cm}{
$\phantom{a}$\\
	(1)\phantom{aaaaaa}\textbf{var} $I \in \{0,1\}^{s(n)+1}$\\
	(2)\phantom{aaaaaa}\textbf{var} $J \in \{0,1\}^{s(n)+1}$\\
	(3)\phantom{aaaaaa}$I:=0^{s(n)+1}$\newline
	(4)\phantom{aaaaaa}\textbf{while} $I \not= 1^{s(n)+1} $ \textbf{loop} \newline
	(5)\phantom{aaaaaa}$\phantom{aaaa}I:=\bin_{s(n)+1}(\val(I)+1)$\newline
	(6)\phantom{aaaaaa}$\phantom{aaaa}J:= \bin_{s(n)+1}(N-1\text{ mod }\val(I))$\newline
	(7)\phantom{aaaaaa}$\phantom{aaaa}\textbf{if} J\not=0^{s(n)+1}\textbf{ then return } 0$\newline
(8)\phantom{aaaaa}\textbf{end loop} \newline
(9)\phantom{aaaaa}\textbf{return} $1$ \newline
	}
}
\medskip

	It remains to show that the program can be 
	implemented by a $(2,1)$-PTA $\A_{big}$
	with a suitable final control state $q_{big,fin}$.

	It is straightforward to initialize our two parametric clocks $x$ and $y$ in
	such a way that 
	one can enforce valuations $v$ that satisfy $v(x)-v(y)=N-1\text{ mod }N$:
	indeed, starting from the valuation 
	$\vec{0}$, we can wait one unit
	of time after which we reset $x$ but not $y$.

	We will use $O(s(n))$ suitable bit clocks for storing the variables $I$ and $J$
	respectively.

	Lines (3), (4) and (7) can easily directly be achieved by reading and writing the
	$O(s(n))$ many bits clocks reserved for storing $I$ and $J$. Line (5) boils
	down to incrementing $I$ when viewed as $s(n)+1$ bit integer and
	is thus obviously a polynomial space computable function 
	from $\N\times\{0,1\}^{s(n)}$ to $\{0,1\}^{s(n)}$ and hence computable
	using a suitable PTA based on Lemma~\ref{lemma PSPACE}.
	Line (6) is a function from
	$\N\times\{0,1\}^{s(n)+1}$ to 
	$\{0,1\}^{s(n)+1}$
	that can be implemented using a suitable PTA based on Lemma~\ref{lemma modulo}.

	As in the proofs of Lemma~\ref{lemma PSPACE} and Lemma~\ref{lemma modulo}
	we reset the two parametric clocks $x$ and $y$ once they have reached value $N$
	but only in case we are outside any of the gadget PTA corresponding to
	line (5) and line (6), respectively.
	Similarly we realize the implementation of the bit clocks for $I$ and $J$
	by resetting them once they have reached value $2$.
\end{proof}

Recall that we aim at implementing the program in 
Figure~\ref{fig serializability} by a 
$(2,1)$-PTA $\A$.
The initial part of $\A$ will consist 
of the gadget PTA $\A_{big}$
which will allow us to enforce 
an assignment of $p$ to some value 
$N> 2^{2^{s(n)}}$.
We first explain how to encode its variables and then discuss how to implement
the different lines of the program.

\medskip
{\bf Encoding the variables of the program in Figure~\ref{fig serializability}.}
Our PTA $\A$ will store in its control states
the current state $q$ of $D$ and the boolean variable $b$.
We cannot easily ``explicitly'' store the value of our variable $B$ in
binary as in the proof of Lemma~\ref{lemma PSPACE} via 
polynomially many bit clocks 
in such a way that, given the current valuation $v:\Omega \rightarrow\N$,
it suffices to simply inspect their $\widehat{v}$-value: 
indeed, there are only singly-exponentially many different combinations of such $\widehat{v}$-values,
yet $B$ is a number in $[0,2^{2^{s(n)}}]$ and 
thus of doubly-exponential magnitude.
We will rather store the value $B\in\N$ as the difference $v(x)-v(y)\text{ mod }N$
between our only two parametric clocks $x$ and $y$:
this is possible since $N> 2^{2^{s(n)}}$ by our initial gadget PTA $\A_{big}$.
However, when inspecting line (7) of Figure~\ref{fig serializability} we need to access certain bits of
the exponentially long bit string $w\cdot\bin_{2^{s(n)}}(B)$.
For this, we access $B$ in a different representation, namely in 
Chinese Remainder Representation that
we introduce next.
\newcommand{\CRR}{\mathsf{CRR}}
\begin{definition}[Chinese Remainder Representation]
	Let $p_i$ denote the $i$-th prime number and assume $\prod_{i=1}^mp_i>B$
	for some $m\in\N$.
	Then $\CRR_m(B)$ denotes the bit tuple $(b_{i,r})_{i\in[1,m],r\in[0,p_i-1]}$,
	where $b_{i,r}=1$ if $B\text{ mod }p_i=r$ and $b_{i,r}=0$ otherwise.
\end{definition}

Since $B$ will need to take
values in $[0,2^{2^{s(n)}}]$
and for every $k\in\N$ we have $\prod_{i=1}^{k}p_i>2^k$
there exists some $m\in O(\log(2^{2^{s(n)}}))=2^{\poly(n)}$
such that $\prod_{i=1}^m p_i>B$.
In other words, one can present $B$ as
\begin{eqnarray}
	\CRR_m(B)=(b_{i,r})_{i\in[1,m],r\in[0,p_i-1]}\quad\text{for some } m\in 2^{\poly(n)}\quad.
	\label{CRR B}
\end{eqnarray}
Since by the Prime Number Theorem the $i$-th prime $p_i$ is bounded by $O(i\log i)$
there exists some $\ell\in O(\log(m\log m))=O(\log(2^{\poly(s(n))}))=\poly(n)$
such that $\ell$ bits are sufficient to store in 
{\em binary}
precisely one of the primes $p_i$.
Thus, similarly
$O(\ell)=\poly(n)$ bits are sufficient
to store in {\em binary} precisely one of the pairs of the form $(i,r)$, where $i\in[1,m]$ and $r\in[0,p_i-1]$.
Moreover we have $|\CRR(B)|\in O(m^2\log m)=2^{\mathsf{poly}(s(n))}=2^{\poly(n)}$.

Observe that in line (7) of our program in Figure~\ref{fig serializability} 
we need to carry out $\LOGSPACE$ computations on our
exponentially long string $w\cdot\bin_{2^{s(n)}}(B)$.
Yet, if at all, we only have an on-the-fly mechanism for accessing the Chinese Remainder Representation
of $B$, notably still of exponential size in $n$. To have a chance to access concrete bits of $B$, we apply the following theorem that states that, given a number
in Chinese Remainder Representation, one can compute
in $\LOGSPACE$ 
its binary representation.

\begin{theorem}[Theorem 3.3. in \cite{ChDaLi01}]\label{thm CRR}
The following problem is computable in 
	$\mathsf{DLOGTIME}$-uniform $\mathsf{NC}^1$ (and thus in $\LOGSPACE$):

	INPUT: $\CRR_m(B)$ and $j\in[1,m]$

	OUTPUT: $\bit_j(B\text{ mod } 2^m)$
\end{theorem}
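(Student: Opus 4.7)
The plan is to use the Chinese Remainder Theorem (CRT) reconstruction identity and to reduce each step to known $\mathsf{NC}^1$ primitives. Set $M = p_1 p_2 \cdots p_m$, $M_i = M/p_i$, and
\[
   c_i \;=\; (B \bmod p_i) \cdot (M_i^{-1} \bmod p_i) \bmod p_i \;\in\; [0, p_i-1].
\]
By CRT there is a unique integer $k \in [0, m-1]$ with $B = \sum_{i=1}^m c_i M_i - kM$, whence
\[
   k \;=\; \Bigl\lfloor \sum_{i=1}^m \tfrac{c_i}{p_i} \Bigr\rfloor
   \quad \text{and} \quad
   B \bmod 2^m \;=\; \Bigl(\sum_{i=1}^m c_i M_i \;-\; kM\Bigr) \bmod 2^m,
\]
from which $\bit_j(B \bmod 2^m)$ is read off.

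First I would compute every $c_i$. By the Prime Number Theorem each $p_i$ has $O(\log m)$ bits, so $M_i^{-1} \bmod p_i$ depends only on the fixed list of primes and is obtained by the extended Euclidean algorithm on logarithmically short inputs; multiplying by the CRR input $B \bmod p_i$ and reducing modulo $p_i$ stays in $\mathsf{NC}^1$. Second, I would compute $k$ by evaluating $S = \sum_i c_i / p_i$ in fixed-point; this is iterated addition of $m$ rationals of polynomial total bit length, a classical $\mathsf{NC}^1$ primitive. Third, I would compute $\sum_i c_i M_i \bmod 2^m$ and $kM \bmod 2^m$ via iterated product of the odd primes $p_j$ modulo $2^m$; because $(\Z/2^m)^* \cong \Z/2 \times \Z/2^{m-2}$ has efficiently computable discrete logarithms, iterated product on odd residues reduces to iterated addition on the log-encodings and is again in $\mathsf{NC}^1$.

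The main obstacle is precision: the floor in the definition of $k$ could a priori be numerically unstable when $S$ lies very close to an integer, and without a gap bound the whole approach collapses. This is resolved by a rationality observation: the fractional part of $S$ equals $B/M$, so it is either $0$ or at least $1/M$ away from any integer; consequently $\lceil \log M \rceil + O(1) = \poly(m)$ guard bits past the binary point suffice unconditionally to determine $\lfloor S \rfloor$. Once this precision argument is in hand, all three stages above are standard $\mathsf{NC}^1$ routines with $\mathsf{DLOGTIME}$-uniform constructions (the primes $p_i$ and their inverses are generated directly by a uniform circuit from the index $i$), and the $\LOGSPACE$ inclusion then follows from $\mathsf{NC}^1 \subseteq \LOGSPACE$.
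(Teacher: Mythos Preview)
The paper does not prove this theorem; it is quoted as an external result of Chiu--Davida--Litow and used as a black box in the lower-bound construction. There is therefore no proof in the paper to compare against.

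On its own merits: your overall outline is the classical Beame--Cook--Hoover CRT reconstruction, and your precision argument for $k=\lfloor S\rfloor$ is correct (the fractional part of $S$ is exactly $B/M$, so $O(\log M)$ guard bits suffice). The gap is in step~3. You reduce the computation of $M\bmod 2^m$ and $M_i\bmod 2^m$ to taking discrete logarithms in $(\Z/2^m\Z)^*$, summing, and exponentiating back, asserting that discrete log in this group is ``efficiently computable''. That assertion is doing all the work and is not justified: the natural approaches (bit-by-bit Hensel lifting, or the $2$-adic logarithm series $\log(1+u)=\sum (-1)^{k-1}u^k/k$) are either inherently sequential of depth $\Theta(m)$ or themselves require iterated powering modulo $2^m$, which is exactly the problem you are trying to solve. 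The same circularity hits the exponentiation step. Unless you supply an independent $\mathsf{NC}^1$ algorithm for discrete log/exp modulo $2^m$, step~3 is circular.

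The way this is actually handled is different: the quantities $M\bmod 2^m$ and $M_i\bmod 2^m$ depend only on $m$, not on the input $B$, so they are hardwired into the circuit as constants; once they are available, step~3 is just iterated addition of products of a small number by a long constant. The entire technical content of the Chiu--Davida--Litow paper (improving Beame--Cook--Hoover from $\mathsf{P}$-uniform to $\mathsf{L}$-uniform) lies in showing that this advice---the primes $p_i$, the discrete-log tables modulo the small primes used in step~1, and the binary representations of $M$ and the $M_i$---can be generated within the required uniformity bound. Your last sentence (``generated directly by a uniform circuit from the index $i$'') names this issue but does not address it; that is where the proof actually lives.
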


\medskip
{\bf Realization of line (7) in the program in Figure~\ref{fig serializability}.}
Let us assume that we have $B<2^{2^{s(n)}}$ and recall that we have
stored $B$ as the difference $v(x)-v(y)\text{ mod }N$ of our two parametric clocks $x$ and $y$,
assuming $v$ to be our current clock valuation.
Let us show how to compute
$\chi_U(w\cdot\bin_{2^{s(n)}}(B))$, where we recall that $U$ is a language in $\LOGSPACE$.
Let us fix some logarithmically space bounded deterministic Turing machine $\mathcal{M}$ for $U$.

For simulating $\mathcal{M}$ our PTA $\A$ will
use $O(\log(n+2^{s(n)}))=\poly(n)$ auxiliary bit clocks $\mathcal{J}$
to store in binary the position of the input head of $\mathcal{M}$
and further 
$O(\log(n+2^{s(n)}))=\poly(n)$
auxiliary bit clocks $\mathcal{W}$ in order to store the working tape $\mathcal{M}$.
Reading and writing on the working tape as well as updating the 
position of the input head can done analogously as in the proof of Lemma~\ref{lemma PSPACE}.
It only remains to show how to access the cell content 
$\bit_j(w\cdot\bin_{2^{s(n)}}(B))$
of the input head of $\mathcal{M}$, where we recall that $j$ 
itself is stored inside 
the above-mentioned bit clocks $\mathcal{J}$.

To compute $\bit_j(w\cdot\bin_{2^{s(n)}}(B))$ we apply
Theorem~\ref{thm CRR} and simulate in turn
a $\LOGSPACE$ machine $\mathcal{M}'$ whose
input is assumed to be
$$
\CRR(B)=(b_{i,r})_{i\in[1,m],r\in[0,p_i-1]} \text{ and $j\in[1,m]$},
$$
where we already have direct access to $j$ via the bit clocks $\mathcal{J}$
but need a special treatment in order to access the $b_{i,r}$ of $\CRR(B)$.
Importantly, during the to-be discussed simulation of $\mathcal{M}'$ we never modify the 
$\widehat{v}$-values associated with the bit clocks in
$\mathcal{J}$ and $\mathcal{W}$ that are being used in the (outermost) simulation of $\mathcal{M}$.
Before discussing the access to the $b_{i,r}$ let us first discuss 
the simulation of the working tape of $\mathcal{M}'$:
this can be achieved by using $O(\log(|\CRR(B)|+s(n)))=O(\log(m^2\cdot\log m+s(n)))=\poly(n)$
many auxiliary bit clocks $\mathcal{W}'$, say, where
reading and writing the working tape is done again as in Lemma~\ref{lemma PSPACE}.
It remains to discuss how to implement the input head in the simulation of
$\mathcal{M}'$.
As mentioned repeatedly above, input $j$ can directly be accessed by the
bit clocks $\mathcal{J}$.
However, accessing 
$\CRR(B)=(b_{i,r})_{i\in[1,m],r\in[0,p_i-1]}$
cannot be done explicitly but on-the-fly:
for this we reserve
$O(\ell)=O(s(n))=\poly(n)$ additional auxiliary bit clocks $\mathcal{J}'$, say, 
to store in binary a pair of indices $(i,r)$,
where $i\in[1,m]$ and $r\in[0,p_i-1]$.
Given the binary access to $(i,r)$ via the bit clocks $\mathcal{J}'$, 
one can compute via further suitable $\poly(\ell)=O(s(n))=\poly(n)$ bit clocks
$\mathcal{H}$, say, the binary
representation of the $i$-th prime number $p_i$ in space polynomial in $\ell$
(and thus in $n$) by Lemma~\ref{lemma PSPACE}: 
indeed, given $i\in[1,m]$ in binary, i.e. using $\ell=\poly(n)$ bits,
it is straightforward to compute
the $i$-th prime in space polynomial in $\ell$.
Having a binary resentation of $p_i$ via the bit clocks $\mathcal{H}$
one can finally compute $(v(x)-v(y)\text{ mod } N)\text{ mod }p_i$ via a gadget 
by Lemma~\ref{lemma modulo}.
Our $(2,1)$-PTA $\A$ can thus indeed compute $B\text{ mod } p_i$ and thus decide if
$r$ equals the latter, which in turn is nothing but computing the 
to-be-computed input bit $b_{i,r}$ of $\CRR(B)$ for
the simulation of $\mathcal{M}'$.

Concerning the implementation details of the simulation of $\mathcal{M}'$ 
it is important to remark (recalling Remark~\ref{remark wlog})
that both during the sub-computation computing the $i$-th prime
$p_i$ (using Lemma~\ref{lemma PSPACE})
as well as during the sub-computation computing
$B\text{ mod } p_i$ (using Lemma~\ref{lemma modulo})
one can guarantee that the $\widehat{v}$-values
associated with the bit clocks in $\mathcal{J},\mathcal{W},\mathcal{J}'$ and 
$\mathcal{W}'$ are never being modified.

\medskip
{\bf Realization of the remaining lines of the program in Figure~\ref{fig serializability}.}
Lines (4), (8) and (11) can be done directly by the control states of $\A$.
Line (5) boils down to resetting both $x$ and $y$
simultaneously.
Line (6) will be done by checking if for the second time ever (the first time was when $v(x)=v(y)=0$) we have that
$\bin_{2^{s(n)}}(B)=0^{2^{s(n)}}$, which in turn
can be done analogously (but in fact simpler) as our above-mentioned implementation of line (7). 
Line (9) is letting time elapse till the parametric clock $y$ has value $1$ (i.e. one time unit after it had value
$N$ and was reset), 
and then resetting it.
The latter implementation indeed correctly 
implements incrementation modulo $N$.

\section{From two-parametric timed automata with one parameter
to parametric one-counter automata}\label{ptatopoca section}
\renewcommand{\mod}{\mathsf{mod}}
\newcommand{\values}{{\textsc{Values}}}

Being introduced by Bundala and Ouaknine in \cite{BundalaO17}, 
we define parametric one-counter automata.
These are automata that can manipulate a counter that can be incremented or decremented, 
parametrically or not, compared against constants or parameters,
and with divisibility tests modulo constants. 
It is worth mentioning that the notion of parametric one-counter automata 
from \cite{BundalaO17} is slightly more expressive 
than ours, as we shall discuss further below.

After introducing parametric one-counter automata 
we state a theorem (Theorem~\ref{theorem ptatopoca}), proven essentially already in 
\cite{BundalaO17} --- again, however for a slightly more expressive model of parametric one-counter automata --- 
that states that {\sc $(2,1)$-PTA-Reachability} 
can be reduced in exponential time to the reachability 
problem of parametric one-counter automata over one parameter.
Since the actual proof of Theorem~\ref{theorem ptatopoca}
follows the approach of Bundala and Ouaknine from \cite{BundalaO17},
it can be found in the Appendix.

\subsection{Parametric one-counter automata}\label{poca def section}

Given a set of parameters $P$
we denote by $\Op(P)$ the {\em set of operations} over the set of parameters $P$, being of the form
$\Op(P)=\Op_{\pm}\cup\Op_{\pm P}\cup\Op_{\mod\ \!\!\N}\cup\Op_{\bowtie\N}\cup\Op_{\bowtie P}$,
		where
		\begin{itemize}
			\item $\Op_{\pm}=\{-1,0,+1\}$,
			\item $\Op_{\pm P}=\{+p,-p \mid p \in P\}$,
			\item $\Op_{\mod\ \!\!\N }=\{\mod\ c\mid c\in \N\}$, 
			\item $\Op_{\bowtie\N}=\{\bowtie c\mid\bowtie\in\{<,\leq,=,\geq,>\},c\in\N\}$, and
			\item $\Op_{\bowtie P}=\{\bowtie p\mid\bowtie\in\{<,\leq,=,\geq,>\},p\in P\}$.
		\end{itemize}

The {\em size} $|op|$  of an operation $op$ is defined as
$$
|op|=\begin{cases}\log(c)&\text{if $op = \mod\ c$ or $op=\bowtie c\ \text{with} \ c \in \N$}\\
	1 &\text{otherwise.}
	\end{cases}
$$
 
We denote by {\em updates} those operations that lie in 
$\Op_{\pm}\cup\Op_{\pm P}$ and by {\em tests} those 
operations that lie in 
$\Op_{\mod\ \!\!\N}\cup\Op_{\bowtie\N}\cup\Op_{\bowtie P}$.  
			Previously, such as in \cite{BundalaO17}
			or \cite{haase2012complexity},
			slightly different sets of operations have been used, such
			as operations to increment the counter by a constant
			represented in binary.
		Moreover, Bundala and Ouaknine
\cite{BundalaO17} include for the purpose of their construction some operations of the form $+[0,p]$ 
		that allow to nondeterministically add to the counter a value that lies in $[0,\mu(p)]$,
	where $\mu(p)$ is the parameter valuation of parameter $p$. 
			As we shall show in this section, when
			reducing the reachability problem for parametric timed automata
			with two parametric clocks and one parameter to parametric
			one-counter automata one does not require these $+[0,p]$-transitions.

A {\em parametric one-counter automaton }(POCA for short)
is a tuple $$\C=(Q,P,R, q_{init}, F),$$
where 
\begin{itemize}
        \item $Q$ is a non-empty finite {\em set of control states},
        \item $P$ is a non-empty finite {\em set of  parameters}
                that can take non-negative integer values,
	\item $R\subseteq Q\times\Op(P)\times Q$ is a finite {\em set of  rules},
        \item $q_{init}$ is an {\em initial control state}, and
	\item $F\subseteq Q$ is a {\em set of final control states}.
		\end{itemize}

The {\em size} of $\C$ is defined as
$$
|\C| \ = \ |Q|+|P|+|R|+\sum_{(q,op,q')\in R}|op|.
$$

Let $\Const(\C)$ denote the constants that appear in the operations 
$op\in\Op_{\mod\ \!\!\N}\cup\Op_{\bowtie\N}$ for some operation  $(q,op,q')$ in $R$.
By $\Conf(\C)=Q\times\Z$ we denote the set of {\em configurations} of $\C$.
We prefer however to denote a configuration			
 of $\Conf(\C)$ by $q(z)$ instead of $(q,z)$.

Being slightly non-standard we define configurations to take counter values
over $\Z$ rather than over $\N$ for notational convenience.
This does not cause any loss of generality as we allow guards that enable us to test if the value of 
the counter is greater or equal to zero.

\begin{center}
	\begin{figure}
\includegraphics[width=0.7\textwidth]{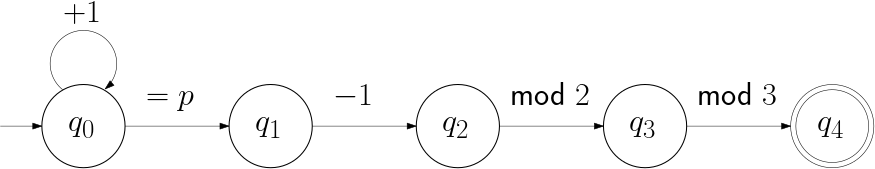}
	\caption{An example of a POCA. The automaton consists of four states and the set of
	parameters is $\{p\}$. The edges are represented by arrows labeled with the corresponding
	operations. 
		A parameter valuation $\mu:\{p\}\rightarrow\N$ 
		witnesses
		that reachability holds for
		the above POCA if, and only, if 
		$\mu(p)\equiv 5\text{ mod }6$.
		}
	\label{example poca}
	\end{figure}
\end{center}

\begin{samepage}

\begin{definition}[transition]{\label{def poca}}
	For every $op\in\Op(P)$, for every parameter valuation 
	$\mu:P\rightarrow\N$, for every POCA $\C$, and for every two configurations  $q(z)$ and $q'(z')$ in $\Conf(\C)$
	we define the {\em transition} $q(z)\xrightarrow{op,\mu}q'(z')$ if 
        there exists some $(q,op,q')\in R$ such that either of the following holds
\begin{enumerate}[(1)]
        \item $op=c \in\Op_{\pm}$ and $z'=z+c$, 
        \item $op\in\Op_{\pm P}$, and either 
                \begin{itemize}
                        \item $op=+p$ and $z'=z+\mu(p)$, or
                        \item $op=-p$ and $z'=z-\mu(p)$.
                \end{itemize}
	\item $op=\mod \ c\in\Op_{\mod\ \!\!\N}$, $z=z'$ and $z'\equiv 0\ \mod \ c$,
        \item $op=\bowtie c\in\Op_{\bowtie \N}$, $z=z'$ and $z'\bowtie c$, and 
         \item $op=\bowtie p\in\Op_{\bowtie P}$, $z=z'$ and $z'\bowtie\mu(p)$.
\end{enumerate}

\end{definition}

\end{samepage}

\newcommand{\I}{\mathcal{I}}

Let $\mu:P\rightarrow\N$ be a parameter valuation.
A {\em $\mu$-run} in $\C$ (from $q_0(z_0)$ to $q_n(z_n)$) is a sequence, possibly empty (i.e. $n=0$), of the form 
$$
\pi \quad =\quad q_0(z_0) \ \xrightarrow{op_0,\mu} \ q_1(z_1)  
\quad \cdots \quad \xrightarrow{op_{n-1},\mu} \ q_n(z_n)\quad.
$$

We say $\pi$ is {\em accepting} if $q_0 = q_{init}$, $z_0=0$,
and $q_n\in F$. 
We say {\em reachability holds} for the POCA $\C$ if
there exists an accepting $\mu$-run for some $\mu\in\N^P$. 
We refer to Figure~\ref{example poca} for an instance of a POCA for which reachability
holds.
For any two $c,d\in[0,n]$ we define
the {\em subrun $\pi[c,d]$ from $q_c(z_c)$ to $q_d(z_d)$} as the 
$\mu$-run
	$q_c(z_c)\ \xrightarrow{\pi_c,\mu} q_{c+1}(z_{c+1})\ \cdots\ \xrightarrow{\pi_{d-1},\mu} \  q_d(z_d).$
As expected, a {\em prefix} (resp. {\em suffix}) of $\pi$ is 
a $\mu$-run of the form $\pi[0,c]$
(resp. $\pi[d,n]$).

We define the {\em concatenation $ \pi_1 \pi_2$ of 
two $\mu$-runs $\pi_1$ and $\pi_2$} when the source configuration of $\pi_2$ is equal to the target configuration of $\pi_1$
as expcected.

We define $\Delta(\pi)=z_n-z_0$ as the {\em counter effect} of 
 the run $\pi$ and for each $i\in[0,n-1]$ let
$\Delta(\pi,i) = \Delta(\pi[i,i+1])$ to denote the 
counter effect of the $i$-th transition of $\pi$.
Its {\em length} is defined as $|\pi|=n$.

In the particular case where $P=\{p\}$ is a singleton for some parameter $p$ and $\mu(p)=N$, we prefer to write 
$q(z)\xrightarrow{op,N}q'(z')$
to denote $q(z)\xrightarrow{op,\mu}q'(z')$ and prefer to call a $\mu$-run an {\em $N$-run}.

We define $\values(\pi)=\{z_i\mid i\in[0,n]\}$
to denote the {\em set of counter values} of the configurations
of $\pi$.
We define a run $\pi$'s {\em maximum} as $\max(\pi)=\max(\values(\pi))$ and
 the {\em minimum} as $\min(\pi)=\min(\values(\pi))$. \newline

 The following theorem states an 
 exponential time reduction
 from {\sc $(2,1)$-PTA-Reachability}
 to the reachability problem of particular
 parametric one-counter automata over one parameter.

\begin{samepage}
\begin{theorem}\label{theorem ptatopoca} 
        The
        following is computable in exponential time:

	INPUT: A $(2,1)$-PTA $\A$.
	
	OUTPUT: A POCA $\C$ over one parameter\\
	such that
	\begin{enumerate}
\item for all $N \in \N$
all accepting $N$-runs $\pi$ in $\C$ satisfy 
$\values(\pi)\subseteq[0,4 \cdot \max(N,| \C|)]$, and
\item reachability holds for $\A$ if, and only if, 
	reachability holds for $\C$.
	\end{enumerate}
\end{theorem}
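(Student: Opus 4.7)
\textbf{Proof plan for Theorem~\ref{theorem ptatopoca}.}
The plan is to take the Bundala--Ouaknine construction from \cite{BundalaO17} essentially off the shelf, and then make the two adjustments promised in the introduction: (i) eliminate their auxiliary $+[0,p]$ operations, which are not in our definition of $\Op(P)$, and (ii) verify and then enforce the quantitative counter bound $\values(\pi)\subseteq[0,4\cdot\max(N,|\C|)]$. The output POCA will simulate the $(2,1)$-PTA $\A$ by maintaining, through its control states, a symbolic region abstraction of $\A$ that tracks the integer comparison types of both parametric clocks $x,y$ against $\Const(\A)$ and against $p$, together with the current control state and reset history of $\A$; the counter will store one of the parametric clocks (the ``active'' one), while the other is represented symbolically via its current zone and its offset to the active one at the last reset.

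First I would recall the standard symbolic construction for PTA with bounded parametric clocks: since in discrete time the fractional parts vanish, the only relevant information about the clocks is, for each clock, which interval among $\{0\},\{1\},\ldots,\{C\},(C,p-C],\ldots,[p-C,p],(p,p+C],\ldots$ it currently lies in (with $C=\max\Const(\A)\le |\A|$), together with the guessed relative order of the clocks. This partition gives $2^{\O(|\A|)}$ many symbolic states, which is acceptable since we are allowed exponential time. The rules of $\A$ translate locally: a time elapse becomes a $+1$ on the counter, a reset of the active clock becomes a reset to $0$, a reset of the passive clock requires a ``swap'' block in which the passive value is moved into the counter by incrementing in a controlled way while the current counter value is decremented in parallel (this is the classical trick for transferring between the two clocks through a single counter, and is where the exponential blowup in the state space is used to keep track of the offset to recover), and guards $\omega\bowtie c$ or $\omega\bowtie p$ become POCA tests of the same form (possibly after adding a constant offset encoded in the control state).

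The first adjustment (removing $+[0,p]$) is easy in this direction: every use of $+[0,p]$ in \cite{BundalaO17} corresponds to letting time elapse until a nondeterministically chosen event threshold is reached; since the threshold is always of the form ``counter $\bowtie c$'' or ``counter $\bowtie p$'' for $c\le |\A|$, I can replace the single $+[0,p]$ operation by a loop of $+1$ operations followed by a test against the intended threshold, which is a sound and complete implementation of the nondeterministic jump using only the operations in $\Op(\{p\})$. The second adjustment (the counter bound) follows by design: the counter always holds the value of one parametric clock of $\A$, which is bounded by $\max(N,C)\le\max(N,|\C|)$; the factor $4$ absorbs the transient counter manipulations during swap blocks, where we may temporarily move by at most $2\max(N,|\C|)$ up and down in order to transfer the passive clock into the counter while saving the offset in the state. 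Any attempt by the counter to exceed $4\max(N,|\C|)$ can be ruled out by adding explicit $\le 4\max(N,|\C|)$-style guards at every transition; since $N$ is not known at construction time this is implemented by guards of the form $\le 4|\C|$ on top of implicit zone guards of the form $\le p+C$.

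The main obstacle that I expect is correctly engineering the ``swap'' step when the passive clock is reset or becomes relevant to a guard, so that (a) the active/passive roles can be exchanged while faithfully preserving the actual values of both clocks, (b) the counter stays within the prescribed window $[0,4\max(N,|\C|)]$ throughout, and (c) only operations from $\Op(\{p\})$ are used. This bookkeeping is tedious but local and is essentially the content of Bundala--Ouaknine's proof; once it is in place, correctness of the reduction is a straightforward structural induction matching $N$-runs of $\A$ against $N$-runs of $\C$, and the exponential time bound on the construction is immediate from the $2^{\O(|\A|)}$ size of the symbolic state space and the polynomial size of each gadget.
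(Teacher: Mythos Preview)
Your high-level plan (region abstraction, one counter encoding the two parametric clocks, gadgets for resets) matches the paper's approach, but two of your concrete technical choices diverge from what the paper actually does and, as stated, would not work.

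\textbf{Eliminating $+[0,p]$.} Your proposal to replace each $+[0,p]$ by a nondeterministic $+1$-loop followed by a threshold test is the real gap. Functionally this is sound, but it destroys exactly the property you are trying to establish in Point~1: nothing prevents the loop from overshooting before the threshold test fires, so you cannot guarantee $\values(\pi)\subseteq[0,4\max(N,|\C|)]$. The paper avoids $+[0,p]$ by a genuinely different mechanism. Within each region the reset-free time elapse is governed by a one-counter automaton with only $+0/+1$ updates, so by To's theorem the set of admissible elapses between two control states is a finite union of arithmetic progressions $a_j+b_j\N$ with $a_j,b_j$ polynomial in $|Q_\B|$. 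The POCA then \emph{checks membership} in such a progression via a fixed-size gadget using the $\mod\,b_j$ operation together with $\pm p$, $\pm 1$ and comparison tests, rather than nondeterministically incrementing. This is precisely why $\Op_{\mod\,\N}$ is part of the operation set, and it is what keeps all intermediate counter values within a window of width $O(\max(N,a_j))$.

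\textbf{Counter encoding and the bound.} The paper does not keep an ``active'' clock in the counter with a symbolic passive clock and swap blocks; it stores the difference $v(x)-v(y)$, which after each reset coincides with $\pm$(value of the non-reset clock). Region crossings and resets are then handled by small gadgets whose counter excursion is analysed case by case to lie in $[-2\max(N,a),2\max(N,a)]$, and only at the very end is the whole automaton shifted by $2u$ with $u=\max(c_{\max}+2,N)$ to obtain non-negativity; this shift is implemented by a branch on whether $N>c_{\max}+2$ so that tests against $p$ and against constants can be adjusted accordingly. Your suggestion to enforce the bound by sprinkling explicit $\le 4|\C|$ guards does not work, since the bound is $4\max(N,|\C|)$ and $N$ is unknown; the paper gets the bound from the gadget analysis, not from added guards.
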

\end{samepage}

The proof of Theorem~\ref{theorem ptatopoca} can be found in Appendix~\ref{appendix ptatopoca}.

\section{Upper bounds}\label{upper section}

In this section we state the Small Parameter Theorem (Theorem~\ref{theorem upper}) which tells us that
for every POCA over one parameter and every
sufficiently large parameter value $N$,
accepting $N$-runs with counter values all in $[0,4N]$ 
can be turned into accepting $N'$-runs for some smaller $N'$.
After having stated the theorem we will show that 
together with Theorem~\ref{theorem ptatopoca} it implies
an $\EXPSPACE$ upper bound for 
{\sc $(2,1)$-PTA-Reachability}.

We provide an overview of the proof of the Small Parameter Theorem 
in Section~\ref{sec overview}, whose actual proof
will stretch over Sections~\ref{semirun section}, \ref{hill section}, \ref{5-6 section},
and \ref{application section}. 
\bigskip

\noindent
\begin{samepage}
For each POCA $\C=(Q,P,R,q_{init}, F)$ we define the following 
constants:\newline
\fbox{\parbox[t][2.2cm][c]{8cm}{
\begin{eqnarray*}
	\macro_{\C}&=&\LCM(\Const(\C))\\
	\Gamma_{\C}& =& \LCM(17 \cdot |Q|) \cdot \macro_{\C}\\
	\Upsilon_{\C}&=&17 \cdot |Q| \cdot \LCM(17 \cdot |Q|)\cdot
				\left(17 \cdot |Q|\cdot\macro_\C+2\right)\\
	M_{\C}& =& 30 \cdot (\Upsilon_{\C} + \Gamma_{\C}+1)
	\label{constant definitions}
\end{eqnarray*}
	}
}
\end{samepage}

\medskip

\noindent
Since for every non-empty finite set 
$U \subseteq \N\setminus\{0\}$ we have 
$\LCM(U) \leq \max(U)^{|U|}$, the following lemma is straightforward.

\begin{lemma}\label{constants bounded}
	The above constants are asymptotically bounded by 
	$2^{\mathsf{poly}(|\C|)}$.
\end{lemma}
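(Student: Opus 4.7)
The plan is to bound each of the four constants in turn, using the hint that $\LCM(U) \leq \max(U)^{|U|}$ and the observation that both the number and the magnitude of the constants appearing in $\C$ are controlled by $|\C|$.

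First I would bound $\macro_{\C} = \LCM(\Const(\C))$. Every $c \in \Const(\C)$ arises from some rule $(q, op, q')$ with $op \in \Op_{\mod\, \N} \cup \Op_{\bowtie \N}$, and by the definition of $|op|$ we have $c \leq 2^{|op|} \leq 2^{|\C|}$, so $\max(\Const(\C)) \leq 2^{|\C|}$. Also, $|\Const(\C)| \leq |R| \leq |\C|$. Applying the hint, $\macro_{\C} \leq (2^{|\C|})^{|\C|} = 2^{|\C|^2}$, which is $2^{\poly(|\C|)}$.

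Next I would bound $\LCM(17 \cdot |Q|)$, which by the convention $\LCM(j) = \LCM([1,j])$ equals $\LCM([1, 17|Q|])$. Taking $U = [1, 17|Q|]$ in the hint gives $\LCM(17 \cdot |Q|) \leq (17|Q|)^{17|Q|}$, which is again $2^{\poly(|\C|)}$ since $|Q| \leq |\C|$. From this, $\Gamma_{\C} = \LCM(17 \cdot |Q|) \cdot \macro_{\C}$ is a product of two $2^{\poly(|\C|)}$ quantities, hence also $2^{\poly(|\C|)}$. The same reasoning bounds $\Upsilon_{\C}$: it is the product of $17 \cdot |Q|$ (polynomial in $|\C|$), $\LCM(17 \cdot |Q|)$ (at most $2^{\poly(|\C|)}$), and $17 \cdot |Q| \cdot \macro_\C + 2$ (at most $2^{\poly(|\C|)}$). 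Finally, $M_{\C} = 30 \cdot (\Upsilon_{\C} + \Gamma_{\C} + 1)$ is a constant multiple of a sum of $2^{\poly(|\C|)}$ terms, hence still $2^{\poly(|\C|)}$.

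There is no real obstacle here; the only thing to be careful about is to confirm that $|\Const(\C)|$ and $\max(\Const(\C))$ are both controlled by $|\C|$ via the definition $|\C| = |Q| + |P| + |R| + \sum_{(q,op,q')\in R} |op|$, so that the bound $\LCM(U) \leq \max(U)^{|U|}$ turns into an exponential in a polynomial of $|\C|$. The closure of the class $2^{\poly(|\C|)}$ under sums, products, and polynomial factors then handles $\Gamma_{\C}$, $\Upsilon_{\C}$, and $M_{\C}$ uniformly.
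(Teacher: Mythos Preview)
Your proposal is correct and follows exactly the approach the paper intends: the paper itself does not give a detailed proof but simply states that the lemma is straightforward given the inequality $\LCM(U) \leq \max(U)^{|U|}$, and your argument is precisely the intended unpacking of that remark.
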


The main result of this section is the following theorem.

\begin{theorem}[Small Parameter Theorem]\label{theorem upper}
	Let $\C=(Q,\{p\},R, q_{init}, F)$ be a POCA with one parameter $p$.
        If there exists an  
accepting
	$N$-run in $\C$ with values all in $[0, 4 N]$ 
	for
	some $N>M_{\C}$, then there exists an
accepting $(N-\Gamma_{\C})$-run in $\C$.
\end{theorem}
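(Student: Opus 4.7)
The goal is to take the given accepting $N$-run $\pi$, whose counter stays in $[0,4N]$, and surgically transform it into an accepting $(N-\Gamma_\C)$-run. Replacing $N$ by $N-\Gamma_\C$ changes both updates (every $+p$ shifts by $\Gamma_\C$ less, and every $-p$ by $\Gamma_\C$ more) and parametric tests (every $\bowtie p$ now compares against a smaller threshold), so any local surgery must be globally consistent. On the other hand, the hypothesis $N>M_\C$ is very generous in the structural freedom it provides.

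I would begin with a structural decomposition of $\pi$. Each counter value $z_i$ lies in one of four strata $[kN,(k+1)N)$ for $k\in\{0,1,2,3\}$; the $\pm p$ updates move between adjacent strata, but inside a stratum the parameter acts only as an additive shift. This naturally splits $\pi$ into maximal \emph{semiruns} that stay within a single stratum (the object of Section~\ref{semirun section}). Because $N>M_\C$ is far larger than $|Q|$, any long semirun must revisit control states many times; analysing each semirun as a sequence of \emph{hills} (Section~\ref{hill section}), i.e.\ excursions between chosen counter levels, yields a rich supply of short cycles $q(v)\to q(v')$ with controlled counter effects and residues modulo $\macro_\C$.

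The pumping is then orchestrated to absorb the effect of replacing $N$ with $N-\Gamma_\C$. Roughly, each $\pm p$ transition along the new run will produce a counter value differing from the original by a multiple of $\Gamma_\C$, and these discrepancies must be reabsorbed by inserting or removing cycles whose cumulative counter effect compensates. The constant $\Gamma_\C = \LCM(17\cdot|Q|)\cdot \macro_\C$ is engineered precisely for this: the factor $\macro_\C=\LCM(\Const(\C))$ ensures the perturbation respects every modular test, while $\LCM(17\cdot|Q|)$ is a sufficiently divisible number that cycles of period at most $17\cdot|Q|$, which must exist by pigeonhole on control states, can be combined to realise the required net shift.

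The main obstacle will be the parametric comparisons $\bowtie p$: their threshold itself moves when we decrement the parameter, so an assertion $z_i \geq p$ that was true in $\pi$ may fail in the new run when $z_i$ lies in the critical window between $N-\Gamma_\C$ and $N$, and analogously near $2N$, $3N$, $4N$. Handling this requires a delicate choice of where, inside each stratum, the compensating cycles are inserted or removed, so that no parametric test straddles its threshold. This is where the extra slack encoded in $\Upsilon_\C$ inside $M_\C$ should enter: it guarantees enough cycle candidates sufficiently far from each threshold to perform the correction without flipping any test. I expect this coordinated verification across all strata and all guard types, carried out in Sections~\ref{5-6 section} and~\ref{application section}, to absorb the bulk of the technical work.
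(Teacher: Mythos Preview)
Your proposal misreads the paper's core abstraction and therefore does not engage with the actual mechanism. A \emph{semirun} in Section~\ref{semirun section} is not a subrun confined to a stratum $[kN,(k+1)N)$; it is a run in which the comparison tests (conditions~(4) and~(5) of Definition~\ref{def poca}) have been \emph{dropped}, while updates and modulo tests are kept. This relaxation is the crucial move: it lets one shift counter values by multiples of $\macro_\C$ and glue out subsemiruns freely, deferring all worry about comparisons. The recovery of comparison tests at the end is done not by strata bookkeeping but through the notion of an $\ell$-\emph{embedding} (Definition~\ref{def embedding}): one arranges that every configuration of the new $(N-\Gamma_\C)$-semirun lies on the same side of a chosen line $\ell$ as the corresponding configuration of $\pi$, and this, together with a final shift by $-\Gamma_\C$, is what guarantees that the parametric and constant comparisons still hold.

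Your intuition about pigeonhole cycles and the arithmetic role of $\Gamma_\C=\LCM(17|Q|)\cdot\macro_\C$ is correct, and it does surface inside the Depumping Lemma (Lemma~\ref{lemma zero}); but that lemma applies to semiruns whose bracket projection $\phi$ lies in $\Lambda_8$, not to stratum-confined pieces. The actual factorization in Section~\ref{application section} is at level $N/3$, not at multiples of $N$: portions with values below $N/3$ are already $(N-\Gamma_\C)$-runs, and each remaining excursion is handled by the $5/6$-Lemma (Lemma~\ref{lemma 5/6}), whose hypothesis is that the endpoints and a reference level $\ell$ lie within $5/6\cdot N$ of one another. Inside the $5/6$-Lemma, the accumulated $\Gamma_\C$-discrepancies from many $\pm p$ transitions are absorbed not by your proposed cycle-insertion within strata but by the Hill and Valley Lemma (Lemma~\ref{lemma hill/valley}), which systematically pairs or compensates $+p$ and $-p$ transitions so that each hill or valley becomes an $(N-\Gamma_\C)$-semirun with \emph{unchanged} endpoints. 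Your strata plan gives no mechanism for this endpoint preservation across a chain of $\pm p$ transitions, and that is the genuine gap.
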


Let us first establish that this theorem is enough to prove the desired $\mathsf{EXPSPACE}$ upper bound.

\begin{samepage}
\begin{corollary}\label{corollary}
	{\sc$(2,1)$-PTA-Reachability} is in 
	$\mathsf{EXPSPACE}$.
\end{corollary}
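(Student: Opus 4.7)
The plan is to chain the exponential-time reduction of Theorem~\ref{theorem ptatopoca} with the Small Parameter Theorem to obtain a doubly-exponential bound on the sole parameter, then decide reachability in the resulting non-parametric one-counter automaton in nondeterministic exponential space, and finally invoke Savitch's theorem.

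First I would apply Theorem~\ref{theorem ptatopoca} to the input $(2,1)$-PTA $\A$: in exponential time this yields a POCA $\C$ over one parameter $p$ such that reachability in $\A$ is equivalent to reachability in $\C$, and every accepting $N$-run of $\C$ has all counter values in $[0, 4\max(N,|\C|)]$. In particular $|\C|\le 2^{\poly(|\A|)}$, so by Lemma~\ref{constants bounded} the constants $M_\C$ and $\Gamma_\C$ are bounded by $2^{\poly(|\C|)}\le 2^{2^{\poly(|\A|)}}$.

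Second, I would use the Small Parameter Theorem to establish that, if reachability holds for $\C$ at all, then it is already witnessed by some $N\le M_\C$. Indeed, fix a witness $N_0$ with an accepting $N_0$-run $\pi$. If $N_0\le M_\C$ we are done. Otherwise $N_0>M_\C\ge |\C|$, so by the first conclusion of Theorem~\ref{theorem ptatopoca} all values of $\pi$ lie in $[0,4N_0]$, and Theorem~\ref{theorem upper} supplies an accepting $(N_0-\Gamma_\C)$-run. Iterating this reduction, each step subtracts the fixed value $\Gamma_\C$, so after finitely many iterations we obtain an accepting $N$-run with $N\le M_\C$.

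Third, I would present the decision procedure: given $\A$, compute $\C$ in exponential time; nondeterministically guess a binary encoding of some $N\le M_\C$ (requiring $\log M_\C = \poly(|\C|)= 2^{\poly(|\A|)}$ bits, hence exponential space); then simulate $\C$ configuration by configuration, maintaining only the current control state and the current counter value $z\in[0,4\max(N,|\C|)]$ (also encodable with exponential space); at each step nondeterministically pick a rule, evaluate its guard (possibly involving $N$ or a modular test) against the stored values, update the counter, and accept upon reaching a final control state. This algorithm runs in $\mathsf{NEXPSPACE}$, which equals $\mathsf{EXPSPACE}$ by Savitch's theorem; combined with the matching lower bound of Section~\ref{lower section} this yields Theorem~\ref{main result}. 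The only nontrivial ingredient is the Small Parameter Theorem itself, proved in the subsequent sections; the present corollary is essentially a packaging of Theorems~\ref{theorem ptatopoca} and~\ref{theorem upper} together with a routine bounded-counter simulation.
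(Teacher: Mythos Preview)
Your argument is essentially the paper's proof, and the overall strategy---reduce via Theorem~\ref{theorem ptatopoca}, iterate Theorem~\ref{theorem upper} to bound the parameter, then do a bounded-configuration search---is correct.

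There is one small slip: you assert ``$N_0>M_\C\ge |\C|$'' without justification. The paper does not claim $M_\C\ge|\C|$; instead it works with the bound $\max\{M_\C,|\C|\}$ throughout. The point is that Theorem~\ref{theorem upper} requires the run's values to lie in $[0,4N]$, whereas Theorem~\ref{theorem ptatopoca} only guarantees $[0,4\max(N,|\C|)]$; these coincide only when $N\ge|\C|$. So the iteration step is valid as long as $N>\max\{M_\C,|\C|\}$, and the terminal bound on the parameter should be $\max\{M_\C,|\C|\}$ rather than $M_\C$. Since both quantities are at most $2^{2^{\poly(|\A|)}}$, this does not affect the space bound, and the rest of your argument (guessing $N$ in binary, on-the-fly simulation with exponentially many bits for the counter, Savitch) goes through unchanged.
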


\begin{proof}
	Given a $(2,1)$-PTA $\A$, we apply 	
	Theorem~\ref{theorem ptatopoca} 
        and translate $\A$ in exponential time into 
	a POCA $\C=(Q,P,R,q_0,F)$ with $P=\{p\}$,
	such that
\begin{enumerate}
\item 
all accepting $N$-runs $\pi$ in $\C$ 
	satisfy 
$\values(\pi)\subseteq[0, 4 \cdot \max(N,|\C|)]$,  and
\item
 reachability holds for $\A$ if, and only if, 
	reachability holds for $\C$.
\end{enumerate}

	We first claim that if there exists an accepting 
	$N$-run $\pi$ for $\C$, then there exists one 
	satisfying 
	$N\in[0,\max\{M_\C,|\C|\}]$ and 
	$\val(\pi)\subseteq[0,4\cdot\max\{M_\C,|\C|\}]$.
	All accepting $N$-runs $\pi$ of $\C$
	satisfy $\val(\pi)\subseteq[0,4\cdot\max\{N,|\C|\}]$
	by Point 1,
	so if $N>\max\{M_\C,|\C|\}$, then 
	$4N=4\cdot\max\{N,|\C|\}$ and hence there exists
	some accepting $(N-\Gamma_\C)$-run for $\C$
	by Theorem~\ref{theorem upper}.
	Remarking that
	in case $N>\max\{M_\C,|\C|\}$ we have
	$N-\Gamma_\C>M_\C-\Gamma_\C>0$,
	one can repeat the above argument for 
	$N-\Gamma_\C$ and possibly for $N-2\Gamma_\C$ and 
	so on, thus implying the desired existence.

	Thus by Point 2 it suffices to check in 
	exponential space
	in $|\A|$ whether there exists some accepting
	$N$-run $\pi$ for $\C$
	satisfying $\values(\pi)\subseteq[0,4N]$
	for some $N\in[0,\max\{M_\C,|\C|\}]$.
	Since $M_\C\in2^{\poly(|\C|)}=2^{2^{\poly(|\A|)}}$,
	the latter is simply a 
        reachability question in a doubly-exponentially large 
	finite graph all of whose vertices and edges 
	can be represented using exponentially many bits, and 
	thus decidable in exponential space.

\end{proof}

\end{samepage}

\subsection{Overview of the proof
of the Small Parameter Theorem}\label{sec overview}
For the proof of the Small Parameter Theorem
(Theorem~\ref{theorem upper}) we proceed as follows.
\begin{itemize}
\item In Section~\ref{semirun section} we introduce the 
	notion of $N$-semiruns.
		These generalize $N$-runs in that 
		only modulo tests need to hold,
		not however comparison tests.
		We define some natural operations on them, 
		like shifting them by some value or 
		cutting out certain infixes.
		In Subsection~\ref{bracket section} we prove two important lemmas on semiruns that will serve as base tools for 
		subsequent steps in the proof:
\begin{itemize}
\item The Depumping Lemma (Lemma~\ref{lemma zero}) will be our main tool to depump certains semiruns, in the following sense:
in case
 the difference between the number of $+p$-transitions and 
		$-p$-transitions is bounded for all infixes 
		and equal to $0$ for the whole semirun
		and furthermore the absolute
		counter effect of the semirun is
		sufficiently large, 
then one can build --- by 
		applying the above-mentioned operations ---
a new semirun whose absolute counter 
		effect is slightly smaller.
\item The Bracket Lemma (Lemma~\ref{bracket lemma}) states
	that in case the counter effect is sufficiently large and the counter values are all in $[0, 4 N]$,
then one can find an infix where the counter effect 
		is also large and moreover the difference 
		between the number of $+p$-transitions and 
		$-p$-transitions is bounded for all
		infixes and equal to $0$ for the whole semirun.
\end{itemize}

\item
In Section~\ref{hill section} we introduce the notion of hills
and valleys.
Hills are $N$-semiruns that start and end in configurations with 
		low counter values but where all intermediate configurations have counter values above the source 
		and target configuration. 
We introduce the dual notion of valleys.
The main contribution of the section is the following.
\begin{itemize}
\item The Hill and Valley Lemma (Lemma~\ref{lemma hill/valley})
	allows to transform
		$N$-semiruns that are hills (resp. valleys) into $(N-\Gamma_\C)$-semiruns with 
		the same source and target configuration.
\end{itemize}

\item
	Making use of all of the above 
		lemmas, we introduce
	in Section~\ref{5-6 section} the following lemma,
		which is a main technical ingredient
		in the proof of Theorem~\ref{theorem upper}.
\begin{itemize}
\item The $5/6$-Lemma (Lemma~\ref{lemma 5/6})
	states that $N$-semiruns with counter effect 
		smaller than $5/6 \cdot N$ 
		can be turned in into
		$(N-\Gamma_\C)$-semiruns.
\end{itemize}

\item
Finally, in Section~\ref{application section} 
we prove the Small Parameter Theorem 
(Theorem~\ref{theorem upper})
by carefully factorizing a potential $N$-run into 
subsemiruns that can be treated by the 
above lemmas.

\end{itemize}

\begin{center}
	\begin{figure}
		\hspace{2.7cm}
\includegraphics[width=0.5\textwidth]{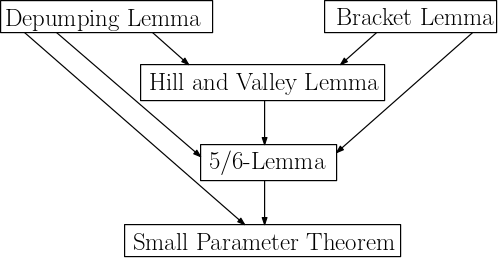}
	\caption{Illustration of the dependencies between the lemmas. The presence of an arrow going from a lemma to another means that the lemma in question is used inside the proof of the lemma the arrow points to.}
\label{figure orga}
	\end{figure} 
\end{center}

In Figure~\ref{figure orga} we give an overview 
of the dependencies of the above-mentioned lemmas.

\section{Semiruns, their bracket projection, and embeddings}\label{semirun section}

In this section we motivate and introduce the notion of semiruns
by loosening the conditions on runs, and
define basic operations on them.
These basic operations possibly change 
their counter values, length, or counter effect. 

The formalism of an $N$-run is a little bit too restrictive to define operations on them.
For instance, substracting $\macro_\C$  from all counter values
of an
$N$-run produces an object, where conditions (1),(2), and (3) of Definition~\ref{def poca} indeed hold --- as $\macro_\C = \LCM(\Const(\C))$ --- but where conditions (4) and (5) 
might not hold anymore, as comparison guards may be violated. 
Rather than certifying each time that
the application of an operation preserves
the property of being an $N$-run
we prefer to loosen the definition
in order to avoid tedious case distinctions.
This motivates the notion of 
semitransitions (resp. semiruns),
which are a generalization of transitions (resp. runs), 
in which the comparison tests need not to hold.

We introduce semiruns and operations on them in Section~\ref{subsection semiruns}.
Section~\ref{bracket section} introduces the bracket projection of
semiruns, the Depumping Lemma (Lemma~\ref{lemma zero}) and the
Bracket Lemma (Lemma~\ref{bracket lemma}).
Section~\ref{section embeddings} introduces the notion of 
embeddings, which provide a formal means to express
when a semirun can structurally be found as a subsequence
of another.
\subsection{Semiruns and operations on them}\label{subsection semiruns}
\begin{samepage}
	\begin{definition}[semitransition]\label{semitransitions}
Let $\C=(Q,P,R,q_{init}, F)$ be a POCA.
		For every operation 
		$op\in\Op(P)$, for every
		parameter valuation $\mu:P\rightarrow\N$,
 and for every two configurations $q(z)$ and $q'(z')$ in $\Conf(\C)$
 we define the {\em semitransition}
$q(z) \semi{op,\mu} q'(z')$  if 
        there exists some $(q,op,q')\in R$ such that 
		conditions (1),(2), and (3) 
		of Definition
		\ref{def poca} hold but
		where conditions (4) and
(5) are loosened
by the following conditions (4') and (5') respectively
\begin{enumerate}[(1)]
        \item $\ op=c \in\Op_{\pm}$, and $z'=z+c$, 
        \item $\ op\in\Op_{\pm P}$ and either 
                \begin{itemize}
                        \item $op=+p$ and $z'=z+\mu(p)$, or
                        \item $op=-p$ and $z'=z-\mu(p)$.
                \end{itemize}
	\item $\ op=\text{mod} \ c\in\Op_{\text{mod}\ \!\!\N}$, $z=z'$ and $z'\equiv 0\ \text{mod} \ c$,
\end{enumerate} \begin{enumerate}[(1')]
  \setcounter{enumi}{3}
	\item \!\!$op=\bowtie c\in \Op_{\bowtie\N}$ and $z=z'$, and
	\item $op=\bowtie p\in\Op_{\bowtie P}$, and $z=z'$.
\end{enumerate}
\end{definition}
\end{samepage}

Thus, in a nutshell,  when writing $q(z) \semi{op,\mu} q'(z')$ we do not require that the 
comparison tests against parameters or against
constants
 hold; however the updates and the modulo tests against constants 
must 
be respected.
This naturally gives rise to the definition of {\em $\mu$-semiruns} as expected.
Note that in particular every $\mu$-run is a $\mu$-semirun. 
The abbreviation $N$-semirun, $q(z)\semi{op,N}q'(z')$, the counter effect $\Delta$, $\values$,
$\min$, $\max$, subsemirun, prefix, suffix are defined as for runs.

Note that in particular every $N$-run is an $N$-semirun.
Importantly, note also that
semitransitions involving comparison tests
are still syntactically present in semiruns.
By a careful analysis, 
one can therefore possibly perform operations on 
$N$-semiruns in order to show that they 
are in fact $N$-runs.\newline

\begin{example}
The $2$-semirun 
$$
\pi \ = \ q_0(0) \ \semi{+1,2} \ q_1(1) \ \semi{+1,2} \ q_1(2) \ \semi{+1,2} \ q_{1}(3) \ \semi{\leq p,2} \ q_2(3) \ \semi{\mod\ 3,2} \ q_3(3)
$$
is not a $2$-run, as, in 
	$q_{1}(3) \semi{\leq p,2} q_2(3)$,
	condition (4) of Definition~\ref{def poca} 
	does not hold, 
	however condition (4') of Definition~\ref{semitransitions} does. \newline
\end{example}

\subsection*{Shifting and gluing of semiruns}

\noindent
Let us fix a POCA $\C$ and some $N$-semirun
\begin{eqnarray*}
	\pi  \ = \  q_0(z_0) \ \semi{\pi_0,N} \  q_1(z_1) \  \cdots
	 \  \semi{\pi_{n-2},N} \  q_{n-1}(z_{n-1}) \  \semi{\pi_{n-1},N} \ 
	q_n(z_n).
\end{eqnarray*} 

\noindent
We define the following operations, where we recall
that $\macro_\C=\LCM(\Const(\C))$:
\begin{itemize}
\item For $D \in \macro_\C \Z$, we define the {\em shifting} of $\pi$ by $D$ as
\begin{eqnarray*}
	\pi+D  \ = \  q_0(z_0+D) \ \semi{\pi_0,N} \  q_1(z_1+D) \ \cdots \  
	\semi{\pi_{n-1},N} \  q_n(z_n+D)\ .
\end{eqnarray*}
\noindent
Since there are no effective comparison tests and $D$ is an integer that is divisible by all constants appearing in modulo tests in $\C$, it is clear that $\pi+D$ is again an $N$-semirun.

\item For two configurations $q_i(z_i)$ and $q_j(z_j)$ with $0\leq i<j\leq n$ and
where $D=z_j-z_i\in\macro_\C \Z$ is a multiple of $\macro_\C $ and $q_i=q_j$, we define the {\em gluing} of the configurations as
\begin{eqnarray*}
\pi-[i,j]  \ = \  q_0(z_0)  \  \ \cdots
 \ \semi{\pi_{i-1},N}
 \ 
 q_i(z_i) \  \semi{\pi_{j},N} \  q_{j+1}(z_{j+1}-D) 
	 \ \cdots \ \semi{\pi_{n-1},N} \  q_n(z_n-D).
\end{eqnarray*}

\end{itemize}

When gluing the leftmost and rightmost configurations of pairwise non-intersecting intervals $I_1=[a_1,b_1],\ldots, I_k=[a_k, b_k] \subseteq[0,n]$,
assuming $b_i < a_{i+1}$ for all $1 \leq i < k$, and $q_{a_i}=q_{b_i}$ and $z_{b_i} - z_{a_i} \in \macro_\C \Z$ for all $1 \leq i \leq k$,
we will use $\pi-I_1-I_2\cdots- I_k$ to denote the result
corresponding to gluing each interval successively while shifting the others accordingly, 
instead of writing the more tedious
 $\pi^{(k)}$, where
\begin{eqnarray*}
	\pi^{(1)} &=&  \pi - [a_1,b_1],\\ 		
	\pi^{(2)} &=&  \pi^{(1)} - [a_2 - (|I_1|-1),
	b_2 - (|I_1|-1)],\\ 
		&\cdots&\\							\pi^{(k)} &=& \pi^{(k-1)} - [a_k - 
		\sum_{1 \leq j < k} (|I_j|-1), 
					b_k - 
		\sum_{1 \leq j < k} (|I_j|-1) ]\ . 
\end{eqnarray*}

\subsection{The bracket projection of semiruns}\label{bracket section}

In this section we define a projection $\phi$ of 
semitransitions
$\tau = q(z) \semi{op,N} q'(z')$ to a word over the binary 
alphabet $\{\boldsymbol{[},\boldsymbol{]}\}$, 
where transitions with $op=+p$ are mapped to $\boldsymbol{[}$,  transitions with $op=-p$ are mapped
to $\boldsymbol{]}$, 
and all other transitions are mapped to the empty word $\varepsilon$.
The projection $\phi$ is naturally extended
to a morphism from semiruns to $\{\boldsymbol{[},\boldsymbol{]}\}^*$.
In this section we will show the following lemmas.
\begin{itemize}
\item The Depumping Lemma (Lemma~\ref{lemma zero})
	states that for each $N$-semirun whose
		$\phi$-projection has bounded
		bracketing properties and that has
		a counter effect whose absolute value is 
		sufficiently large 
		there exists another
		$N$-semirun with a counter 
		effect whose absolute value is slightly smaller.
		This latter resulting $N$-semirun
		has a particular form in that it
		can be obtained from the original $N$-semirun
		by applying the above-mentioned
		operations of shifting and gluing: notably,
		the subsemiruns that are being glued themselves
		have a $\phi$-projection that has
		bounded bracketing properties.
\item The Bracket Lemma (Lemma~\ref{bracket lemma}) 
	states that if an $N$-semirun 
		has all its counter values in $[0, 4 N]$,
has an absolute counter effect that is sufficiently large
		and has a $\phi$-projection 
		satisfies a suitable threshold
		condition on the number of occurrences
		of $\boldsymbol{[}$ and $\boldsymbol{]}$,
		that there is a subsemirun where the absolute
		counter effect is 
		also large and whose $\phi$-projection
		has bounded bracketing properties.
\end{itemize}

\noindent
Formally, we define a mapping $\phi$ such that
for every semitransition 
$\tau = q(z) \semi{op,N} q'(z')$,
$$
\phi(\tau) \ = \ 
\begin{cases}
	\ \boldsymbol{[} & \text{if $op=+p$}\\
	\ \boldsymbol{]} & \text{if $op=-p$}\\
	\ \varepsilon & \text{otherwise}.
\end{cases}
$$

\noindent  
Note that an $N$-semirun $\pi$ can contain several 
$+p$-transitions and $-p$ transitions.
We introduce the notation $\phi(\pi,i) = \phi(\pi[i,i+1])$ to denote the $\phi$-projection of the $i$-th transition of $\pi$ for all $i \in [0,|\pi|-1]$. The mapping $\phi$ is naturally extended to a morphism on semiruns to words over the binary 
alphabet $\{\boldsymbol{[},\boldsymbol{]}\}$ as 
expected:
$\phi(\pi) = \phi(\pi, 0) \phi(\pi, 1) \ \cdots \ \phi(\pi, |\pi|-1)$.\newline

We are particularly interested in $N$-semiruns whose 
projection by $\phi$
contains as many opening as closing brackets and 
only a few pending ones (when read from left to right).
To make this formal, for all $k \in\N$ we define 
the regular language
\begin{eqnarray*}
	\Lambda_k=\left\{w\in\{\boldsymbol{[},\boldsymbol{]}\}^* 
	: |w|_{\boldsymbol{[}}=
	|w|_{\boldsymbol{]}}, \forall u,v \in 
	\{\boldsymbol{[},\boldsymbol{]}\}^*. \ 
	uv=w \implies |u|_{\boldsymbol{[}}-|u|_{\boldsymbol{]}}
	\in[-k, k]\right\}.
	\end{eqnarray*}

We are interested in analyzing $N$-semiruns with counter 
values in $[0, 4 N]$. 
Bounding the counter values like this limits
the number of $+p$ (resp. $-p$) that can appear in a row. This will be the basis in the Bracket Lemma which amounts
to showing the existence of subsemiruns 
whose $\phi$-projection is in $\Lambda_{8}$.

The now following Depumping Lemma 
will enable us to reduce the counter effect of 
$N$-semiruns whose $\phi$-projection is in $\Lambda_8$. 
It is worth remarking that $\Gamma_\C \ll \Upsilon_\C$,
recalling the definition of our constants on 
page~\pageref{constant definitions}.

\begin{lemma}[Depumping Lemma]\label{lemma zero}
	For all $N$-semiruns $\pi$ satisfying 
	$\phi(\pi)\in\Lambda_{8}$
	and $|\Delta(\pi)|>\Upsilon_\C$
	there exists an $N$-semirun $\pi'$ such that either
	\begin{itemize}
	\item $\Delta(\pi)>\Upsilon_\C$ and $\Delta(\pi')=\Delta(\pi)-\Gamma_\C$, or
	\item  $\Delta(\pi)<-\Upsilon_\C$ and $\Delta(\pi')=\Delta(\pi)+\Gamma_\C$.
	\end{itemize}
	Moreover, $\pi'=\pi-I_1-I_2 \ \cdots \ -I_k$ for pairwise 		\textit{disjoint} 
							intervals $I_1,\ldots,I_k\subseteq [0,|\pi|]$
	such that we have 
							$\phi(\pi[I_i])\in\Lambda_{16}$
					for all $i\in[1,k]$, 
							and either $\Delta(\pi[I_i])>0$ 
							for all $i\in[1,k]$ or
							$\Delta(\pi[I_i])<0$ for all $i\in[1,k]$.
\end{lemma}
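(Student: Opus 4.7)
I focus on the case $\Delta(\pi) > \Upsilon_\C$; the case $\Delta(\pi) < -\Upsilon_\C$ is completely symmetric, handled by applying the mirror argument with ``first-visit'' positions replaced by ``last-visit'' positions. My goal is to construct pairwise disjoint intervals $I_1, \ldots, I_k \subseteq [0, |\pi|]$ with each $\Delta(\pi[I_i]) > 0$, each $\phi(\pi[I_i]) \in \Lambda_{16}$, and $\sum_i \Delta(\pi[I_i]) = \Gamma_\C$; then $\pi' := \pi - I_1 - \cdots - I_k$ is an $N$-semirun (the gluing conditions reduce to matched control state and counter congruence modulo $\macro_\C$ at each glue point, which I guarantee by construction) with $\Delta(\pi') = \Delta(\pi) - \Gamma_\C$.

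The central device is a normalized counter that cancels out $\pm p$-transitions. Define the bracket depth $d(i) := |\phi(\pi[0,i])|_{\boldsymbol{[}} - |\phi(\pi[0,i])|_{\boldsymbol{]}}$ and set $\tilde z_i := z_i - d(i) \cdot N$. Checking each type of semitransition one verifies that $\tilde z$ is unchanged across $\pm p$-, modulo, and comparison transitions, while a $\pm 1$-transition changes $\tilde z$ by $\pm 1$. Since $\phi(\pi) \in \Lambda_8$ forces $d(0) = d(|\pi|) = 0$, we get $\tilde z_{|\pi|} - \tilde z_0 = \Delta(\pi) > \Upsilon_\C$. Thus $(\tilde z_i)_i$ is a walk with $\pm 1/0$-steps whose net rise exceeds $\Upsilon_\C$; in particular it achieves every integer value in $[\tilde z_0, \tilde z_0 + \Delta(\pi)]$, and writing $\sigma(v) := \min\{i : \tilde z_i = v\}$ yields more than $\Upsilon_\C$ distinct first-visit positions.

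Attach to each $\sigma(v)$ the \emph{type} $\tau_v := (q_{\sigma(v)}, d(\sigma(v)), z_{\sigma(v)} \bmod \macro_\C)$. Because $d \in [-8, 8]$ the number of possible types is at most $17 \cdot |Q| \cdot \macro_\C$, and the explicit form $\Upsilon_\C = 17|Q| \cdot \LCM(17|Q|) \cdot (17|Q|\macro_\C + 2)$ is tailored so that pigeonhole delivers some type $\tau^*$ hit by more than $17|Q| \cdot \LCM(17|Q|)$ first-visits, listed in increasing value order as $v_0 < v_1 < \cdots < v_m$ with $m > 17|Q| \cdot L$, where $L := \LCM(17|Q|)$. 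Their common residue mod $\macro_\C$ lets me write $v_j = v_0 + k_j \macro_\C$ for strictly increasing integers $0 = k_0 < k_1 < \cdots < k_m$. For any $0 \leq a < b \leq m$, the compound interval $[\sigma(v_a), \sigma(v_b)]$ is then a valid glue point---matched state, matched bracket depth $d^*$, and counter difference $(k_b - k_a)\macro_\C > 0$---and additionally $\phi(\pi[\sigma(v_a), \sigma(v_b)]) \in \Lambda_{16}$, because absolute bracket depths along $\pi$ stay in $[-8, 8]$, forcing the relative depths inside the interval to lie in $[-16, 16]$.

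What remains is a purely combinatorial extraction: find pairwise disjoint sub-intervals $[\alpha_1, \beta_1], \ldots, [\alpha_r, \beta_r] \subseteq [0, m]$ with $\sum_l (k_{\beta_l} - k_{\alpha_l}) = L$. A pigeonhole on $k_0, \ldots, k_m$ modulo $L$---among $m + 1 > 17|Q|L$ values in $L$ residue classes---yields a class with at least $17|Q| + 1$ members $k_{\gamma_0} < \cdots < k_{\gamma_t}$ with $t \geq 17|Q|$ and consecutive differences $k_{\gamma_{s+1}} - k_{\gamma_s} = c_s \cdot L$ for positive integers $c_s$. If some $c_s = 1$, the interval $[\gamma_s, \gamma_{s+1}]$ alone has effect $\Gamma_\C$ and we are done; otherwise every $c_s \geq 2$, and a further iterated pigeonhole on the $c_s$'s, exploiting the divisibility property $k \mid L$ for every $k \in [1, 17|Q|]$, assembles a sub-collection of cuts whose total effect is exactly $\Gamma_\C$. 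The main obstacle is precisely this last step---upgrading ``effect is some multiple of $\Gamma_\C$'' to ``effect is exactly $\Gamma_\C$''; the peculiar factor $17|Q|\macro_\C + 2$ inside $\Upsilon_\C$ is calibrated to provide just enough slack for this subset-sum argument to close.
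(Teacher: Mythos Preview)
Your normalized counter $\tilde z_i = z_i - d(i)N$ and the first-visit positions are exactly the paper's potential function $\mathrm{pot}(i)$; the type $(q,d,z\bmod\macro_\C)$ and the $\Lambda_{16}$ verification are also the same. The setup is correct.

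The gap is precisely where you flag it, and it is real. After your single global pigeonhole you hold $m+1>17|Q|L$ same-type first-visits with $k_j=(v_j-v_0)/\macro_\C$, and you must find disjoint index intervals whose total $k$-difference is exactly $L$. But nothing bounds the individual gaps $\delta_j=k_{j+1}-k_j$: your second pigeonhole (on $k_j\bmod L$) yields $t\geq 17|Q|$ gaps of the form $c_sL$, and in the case $c_s\geq 2$ for all $s$ every available building block is already $\geq 2L$, so no collection of them sums to $L$. The phrase ``iterated pigeonhole exploiting $k\mid L$'' cannot rescue this---the divisibility of $L$ by small numbers is irrelevant once all candidate effects are multiples of $L$ exceeding $L$ itself. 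The factor $17|Q|\macro_\C+2$ in $\Upsilon_\C$ is not slack for a subset-sum argument; it serves a different purpose.

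The paper avoids the problem by reversing the order of the two pigeonholes. First it chops the run into $17|Q|L$ disjoint pieces, each with $\mathrm{pot}$-increase strictly greater than $17|Q|\macro_\C$ (this is exactly what the factor $17|Q|\macro_\C+2$ buys). Inside each piece a \emph{local} pigeonhole---on the first $17|Q|\macro_\C+1$ first-visit positions, against the $17|Q|\macro_\C$ types---produces a matched pair with counter difference $d_i\macro_\C$ where crucially $d_i\in[1,17|Q|]$. Now the $d_i$ are bounded, and a final standard pigeonhole on the $17|Q|L$ values $d_i\in[1,17|Q|]$ gives some $d$ occurring at least $L$ times; since $d\mid L$, selecting $L/d$ of those pieces yields total effect $(L/d)\cdot d\cdot\macro_\C=\Gamma_\C$. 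The point is that chopping first caps each $d_i$ at $17|Q|$, which is what makes the final selection trivial; your global-first ordering throws that cap away.
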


 \begin{proof}
Let 
 $\pi = q_0(z_0) \semi{\pi_0,N} q_1(z_1) \semi{\pi_1,N} \quad \cdots\quad
\semi{\pi_{n-1},N } q_n(z_n) $
	be an $N$-semirun
such that $\phi(\pi) \in \Lambda_{8}$. 
We will assume without loss of generality that $\Delta(\pi) > \Upsilon_\C$. 
	The dual case when
	$\Delta(\pi) < - \Upsilon_\C$ can be proven
	analogously. 

\newcommand{\pot}{\mathsf{pot}}
\noindent
	For every position $i\in[0,n]$ 
let us define
$$\lambda(i)=|\phi(\pi[0,i])|_{\boldsymbol{[}}
-
|\phi(\pi[0,i])|_{\boldsymbol{]}}
\qquad\text{and }\qquad
\pot(i)=z_i-z_0-\lambda(i)\cdot N\qquad.
$$
Note that since $\phi(\pi)\in\Lambda_{8}$
we have for all $i\in[0,n]$,
\begin{eqnarray}
	\lambda(i)\in[-8,8],\qquad\qquad  \label{Lambda lambda}
\end{eqnarray}
and moreover
\begin{eqnarray}
	\phi(\pi[0,i])\in\Lambda_{8}
 \Longleftrightarrow\lambda(i)=0.
\end{eqnarray}

\noindent
	\begin{samepage}
We note the following important properties of $\pot$,
\begin{enumerate}
	\item $|\pot(i-1) - \pot(i)| \leq 1$ for all $i\in[1,n]$,
	\item $\pot(0)=0$,
	\item for all $0\leq i<j\leq n$,
		if $\lambda(i)=\lambda(j)$, then $\pot(j)-\pot(i)=z_j - z_i$, and
	\item $\pot(n)=z_n-z_0=\Delta(\pi)$ since $\lambda(0)=\lambda(n)=0$.
\end{enumerate}
	\end{samepage}

	The following claim states that
	if in a subsemirun the $\pot$ increases
	sufficiently large, then one can
	find a subsemirun therein that can potentially be glued.
\begin{claim}
For each subsemirun $\pi[a,b]$ 
that satisfies 
	$\pot(b)-\pot(a)> 17\cdot|Q|\cdot\macro_\C$
there exist positions 
$a\leq s<t\leq b$,
such that 
\begin{itemize}
	\item $q_{s}=q_{t}$,
	\item $\lambda(s)=\lambda(t)$, and
	\item $z_{t}-z_s=d\macro_\C$
		for some $d\in[1,17 \cdot |Q|]$.
\end{itemize}
\end{claim}
\begin{proof}[Proof of the Claim.]
	Since 
	by assumption 
	$\pot(b)-\pot(a)>
	17 \cdot |Q|\cdot\macro_\C$,
by the pigeonhole principle and Point $1$ above,
	there exist two indices 
	$a\leq s<t\leq b$ such that
	$q_{s}=q_{t}$, 
	$\lambda(s)\in[-8,8] $ and $\lambda(t)\in[-8,8]$ are equal,
	and $\pot(t)-\pot(s)=d\macro_\C$
	for some $d\in[1,17 \cdot |Q|]$.
	By Point 3 above, from $\lambda(t)=\lambda(s)$, 
	it follows $z_{t}-z_s=\pot(t)-\pot(s)=d\macro_\C$.\\
	\noindent
	{\em (End of the proof of the Claim)}
\end{proof}

\noindent
Since $\pot(i)-\pot(i-1)\leq 1$ for all $i\in[1,n]$
by Point 1 above and
\begin{eqnarray*}
	\pot(n)-\pot(0) &=& z_n - z_0   \\
			&=& \Delta(\pi) \\
			&>& \Upsilon_\C \\
			&\stackrel{\text{page~\pageref{constant definitions}}}{=}&  
			17 \cdot |Q| \cdot \LCM(17 \cdot |Q|)\cdot
				\left(17 \cdot |Q|\cdot\macro_\C+2\right),
\end{eqnarray*}
by the pigeonhole principle, there exist at least 
$$
17 \cdot |Q| \cdot \LCM(17 \cdot |Q|)
$$
pairwise disjoint subsemiruns
$\pi[a,b]$ satisfying $\pot(b)-\pot(a)>
17\cdot|Q|\cdot\macro_\C
$.
Let $$L=  \LCM(17 \cdot |Q|),$$
and let 
$\pi[a_1,b_1],\ldots,\pi[a_{17 \cdot |Q| \cdot L},b_{17 \cdot |Q| \cdot L}]$ be an enumeration
of these latter subsemiruns.
We apply the above Claim to all of these $\pi[a_i,b_i]$:
there exist positions $a_i\leq s_i\leq t_i\leq b_i$
such
that $\lambda(s_i)=\lambda(t_i)$, $q_{s_i}=q_{t_i}$,
and $z_{t_i}=z_{s_i}+d_i\macro_\C$ for some $d_i\in[1,17 \cdot |Q|]$.
From $\lambda(s_i)=\lambda(t_i)$ and (\ref{Lambda lambda})  it follows
$\phi(\pi[s_i,t_i])\in\Lambda_{16}$.
Recall that $\Gamma_\C = \LCM(17 \cdot|Q|) \cdot \macro_\C = L \cdot \macro_\C$,
cf. page~\pageref{constant definitions}.
By the pigeonhole principle, among
these $17 \cdot |Q| \cdot L$ pairwise disjoint subsemiruns $\pi[a_i,b_i]$,
there exists some $d\in[1, 17 \cdot |Q|]$ such that there are 
$L/d$ many different $\pi[a_i,b_i]$ all satisfying $d_i=d$.
Let $\pi[a_{i_1},b_{i_1}],\ldots,\pi[a_{i_{L/d}},b_{i_{L/d}}]$
be an enumeration of these latter $\pi[a_i,b_i]$.
Note that for all of these $\pi[a_i,b_i]$ 
we have $\Delta(\pi[s_{i_j},t_{i_j}])=d\cdot\macro_\C$.
Since moreover $q_{s_{i_j}}=q_{t_{i_j}}$
we know that, 
for all $j\in[1,L/d]$,
the gluing 
$\pi-[s_{i_j},t_{i_j}]$
is an $N$-semirun with 
$\Delta(\pi-[s_{i_j},t_{i_j}])
=\Delta(\pi)-d \macro_\C$.
Thus, 
$$
\pi'\quad=\quad\pi-[s_{i_1},t_{i_1}]-\ldots-[s_{i_{L/d}},t_{i_{L/d}}]
$$
is an $N$-semirun 
satisfying $\Delta(\pi')=\Delta(\pi)-d\cdot(L/d) \cdot \macro_\C=\Delta(\pi)-\Gamma_\C$
as required.
\end{proof}

Let us now introduce the Bracket Lemma, which states that 
in case the absolute value of the 
counter effect of an $N$-semirun is sufficiently large, 
the counter values are all in $[0, 4N ]$
and a majority condition holds on the
number of occurrences of $\boldsymbol{[}$ and $\boldsymbol{]}$
in its $\phi$-projection, that there is
a subsemirun where the counter effect is also
large and that moreover has good bracketing properties 
(in the sense of the Depumping Lemma).
Roughly speaking, it based on the idea that if the values of a semirun are all
in $[0, 4N ]$, there cannot be five $+p$-transitions in a row.
Technically speaking, the Bracket Lemma 
can be applied to $(N-\Gamma_\C)$-semiruns, where $N$ is sufficiently
large: the reason is that the Bracket Lemma 
will later be applied to $N$-semiruns in which some of
the $+p/-p$-transitions have already been modified (``by hand'')
to have an effect $(N-\Gamma_\C)/-(N-\Gamma_\C)$ instead
of $N/-N$.

\begin{samepage}
\begin{lemma}[Bracket Lemma]\label{bracket lemma}
	For all $N > M_\C$, all
	$(N-\Gamma_\C)$-semiruns $\pi$ satisfying 
	$\values(\pi)\subseteq[0,4N]$,
	$\Delta(\pi)<-\Upsilon_\C$ (resp. $\Delta(\pi)>\Upsilon_\C$)
	and 
	where $\phi(\pi)$
	contains at least as many occurrences of $\boldsymbol{[}$ as occurrences of $\boldsymbol{]}$ 
	(resp. at least as many occurrences of $\boldsymbol{]}$ as occurrences of $\boldsymbol{[}$)
	there exists a subsemirun $\pi[c,d]$ 
	satisfying
	$\phi(\pi[c,d])\in\Lambda_{8}$ and
	$\Delta(\pi[c,d])<-\Upsilon_\C$ (resp. $\Delta(\pi[c,d])>\Upsilon_\C$).
\end{lemma}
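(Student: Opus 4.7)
The plan is to handle the case $\Delta(\pi) < -\Upsilon_\C$ with $|\phi(\pi)|_{\boldsymbol{[}} \geq |\phi(\pi)|_{\boldsymbol{]}}$; the opposite configuration is symmetric. Following the template of the Depumping Lemma, I would introduce $\lambda(i) = |\phi(\pi[0,i])|_{\boldsymbol{[}} - |\phi(\pi[0,i])|_{\boldsymbol{]}}$ recording the running bracket imbalance and $S_i$ the cumulative contribution of all non-parametric updates up to position $i$, so that $z_i = z_0 + \lambda(i)(N-\Gamma_\C) + S_i$, and in particular $\lambda(0)=0$ and $\lambda(n) \geq 0$ by the majority hypothesis. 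The assumption $\Delta(\pi) < -\Upsilon_\C$ combined with $\lambda(n)(N-\Gamma_\C) \geq 0$ then immediately yields the crucial inequality $S_n < -\Upsilon_\C$: the non-parametric contribution alone is already strongly negative. The target is to exhibit indices $c < d$ with $\lambda(c) = \lambda(d)$ and $|\lambda(k) - \lambda(c)| \leq 8$ for all $k \in [c,d]$ (together equivalent to $\phi(\pi[c,d]) \in \Lambda_8$), such that $\Delta(\pi[c,d]) = S_d - S_c < -\Upsilon_\C$.

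The key analytical lever is a local constraint on $\lambda$ extracted from $\values(\pi) \subseteq [0,4N]$. Since $N > M_\C$ was chosen large enough that $N - \Gamma_\C > \tfrac{4}{5}N$, whenever $\lambda(j) - \lambda(i) \geq 5$ the bound $z_j - z_i \leq 4N$ forces $S_j - S_i \leq 4N - 5(N-\Gamma_\C) \leq -N + 5\Gamma_\C$; dually, when $\lambda(j) - \lambda(i) \leq -5$ one is forced to have $S_j - S_i \geq N - 5\Gamma_\C$. In other words, a net bracket imbalance of magnitude at least $5$ across any infix compels a sharp compensating movement in $S$ in the opposite direction.

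With this in hand, I would search for the right baseline level $v$ at which to cut. Define $T_v = \{i : \lambda(i) = v\}$; since $\lambda(0) = 0$ and $\lambda(n) \geq 0$ and $\lambda$ changes by at most one per bracket step, the set of visited levels is contiguous and $T_0$ is non-empty. Two cases arise. In the easy case, some interval $[c,d]$ with $c,d \in T_v$ for $v = 0$ has $\lambda$ staying within $[-8,8]$ throughout and $S_d - S_c < -\Upsilon_\C$, and we are done. In the hard case, every level-$0$ balanced excursion is either shallow with a modest non-parametric drop, or reaches level $9$ or $-9$; in the latter situation, the local constraint above guarantees that this tall excursion contains a large downward swing in $S$, and I recurse inside the excursion at the refined baseline level $v \pm 1$. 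Because the counter bound caps the global range of $\lambda$, this recursion terminates. In the shallow case, I would concatenate consecutive level-$v$ balanced excursions and invoke a pigeonhole on their cumulative non-parametric effects; since $S$ ultimately drops by more than $\Upsilon_\C$ from start to end, some contiguous block of level-$v$ excursions has total non-parametric effect below $-\Upsilon_\C$, and this block is the sought $\pi[c,d]$.

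The main obstacle I anticipate is coordinating the depth bound with the $\Delta$ lower bound: deeper excursions allow for bigger drops in $S$ but exceed the $\Lambda_8$ budget, while shallow excursions alone may individually carry too little drop and must be aggregated. Careful choice of the baseline level $v$, possibly after descending into a deep excursion, is the delicate step, and the exact constants $8$, $\Upsilon_\C$, $\Gamma_\C$ work out only because of the slack provided by $N > M_\C = 30(\Upsilon_\C + \Gamma_\C + 1)$. The choice of threshold $8$ (rather than the minimally-possible $4$) is also what gives room for the subsequent Depumping Lemma, which needs to tolerate up to $\Lambda_{16}$ in its post-processing.
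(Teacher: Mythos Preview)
Your setup is sound—the quantities $\lambda(i)$ and $S_i$ are exactly the right ones, the observation $S_n < -\Upsilon_\C$ is correct, and the ``local constraint'' (a net bracket imbalance of $\pm 5$ over an infix forces a large swing in $S$) is valid. But the recursive construction has a genuine gap.

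The claim that ``the counter bound caps the global range of $\lambda$'' is false: since $z_i = z_0 + \lambda(i)(N-\Gamma_\C) + S_i$ is only bounded \emph{pointwise}, $\lambda(i)$ can be arbitrarily large provided $S_i$ compensates, so there is no a priori bound on recursion depth. More importantly, your recursive step carries no usable invariant. When you pass into a deep level-$0$ excursion and look at level $\pm 1$, you have no guarantee that the excursion itself satisfies $\Delta < -\Upsilon_\C$: on the way up to level $9$ your local constraint says $S$ drops sharply, but on the way back down the dual constraint says $S$ rises sharply, so the net $S$-change over the whole excursion can be zero. Thus the premise you need at the next level is simply not there. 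The ``all shallow'' aggregation also breaks when $\lambda(n) > 0$: the level-$0$ excursions cover only $[0,m]$ where $m$ is the last visit to level $0$, and nothing forces $S_m < -\Upsilon_\C$—the entire drop in $S$ may occur in the unbalanced tail $[m,n]$, which your argument does not address.

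The paper avoids all of this by arguing by contradiction: assume every subsemirun with $\phi$-projection in $\Lambda_8$ has $\Delta \geq -\Upsilon_\C$. This hypothesis is precisely what buys a global bound $\lambda(i) \in [-4,4]$ (if $\lambda$ reached $5$, one finds five $+p$-transitions separated by $\Lambda_4$-balanced gaps, each of effect $\geq -\Upsilon_\C$ by hypothesis, forcing a rise exceeding $4N$). Hence $\phi(\pi) \in \Psi_8$; since $\Delta(\pi) < -\Upsilon_\C$ the hypothesis rules out $\lambda(n)=0$, and the majority assumption gives $\lambda(n) > 0$. A short induction on the bracket excess then shows every subsemirun with positive excess has $\Delta > \Upsilon_\C$, contradicting $\Delta(\pi) < -\Upsilon_\C$. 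The point is that the contradiction hypothesis, not the counter bound alone, is what globally controls $\lambda$—this is the missing idea in your direct approach.
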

 \begin{proof}
	We only prove the case where  $\Delta(\pi)<-\Upsilon_\C$
	and $\phi(\pi)$
	contains at least as many occurrences of $\boldsymbol{[}$ as of $\boldsymbol{]}$.
	The dual case when 
	$\Delta(\pi)>\Upsilon_\C$
	and $\phi(\pi)$
	contains at least as many $\boldsymbol{]}$ as of $\boldsymbol{[}$
	can be proven analogously.

	As in the proof of Lemma~\ref{lemma zero}, 
	for any word $u\in\{\boldsymbol{[},\boldsymbol{]}\}^*$ 
	let
		$\lambda(u)=|u|_{\boldsymbol{[}}-
		|u|_{\boldsymbol{]}}.$
	For the rest of the proof assume by contradiction 
	that there is no
	such subsemirun $\pi[c,d]$ 
	satisfying 
	$\Delta(\pi[c,d])<-\Upsilon_\C$ and
	$\phi(\pi[c,d])\in\Lambda_{8}$,
	or, equivalently, that every subsemirun
	$\pi[c,d]$ with
	$\phi(\pi[c,d])\in\Lambda_{8}$
	satisfies
	$\Delta(\pi[c,d])\geq -\Upsilon_\C$.

	\noindent
	For all $k\geq 0$ let
	$$
		\Psi_k=\{w\in\{\boldsymbol{[},
		\boldsymbol{]}\}^*\mid\forall uv=w:
		\lambda(u)\in[-k,k]\}
	$$
	denote the set of all words over the alphabet
	$\{\boldsymbol{[},\boldsymbol{]}\}$,
	where for each prefix the absolute difference between the 
	number of occurrences of 
	$\boldsymbol{[}$ and of $\boldsymbol{]}$ is at most $k$.
	Note that 
	\begin{eqnarray}\label{psi lambda}
		\Lambda_{k}=\Psi_{k}\cap
		\lambda^{-1}(0).
	\end{eqnarray} 

	Under the above assumptions on $\pi$, for the sake of contradiction,
	we have three claims on properties on the image of $\phi$ applied
	to $\pi$ and subsemiruns thereof.

	\medskip

	\noindent
	{\em Claim 1.}
	$\phi(\pi)\in\Psi_{4}$.
	\noindent

	\begin{proof}[Proof of Claim 1.]
		Let us write $\pi=\pi[0,n]$.
		Assume by contradiction that
	$\phi(\pi)\not\in\Psi_{4}$.
		Let $u$ be a shortest prefix of 
		$\phi(\pi)$ such that
		$\lambda(u)\not\in[-4,4]$.
	Let us first consider the case when
		$\lambda(u)>4$.

		By definition of $u$ 
		we have 
		$\lambda(u)=4+1=5$
		and there are indices 
		$0\leq t_1<\cdots<t_{5}<n$
		such that 
	\begin{itemize}
	\item 	$
		\phi(\pi,{t_1})=
		\ldots=
		\phi(\pi,{t_{5}})=\boldsymbol{[}$, and
	\item $\phi(\pi[t_i+1,t_{i+1}])
		\in\Lambda_{4}$ for all $i\in[1,4]$.
	\end{itemize}

		Recall that by our assumption
		every subsemirun $\pi[c,d]$ of
		$\pi$ with
		$\phi(\pi[c,d])\in\Lambda_{8}$
		satisfies $\Delta(\pi[c,d])\geq-\Upsilon_\C$.
		Since 
		$\bigcup_{i\in[1,8]} \Lambda_i=
		\Lambda_{8}$ it follows
		$\Delta(\pi[t_i+1,t_{i+1}])\geq-\Upsilon_\C$
		for all $i\in[1,4]$.
		Moreover, bearing in mind that $\pi$
		is an $(N-\Gamma_\C)$-semirun, 
		we obtain 
		$\Delta(\pi,{t_i})=N-\Gamma_\C$.
		Altogether, as $N > M_\C$ by assumption, we obtain
		\begin{eqnarray*}
			\Delta(\pi[t_1,t_{5}+1])
			&\geq&-4 \cdot \Upsilon_\C+5\cdot(N- \Gamma_\C)\\
			&>&
			4N+N-5\cdot(\Upsilon_\C+ \Gamma_\C)\\
			&>& 4N+ M_\C -5\cdot(\Upsilon_\C+ \Gamma_\C)\\
			&>&{4N},
		\end{eqnarray*}
	where the last inequality follows from $M_\C$'s definition on page~\pageref{constant definitions},
		hence
		contradicting 
		$\values(\pi)\subseteq[0,4N]$.

	Let us now consider the case
		when $\lambda(u)<-4$.
		Again, by definition of $u$,
		we have 
		$\lambda(u)=-5$.
		There are hence
		indices $0\leq t_1<\ldots<t_{5}<n$
		such that
		$$\phi(\pi,{t_1})=
		\ldots=
		\phi(\pi,{t_{5}})=\boldsymbol{]},$$
		and moreover $\phi(\pi[0,t_1])\in\Lambda_{4}$
		and $\phi(\pi[t_i+1,t_{i+1}])\in\Lambda_{4}$
		for all $i\in[1,4]$.
		By assumption  
		$\phi(\pi)$ contains
		at least as many $\boldsymbol{[}$
		as $\boldsymbol{]}$.
			Therefore 
		there must exist $5$ further positions
		$t_1',\ldots,t_{5}'$ in $\pi$
		satisfying
		$0\leq t_1<\ldots<t_{5}<t_1'<t_2'<\ldots<t_{5}'<n$ 
		such that 
		$$\phi(\pi,{t_1'})=
		\ldots=\phi(\pi,{t_{5}'})=
		\boldsymbol{[}
		$$
		and $\phi(\pi[t_i'+1,t_{i+1}'])\in \Lambda_{4}$ 
		for all $i\in[1,4]$.
		Again taking into account our
		assumption that $\Delta(\pi[c,d])\geq -\Upsilon_\C$ for all subsemiruns
		$\pi[c,d]$ with 
		$\phi(\pi[c,d])\in\Lambda_{8}$,
		it follows as above,
		that $\Delta(\pi[t_1',t_{5}'+1])>4N$,
		contradicting 
		$\values(\pi)\subseteq[0,4N]$.
	\end{proof}

	\noindent
	{\em Claim 2.}
	$\phi(\pi[a,b])\in\Psi_{8}$
	for all subsemiruns $\pi[a,b]$ of $\pi$.

	\noindent
	\begin{proof}[Proof of Claim 2.]
		This is an immediate consequence
		of Claim 1.
		Indeed, any subsemirun
		$\pi[a,b]$ of $\pi$ 
		satisfying 
		$\phi(\pi[a,b])\not\in\Psi_{8}$
		gives rise to a prefix $u$ of
		$\phi(\pi)$ such 
		that $u\not\in\Psi_4$ and hence
		$\phi(\pi)\not\in\Psi_4$.
	\end{proof}

\medskip

	\noindent
	{\em Claim 3.}
	For all subsemiruns $\pi[a,b]$ of $\pi$,
	if $\lambda(\phi(\pi[a,b]))>0$, then $\Delta(\pi[a,b])>\Upsilon_\C$.

	\noindent
	\begin{proof}[Proof of Claim 3.]
		
		We prove the statement by induction on
		$\lambda(\phi(\pi[a,b]))$.

	For the induction base, assume
		$\lambda(\phi([a,b]))=1$.
		Thus, there exists a position $t\in[a,b]$
		such that $\phi(\pi,t)=\boldsymbol{[}$ and
		$\lambda(\pi[a,t])=\lambda(\phi(\pi[t+1,b]))=0$.
		By Claim 2 and (\ref{psi lambda})
		we have $\phi(\pi[a,t]),\phi(\pi[t+1,b])
		\in\Lambda_8$. Thus, $\Delta(\pi[a,t]),\Delta(\pi[t+1,b])>-\Upsilon_\C$ by our assumption.
Hence, we obtain
		\begin{eqnarray*}
			\Delta(\pi[a,b])&=&\Delta(\pi[a,t])+
			\Delta(\pi,t)+\Delta(\pi[t+1,b])\\
			&\geq&
			-\Upsilon_\C+(N-\Gamma_\C)-\Upsilon_\C\\
			&>&M_\C-2\Upsilon_\C-\Gamma_\C\\
			&>&\Upsilon_\C,
		\end{eqnarray*}
		where the last strict inequality
		follows from definition of
		$M_\C$ on page~\pageref{constant definitions}.

	Assume $\lambda(\phi(\pi[a,b]))>1$. 
		Consider the smallest position $t \in [a,b]$
		such that $\lambda(\phi(\pi[a,t]))=0$
		and 
		$\phi(\pi,t)=\boldsymbol{[}$.
		By Claim 2 and (\ref{psi lambda})
		it follows that $\phi(\pi[a,t])\in\Lambda_{8}$ 
		and hence 
		$\Delta(\pi[a,t])\geq-\Upsilon_\C$ by our assumption.
		Moreover, $\lambda(\phi(\pi[t+1,b]))=\lambda(\phi(\pi[a,b]))-1$.
		We can thus apply induction hypothesis to $\pi[t+1,b]$ 
		and obtain 
	\begin{eqnarray*}
		\Delta(\pi[a,b])&=&
			\Delta(\pi[a,t])+\Delta(\pi,t)+ \Delta(\pi[t+1,b])\\
			&>&\Delta(\pi[a,t])+\Delta(\pi,t)+ \Upsilon_\C\\
			&\geq& -\Upsilon_\C+(N-\Gamma_\C)+\Upsilon_\C\\
			&>& {M_\C - \Gamma_\C},\\
			&>& {\Upsilon_\C},
	\end{eqnarray*}
		where the first strict inequality
		follows from induction hypothesis on  $\pi[t+1,b]$  and 
		the last strict inequality follows from 
		definition of $M_\C$ on page~\pageref{constant definitions}.
	\end{proof}

	We will now contradict our initial assumption
	that there is no subsemirun $\pi[c,d]$ satisfying 
	$\phi(\pi[c,d])\in\Lambda_{8}$
	and $\Delta(\pi[c,d])<-\Upsilon_\C$
	by making use of the above claims.

	Since $\pi$ itself satisfies $\Delta(\pi)<-\Upsilon_\C$,
	it follows $\phi(\pi)\not\in\Lambda_{8}=\Psi_8\cap\lambda^{-1}(0)$ by our assumption and (\ref{psi lambda}).
	But since $\phi(\pi)\in\Psi_8$ by Claim 2,
	it follows $\lambda(\phi(\pi)) \neq 0$ .

	As
	$\phi(\pi)$
	contains at least as many occurrences of $\boldsymbol{[}$ as occurrences of $\boldsymbol{]}$
	by assumption,
	 $\phi(\pi)$
	must contain strictly more occurrences
	of $\boldsymbol{[}$ than of $\boldsymbol{]}$, i.e. $\lambda(\phi(\pi))>0$.
	By Claim 3 it follows
	$\Delta(\pi) > \Upsilon_\C$, contradicting our assumption that
	$\Delta(\pi) < - \Upsilon_\C$. 
\end{proof}

\end{samepage}

\subsection{Embeddings of semiruns}\label{section embeddings}

\noindent
The Small Parameter Theorem (Theorem~\ref{theorem upper}) turns 
$N$-runs with values in $[0,4N]$ into $(N-\Gamma_\C)$-runs.
In proving this, we prefer to view $N$-runs as $N$-semiruns.
Indeed, we first view any $N$-run as an $N$-semirun
and then apply certain of the above-mentioned
operations on them to obtain some $(N-\Gamma_\C)$-semirun.
However, we would then like to claim that the resulting $(N-\Gamma_\C)$-semirun is in
fact an $(N-\Gamma_\C)$-run as desired, in particular the comparison tests need to hold.
To do so, we introduce a notion when an
$N$-semirun can be embedded into an $M$-semirun
(possibly $N\not=M$) in the sense
that operations are being preserved, source and target control states are being
preserved, and that with respect to some 
line $\ell\in\Z$ the counter
value of each configuration of the embedding has the same orientation with respect to $\ell$
as the counter value of the configuration it corresponds to.

\begin{definition}[$\ell$-embedding]\label{def embedding}
	Let $\ell\in\Z$.
        An $N$-semirun
        $$
	\sigma \quad = \quad s_0(y_0)\semi{\sigma_0,N}s_1(y_1) \ \cdots \  \semi{\sigma_{n-1},N}s_n(y_n)
	$$ 
	is an {\em $\ell$-embedding}
        of an $M$-semirun
        $$
	\pi\quad =\quad q_0(z_0)\semi{\pi_0,M}q_1(z_1) \ \cdots \ \semi{\pi_{m-1},M} q_m(z_m)
	$$
if $s_0=q_0$, $s_n=q_m$ and  
there exists an order-preserving injective mapping 
			$\psi:[0,n]\rightarrow[0,m]$ such that
                        \begin{itemize}
                                \item $\sigma_i=\pi_{\psi(i)}$ for all $i\in[0,n-1]$, and
				\item $\ell\bowtie y_i$ if, and only if, $\ell\bowtie z_{\psi(i)}\
        \text{for all}\ \bowtie\in\{<,=,>\}\text{ and all }i\in[0,n]$.
        \end{itemize}
	Moreover we say $\sigma$ is 
	\begin{itemize}
		\item {\em max-falling (w.r.t $\pi$)} if $\max(\sigma)\leq\max(\pi)$, and
		\item {\em min-rising (w.r.t. $\pi$)} if $\min(\sigma)\geq\min(\pi)$.
	\end{itemize}
\end{definition}

\begin{center}
	\begin{figure} 
\includegraphics[width=0.5\textwidth]{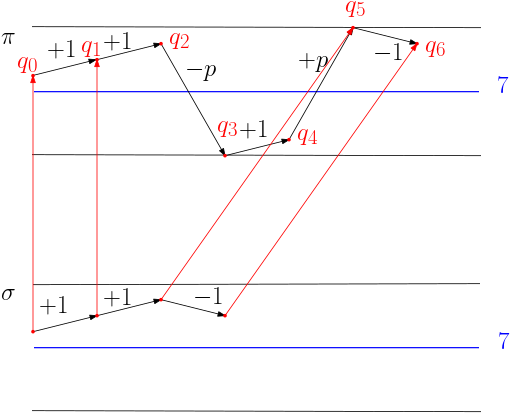} 
\includegraphics[width=0.5\textwidth]{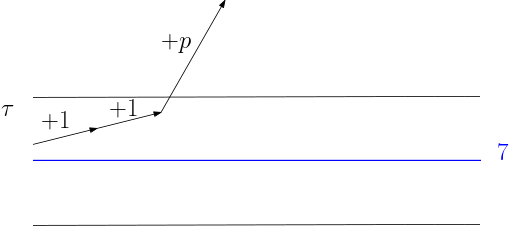}
		\caption{Example of a semirun $\sigma$ that could possibly be an embedding 
		of the semirun $\pi$ and a semirun $\tau$ that cannot.}  
	\label{embedding example}
	\end{figure}
\end{center}

\begin{samepage}
\begin{example}
	Consider the semiruns $\pi,\sigma$ and $\tau$ in Figure~\ref{embedding example},
	where neither 
	concrete counter values nor the control states of 
	$\sigma$ and $\tau$ are mentioned.
	The semirun $\sigma$ can possibly be 
	a $7$-embedding of $\pi$ (if its source control control is $q_0$ and its
	target control state is $q_6$).
	However, $\tau$ cannot be a $7$-embedding of $\pi$.
	Indeed, for every possible $\psi$ such that $\tau_2 = +p = \pi_{\psi(2)}$,
	the counter value of $\tau$ at position $2$ is strictly larger than $7$, whereas
	the counter value of 
	$\pi$ at position $\psi(2)$ is strictly below $7$.
\end{example}
\end{samepage}

The following remark is implictly being used in subsequent sections.

\begin{remark}\label{remark embeddings}
Embeddings possess some useful properties that all follow immediately from definition.
\begin{itemize}
	\item {\em Transitivity. } 
		Let $\pi$, $\rho$ and $\sigma$ be semiruns such that
	$\pi$ is an $\ell$-embedding of $\rho$
	and $\rho$ is an $\ell$-embedding of $\sigma$.
	Then $\pi$ is an $\ell$-embedding of $\sigma$.
	Moreover, if $\pi$ was max-falling (resp. min-rising) w.r.t. $\rho$ 
	and $\rho$ was max-falling (resp. min-rising) w.r.t. $\sigma$,
		then $\pi$ is max-falling (resp. min-rising) w.r.t. $\sigma$.
\item {\em Closure under concatenation. }
	
	Let $\pi$ be an $N$-semirun from $q(x)$ to $r(y)$ and
	let $\rho$ be $N$-semirun from $r(y)$ to $s(z)$.
	Moreover, let $\pi'$ be an $N'$-semirun from $q(x')$ 
		to $r(y')$
	that is an $\ell$-embedding of $\pi$
	and let $\rho'$ be an $N'$-semirun from $r(y')$ to $s(z')$ that
	is an $\ell$-embedding of $\rho$.
		Then $\pi'\rho'$ is an $\ell$-embedding of $\pi\rho$.
	If furthermore, $\pi'$ was max-falling (resp. min-rising) w.r.t. $\pi$ 
	and $\rho'$ was max-falling (resp. min-rising) w.r.t. $\rho$,
	then $\pi' \rho'$ is max-falling (resp. min-rising) w.r.t. $\pi \rho$.
\item {\em Shifting distant embeddings. }
	Let $D\in \macro_\C \Z$ be a multiple of $\macro_\C$, let $\pi$ be a semirun and
	let $\rho$ be an $\ell$-embedding of $\pi$ such that
	for all configurations $q(z)$ in $\rho$ we have $|z - \ell| > |D|$.
	Then both $\rho + D$ and $\rho - D$ are $\ell$-embeddings of $\pi$.
\end{itemize}
\end{remark}

\section{On hills and valleys}\label{hill section}

In this section we introduce
the notions of hills and valleys. Hills are
semiruns that start and end
in configurations with low counter values but where all intermediate configurations have
counter values above these source and target configurations, 
and where moreover $+p$-transitions (resp. $-p$-transitions) are
followed (resp. preceded) by semiruns with 
absolute counter effect 
larger than $\Upsilon_\C$
(we refer to Figure~\ref{hill example} for an illustration of the concept).
We also introduce the dual notion of valleys.
We then prove that an $N$-semirun that is either a hill or a valley
can be turned into a $(N-\Gamma_\C)$-semirun with the same source and target configuration
that is an embedding.
This lowering process serves as a building block in the proof of the $5/6$-Lemma (Lemma~\ref{lemma 5/6}).

\begin{samepage}
\begin{definition}[Hills and Valleys]
	An $N$-semirun 
	$$q_0(z_0) \ \semi{\pi_0,N} \ q_1(z_1) \ \semi{\pi_{1},N} q_2(z_2)\quad\ \cdots \quad\ \semi{\pi_{n-1},N} \ q_n(z_n)$$ is a 
	\begin{itemize}
		\item {\em $B$-hill} if 
			\begin{itemize}
				\item $z_0,z_n<B$,
				\item $z_i\geq B$ for all $i\in[1,n-1]$,
				\item $\pi_i=-p$ implies $z_i > z_0+\Upsilon_\C$ for all $i\in[0,n-1]$,
					and
				\item $\pi_i=+p$ implies $z_{i+1} > z_n+\Upsilon_\C$ for all $i\in[0,n-1]$.
			\end{itemize}
		\item {\em $B$-valley} if 
			\begin{itemize}
				\item $z_0,z_n>B$,
				\item $z_i\leq B$ for all $i\in[1,n-1]$, 
				\item $\pi_i=-p$ implies $z_{i+1} < z_n-\Upsilon_\C$ for all
					$i\in[0,n-1]$, and
				\item $\pi_i=+p$ implies $z_i < z_0-\Upsilon_\C$ for all $i\in[0,n-1]$.
			\end{itemize}
	\end{itemize}
\end{definition}
\end{samepage}

\begin{center}
	\begin{figure}
			\hspace{2.7cm}
\includegraphics[width=0.58\textwidth]{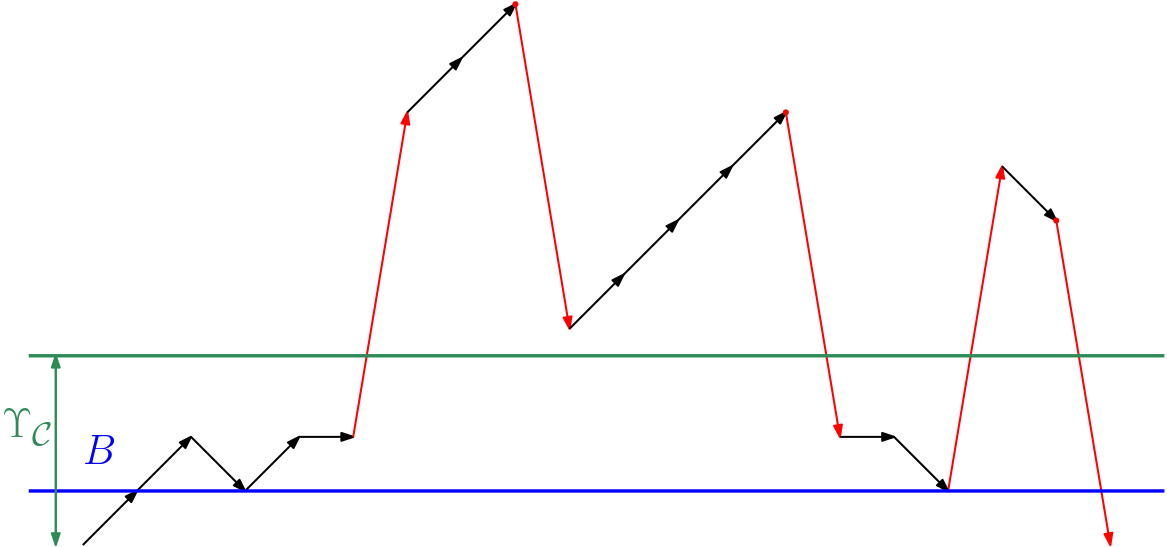}
	\caption{Illustration of a $B$-hill. }
\label{hill example}
	\end{figure}
\end{center}

\medskip

\noindent
The Hill and Valley Lemma states that an $N$-semirun $\pi$
that is either a $B$-hill or a $B$-valley can be turned into an $(N-\Gamma_\C$)-semirun
with the same source and target configuration that is moreover both a min-rising and max-falling 
$B'$-embedding of $\pi$, where $B'$ is close to $B$.

\begin{lemma}[Hill and Valley Lemma]\label{lemma hill/valley}
	For all $N, B \in \N$, all $N$-semiruns
	$\pi$ from $q_0(z_0)$ to $q_n(z_n)$ with $N > M_\C$ and $\values(\pi)\subseteq[0,4N]$
	such that moreover $\pi$ is either a $B$-hill or a $B$-valley,
	there exists an $(N-\Gamma_\C)$-semirun 
	from $q_0(z_0)$ to $q_n(z_n)$ that  
	is both a min-rising and max-falling $(B-\Upsilon_\C-\Gamma_\C-1)$-embedding of $\pi$ 
	(in case $\pi$ is a $B$-hill), 
	or both a min-rising and max-falling $(B+\Upsilon_\C+\Gamma_\C+1)$-embedding of $\pi$ 
	(in case $\pi$ is a $B$-valley).
\end{lemma}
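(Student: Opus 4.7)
I will treat the $B$-hill case; the $B$-valley case follows by a symmetric argument (time reversal and sign flip). Let $k_+$ and $k_-$ denote the numbers of $+p$- and $-p$-transitions in $\pi$, respectively, and let $\lambda(i)$ denote the net bracket count of $\phi(\pi[0,i])$. The plan is to build the desired semirun $\sigma$ by starting from $\pi$, naively replacing every $+p$-transition by $+(N-\Gamma_\C)$ and every $-p$-transition by $-(N-\Gamma_\C)$, and then correcting any remaining discrepancy by gluing away subsemiruns via the Depumping Lemma. The naive replacement $\pi^-$ has counter value $z_i - \lambda(i)\Gamma_\C$ at position $i$; in particular, $\pi^-$ ends at $q_n(z_n)$ precisely when $\lambda(n) = k_+ - k_- = 0$.

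If $k_+ = k_-$, then $\pi^-$ is already an $(N-\Gamma_\C)$-semirun from $q_0(z_0)$ to $q_n(z_n)$, and I verify its embedding properties directly. The hill's defining conditions force $\lambda(i)$ to stay small on positions where $z_i \geq B$, so $z_i - \lambda(i)\Gamma_\C$ remains strictly above $\ell := B - \Upsilon_\C - \Gamma_\C - 1$ at every intermediate position, while still agreeing with $z_0$ and $z_n$ at the endpoints. If $k_+ \neq k_-$, assume by symmetry $k_+ > k_-$. I then apply the Bracket Lemma to extract a subsemirun $\pi[c,d]$ inside the hill with $\phi(\pi[c,d]) \in \Lambda_8$ and positive counter effect exceeding $\Upsilon_\C$; the Depumping Lemma then produces a semirun with effect reduced by $\Gamma_\C$, via gluings along intervals $I_j$ with $\phi(\pi[I_j]) \in \Lambda_{16}$. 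Iterating this a bounded number of times eliminates the bracket imbalance, after which the naive shrinking of the remaining $\pm p$-transitions lands at $q_n(z_n)$.

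The main obstacle is the combined bookkeeping. After both the $\pm p$-shrinking and the gluings, I must verify (i) that every modulo test is still satisfied, which is automatic since all effects lie in $\macro_\C\Z$; (ii) that the result is both min-rising and max-falling with respect to $\pi$, which follows because shrinking each $\pm p$ by $\Gamma_\C$ and gluing away positive-effect intervals can only lower maxima and only raise minima; and (iii) that the $\ell$-embedding property holds, which is the most delicate step. Part (iii) is where the $\Upsilon_\C$ term in the hill definition pays off: the intervals $I_j$ returned by the Depumping Lemma lie in $\Lambda_{16}$, so the combined shift applied by the gluings to any preserved position is bounded by a multiple of $\Gamma_\C$ controlled by $\Upsilon_\C$, and the tolerance $-\Upsilon_\C - \Gamma_\C - 1$ built into $\ell$ is precisely what is needed so that the ``high'' positions of $\pi$ (where $z_i \geq B$) map to positions of $\sigma$ with counter value strictly above $\ell$, while the endpoints are preserved exactly. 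The bound $N > M_\C$ is used to ensure that the Bracket Lemma's hypotheses are applicable within the hill and that all shifted counter values stay in the admissible range $[0,4N]$ throughout the transformation.
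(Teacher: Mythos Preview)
Your approach has a fundamental gap: the claim that ``the hill's defining conditions force $\lambda(i)$ to stay small'' is false, and with it the embedding, min-rising, and max-falling verifications all collapse. Consider a $B$-hill with $B=1$ and $z_0=z_n=0$: do two $+1$'s to reach value $2$; then $k$ cycles each consisting of a $+p$ (to $2+N$) followed by $N$ many $-1$'s (back to $2$); then $k$ cycles each consisting of $N$ many $+1$'s (to $2+N$) followed by a $-p$ (back to $2$); finally two $-1$'s to $0$. All hill conditions hold (each $+p$ lands at $2+N>\Upsilon_\C$ and each $-p$ starts at $2+N>\Upsilon_\C$), values lie in $[0,2+N]\subseteq[0,4N]$, and $k_+=k_-=k$. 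At the end of the first block of cycles the counter sits at $2$ while $\lambda=k$, so your $\pi^-$ has value $2-k\Gamma_\C$ there; for $k$ large this drops below $\min(\pi)=0$ and below $\ell=B-\Upsilon_\C-\Gamma_\C-1$, so $\pi^-$ is neither min-rising nor an $\ell$-embedding. A symmetric construction (placing the $-p$ cycles first) breaks max-falling. Your assertion in (ii) conflates the local effect of a single lowering with the cumulative shift $-\lambda(i)\Gamma_\C$, which is unbounded. The $k_+\neq k_-$ case is likewise underspecified: you never identify a semirun to which the Bracket Lemma's hypothesis $|\Delta|>\Upsilon_\C$ actually applies, and the imbalance $|k_+-k_-|$ is itself not a priori bounded.

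The paper's proof is engineered precisely to prevent this accumulation. It introduces hybrid semiruns and a $(+p,-p)$-lowering process that treats the unlowered $\pm p$-transitions \emph{one at a time}. For the rightmost unlowered $+p$ it examines the critical descending infix (the portion after it until the counter first drops to level $B+\Gamma_\C$) and either pairs it with the leftmost unlowered $-p$ inside that infix---lowering both simultaneously so their shifts cancel outside a bounded window---or, if no such $-p$ exists, applies the Bracket and Depumping Lemmas \emph{to that infix alone} to excise a total effect of exactly $\Gamma_\C$. The invariant maintained (the notion ``approximates $\pi$'') is that every maximal already-lowered block has the same source and target configuration as the corresponding block of $\pi$; hence no shift ever propagates across blocks. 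The $\Upsilon_\C$ gap in the hill definition is what guarantees each critical infix has $|\Delta|>\Upsilon_\C$, so that the Bracket Lemma is applicable there; the extra slack $\Upsilon_\C+\Gamma_\C+1$ in the embedding level is consumed only by a separate boundary case in which the very first or last transition of $\pi$ is itself a $\pm p$ with both endpoints low.
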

We remark that the resulting $(N-\Gamma_\C)$-semirun satisfies
further properties --- these are being discussed in Section~\ref{HV further}.

Before proving the Hill and Valley Lemma let us explain why the finding of the resulting
embedding is delicate.
Let us fix any $N$-semirun 
\begin{eqnarray*}
	\pi \quad =\quad \ q_0(z_0) \ \semi{\pi_0,N} \ q_1(z_1) \ \semi{\pi_1,N} \quad\ \cdots \quad\ \semi{\pi_{n-1},N} \ q_n(z_n)
\end{eqnarray*}
from $q_0(z_0)$ to $q_n(z_n)$
with $\values(\pi) \subseteq [0, 4N]$
and $N > M_\C$.
Let us moreover assume that $\pi$ is a $B$-hill for some $B\in\N$.
We need to show the existence of 
some $(N-\Gamma_\C)$-semirun from $q_0(z_0)$ to $q_n(z_n)$ that is 
moreover both a {min-rising and max-falling} {$(B-\Upsilon_\C-\Gamma_\C-1)$-embedding} of $\pi$.

We are particularly interested in those transitions $\tau$ with absolute counter effect
$|\Delta(\tau)| = N $, i.e. transitions with operation $+p$ or $-p$
that we will denote as {\em unlowered} $+p$-transitions and $-p$-transitions respectively.
Note that if there is no such transition in $\pi$, then $\pi$ is already an
$(N - \Gamma_\C)$-semirun.
Let us therefore assume there is at least one transition with absolute counter effect $N$ in $\pi$.
For obtaining only an $(N-\Gamma_\C)$-semirun it would simply suffice
to lower the absolute counter effect of 
these
transitions by $\Gamma_\C$.
Indeed, if the transition $\tau = q(z) \semi{+p,N} q'(z')$ 
is an $N$-semirun, then the {\em lowered transition}
$ \widehat{\tau} = q(z) \semi{+p,N-\Gamma_\C} q'(z' - \Gamma_\C)$ 
is an
$(N-\Gamma_\C)$-semirun.
Dually, 
if $\tau = q(z) \semi{-p,N} q'(z')$ is an $N$-semirun, then 
$ \widehat{\tau} = q(z) \semi{-p,N-\Gamma_\C} q'(z' + \Gamma_\C)$
is an 
$(N-\Gamma_\C)$-semirun.

Thus, applying such a lowering to all transitions of $\pi$ whose absolute counter effect is $N$
yields an $(N-\Gamma_\C)$-semirun with target configuration shifted by a multiple of $\Gamma_\C$, 
according to the operations
 seen in Subsection~\ref{semirun section}.
However, the Hill and Valley Lemma not only requires the resulting semirun to be an
$(N-\Gamma_\C)$-semirun but also to have same source and target configurations as the original semirun
(and to be a min-rising and max-falling
$(B-\Upsilon_\C-\Gamma_\C-1)$-embedding).
Hence, simply lowering all transitions with a large counter effect as described above
is not enough to prove the result as the following example illustrates.
Let us assume an $N$-semirun $\pi$ containing
precisely one transition $\tau$ whose absolute
counter effect is $N$, say $\pi_j=+p$ for some position $j$.
That is,
$$
	\pi\quad =\quad \ q_0(z_0) \ \semi{\pi_0,N}  \quad\ \cdots \quad\ \ q_j(z_j) \ \semi{+p,N}
	q_{j+1}(z_{j+1})  \quad\ \cdots \quad\ \semi{\pi_{n-1},N} \ q_n(z_n)\quad.
$$
If we replace directly this $j$-th transition by a transition with $\Delta(\tau')=N-\Gamma_\C$, and, starting with 
the $(j+1)$-th configuration, shift all following counter values by $-\Gamma_\C$,  
we indeed obtain an $(N-\Gamma_\C)$-semirun
$$
	q_0(z_0) \ \semi{\pi_0,N-\Gamma_\C} \ \cdots \ \ q_j(z_j) \ \semi{+p,N-\Gamma_\C} \ q_{j+1}(z_{j+1}-\Gamma_\C) \quad \ \cdots \quad\ \semi{\pi_{n-1},N-\Gamma_\C} \ q_n(z_n-\Gamma_\C)\quad.
$$
However, this $(N-\Gamma_\C)$-semirun does not have the same source and target configuration 
as the original semirun, as the target configuration's counter value has been shifted by $-\Gamma_\C$. 
Worse yet, if our initial $N$-semirun $\pi$ were to possess several $+p$-transitions, then
the accumuluated counter value shifts could potentially yield that the resulting $(N-\Gamma_\C)$-semirun
is not a $(B-\Upsilon_\C-\Gamma_\C-1)$-embedding of $\pi$: indeed, such a shifted semirun could contain
intermediate configurations with counter values less than $B-\Upsilon_\C-\Gamma_\C-1$.

In order to account for those transitions whose absolute counter effect is $N$ that have
already been lowered or not we will introduce the notion of hybrid semiruns, which can be seen
as sequences of $N$-semiruns and $(N-\Gamma_\C)$-semiruns whose source and target configurations
are suitably connected.

		\begin{samepage}
			\begin{definition}\label{definition hybrid semirun}
A {\em hybrid semirun} is a sequence 
				$\eta=\alpha^{(0)}\beta^{(1)}\alpha^{(1)}\cdots\beta^{(k)}
				\alpha^{(k)}$,
where 
		\begin{itemize}
			\item each $\alpha^{(i)}$ is an 
				$(N-\Gamma_\C)$-semirun (possibly empty)
				of the form
				$$\alpha^{(i)}\qquad=\qquad 
				p_0(y_0)\ \semi{\alpha_0^{(i)},N-\Gamma_\C}\
				p_1(y_1)\qquad\cdots\qquad
				\semi{\alpha_{m_i}^{(i)},N-\Gamma_\C}\ p_{m_i}(y_{m_i})$$
			\item each $\beta^{(i)}$ is a single transition
				with $|\Delta(\beta^{(i)})|=N$,		
			\item the target configuration of $\alpha^{(i-1)}$ is the source configuration
				of $\beta^{(i)}$ for all $i\in[1,k]$, and 
			\item the source configuration of $\alpha^{(i)}$ is the target configuration
				of $\beta^{(i)}$ for all $i\in[1,k]$.
		\end{itemize}
			We call $k$ the {\em breadth} of $\eta$.
			\end{definition}

\begin{remark}\label{init hybrid}
	In case our initial $N$-semirun $\pi$ contains $k$ transitions of absolute counter
			effect $N$, we observe that $\pi$ can naturally be viewed as an initial hybrid semirun
			of breadth $k$.
\end{remark}

			Several of the notions (such as counter effect, length and maximum)
			that we have defined for runs and semiruns can naturally be 
			extended to hybrid 
			semiruns.
			As expected, the projection $\phi(\eta)$ is defined as
			$\phi(\eta)=\phi(\alpha^{(0)})\phi(\beta^{(1)})
			\phi(\alpha^{(1)})\cdots \phi(\beta^{(k)})\phi(\alpha^{(k)})$.
			We moreover introduce the particular projection $\phi_\restriction$ of $\phi$
			restricted to the $\alpha^{(i)}$, i.e.
			${\phi}_\restriction(\eta)=\phi(\alpha^{(0)})\phi(\alpha^{(1)})
			\cdots\phi(\alpha^{(k)})$.

			Moreover, we view the $\alpha^{(i)}$ themselves as 
			sequences (not as atomic objects) of length $m_i$
			and the $\beta^{(i)}$ as sequences of length one. Using this convention,
			the notions of prefixes, infixes 
			and suffixes are as expected.
	More importantly, we extend naturally the notion of 
			(\textcolor{black}{max-falling and min-rising}) 
	$\ell$-embedding to 
	hybrid semiruns as in Definition~\ref{def embedding} when treating them as such sequences.

We prove the Hill and Valley Lemma (Lemma~\ref{lemma hill/valley}) in Section~\ref{HV proof}.
We summarize important further consequences of the proof in Section~\ref{HV further}.

\subsection{Proof of the Hill and Valley Lemma}
\label{HV proof}

Let us fix any $N$-semirun 
\begin{eqnarray*}
	\pi \quad =\quad \ q_0(z_0) \ \semi{\pi_0,N} \ q_1(z_1) \ \semi{\pi_1,N} \quad\ \cdots \quad\ \semi{\pi_{n-1},N} \ q_n(z_n)
\end{eqnarray*}
from $q_0(z_0)$ to $q_n(z_n)$
with $\values(\pi) \subseteq [0, 4N]$
and $N > M_\C$.
Let us moreover assume that $\pi$ is a $B$-hill for some $B\in\N$.
The case when $\pi$ is a $B$-valley can be proven analogously.

For reasons of simplicity we separate the proof into two cases, namely if there is
a $+p$-transition or $-p$-transition whose source and target configurations
have counter values that are both at most $B+\Upsilon_\C+\Gamma_\C$ or not.
		Section~\ref{Case A} deals with the latter case, 
		Section~\ref{Case B} with the former.
		It is worth mentioning that Section~\ref{Case B} depends on Section~\ref{Case A}.

		\subsubsection{$\pi$ does not contain any $\pm p$-transition whose
		source and target configuration both have counter value at most $B+\Upsilon_\C+\Gamma_\C$}\label{Case A}

\end{samepage}

In the following let us denote by $\mathcal{L}$ the {\em critical level}, i.e. the constant
$$
		\mathcal{L}= B+\Gamma_\C.
$$
Moreover, for a hybrid semirun 
$\eta=\alpha^{(0)}\beta^{(1)}\alpha^{(1)}\cdots\beta^{(k)}\alpha^{(k)}$, 
		for every $\beta^{(j)}$ that is an unlowered $+p$-transition,
		we define the {\em critical descending infix with respect to 
		$\beta^{(j)}$}
		as the shortest prefix 
		(when viewed as a sequence, as mentioned above) of 
		$\alpha^{(j)}\beta^{(j+1)}\alpha^{(j+1)}\cdots \beta^{(k)}\alpha^{(k)}$
		that ends in a configuration with counter value at most $\mathcal{L}$.
		In particular, this critical descending infix could possibly
		end in a configuration inside some (strict prefix of) $\alpha^{(i)}$, where $i\in[j,k]$.
Dually, for every $\beta^{(j)}$ that is an unlowered $-p$-transition,
		we define the {\em critical ascending infix 
		with respect to $\beta^{(j)}$}
		as the shortest suffix of $\alpha^{(0)}\beta^{(1)}\cdots \alpha^{(j-1)}$
		that starts in a configuration with counter value at most $\mathcal{L}$.
		The following remark is central.\\[-0.2cm]

		\begin{remark}{\label{remark critical}}
			For a hybrid semirun 
$\eta=\alpha^{(0)}\beta^{(1)}\alpha^{(1)}\cdots\beta^{(k)}\alpha^{(k)}$, if
			some unlowered $-p$-transition (resp. $+p$-transition) $\beta^{(j)}$ 
			appears in the critical 
			descending infix (resp. critical ascending infix) of some 
			unlowered $+p$-transition (resp. $-p$-transition)
			$\beta^{(i)}$, then so does $\beta^{(i)}$ appear in the 
			critical ascending infix (resp. critical
			descending infix) of $\beta^{(j)}$.
		\end{remark}

Viewing our initial semirun $\pi$ as a hybrid semirun, we will now introduce two phases that 
successively lower unlowered $+p$-transitions
and unlowered $-p$-transitions yielding hybrid semiruns that
retain an approximation invariant (Definition~\ref{def approximates}).

In phase one, we 
 are interested in unlowered $+p$-transitions.
We want to progressively lower these,
going from right to left. Moreover, we want to inspect the critical descending infix in order to obtain 
successive min-rising and max-falling embeddings with the same source and target configuration. 
In case the rightmost unlowered $+p$-transition has the property that its critical descending
infix contains some unlowered $-p$-transition we lower the leftmost such directly,
together with the $+p$-transition.
Otherwise, we want to make use of the Bracket Lemma 
(Lemma~\ref{bracket lemma}) and the Depumping Lemma (Lemma~\ref{lemma zero})
in order to retain some nice bracketing properties.

Having successively lowered all unlowered $+p$-transitions in phase one,
we finally lower the remaining unlowered $-p$-transitions in phase two.
For these we take their critical ascending infix and their $\varphi_\restriction$-projection
into account, again yielding some carefully chosen bracketing property.

The following definition formalizes the above-mentioned bracketing property.

		\begin{samepage}
		\begin{definition}{\label{def approximates}}
		A hybrid semirun {\em $\eta$ approximates $\pi$ with respect 
			to level $\ell\in\Z$} if 
	\begin{enumerate}
		\item $\eta=\alpha^{(0)}\beta^{(1)}\alpha^{(1)}\cdots\beta^{(k)}\alpha^{(k)}$ is a 
			hybrid semirun of some breadth $k$,
		\item$ \pi$ can be factorized
			as $\pi=\chi^{(0)}\zeta^{(1)}\chi^{(1)}\cdots\zeta^{(k)}\chi^{(k)}$, 
			where the $\zeta^{(i)}$ 
			are transitions with operation either $+p$ or $-p$,
		\item $\eta$ is a min-rising and max-falling $\ell$-embedding of $\pi$,
		\item $\alpha^{(i)}$ is a max-falling $\ell$-embedding of 
			$\chi^{(i)}$ for all $i\in[0,k]$ 
			with the same source and target configuration as $\chi^{(i)}$,
		\item every prefix of $\phi_\restriction(\gamma^{(i)})$ contains at least as many 
			occurrences of $\boldsymbol{[}$ as of $\boldsymbol{]}$,
			where $\gamma^{(i)}$ is the critical descending infix of $\beta^{(i)}$
			for all $i\in[1,k]$ for which $\beta^{(i)}$ has operation $+p$, and
		\item every suffix of $\phi_\restriction(\gamma^{(i)})$ 
			contains at least as many 
			occurrences of $\boldsymbol{]}$ as of $\boldsymbol{[}$, where 
			$\gamma^{(i)}$ is the critical ascending
			infix of $\beta^{(i)}$ for all $i\in[1,k]$ for which $\beta^{(i)}$ 
			has operation $-p$.
	\end{enumerate}
		\end{definition}
		\end{samepage}

By completing phase one and then phase two we will show the existence 
of a hybrid semirun that approximates $\pi$ with respect to level $B$ and does not contain any
unlowered $+p$-transition nor any unlowered $-p$-transition (and is hence
an $(N-\Gamma_\C)$-semirun). 
Observe first that by Point 4 any such hybrid semirun $\eta$ has the same source and
target configuration as $\pi$. Second, any such $\eta$
 is in particular a
 min-rising and max-falling 
 $(N-\Upsilon_\C-\Gamma_C-1)$-embedding
of $\pi$ since $\pi$ is assumed to be a $B$-hill. Thus, the lemma follows.
We will obtain the desired $(N-\Gamma_\C)$-semirun and variants thereof
by first systematically lowering $+p$-transitions from the rightmost to the leftmost in phase one
and secondly systematically lowering the possibly remaining $-p$-transitions from the leftmost to the rightmost
in phase two. 
We denote such a process --- whose details are given below --- by the so-called {\em $(+p,-p)$-lowering
process}. 
As mentioned in Remark~\ref{alternate process} we will also define a dual variant, namely
the {\em $(-p,+p)$-lowering process}: here phase one will consist of systematically
lowering the $-p$-transitions from the leftmost to the rightmost, whereas phase two 
will systematically lower the possibly remaining $+p$-transitions from the rightmost to the leftmost.

Remark~\ref{alternate process} finally discusses a variant of a $(+p,-p)$-lowering process 
(resp. $(-p,+p)$-process) which ends in a hybrid semirun that contains precisely
one unlowered transition.

Let us discuss the $(+p,-p)$-lowering process in detail.

\medskip
\noindent
{\bf Phase one of the $(+p,-p)$-lowering process: Lowering $+p$-transitions}

\label{phase one}
We can view our initial $N$-semirun $\pi$ as a hybrid semirun 
$\eta^{(0)}$ of breadth $k_0$, i.e. 
$$
		\eta^{(0)}\quad =\quad \alpha^{(0,0)}\beta^{(0,1)}\alpha^{(0,1)}\cdots 
		\beta^{(0,k_0)}\alpha^{(0,k_0)}\quad.
$$
In phase one we will inductively show the existence of a sequence of hybrid semiruns
$\eta^{(0)}, \eta^{(1)},\ldots,\eta^{(r)}$, where each $\eta^{(i)}$ has breadth $k_i$ and 
approximates $\pi$ with respect to level $B$,
$\eta^{(r)}$ does not contain any unlowered $+p$-transition, and $k_{i-1}>k_i$ for all $i\in[1,r]$.
Let us assume that we have inductively already defined the sequence $\eta^{(0)},\ldots,\eta^{(i-1)}$
of hybrid semiruns for some $i\geq 1$ and where $\eta^{(i-1)}$ has breadth 
$k_{i-1}>0$ and approximates $\pi$
with respect to level $B$
and contains at least one unlowered $+p$-transition.
Towards extending the sequence we need to show the existence of some hybrid
semirun $\eta^{(i)}$ of breadth $k_{i}<k_{i-1}$ that approximates $\pi$ with respect to level $B$.

Let $\eta^{(i-1)}=\alpha^{(i-1,0)}\beta^{(i-1,1)}\alpha^{(i-1,1)}\cdots\beta^{(i-1,k_{i-1})}
\alpha^{(i-1,k_{i-1})}$.
Let $j\in[1,k_{i-1}]$ be maximal such that $\beta^{(i-1,j)}$ is an unlowered $+p$-transition.
For defining $\eta^{(i)}$ we make the following case distinction.
\begin{enumerate}
	\item The critical descending infix with respect to 
		the $+p$-transition $\beta^{(i-1,j)}$ 
		contains at least one 
	unlowered $-p$-transition.
That is, the critical descending infix is of the form 
		$$\alpha^{(i-1,j)}\beta^{(i-1,j+1)}\alpha^{(i-1,j+1)}\cdots\beta^{(i-1,h)}\xi,$$
		where $\xi$ is a prefix (possibly empty) of $\alpha^{(i-1,h)}$,
		$\beta^{(i-1,j+1)}$ is an unlowered
	$-p$-transition and where $h\geq j+1$. 
	We refer to Figure~\ref{phase one case one} for an illustration.
		Our desired hybrid semirun $\eta^{(i)}$ is obtained from 
		$\eta^{(i-1)}$ by simply 
		lowering both $\beta^{(i-1,j)}$ and $\beta^{(i-1,j+1)}$, 
		i.e. replacing $\beta^{(i-1,j)}$ by $\widehat{\beta^{(i-1,j)}}$ 
		satisfying $\Delta(\widehat{\beta^{(i-1,j)}})=N-\Gamma_\C$
		and replacing $\beta^{(i-1,j+1)}$ by a suitable $\widehat{\beta^{(i-1,j+1)}}$
		satisfying $\Delta(\widehat{\beta^{(i-1,j+1)}})=-N+\Gamma_\C$
		and moreover suitably shifting the part after $\widehat{\beta^{(i-1,j)}}$
		and until (including) $\widehat{\beta^{(i-1,j+1)}}$ by $-\Gamma_\C$.
		More precisely, the part $\alpha^{(i,j-1)}$ in $\eta^{(i)}$ 
		is chosen to be of the form
		$$\alpha^{(i,j-1)}=\alpha^{(i-1,j-1)}\widehat{\beta^{(i-1,j)}}
		\left(\alpha^{(i-1,j)}-\Gamma_\C\right)
		\left(\widehat{\beta^{(i-1,j+1)}}-\Gamma_\C\right)\alpha^{(i-1,j+1)}.$$
		Moreover, observe that $\alpha^{(i,j-1)}$
	and the infix 
		$\alpha^{(i-1,j-1)}\beta^{(i-1,j)}
		\alpha^{(i-1,j)}\beta^{(i-1,j+1)}\alpha^{(i-1,j+1)}$
		of $\eta^{(i-1)}$
connect the same source and target configurations.
		Thus, it easily follows that $\eta^{(i)}$ also approximates $\pi$ 
		with respect to level $B$.
		Finally, observe that the breadth of $\eta^{(i)}$ equals $k_{i-1}-2$.

\begin{center}
\begin{figure}
\includegraphics[width=0.567\textwidth]{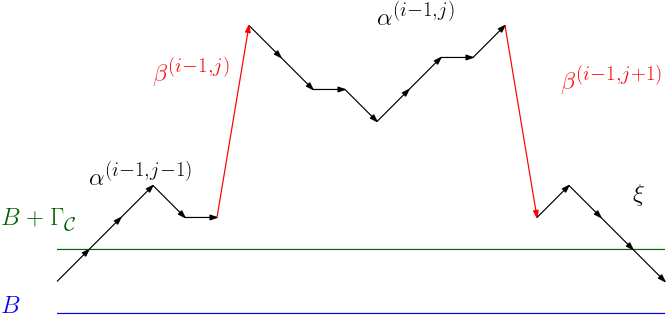}
	\caption{Illustration of phase one case 1, i.e. the unlowered $+p$-transition 
	$\beta^{(i-1,j)}$ can be lowered by lowering it with the leftmost unlowered $-p$-transition on
	its critical descending infix, i.e. $\beta^{(i-1,j+1)}$.}
\label{phase one case one}
\end{figure}
\end{center}
\item The critical descending infix with respect to 
the $+p$-transition $\beta^{(i-1,j)}$ does not contain any 
	unlowered $-p$-transition. It follows that the critical descending infix 		with respect to $\beta^{(i-1,j)}$ is a
		non-empty prefix $\xi$ of $\alpha^{(i-1,j)}$.
	We refer to Figure~\ref{phase one case two} for an illustration.
		Recall that $\eta^{(i-1)}$ approximates $\pi$ with respect to level $B$.
	Firstly, since by assumption $\values(\pi)\subseteq[0,4N]$, it follows 
	from Point 3 of Definition~\ref{def approximates} that 
		$\values(\xi)\subseteq[0,4N]$.
	Secondly, from Point 5 of Definition~\ref{def approximates} every prefix of 
		$\phi(\alpha^{(i-1,j)})$
contains at least as many occurrences of $\boldsymbol{[}$ as of $\boldsymbol{]}$. 
		Hence, the latter must also hold for 
	every prefix of $\phi(\xi)$.
	Thirdly, since by the case of this subsection the target configuration of every transition with operation $+p$ in
	$\pi$ has counter value strictly larger than $B+\Upsilon_\C+\Gamma_\C$, it follows
	from Points 2 and 4 of Definition~\ref{def approximates} that the target configuration
	of $\beta^{(i-1,j)}$ ends in a configuration with counter value strictly larger than 
	$B+\Upsilon_\C+\Gamma_\C$. Since $\xi$ is the critical descending infix  with respect to 
	$\beta^{(i-1,j)}$ (in particular ending in a configuration with counter value at most 
	$B+\Gamma_\C)$, it follows $\Delta(\xi)<-\Upsilon_\C$.
	Hence one can apply Lemma~\ref{bracket lemma} to the $(N-\Gamma_\C)$-semirun $\xi$ 
	yielding an infix 
	$\xi[c,d]$ satisfying $\phi(\xi[c,d])\in\Lambda_8$ and $\Delta(\xi[c,d])<-\Upsilon_\C$.
	Applying Lemma~\ref{lemma zero} to $\xi[c,d]$ implies the existence
	of an $(N-\Gamma_\C)$-semirun $\xi'=\xi[c,d]-I_1-I_2\cdots-I_s$
	satisfying $\Delta(\xi')=\Delta(\xi[c,d])+\Gamma_\C$
	and where $I_1,\ldots,I_s$ are pairwise disjoint intervals of positions in $\xi[c,d]$
	such that moreover $\phi(\xi[c,d][I_t])\in\Lambda_{16}$ and $\Delta(\xi[c,d][I_t])<0$
	for all $t\in[1,s]$. 
	Assume that $\xi=\xi[0,m]$ consisted of $m$ transitions; thus in particular $c,d\in[0,m]$.
	By combining the above properties it immediately follows that 
	$$\xi''=\xi[0,c]\xi'\left(\xi[d,m]+\Gamma_\C\right)$$
	is an $(N-\Gamma_\C)$-semirun with $\Delta(\xi'')=\Delta(\xi)+\Gamma_\C$
	and that $\xi'' - \Gamma_\C$
	is a max-falling $B$-embedding of $\xi$.
	We define the desired $\eta^{(i)}$ to be obtained from $\eta^{(i-1)}$ by 
	lowering $\beta^{(i-1,j)}$ to $\widehat{\beta^{(i-1,j)}}$ satisfying 
	$\Delta(\widehat{\beta^{(i-1,j)}})=\Delta(\beta^{(i-1,j)})-\Gamma_\C$ 
	and moreover replacing $\xi$ by $\xi''-\Gamma_\C$. 
	Observe that $\eta^{(i)}$ and $\eta^{(i-1)}$ only differ in the infix $\alpha^{(i,j-1)}$
	of $\eta^{(i)}$. The latter is hence of the form
$$
\alpha^{(i,j-1)}=\alpha^{(i-1,j-1)}\widehat{\beta^{(i-1,j)}}\left(\xi''-\Gamma_\C\right)
\alpha^{(i-1,j)}[m,|\alpha^{(i-1,j)}|].
$$
By construction $\eta^{(i-1)}$'s infix  
$$
\alpha^{(i-1,j-1)}\beta^{(i-1,j)}\alpha^{(i-1,j)}
$$
has the same source and target configuration as the part $\alpha^{(i,j-1)}$ of $\eta^{(i)}$.
	Since moreover 
	\begin{itemize}
		\item $\phi(\xi[c,d][I_t])\in\Lambda_{16}$ contains
	precisely as many occurrences of $\boldsymbol{[}$ as of $\boldsymbol{]}$
	and $\Delta(\xi[c,d][I_t])<0$ for each $t\in[1,s]$ and
\item $\Delta(\xi'')=\Delta(\xi)+\Gamma_\C$
	\end{itemize}
	it follows that indeed
	$\eta^{(i)}$ approximates $\pi$ with respect to level $B$.
	Finally, observe that the breadth of $\eta^{(i)}$ is 
	$k_{i-1}-1$.

\begin{center}

\begin{figure}
\includegraphics[width=0.56\textwidth]{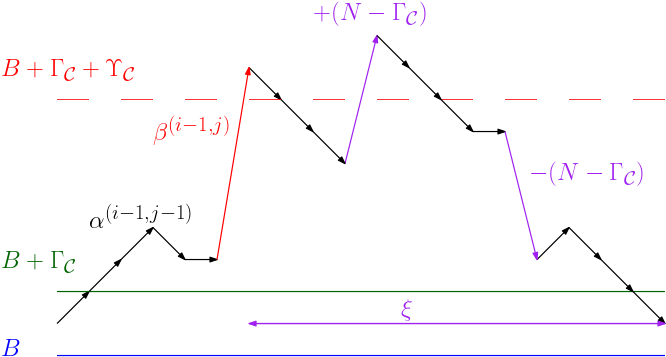}
\caption{Illustration of phase one case 2, i.e. the suffix of the 
	to be lowered $+p$ transition 
	$\beta^{(i-1,j)}$ does not contain any unlowered $-p$-transition, i.e.
	any transition with counter effect $-N$, 
	inside its critical descending infix.}
\label{phase one case two}
\end{figure}
\end{center}
\end{enumerate}

Recall that in phase one we have repeatedly lowered unlowered $+p$-transitions from right to left.
In doing so we have hereby possibly lowered certain $-p$-transitions.
The final hybrid semirun $\eta^{(r)}$ of phase one notably does not contain any
unlowered $+p$-transition.
However, $\eta^{(r)}$ may still contain unlowered $-p$-transitions.
Lowering these will be subject of phase two. Yet, these unlowered $-p$-transitions
will be lowered rather  from leftmost to rightmost 
(instead of from rightmost to leftmost as in phase one).

\medskip
\noindent
{\bf Phase two of the $(+p,-p)$-lowering process: Lowering $-p$-transitions that remain after phase one}

Recall that $\mathcal{L}=B+\Gamma_\C$ denotes our critical level.
Also recall that $\eta^{(r)}$ is the final hybrid semirun in the sequence 
$\eta^{(0)},\ldots,\eta^{(r)}$ 
of phase one and approximates $\pi$ with respect to level $B$.
Note that by construction $\eta^{(r)}$ does not contain any unlowered 
$+p$-transition.
That is, all unlowered transitions of 
$\eta^{(r)}$ have operation $-p$ and there
are as many of them as the breadth of $\eta^{(r)}$.
Setting $\eta^{(0)'}=\eta^{(r)}$, phase two consists in showing the existence
of a sequence of hybrid semiruns $\eta^{(1)'},\ldots, \eta^{(t)'}$ 
all of which
do not contain any unlowered $+p$-transition and in which each 
$\eta^{(i)'}$ has breadth $k_i'$ satisfying $k_i'<k_{i-1}'$,
where each $\eta^{(i)'}$ approximates $\pi$ with respect to level $B$,
and 
finally $\eta^{(t)'}$ is of breadth $0$ 
(and is therefore already an $(N-\Gamma_\C)$-semirun).

Let us inductively assume that we have already defined the sequence
$\eta^{(0)'},\ldots,\eta^{(i-1)'}$ for some $i\geq 1$
and that the breadth $k_{i-1}'$ of $\eta^{(i-1)'}$ 
satisfies $k_{i-1}'>0$.

Let $\eta^{(i-1)'}=\alpha^{(i-1,0)'}\beta^{(i-1,1)'}\alpha^{(i-1,1)'}\cdots
\beta^{(i-1,k_{i-1})'}\alpha^{(i-1,k_{i-1})'}$.
There is only one possible case for this phase
since the critical ascending infix with respect to the
leftmost unlowered $-p$-transition
$\beta^{(i-1,1)'}$ does not contain any unlowered $+p$-transition 
since $\eta^{(i-1)'}$ does not.
The construction of $\eta^{(i)'}$, as well as the proof that $\eta^{(i)'}$ approximates $\pi$ with respect to level $B$,
is completely dual to the proof of the second case of phase one and 
therefore omitted.

\begin{example}\label{Example Lowering Process}
Figure~\ref{Figure Lowering Process} illustrates an example of an application
of the $(+p,-p)$-lowering process. The topmost figure on the left is the starting hybrid semirun $\pi$. 
We begin the process by lowering the two unlowered $+p$-transitions each by compensating them with a $-p$-transition, 
	then 
	we enter phase two with one unlowered $-p$-transition remaining, which we lower and compensate 
	by shifting and cutting out portions 
	inside the critical ascending infix by applying the Depumping Lemma.
\end{example}

\begin{remark}\label{remark valley}
	In case $\pi$ is a $B$-valley instead of a $B$-hill there is a dual 
	variant of the $(+p,-p)$-lowering process. 
	The critical level would be adjusted to $\mathcal{L}=B-\Gamma_\C$, 
	for unlowered $+p$-transitions one would define the critical descending 
	infix to be the shortest suffix of 
	$\alpha^{(0)}\beta^{(1)}\cdots\alpha^{(j-1)}$ that 
	starts in a configuration with counter value at least $\mathcal{L}$,
	whereas for unlowered $-p$-transitions one would 
	define the critical
	ascending infix to be the shortest prefix of 
	$\alpha^{(j)}\beta^{(j+1)}\alpha^{(j+1)}\cdots\beta^{(k)}\alpha^{(k)}$
	that ends in a configuration with counter value at least $\mathcal{L}$.
	The definition when a hybrid semirun approximates $\pi$ with respect to 
	level $B$
	would be defined analogously as in Definition~\ref{def approximates},
	but where in Point 4 $\alpha^{(i)}$ is rather required to be a
	min-rising $B$-embedding of $\chi^{(i)}$
	with the same source and target configuration as $\chi^{(i)}$,
	Point 5 (resp. Point 6) of Definition~\ref{def approximates} would
	rather require that every suffix (resp. every prefix) of $\phi_\restriction(\gamma^{(i)})$ 
	contains at least as many occurrences of $\boldsymbol{[}$ (resp. $\boldsymbol{]}$)
	as of $\boldsymbol{]}$ (resp. $\boldsymbol{[}$).
\end{remark}

\begin{center}
	\begin{figure}
		\hspace{0.3cm}
\includegraphics[width=0.43\textwidth]{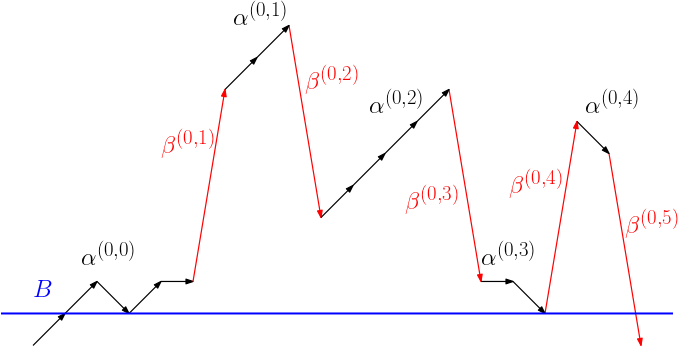}
		\hspace{1cm}
\includegraphics[width=0.43\textwidth]{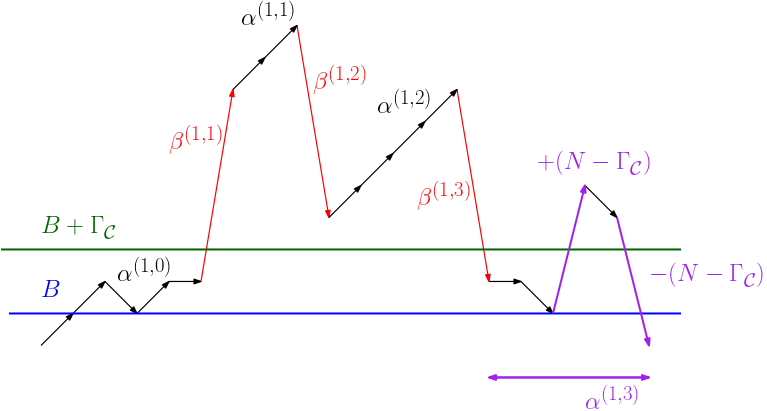}

		\hspace{0.3cm}
\includegraphics[width=0.43\textwidth]{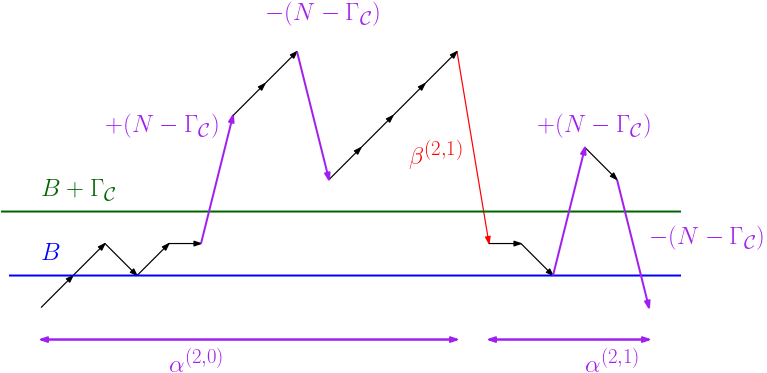}
		\hspace{1cm}
\includegraphics[width=0.43\textwidth]{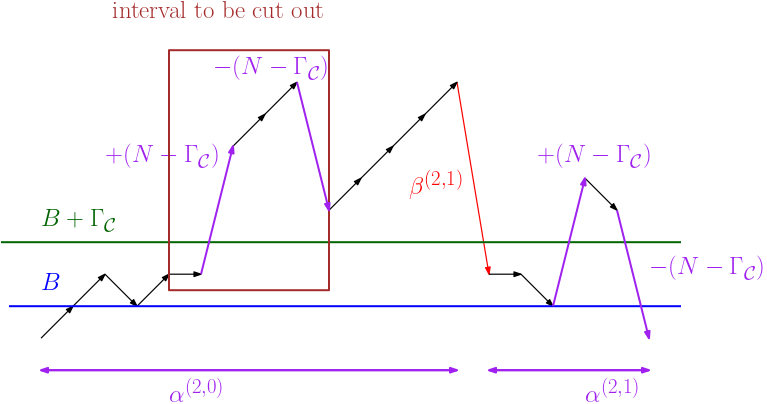}

		\vspace{0.5cm}

		\hspace{0.3cm}
\includegraphics[width=0.43\textwidth]{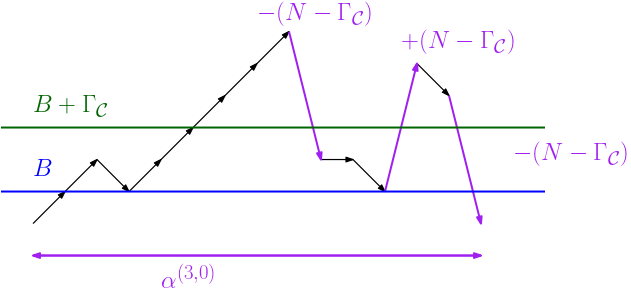}

		\caption{Illustration of the $(+p,-p)$-lowering process from Example~\ref{Example Lowering Process}
		to be read from upper left to lower right.}\label{Figure Lowering Process}
	\end{figure}
\end{center}

\begin{remark}\label{alternate process}
	Consider the following variants of the $(+p,-p)$-lowering process for our $B$-hill $\pi$
	(dual variants can be formulated in the case when $\pi$ is $B$-valley):
	\begin{enumerate}
		\item Consider the dual {\em $(-p,+p)$-lowering process}:
			In phase one we lower the $-p$-transitions from the leftmost to the
			rightmost and in phase two lower the $+p$-transitions from the rightmost
			to the leftmost.
			That is, such a $(-p,+p)$-lowering process produces a sequence of hybrid semiruns
			$$
			\eta^{(0)},\eta^{(1)},\ldots,\eta^{(s)},\eta^{(0)'},\eta^{(1)'},\ldots,\eta^{(t)'},
			$$
			that all approximate $\pi$ with respect to level $B$
			where $\eta^{(0)}=\pi$, $\eta^{(i)}$ is obtained 
			from $\eta^{(i-1)}$
			by lowering the leftmost unlowered 
			$-p$-transition of $\eta^{(i)}$,
			$\eta^{(0)'}=\eta^{(s)}$, 
			$\eta^{(i+1)'}$ is obtained from $\eta^{(i)'}$ by lowering
			the rightmost unlowered $+p$-transition, and 
			finally $\eta^{(t)'}$ has breadth $0$.
		\item 	Consider again the sequence of hybrid semiruns 
			$$
			\eta^{(0)},\eta^{(1)},\ldots,\eta^{(s)},\eta^{(0)'},\eta^{(1)'},
			\ldots,\eta^{(t)'},
			$$ 
			of the $(+p,-p)$-process (dually $(-p,+p)$-process):
			\begin{enumerate}
				\item If $t>0$, then observe that 
					$\eta^{(t-1)'}$ and breadth $1$ and contains
					precisely one unlowered transition, namely an
					unlowered $-p$-transition (dually $+p$-transition).
				\item If however $t=0$, then we claim that every prefix (dually suffix) of 
					$\phi(\eta^{(0)'})=\phi(\eta^{(s)})$ 
					contains at least as many occurrences of 
					$\boldsymbol{[}$ (dually occurrences of $\boldsymbol{]}$)
					as occurrences of $\boldsymbol{]}$ (dually occurrences of 
					$\boldsymbol{[}$).
					Indeed, it follows immediately from the fact that each 
					$\eta^{(i)}$
					is obtained from $\eta^{(i-1)}$ by lowering a $+p$-transition
					(dually a $-p$-transition) either by shifting 
				infixes and
					cutting out certain infixes $\zeta'$
					for which $\phi(\zeta')$ contains as many occurrences 
					of $\boldsymbol{[}$ as of $\boldsymbol{]}$,
					or by lowering a $+p$-transition 
					(dually $-p$-transition) together with an unlowered $-p$-transition 
					(dually $+p$-transition) to the right (dually to the left).
				\item 	If $\phi(\pi)$ a priori 
					contains strictly more occurrences of 
			$\boldsymbol{]}$ than of $\boldsymbol{[}$ one can --- 
					by applying the $(+p,-p)$-lowering process --- obtain a sequence
					of hybrid semiruns
					$$
					\eta^{(0)},\eta^{(1)},\ldots,\eta^{(s)},
					\eta^{(0)'},\eta^{(1)'},\ldots,\eta^{(t)'},
					$$
					where $t>0$, all $\eta^{(i)}$ and $\eta^{(j)'}$ 
					approximate $\pi$ with respect to level $B$
					(and are therefore, as remarked above
					by bearing in mind that $\pi$ is $B$-hill,
					in particular both min-rising and max-falling 
					$(B-\Upsilon_\C-\Gamma_\C-1)$-embeddings
					of $\pi$ with the same source and target configuration
					as $\pi$)
					and where the breadth of $\eta^{(t-1)'}$ is $1$.
			Dually, if $\phi(\pi)$ contains strictly more occurrences of 
					$\boldsymbol{[}$ than of 
					$\boldsymbol{]}$ one can --- 
					by applying the $(-p,+p)$-lowering process --- obtain a sequence
					of hybrid semiruns
					$$
					\eta^{(0)},\eta^{(1)},\ldots,
					\eta^{(s)},\eta^{(0)'},\eta^{(1)'},\ldots,\eta^{(t)'},
					$$
					where $t>0$, all $\eta^{(i)}$ and $\eta^{(j)'}$ 
approximate $\pi$
					with respect to level $B$
					(and are therefore both min-rising and max-falling 
					$(B-\Upsilon_\C-\Gamma_\C-1)$-embeddings
					of $\pi$ with the same source and target configuration
					as $\pi$)
					and where the breadth of $\eta^{(t-1)'}$ is $1$.
			\end{enumerate}
			
	\end{enumerate}
\end{remark}

\subsubsection{$\pi$ contains a $\pm p$-transition whose
		source and target configuration both have counter value at most $B+\Upsilon_\C+\Gamma_\C$}\label{Case B}

The presence of a $+p$-transition 
(resp. $-p$-transition) 
$q_i(z_i)\semi{\pi_i,N}q_{i+1}(z_{i+1})$ 
for which we have $\max\{z_i,z_{i+1}\}\leq B+\Upsilon_\C+\Gamma_\C$
implies

$z_{i+1} - (B+\Gamma_\C) \leq \Upsilon_\C$ 
(resp.  $z_{i} - (B+\Gamma_\C) \leq \Upsilon_\C$),
so the core problem is that in both cases
it is not possible to apply the Bracket Lemma
(Lemma~\ref{bracket lemma}) 
in the critical descending (resp. ascending) infix of
 such an unlowered transition.
We thus have to find another way to compensate for lowering such transitions.

We next claim that firstly, any 
$+p$-transition 
whose configurations both have a counter value at most 
$B+\Upsilon_\C+\Gamma_\C$
must be the first transition of $\pi$ and secondly, 
any $-p$-transition with the same property
must be the last transition of 
$\pi$.
Indeed, every 
$+p$-transition $q_i(z_i)\semi{\pi_i,N}q_{i+1}(z_{i+1})$
that is not the first transition (i.e. $i>0$)
satisfies $z_i\geq B$ as $\pi$ is a $B$-hill.
As a consequence, we have 
$z_{i+1}\geq B+N > B+ M_\C > B + \Gamma_\C + \Upsilon_\C$,
where the last inequality follows from $M_\C$'s definition on page~\pageref{constant definitions}.
Dually, if there exists a $-p$-transition 
$q_i(z_i)\semi{\pi_i,N}q_{i+1}(z_{i+1})$
with $z_i\leq B+\Upsilon_\C+\Gamma_\C$
it must be the last transition
$q_{n-1}(z_{n-1})\semi{\pi_{n-1},N}q_n(z_n)$ of $\pi$.

To finalize the proof it thus suffices to distinguish 
whether both the first transition of $\pi$ is a
$+p$-transition with counter values at most
$(B+\Upsilon_\C+\Gamma_\C)$ and
the last transition of $\pi$ is a $-p$-transition
with counter values at most
$(B+\Upsilon_\C+\Gamma_\C)$,
or this holds for precisely one of them.
We thus distinguish these two cases, however in opposite order.

\medskip

\noindent
{\em Case 1. 
The first transition $q_0(z_0)\semi{\pi_0,N}q_1(z_1)$ is a 
$+p$-transition with counter values at most $(B+\Upsilon_\C+\Gamma_\C)$
and the last transition $q_{n-1}(z_{n-1})\semi{\pi_{n-1},N}q_{n-1}(z_{n-1})$ is not a
 $-p$-transition with counter values at most $(B+\Upsilon_\C+\Gamma_\C)$, or 
the first transition $q_0(z_0)\semi{\pi_0,N}q_1(z_1)$ is not a 
$+p$-transition with counter values at most $(B+\Upsilon_\C+\Gamma_\C)$
and the last transition $q_{n-1}(z_{n-1})\semi{\pi_{n-1},N}q_{n-1}(z_{n-1})$ is a
 $-p$-transition with counter values at most $(B+\Upsilon_\C+\Gamma_\C)$.
}

We only treat the case when 
the first transition $q_0(z_0)\semi{\pi_0,N}q_1(z_1)$ is a 
$+p$-transition with counter values at most $(B+\Upsilon_\C+\Gamma_\C)$
and the last transition $q_{n-1}(z_{n-1})\semi{\pi_{n-1},N}q_{n-1}(z_{n-1})$ is not a
 $-p$-transition with counter values at most $(B+\Upsilon_\C+\Gamma_\C)$,
 since the opposite case can be proven analogously.

Starting with $\eta^{(0)}=\pi$ we apply phase one of the $(+p,-p)$-lowering process to $\pi$
yielding a sequence of hybrid semiruns that all approximate $\pi$ with respect to level $B$
(and thus in particular 
--- bearing in mind that $\pi$ is $B$-hill ---
approximates $\pi$ with respect to level $B-\Gamma_\C-1$)
$$
\eta^{(0)},\eta^{(1)},\ldots,\eta^{(s-1)}
$$
in which (as above) $\eta^{(i)}$ is obtained from 
$\eta^{(i-1)}$ by lowering the 
rightmost unlowered $+p$ of $\eta^{(i-1)}$ however
only until reaching the hybrid semirun $\eta^{(s-1)}$
that contains precisely one unlowered $+p$-transition, namely
the first transition 
$q_0(z_0)\semi{\pi_0,N}q_1(z_1)$ of $\pi$, which
has counter values at most $(B+\Upsilon_\C+\Gamma_\C)$ by assumption.
It is important but straightforward to verify that despite the case we are in, 
it holds that $\eta^{(i)}$ approximates $\pi$ with respect to level $B$ 
(and also 
--- bearing in mind that $\pi$ is a $B$-hill ---
with respect to level $B-\Gamma_\C-1)$ for all $i\in[1,s-1]$.

Next, we will define a sequence of hybrid semiruns
$\eta^{(s)}=\eta^{(0)'},\eta^{(1)'},\ldots,\eta^{(t)'}$ 
in which $\eta^{(t)'}$ will be the desired $(N-\Gamma_\C)$-semirun
as required by the lemma.
For first defining $\eta^{(s)}=\eta^{(0)'}$ we make a case distinction for lowering the
only $+p$-transition 
$q_0(z_0)\semi{\pi_0,N}q_1(z_1)$ of $\eta^{(s-1)}$, which happens
to 
have counter values at most $(B+\Upsilon_\C+\Gamma_\C)$ by assumption.
For this assume $\eta^{(s-1)}$ has the following form
$$
\eta^{(s-1)}=\alpha^{(s-1,0)}\beta^{(s-1,1)}\alpha^{(s-1,1)}\cdots\beta^{(s-1,k_{s-1})}
\alpha^{(s-1,k_{s-1})},
$$
where we recall that $\beta^{(s-1,1)}$ equals $q_0(z_0)\semi{\pi_0,N}q_1(z_1)$.
Observe that the critical descending infix of $\beta^{(s-1,1)}$
could possibly be empty, for instance if $z_1\leq B+\Gamma_\C$.
We now make the following case distinction.
\begin{itemize}
	\item In case the critical descending infix of $\beta^{(s-1,1)}$ contains
		an unlowered $-p$-transition we define $\eta^{(s)}$ to be obtained
		from $\eta^{(s-1)}$ by lowering $\beta^{(s-1,1)}$ with
		the leftmost unlowered $-p$-transition inside the critical
		descending infix as above.
		Thus, $\eta^{(s)}$ no longer contains any unlowered $+p$-transition.
		It is again straightforward to verify that 
		$\eta^{(s)}$ approximates $\pi$ with respect to level $B$.
		Setting $\eta^{(0)'}=\eta^{(s)}$ we then construct the sequence
		$\eta^{(s)}=\eta^{(0)'},\eta^{(1)'},\ldots,\eta^{(t)'}$
		as usual, i.e. each $\eta^{(i)'}$ approximates $\pi$ with respect to level $B$ 
		and is obtained from $\eta^{(i-1)'}$ 
		by lowering the leftmost unlowered $-p$-transition
		and where eventually the breadth of $\eta^{(t)'}$ is $0$.
		Thus, as desired, the final $\eta^{(t)'}$ is an $(N-\Gamma_\C)$-semirun
		that is a min-rising and max-falling
		$B$-embedding of $\pi$ that has the same
		source and target configuration as $\pi$.
		Since $\eta^{(t)'}$ has the same source and target configuration as $\pi$ 
		it follows that $\eta^{(t)'}$ is also a 
		min-rising and max-falling
		$(B-\Upsilon_\C-\Gamma_\C-1)$-embedding of $\pi$
		as required by the lemma.
	\item In case the critical descending infix of $\beta^{(s-1,1)}$
		does not contain any unlowered $-p$-transition, we consider
		the shortest prefix $\zeta$ of the remaining suffix 
		$$\alpha^{(s-1,1)}\cdots\beta^{(s-1,k_{s-1})}\alpha^{(s-1,k_{s-1})}$$
		that ends in a configuration with counter value 
		at most $$\mathcal{L}'=z_1-\Upsilon_\C-1$$
		(where we recall that as above each $\alpha^{(i,j)}$ 
		is viewed as a sequence
		of transitions). 
		Indeed, we claim that $\zeta$ exists and moreover
		satisfies $\Delta(\zeta)<-\Upsilon_\C$. 
		Firstly, as $\pi$ is a $B$-hill by assumption, we 
		have that $z_1-z_n>\Upsilon_\C$. 
		Secondly, since $\eta^{(s-1)}$
		approximates $\pi$ with respect to level $B$ we have that 
		$\eta^{(s-1)}$ ends in a configuration
		with counter value $z_n$.
		Thus, 
		$\Delta(\alpha^{(s-1,1)}\cdots\beta^{(s-1,k_{s-1})}
		\alpha^{(s-1,k_{s-1})})=z_n-z_1<
		-\Upsilon_\C$ which implies that the prefix $\zeta$ exists
		and satisfies $\Delta(\zeta)<-\Upsilon_\C$.
		We make the following final case distinction.
		\begin{itemize}
			\item In case $\zeta$ contains an unlowered $-p$-transition,
		it must contain the leftmost unlowered $-p$-transition, namely 
				$\beta^{(s-1,2)}$.
		Similar as for the critical descending infix, 
				we define $\eta^{(s)}$ to be obtained from 
				$\eta^{(s-1)}$ by lowering $\beta^{(s-1,1)}$
				together with $\beta^{(s-1,2)}$.
				Here it is important to note that $\eta^{(s)}$ is not necessarily a 
				$(B-\Gamma_\C)$-embedding
				of $\pi$ since we cannot rule out the existence of configurations appearing 
				in $\alpha^{(s-1,1)}$ that have counter value 
				$B$.	Since
				$\eta^{(s-1)}$ was a $B$-embedding of the $B$-hill $\pi$
		with the same source and target configuration
				it follows however that $\eta^{(s)}$ is a
				$(B-\Gamma_\C-1)$-embedding of $\pi$.
				Hence, $\eta^{(s)}$ approximates $\pi$ with respect to 
				level $B-\Gamma_\C-1$.
				Thus, $\eta^{(s)}$ no longer contains any unlowered $+p$-transitions,
				however, possibly contains
				unlowered $-p$-transitions.
				Recalling that $\eta^{(0)'}=\eta^{(s)}$ we define 
				each of the remaining
				$\eta^{(i)'}$ to be obtained from $\eta^{(i-1)'}$
				as usual but by retaining that each $\eta^{(i)'}$ approximates
				$\pi$ with respect to level $B-\Gamma_\C-1$
				(instead of level $B$).
				By construction, $\eta^{(0)'}$ has breadth $0$ and thus is
				an $(N-\Gamma_\C)$-semirun that is a min-rising and max-falling
				$(B-\Gamma_\C-1)$-embedding
				and hence
				--- bearing in mind that $\pi$ is a $B$-hill ---
				in particular a 
				min-rising and
				max-falling
		$(B-\Upsilon_\C-\Gamma_\C-1)$-embedding
				of $\pi$ with the same source and target configuration as $\pi$.
			\item In case $\zeta$ does not contain any unlowered $-p$-transition
				it follows that $\zeta$ is a prefix of $\alpha^{(s-1,1)}$,
				thus contains neither unlowered $+p$-transitions nor
				unlowered $-p$-transitions but possibly lowered ones.
				By an analogous reasoning as Point 2 of Remark~\ref{alternate process}
				every occurrence of a lowered 
				$-p$-transition in $\alpha^{(s-1,1)}$ is preceded by 
				a unique corresponding lowered
				$+p$-transition again in $\alpha^{(s-1,1)}$.
				Thus, every 
				prefix of 
				$\phi(\zeta)$ contains 
				at least as many
				occurrences of 
				$\boldsymbol{[}$ as of 
				$\boldsymbol{]}$. 
				Recalling that $\Delta(\zeta)<-\Upsilon_\C$ we can hence
				apply the Bracket Lemma (Lemma~\ref{bracket lemma})
				and the Depumping Lemma (Lemma~\ref{lemma zero})
				to $\zeta$ as in phase one.
				The final $\eta^{(s)}$ is obtained from $\eta^{(s-1)}$
				by suitably shifting subsemiruns and cutting out certain subsemiruns 
				whose $\phi$-projection contains the same number of
				occurrences of $\boldsymbol{[}$ as of 
				$\boldsymbol{]}$.
				Similar as argued in the previous point it follows that 
				$\eta^{(s)}$ approximates
				$\pi$ with respect to level $B-\Gamma_\C-1$.
				Setting again $\eta^{(0)'}=\eta^{(s)}$ we define the 
				sequence of hybrid semiruns
				$\eta^{(0)'},\eta^{(1)'},\ldots,
				\eta^{(t)'}$ that all approximate $\pi$ with respect
				to level $B-\Gamma_\C-1$ analogously as done in the previous point.
				Again $\eta^{(t)'}$ is an
				$(N-\Gamma_\C)$-semirun that is a min-rising and max-falling
				$(B-\Gamma_\C-1)$-embedding
				(and hence in particular a 
	min-rising and max-falling
	$(B-\Upsilon_\C-\Gamma_\C-1)$-embedding
				of $\pi$ with the same source and target configuration as $\pi$.
			\end{itemize}
\end{itemize}

\begin{remark}\label{alternate process case 1}
	Our case (where the first transition
	of our $B$-hill
	is a $+p$-transition with counter values at most $B+\Upsilon_\C+\Gamma_\C$  
	and the last transition is not a $-p$-transition with counter values
	at most $(B+\Upsilon_\C+\Gamma_\C)$) allows the following ``penultimate'' 
	process variants for our $B$-hill $\pi$ (dual variants can be formulated in the case when $\pi$ is $B$-valley):
\begin{enumerate}
	\item 
	The adjusted process here in Case 1 bears similar properties to
	those of the $(+p,-p)$-lowering process seen in Remark~\ref{alternate process}.
	Specifically, 
		if $\phi(\pi)$ contains strictly more occurrences of 
			$\boldsymbol{]}$ than of $\boldsymbol{[}$ one can obtain a sequence
			of hybrid semiruns
					$$
		\eta^{(0)},\eta^{(1)},\ldots,\eta^{(s)},\eta^{(0)'},\eta^{(1)'},
		\ldots,\eta^{(t-1)'}, $$
					where all $\eta^{(i)}$ and $\eta^{(j)'}$ approximate $\pi$
					with respect to level $B-\Gamma_\C-1$
					(and are therefore 
					--- bearing in mind that $\pi$ is a $B$-hill ---
					in particular 
min-rising and max-falling
		$(B-\Upsilon_\C-\Gamma_\C-1)$-embeddings 
					of $\pi$
					with the same source and target configuration as $\pi$)
					and where
					$\eta^{(t-1)'}$ has breadth $1$ and contains precisely one 
					unlowered $-p$-transition.
	\item Dually, 
		the $(-p,+p)$-lowering process mentioned in Remark~\ref{alternate process},
		when applied to Case 1,
		is such that if
	 $\phi(\pi)$ contains strictly more occurrences of 
			$\boldsymbol{[}$ than of $\boldsymbol{]}$ one can obtain a sequence
			of hybrid semiruns
					$$
		\eta^{(0)},\eta^{(1)},\ldots,\eta^{(s)},\eta^{(0)'},\eta^{(1)'},\ldots,
		\eta^{(t-1)'},
					$$
					where all $\eta^{(i)}$ and $\eta^{(j)'}$ 
approximate $\pi$
					with respect to level $B-\Gamma_\C-1$
					(and are therefore 
					--- bearing in mind that $\pi$ is a $B$-hill ---
					in particular 
min-rising and max-falling $(B-\Upsilon_\C-\Gamma_\C-1)$-embeddings 
					of $\pi$
					with the same source and target configuration as $\pi$)
					and where $\eta^{(t-1)'}$ 
	has breadth $1$ and contains precisely one unlowered $+p$-transition.
\end{enumerate}
\end{remark}

\medskip

\noindent
{\em Case 2. The first transition $q_0(z_0)\semi{\pi_0,N}q_1(z_1)$ is a
$+p$-transition with counter values at most $(B+\Upsilon_\C+\Gamma_\C)$
and the last transition $q_{n-1}(z_{n-1})\semi{\pi_{n-1},N}q_{n-1}(z_{n-1})$ is a 
$-p$-transition with counter values at most $(B+\Upsilon_\C+\Gamma_\C)$.
}

By our case we have that $z_0,z_n\leq B+\Upsilon_\C+\Gamma_\C-N\leq
B+\Upsilon_\C+\Gamma_\C-M_\C< B-\Upsilon_\C-\Gamma_\C-1$,
where the last inequality follows the definition of our constants
on page~\pageref{constant definitions}.

Since $\pi$ is a $B$-hill $\pi$ is also a $(B-\Upsilon_\C-\Gamma_\C-1)$-hill.
Moreover, obviously there are no $+p$-transitions nor $-p$-transitions in $\pi$
whose source and target configuration both have counter value 
at most $B-1$. 
Phrased differently, setting $B'=B-\Upsilon_\C-\Gamma_\C-1$, we view $\pi$ as a $B'$-hill
that does not contain any $+p$-transitions
nor $-p$-transitions
whose source and target configuration have a counter value at most $(B'+\Upsilon_\C+\Gamma_\C)$.
We can hence apply the $(+p,-p)$-lowering process to $\pi$
as described in the case of in Section~\ref{Case A}
for $B'$ instead of $B$,
		thus yielding the sequence 
		$\eta^{(0)}, \eta^{(1)}, \ldots, \eta^{(s)}$ and 
		$\eta^{(0)'},\eta^{(1)'}, \ldots \eta^{(t)'}$ 
		of hybrid semiruns that approximate
		$\pi$ with respect to level $B'$ and are
		therefore min-rising and max-falling
		$B'$-embeddings of $\pi$: note that we use the 
		fact that they are are indeed $B'$-embeddings
		as the construction in Section~\ref{Case A}
		guarantees rather than that they
		are $(B'-\Upsilon_\C-\Gamma_\C-1)$-embeddings.
		The final $\eta^{(t)'}$ is of breadth 
		$0$ and is hence 
		a min-rising and max-falling
		$(N-\Gamma_\C)$-semirun
		that is a
		$(B-\Upsilon_\C-\Gamma_\C-1)$-embedding of $\pi$
		with the same source and target configuration
		as $\pi$,
		as required by the lemma.

\subsection{The Hill and Valley Lemma, dependent on
the number of occurrences $+p$-transitions as of $-p$-transitions}
\label{HV further}
A closer look at the proof of Lemma~\ref{lemma hill/valley}
	reveals that majority of occurrences
	of $+p$-transitions (resp. $-p$-transitions)
	implies the respective majority 
	is preserved in the resulting $(N-\Gamma_\C)$-semirun.

\begin{remark}\label{remark hill/valley}
	The resulting $\eta^{(t)'}$ obtained from the 
	$B$-hill (resp. $B$-valley) $\pi$ satisfies the following:
		\begin{itemize}
		\item If $\phi(\pi)$ contains at least as many occurrences of $\boldsymbol{[}$ as
			of $\boldsymbol{]}$, then so does the resulting $(N-\Gamma_\C)$-semirun
				$\phi(\eta^{(t)'})$ satisfying Lemma~\ref{lemma hill/valley}.
		\item If $\phi(\pi)$ contains at least as many occurrences of $\boldsymbol{]}$ as
			of $\boldsymbol{[}$, then so does 
				the resulting $(N-\Gamma_\C)$-semirun
				$\phi(\eta^{(t)'})$ satisfying Lemma~\ref{lemma hill/valley}.
	\end{itemize}
\end{remark}

The following final remark stresses the fact that when our $B$-hill (resp. $B$-valley) $\pi$ 
contains a number of occurrences of $+p$-transitions
different from the number of $-p$-transitions, 
the lowering processes described in the previous section yield
a penultimate hybrid semirun all but one of whose $+p$-transitions and $-p$-transitions
are lowered.
It is an immediate consequence of
Point 2.c in Remark~\ref{alternate process} and
Remark~\ref{alternate process case 1}.
\begin{remark}\label{hill bracket}
	Let $\pi$ be an $N$-semirun
that is a $B$-hill (dually a $B$-valley).
\begin{enumerate}
	\item If $\phi(\pi)$ contains strictly more occurrences of 
			$\boldsymbol{]}$ than of $\boldsymbol{[}$ one can obtain a sequence
			of hybrid semiruns
					$$
		\eta^{(0)},\eta^{(1)},\ldots,\eta^{(s)},\eta^{(0)'},\eta^{(1)'},\ldots,
		\eta^{(t-1)'},
					$$
					in which 
					$\eta^{(t-1)'}$ is a min-rising and max-falling
					$(B-\Upsilon_\C -\Gamma_\C-1)$-embedding
					(dually $(B+\Upsilon_\C +\Gamma_\C+1)$-embedding)
					of $\pi$
					with the same source and target configuration as $\pi$
					and where $\eta^{(t-1)'}$  
					has breadth $1$ and 
					contains precisely one unlowered $-p$-transition.
	\item Analogously, if $\phi(\pi)$ contains strictly more occurrences of 
		$\boldsymbol{[}$ than of $\boldsymbol{]}$ one can obtain a sequence
			of hybrid semiruns
					$$
		\eta^{(0)},\eta^{(1)},\ldots,\eta^{(s)},\eta^{(0')},\eta^{(1)'},\ldots,
		\eta^{(t-1)'},
					$$
					in which 
					$\eta^{(t-1)'}$ is a min-rising and max-falling
					$(B-\Upsilon_\C -\Gamma_\C-1)$-embedding
					(dually $(B+\Upsilon_\C +\Gamma_\C+1)$-embedding)
					of $\pi$
					with the same source and target configuration as $\pi$
					and where $\eta^{(t-1)'}$  
					has breadth $1$ and 
					contains precisely one unlowered $+p$-transition.

\end{enumerate}
\end{remark}

\section{The 5/6-Lemma}\label{5-6 section}
In this section, we introduce
the $5/6$-Lemma (Lemma~\ref{lemma 5/6}), stating that any $N$-semirun 
		with counter effect 
		smaller than $5/6 \cdot N$ can be turned into an $(N-\Gamma_\C)$-semirun
		that is moreover an $\ell$-embedding for all $\ell$ that are in
		distance at most $5/6\cdot N$ from the counter values of the source and target configuration.
		It will be the main technical ingredient
		in the proof of the Small Parameter Theorem (Theorem~\ref{theorem upper}).
This section is devoted to proving the lemma, hereby making extensive use of the Hill and Valley Lemma 
(Lemma~\ref{lemma hill/valley}), the Depumping Lemma (Lemma~\ref{lemma zero}),
and the Bracket Lemma (Lemma~\ref{bracket lemma}) introduced in previous sections.

Recall that we have fixed a 
POCA $\C=(Q,P,R,q_{init}, F)$ with $P=\{p\}$ 
along with the constants
$Z_C,\Gamma_\C,\Upsilon_\C,M_\C$ on page~\pageref{constant definitions}.

	\medskip

	\noindent
	Let us first introduce the $5/6$-Lemma.

\begin{lemma}[$5/6$-Lemma]{\label{lemma 5/6}}
        For all $N>M_\C$ and 
all $\ell\in\Z$ 
	and all $N$-semiruns $\pi$ from 
	$q_0(z_0)$ to $q_n(z_n)$ with
	$\values(\pi)\subseteq[0,4 N]$ satisfying
        $\max(z_0,z_n,\ell)-\min(z_0,z_n,\ell)\leq 5/6 \cdot N$ 
	there exists
	an $(N-\Gamma_\C)$-semirun $\pi'$ from $q_0(z_0)$ 
	to $q_n(z_n)$ 
	that is an $\ell$-embedding 
	of $\pi$ such that $\values(\pi')\subseteq[\min(\pi)-\Gamma_\C,\max(\pi)+\Gamma_\C]$.
\end{lemma}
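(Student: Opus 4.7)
The strategy is to decompose $\pi$ as a concatenation of subsemiruns each of which is either (a) free of $\pm p$-transitions---and therefore automatically an $(N-\Gamma_\C)$-semirun with the same counter values, since none of the remaining operations' effects depend on the parameter value---or (b) a hill or a valley, to which the Hill and Valley Lemma (Lemma~\ref{lemma hill/valley}) applies. The lowered pieces are then glued back together using closure of embeddings under concatenation (Remark~\ref{remark embeddings}).

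Let $L = \min(z_0, z_n, \ell)$ and $H = \max(z_0, z_n, \ell)$, so $H - L \leq \tfrac{5}{6} N$. Because $N > M_\C$, the gap $N - (H - L) > N/6$ is far larger than $\Upsilon_\C + \Gamma_\C$. I would pick a hill boundary $B_+$ just above $H$ (for instance $B_+ = H + \Upsilon_\C + \Gamma_\C + 2$) and a valley boundary $B_-$ just below $L$ symmetrically, so that on the one hand $B_+ - \Upsilon_\C - \Gamma_\C - 1 > \ell$ and $B_- + \Upsilon_\C + \Gamma_\C + 1 < \ell$, and on the other hand the middle band $[B_-, B_+]$ is still narrow compared with an $N$-sized $\pm p$-jump. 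With this choice, any $\pm p$-transition whose source lies in $[B_-, B_+]$ lands strictly outside with overshoot at least $N - (B_+ - B_-) > \Upsilon_\C$.

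Mark every position $i$ with $z_i \in [B_-, B_+]$ (so $0$ and $n$ are marked). Between two consecutive marked positions all counter values lie strictly outside the middle band; each maximal such inter-marked stretch is either a $B_+$-hill (all interior values above $B_+$), a $B_-$-valley (all interior values below $B_-$), or a mixture of the two sides linked by individual cross-band $\pm p$-transitions. In the mixed case I would further split at every cross-band $\pm p$-transition: each resulting sub-piece then lives entirely on one side of the band, and the overshoot argument gives the required $\Upsilon_\C$-separation at each of its $\pm p$-transitions. Middle segments (entirely inside $[B_-, B_+]$) contain no $\pm p$-transition---one would exit the band---so they are already $(N-\Gamma_\C)$-semiruns. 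Apply Lemma~\ref{lemma hill/valley} to each hill/valley piece to get $(N-\Gamma_\C)$-semiruns with identical source and target configurations, which are min-rising and max-falling $(B_+-\Upsilon_\C-\Gamma_\C-1)$-embeddings (respectively $(B_-+\Upsilon_\C+\Gamma_\C+1)$-embeddings). By the choice of $B_\pm$ these are in fact $\ell$-embeddings of the original pieces: interior values of an original hill exceed $B_+ > \ell$ and, after lowering, remain strictly above $B_+-\Upsilon_\C-\Gamma_\C-1 > \ell$; the endpoints are preserved; and dually for valleys. Concatenating gives the desired $\pi'$, and the bound $\values(\pi') \subseteq [\min(\pi)-\Gamma_\C, \max(\pi)+\Gamma_\C]$ follows from the min-rising and max-falling properties of each lowered piece.

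The main obstacle is the decomposition in the mixed case, where a single inter-marked stretch may oscillate between above-$B_+$ and below-$B_-$ via $\pm p$-jumps that skip over the middle band: one must verify that after all the splitting, each sub-piece genuinely satisfies every hypothesis of the Hill and Valley Lemma, including the $\Upsilon_\C$-separation for every $\pm p$-transition (especially those adjacent to piece boundaries). This is precisely where the $\tfrac{5}{6}$-bound is indispensable: it guarantees that every cross-band jump overshoots by at least $N/6 > \Upsilon_\C + \Gamma_\C$ on either side of the band, eliminating all edge cases and ensuring the lowered pieces' embedding boundaries $B_\pm \mp (\Upsilon_\C + \Gamma_\C + 1)$ end up on the correct side of $\ell$.
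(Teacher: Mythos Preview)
Your high-level plan---decompose into parameter-free ``middle-band'' pieces and hill/valley pieces, then invoke Lemma~\ref{lemma hill/valley} and concatenate---matches the paper's treatment of what it calls Type~I and Type~II subsemiruns. The genuine gap is in your ``mixed case'', which corresponds exactly to the paper's Type~III subsemiruns (those containing a \emph{doubly-crossing} $\pm p$-transition that jumps over the entire band $[B_-,B_+]$), and this is where essentially all of the work in the paper's proof lies.

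Splitting at each cross-band $\pm p$-transition does \emph{not} produce hills or valleys: a piece that starts at a marked position inside $[B_-,B_+]$ and ends just before a cross-band $-p$ (so above $B_+$) has its final configuration on the wrong side of $B_+$, violating the definition of a $B_+$-hill; likewise the piece between two consecutive cross-band transitions has \emph{both} endpoints outside the band. So Lemma~\ref{lemma hill/valley} does not apply to the pieces you obtain. Worse, even if you manufacture hills and valleys some other way, every time you lower a cross-band transition its target configuration shifts by $\pm\Gamma_\C$, so the lowered pieces no longer share endpoints and cannot be concatenated directly. Absorbing these residual $\Gamma_\C$-shifts is precisely the hard part: the paper does it via a lengthy case analysis (even vs.\ odd number of crossing transitions; whether the extremal hill/valley candidates are genuine hills/valleys or merely candidates; whether some hill/valley has a surplus of $\boldsymbol{[}$'s or $\boldsymbol{]}$'s to exploit via Remark~\ref{hill bracket}), and in several subcases has to fall back on the Depumping Lemma (Lemma~\ref{lemma zero}) and the Bracket Lemma (Lemma~\ref{bracket lemma}) to manufacture the missing $\Gamma_\C$ of counter effect. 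Your proposal invokes neither of these lemmas and does not account for the endpoint shifts, so as written it does not close.
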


\noindent
Towards proving Lemma~\ref{lemma 5/6} let us fix 
\begin{itemize}
	\item some $N>M_\C$,
	\item some $\ell\in\Z$,
	\item some $N$-semirun
$
\pi  \  = \  q_0(z_0)\semi{\pi_0,N}q_1(z_1) \ \cdots \ \semi{\pi_{n-1},N}q_n(z_n)
$
from $q_0(z_0)$ to $q_n(z_n)$ 
satisfying $\values(\pi)\subseteq[0,4 N]$ and
$\max(z_0,z_n,\ell)-\min(z_0,z_n,\ell)\leq 5/6 \cdot N$.
\end{itemize}

In order to prove the 5/6-Lemma we need to show the existence of some $(N-\Gamma_\C)$-semirun $\pi'$
from $q_0(z_0)$ to $q_n(z_n)$
that is both an $\ell$-embedding of $\pi$ 
with $\values(\pi')\subseteq[\min(\pi)-\Gamma_\C,\max(\pi)+\Gamma_\C]$.

\medskip

For this, let us define following two constants

\newcommand{\Bmin}{B_{\min}}
\newcommand{\Bmax}{B_{\max}}
\label{Bmin Bmax def}
\fbox{\parbox[t][1cm][c]{12cm}{
	$B_{\min}=\min(z_0,z_n,\ell)-\Upsilon_\C-2\Gamma_\C-1\quad$ and 
			$\quad B_{\max}=\max(z_0,z_n,\ell)+\Upsilon_\C+2\Gamma_\C+1$
	}
}
\vspace{0.2cm}

and observe that 
\begin{eqnarray}
	\Bmax-\Bmin&=&\max(z_0,z_n,\ell)-\min(z_0,z_n,\ell)+2\Upsilon_\C+4\Gamma_\C+2\nonumber\\
	&\leq&5/6\cdot N+2\Upsilon_\C+4\Gamma_\C+2\nonumber\\
	&\leq&5/6\cdot N+ M_\C / 6 \label{5/6 upper}\\
	&<&N,\label{upper N}
\end{eqnarray}
where the penultimate inequality follows from the definitions
of our constants on page~\pageref{constant definitions}.

\label{B constants}

\vspace{0.1cm}
\medskip

\noindent
We are particularly interested in subsemiruns of $\pi$ that start
and end in configurations with counter values in $[\Bmin+1,\Bmax-1]$.
To categorize such subsemiruns into different types, we introduce the notion of crossing and doubly-crossing
transitions.

\begin{samepage}
\begin{definition}
	A transition $q_i(z_i)\semi{\pi_i}q_{i+1}(z_{i+1})$ 
	is called {\em crossing} if either
		\begin{itemize}
		\item $\pi_i = +p$ 
				 and we have
					$z_i<\Bmax\leq z_{i+1}$ or 
					$z_i\leq \Bmin < z_{i+1}$, or
		\item $\pi_i = -p$ 
					and we have $z_i>\Bmin\geq z_{i+1}$ or
					$z_i\geq \Bmax> z_{i+1}$.
			\end{itemize}
	If even moreover
	$z_i\leq\Bmin\leq\Bmax\leq z_{i+1}$ or
	$z_i\geq\Bmax\geq\Bmin\geq z_{i+1}$
	we call $\pi_i$ {\em doubly-crossing}.
\end{definition}
\end{samepage}

We already refer to Figure~\ref{type example}, where subsemiruns of
a certain type (to be defined below) are depicted, some of whose transitions
crossing transitions, some of whose are even doubly-crossing transitions.

Next, we introduce three particular types of subsemiruns of $\pi$ 
starting and ending in configurations with counter values in $[\Bmin+1,\Bmax-1]$.

\begin{definition}[Type I, II and III subsemiruns of $\pi$]
	A subsemirun $\pi[a,b]$ of $\pi$ with source and target configuration
	in $Q\times[\Bmin+1,\Bmax-1]$ is
	\begin{itemize}
		\item of {\em Type I} if $\values(\pi[a,b])\subseteq[\Bmin+1,\Bmax-1]$,
		\item of {\em Type II} if
			\begin{itemize}
			\item $\values(\pi[a+1,b-1])\cap[\Bmin+1,\Bmax-1]=\emptyset$, and
			\item $\pi[a,b]$ does not contain any doubly-crossing transitions, 
			\end{itemize}
		\item of {\em Type III} if
			\begin{itemize}
			\item $\values(\pi[a+1,b-1])\cap[\Bmin+1,\Bmax-1]=\emptyset$, and
			\item $\pi[a,b]$ contains at least one doubly-crossing
			transition.
			\end{itemize}
	\end{itemize}
\end{definition}

\begin{remark}\label{remark Type III}
All crossing transitions in a Type III semirun,
except possibly the first or the last transition, are doubly-crossing.
\end{remark}

Figure~\ref{type example} shows an example of a Type II and of a Type III subsemirun.

\begin{center}
	\begin{figure}
		\hspace{0.3cm}
\includegraphics[width=0.44\textwidth]{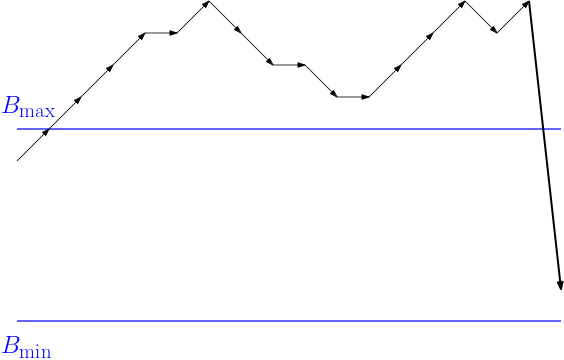}
		\hspace{0.4cm}
		\raisebox{-1.94ex}{\includegraphics[width=0.44\textwidth]{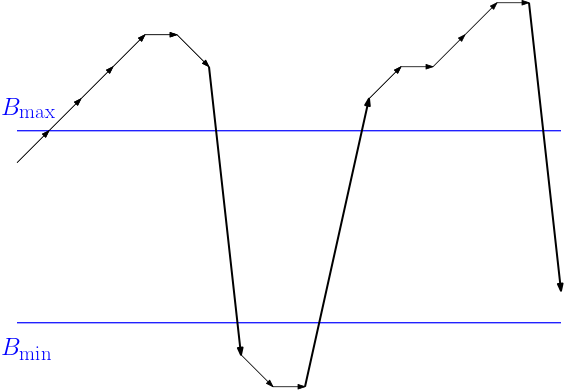}}
\caption{On the left, an example of a Type II subsemirun. On the right, an example of a Type III subsemirun.
		Bold transitions are crossing, and the first two bold transitions
		of the figure on the right
 are moreover doubly crossing.}
\label{type example}
	\end{figure}
\end{center}

\begin{samepage}
The following lemma factorizes $\pi$ into Type I, Type II and Type III subsemiruns, bearing
	in mind that both the source and target configuration of $\pi$ have
	a counter value in $[\Bmin+1,\Bmax-1]$. 
\begin{lemma}\label{lemma type}
The $N$-semirun $\pi$ can be factorized into Type I, Type II, and Type III subsemiruns.
\end{lemma}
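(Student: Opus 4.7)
The plan is to cut $\pi$ at exactly the positions whose counter value lies in the ``target band'' $[\Bmin+1,\Bmax-1]$. Concretely, I would define
\[
L \ = \ \{ i \in [0,n] : z_i \in [\Bmin+1,\Bmax-1] \},
\]
and note that $0,n \in L$: indeed, by the definitions of $\Bmin$ and $\Bmax$ on page~\pageref{Bmin Bmax def} we have $z_0,z_n \in [\min(z_0,z_n,\ell),\max(z_0,z_n,\ell)] \subseteq [\Bmin+1,\Bmax-1]$. Enumerating $L$ in increasing order as $0=i_0 < i_1 < \cdots < i_k = n$, I would then propose the factorization
\[
\pi \ = \ \pi[i_0,i_1]\cdot\pi[i_1,i_2]\cdots\pi[i_{k-1},i_k],
\]
each of whose factors has, by definition, source and target configurations in $Q\times[\Bmin+1,\Bmax-1]$.

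It then remains to classify each factor $\pi[i_j,i_{j+1}]$. I would split on whether $i_{j+1} = i_j+1$ or $i_{j+1} > i_j+1$. In the first case $\pi[i_j,i_{j+1}]$ is a single transition and $\values(\pi[i_j,i_{j+1}]) = \{z_{i_j},z_{i_{j+1}}\} \subseteq [\Bmin+1,\Bmax-1]$, so the factor is of Type~I. In the second case, consecutiveness of $i_j$ and $i_{j+1}$ in $L$ forces every intermediate position $h \in [i_j+1,i_{j+1}-1]$ to satisfy $z_h \notin [\Bmin+1,\Bmax-1]$, i.e. $\values(\pi[i_j+1,i_{j+1}-1]) \cap [\Bmin+1,\Bmax-1] = \emptyset$. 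Depending on whether $\pi[i_j,i_{j+1}]$ contains a doubly-crossing transition the factor is classified as Type~III, and otherwise as Type~II. Since these cases are exhaustive, each factor belongs to one of the three types and the factorization has the claimed form.

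I do not anticipate a real obstacle: the argument is essentially a bookkeeping exercise on the set $L$. The only subtlety worth double-checking is the single-transition case, where one needs to verify that such a transition cannot sneak into Type~II or Type~III classification. This is immediate, however, since a transition with both endpoints strictly inside $(\Bmin,\Bmax)$ satisfies neither $z_i \leq \Bmin$ nor $z_i \geq \Bmax$ (and similarly for $z_{i+1}$), so it is not even crossing, let alone doubly-crossing; hence it is correctly placed in Type~I.
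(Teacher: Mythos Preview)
Your proof is correct and follows essentially the same idea as the paper's: cut $\pi$ at positions whose counter value lies in $[\Bmin+1,\Bmax-1]$ and observe that each resulting factor is of Type~I, II, or III. The only cosmetic difference is that the paper first groups consecutive such positions into \emph{maximal} Type~I segments (so the factorization alternates between maximal Type~I pieces and Type~II/III pieces), whereas you cut at every single position in $L$, yielding Type~I pieces of length~$1$; both factorizations establish the lemma.
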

\begin{proof}
Let us first factorize $\pi$ as
\begin{eqnarray}
	\pi &=& \pi[c_1,d_1]\pi[d_1,c_2]\pi[c_2,d_2]\pi[d_2,c_3]\quad\cdots\quad\pi[c_t,d_t],
 \label{fact type}
\end{eqnarray}
	where
	\begin{itemize}
		\item $\pi[c_i,d_i]$ are Type I and {\em maximal} (possibly empty),
			i.e. $\pi[c_i,d_i]$ is of Type I 
			but neither $\pi[c_i-1,d_i]$ nor
			$\pi[c_i,d_i+1]$ is of Type I 
			for all $i\in[1,t]$, and
		\item $\values(\pi[d_i+1,c_{i+1}-1])\cap[\Bmin+1,\Bmax-1]=\emptyset$
			for all $i\in[1,t-1]$.
	\end{itemize}
	Now it suffices to show that each subsemirun $\pi[d_i,c_{i+1}]$ is either of Type II or of Type III. For this, let us make a case distincton on whether $\pi[d_i,c_{i+1}]$ contains a doubly-crossing transition or not.

	For the first case, namely that $\pi[d_i,c_{i+1}]$ does contain a doubly-crossing
	transition, since $\values(\pi[d_i+1,c_{i+1}-1])\cap[\Bmin+1,\Bmax-1]=\emptyset$, 
	we have that
	$\pi[d_i,c_{i+1}]$ is of Type III by definition.

	For the second case, namely that $\pi[d_i,c_{i+1}]$ does not contain any doubly-crossing
	transition,
	since $\values(\pi[d_i+1,c_{i+1}-1])\cap[\Bmin+1,\Bmax-1]=\emptyset$ we have that
	$\pi[d_i,c_{i+1}]$ is of Type II by definition.
\end{proof}
\end{samepage}

\begin{center}
	\begin{figure}
\includegraphics[width=0.8\textwidth]{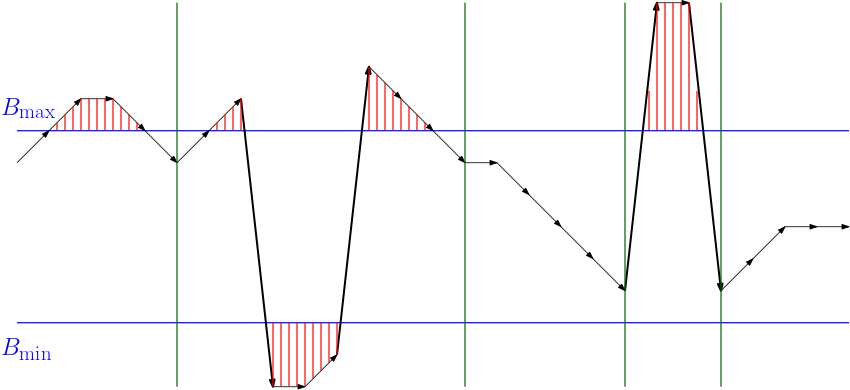}
		\caption{In this figure, we provide an example factorization of a semirun $\pi$. A semirun $\pi$ is divided into five subsemiruns, separated by 
		vertical lines. The third and fifth subsemiruns are of Type I, 
		the first and fourth subsemiruns are of Type II, 
		and the second one is of Type III.}
\label{type factor}
	\end{figure}
\end{center}

	By Remark~\ref{remark embeddings}
	in order to prove the 
	existence of the desired $(N-\Gamma_\C)$-semirun
	it suffices to show it
	for Type I, Type II and Type III subsemiruns of $\pi$.

	Since Type I subsemiruns neither contain any $+p$-transition nor any 
	$-p$-transition by (\ref{upper N}),
	they are already $(N-\Gamma_\C)$-semiruns.

	Let us now discuss the situation for Type II subsemiruns of $\pi$.
	If a Type II subsemirun is already a Type I subsemirun we are done as above.
	In case a Type II subsemirun $\rho$ is not of Type I we first claim that 
	$\rho$ is either a $\Bmin$-valley or a $\Bmax$-hill.
	Indeed, if $\rho$ is of Type II but not of Type I one can factorize $\rho$ 
	as 
$$
\rho\quad=\quad p_0(x_0) \quad \semi{\rho_0,N} \quad p_1(x_1)\quad\cdots\quad \semi{\rho_{m-1},N}\quad p_m(x_m),
$$
	where 
	\begin{samepage}
	\begin{enumerate}
		\item $m\geq 2$,
		\item $x_0,x_m\in[\Bmin+1,\Bmax-1]$, and
		\item either $x_i\in[0,\Bmin]$ for all $i\in[1,m-1]$ or $x_i\in[\Bmax,4N]$ for all $i\in[1,m-1]$,
	\end{enumerate}
		\end{samepage}
	where Point 3 follows from the absence of doubly-crossing transitions.

	First assume that $x_i\in[\Bmax,4N]$ for all $i\in[1,m-1]$.
	In this case any $+p$-transition (resp. $-p$-transition) ends (resp. starts) in a configuration with counter value strictly larger than $\Bmin+N$. 
	Due to the definition of our constants on page~\pageref{constant definitions} we have

\begin{eqnarray*}
	x_0+\Upsilon_\C, \ x_n+\Upsilon_\C&<&\Bmax+\Upsilon_\C\\
	&=&\Bmin+(\Bmax-\Bmin)+\Upsilon_\C \\
	&\leq&\Bmin + 5/6\cdot N+3\Upsilon_\C+4\Gamma_\C+2\\
	&<&\Bmin + 5/6\cdot N+ M_\C / 6\\
	 &<&\Bmin+N,
\end{eqnarray*}

	hence $\rho$ is a $\Bmax$-hill.

	Secondly, in case $x_i\in[0,\Bmin]$ for all $i\in[1,m-1]$ it can analogously be shown
	that $\rho$ is $\Bmin$-valley.
	
	The existence of the desired 
	$(N-\Gamma_\C)$-semirun 
	$\rho'$
	that is an $\ell$-embedding of the Type II semirun
	$\rho$ 
	with the same source and target configuration as $\rho$
	follows immediately from the following claim, which 
	itself (with a short justificaton below)
	is a consequence of the Hill and 
	Valley Lemma (Lemma~\ref{lemma hill/valley}); 
	thanks to the fact that the Hill and Valley Lemma
	guarantees the resulting $(N-\Gamma_\C)$-semiruns
	to be min-rising and max-falling, we can 
	even guarantee 
	$\values(\rho')\subseteq[\min(\rho),\max(\rho)]$.

\begin{claim}\label{claim embedding}
	For every $N$-semirun $\rho$ that is either
	a $B$-hill with $B \geq \ell+ \Upsilon_\C + \Gamma_\C +1$
or a $B$-valley with $B \leq \ell - \Upsilon_\C - \Gamma_\C-1$, 
	there exists an $(N-\Gamma_\C)$-semirun
	that is both a min-rising and max-falling $\ell$-embedding of $\rho$ with same source
	and target configuration as $\rho$.
\end{claim}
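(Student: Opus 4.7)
The plan is to apply the Hill and Valley Lemma (Lemma~\ref{lemma hill/valley}) essentially off the shelf and then argue that the embedding level it delivers is already compatible with the weaker level $\ell$ demanded by the claim. Suppose first that $\rho$ is a $B$-hill with $B \geq \ell + \Upsilon_\C + \Gamma_\C + 1$; the valley case will be dual. Lemma~\ref{lemma hill/valley} produces an $(N-\Gamma_\C)$-semirun $\rho'$ with the same source and target configurations as $\rho$ that is a min-rising and max-falling $L$-embedding of $\rho$, where $L := B - \Upsilon_\C - \Gamma_\C - 1 \geq \ell$ by hypothesis. The min-rising, max-falling, and endpoint-preservation properties are inherited directly from the lemma, so all that remains is to upgrade the embedding level from $L$ to $\ell$ while keeping the same witnessing injection.

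The key observation is that every intermediate configuration of the $B$-hill $\rho$ has counter value at least $B$, hence strictly above both $L$ and $\ell$. Let $\psi : [0,n'] \to [0,n]$ be the order-preserving injection witnessing $\rho'$ as an $L$-embedding of $\rho$. For each intermediate index $i \in [1, n'-1]$ of $\rho'$, strict monotonicity of $\psi$ forces $\psi(i)$ to be an intermediate index of $\rho$, so $z_{\psi(i)} \geq B > L \geq \ell$. The $L$-embedding property then forces the matched counter value $y_i$ in $\rho'$ to lie strictly above $L$, hence strictly above $\ell$, so both sides are above $\ell$ and the $\ell$-orientation matches. At the endpoints $i \in \{0, n'\}$, the coincidence of source and target configurations between $\rho$ and $\rho'$ gives $y_i = z_{\psi(i)}$ directly, so the orientation is trivially preserved there.

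The valley case is symmetric: for a $B$-valley with $B \leq \ell - \Upsilon_\C - \Gamma_\C - 1$, Lemma~\ref{lemma hill/valley} supplies a min-rising and max-falling $L'$-embedding $\rho'$ with $L' := B + \Upsilon_\C + \Gamma_\C + 1 \leq \ell$, and the intermediate configurations of $\rho$ satisfy $z_j \leq B < L'$, so the embedding property pushes each matched $y_i$ strictly below $L'$ and hence below $\ell$. The only bookkeeping obstacle I anticipate concerns the endpoints of $\psi$: one expects $\psi(0) = 0$ and $\psi(n') = n$, which is transparently true for the shift/glue construction underlying the Hill and Valley Lemma, but if one prefers not to invoke this syntactically a short case split handles $\psi(0)$ (resp.\ $\psi(n')$) lying strictly inside $\rho$ exactly as any other intermediate index, since such a matched $z_{\psi(0)} \geq B$ still lies strictly above $\ell$ in the hill case (resp.\ below in the valley case).
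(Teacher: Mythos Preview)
Your proof is correct and follows essentially the same route as the paper: apply the Hill and Valley Lemma to obtain a min-rising, max-falling $(B-\Upsilon_\C-\Gamma_\C-1)$-embedding (resp.\ $(B+\Upsilon_\C+\Gamma_\C+1)$-embedding) with the same endpoints, and then argue that the embedding level can be weakened to $\ell$. The paper states this level-weakening as a general principle---for a $B$-hill, any $k$-embedding with $k<B$ and matching endpoints is automatically a $k'$-embedding for all $k'\leq k$---while you unpack the argument via the witnessing injection $\psi$; the content is the same.
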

That the Hill and Valley Lemma produces an $\ell$-embedding
that has the same source and target configuration is
important here.
	Indeed, generally speaking if $\rho$ is any $B$-hill 
	and $\rho'$ is any $k$-embedding of $\rho$
	with the same source and target configuration
	as $\rho$ and where $k<B$, then $\rho'$ is also
	a $k'$-embedding of $\rho$ for all $k'<k$.
	Dually, if $\rho$ is any $B$-valley 
	and $\rho'$ is a $k$-embedding of $\rho$ with
	with the same source and target
	configuration as $\rho$ and where $k>B$,
	then $\rho'$ is also
	an $k'$-embedding of $\rho$ for all $k'>k$.

	\bigskip

	For the rest of this section it now 
	suffices to prove that
	for every Type III subsemirun $\rho$ of $\pi$
	there exists an $(N-\Gamma_\C)$-semirun 
	$\rho'$ that 
	is an $\ell$-embedding of $\rho$
	with the same source and target configuration as $\rho$
	and that moreover satisfies
	$\values(\rho')\subseteq[\min(\rho)-\Gamma_\C,
	\max(\rho)+\Gamma_\C]$.

\subsection{Lowering Type III subsemiruns}\label{section typeII}

For the rest of the section let us fix a Type III subsemirun $\rho$ of $\pi$. Let us factorize $\rho$ 
by its crossing semitransitions, 
i.e. as
\begin{eqnarray*}
	\rho \quad=&& \alpha^{(0)} \beta^{(1)} \alpha^{(1)} \quad\cdots\quad\beta^{(n)} \alpha^{(n)},
\end{eqnarray*}

\noindent
where $\beta^{(1)},\ldots,\beta^{(n)}$ is an enumeration of the crossing semitransitions
 of $\rho$ 
 and each $\alpha^{(i)}$ is a (possibly empty) $N$-subsemirun of $\rho$. 
It is worth mentioning that, indeed abusing notation, for the rest of this section 
we refer to $n$ as the number of crossing semitransitions of $\rho$, rather than
the number of transitions of our original $N$-run $\pi$.

An example factorization is shown in Figure~\ref{type example},
where the crossing transitions are depicted in bold. 
 We remark that the only crossing transitions
 of $\rho$ that are not doubly-crossing can possibly
 only be the first or the last one (or both).

We intend to now factorize $\rho$, if possible, into hills and valleys. 
In order to do this let us first introduce the notions of $B$-hill candidate
and $B$-valley candidate.

\begin{definition}
	Let 
	$$\chi\quad=\quad p_0(x_0)\semi{\chi_0,N}p_1(x_1)\quad\cdots\quad\semi{\chi_{m-1},N}p_m(x_m)$$
be an $N$-semirun.
	We say $\chi$ is a {\em $B$-hill candidate} if $x_0,x_m<B$ and $x_i\geq B$ for all $i\in[1,m-1]$,
	respectively a {\em $B$-valley candidate}
	if $x_0,x_m>B$ and $x_i\leq B$ for all $i\in[1,m-1]$.
\end{definition}
Note that every $B$-hill is a $B$-hill candidate 
but not vice versa, since being a $B$-hill moreover requires 
$+p$-transitions to end at a configuration with counter value strictly larger than $x_n+\Upsilon_\C$
and $-p$-transitions to start at a configuration with counter value strictly larger than $x_0+\Upsilon_\C$.
A similar remark applies to $B$-valleys and $B$-valley candidates.

For the rest of this section we assume {\em without loss of generality}
that the crossing transition 
$\beta_1$ is a $+p$-transition. 
The case when $\beta_1$ is $-p$-transition
can be proven analogously.

It follows that if the number $n$ of crossing transitions is even, then
there is a unique factorization 
\begin{eqnarray}
\rho\quad=\quad \alpha^{(0)}\sigma^{(1)}
\alpha^{(2)}\sigma^{(2)}\alpha^{(4)}\sigma^{(3)}\alpha^{(6)}\quad\cdots\quad\sigma^{(n/2)}\alpha^{(n)},
	\label{hill factorization}
\end{eqnarray}
where $\sigma^{(i)}=\beta^{(2i-1)}\alpha^{(2i-1)}\beta^{(2i)}$ is a $\Bmax$-hill candidate,
$\beta^{(2i-1)}$ is a $+p$-transition and $\beta^{(2i)}$ is a $-p$-transition for all $i\in[1,n/2]$.
Indeed, this immediately follows from the definition of crossing transitions
and the fact that $\alpha^{(2i-1)}$ does not contain any configuration with counter value 
strictly less than $\Bmax$.

Therefore our proof makes first a case distinction on the parity of the number $n$ of crossing transitions.

\medskip

\noindent
{\em Case A: The number of crossing transitions $n$ is even.}

Our proof next makes a case distinction on the number of 
$\Bmax$-hill candidates $\sigma^{(i)}$ in the factorization~(\ref{hill factorization})
that are in fact $\Bmax$-hills.

\medskip

\noindent
{\em Case A.1: All of the $\Bmax$-hill candidates 
$\sigma^{(i)}$ 
in (\ref{hill factorization}) are in fact $\Bmax$-hills.}\\[-0.2cm]

Since each $\sigma^{(i)}=\beta^{(2i-1)}\alpha^{(2i-1)}\beta^{(2i)}$ from (\ref{hill factorization}) 
 is a $\Bmax$-hill, to each $\sigma^{(i)}$ we can apply Claim~\ref{claim embedding}
 and obtain
an $(N-\Gamma_\C)$-semirun $\widehat{\sigma^{(i)}}$ that is both a min-rising and max-falling $\ell$-embedding
of $\sigma^{(i)}$ with the same source and target configuration as $\sigma^{(i)}$.
Thus, it remains to show the same for $\alpha^{(2i)}$ for each $i\in[0,n/2]$.
We do this separately for $\alpha^{(0)},\alpha^{(n)}$ and finally
for those $\alpha^{(2i)}$, where $i\in[1,n/2-1]$.

Let us first show it for $\alpha^{(0)}$.
The proof for $\alpha^{(n)}$ is completely analogous.
If $\alpha^{(0)}$ is empty (which would imply that $\beta^{(1)}$ is crossing but not doubly-crossing),
there is nothing to show. 
Let us therefore assume that $\alpha^{(0)}$ is not empty. It
follows that $\beta^{(1)}$ must be a doubly-crossing 
$+p$-transition by  Remark~\ref{remark Type III}.
Since $\beta^{(1)}$ is the first crossing transition (even doubly-crossing) 
and a $+p$-transition and moreover
$\rho$ is of Type III 
one can factorize $\alpha^{(0)}$ as 
$$
\alpha^{(0)}\quad=\quad\alpha^{(0,0)}\sigma^{(0,1)}\alpha^{(0,1)}\quad\cdots\quad\sigma^{(0,k)},
$$
where $\alpha^{(0,j)}$ satisfies $\values(\alpha^{(0,j)})\subseteq[\Bmax-N,\Bmin+1]$
for all $j\in[0,k]$
and $\sigma^{(0,j)}$ is a $(\Bmax-N-1)$-valley candidate for all $j\in[1,k]$.
It immediately follows that each $\alpha^{(0,j)}$ does not contain any $+p$-transition
nor any $-p$-transition, and is hence already an $(N-\Gamma_\C)$-semirun.
Finally we claim that each $\sigma^{(0,j)}$ is in fact a $(\Bmax-N-1)$-valley.
Indeed, firstly the target configuration of each $+p$-transition in $\sigma^{(0,j)}$ 
has a counter value at most $\Bmin$ 
and hence a source configuration with counter value at most 
$\Bmin-N<\Bmax-N-1-\Upsilon_\C$, where the inequality follows
from definition of $\Bmin$ and $\Bmax$ from page~\pageref{B constants}.
Secondly and analogously, the source configuration of each $-p$-transition in $\sigma^{(0,j)}$
has a counter value of at most $\Bmin$ and hence
a target configuration with counter value at most $\Bmin-N<\Bmax-N-1-\Upsilon_\C$.
Thus, to each $\sigma^{(0,j)}$ we can apply Claim~\ref{claim embedding} to obtain 
an $(N-\Gamma_\C)$-semirun $\widehat{\sigma^{(0,j)}}$ that is a min-rising 
and max-falling
$\ell$-embedding of $\sigma^{(0,j)}$ with the
same source and target configuration as $\sigma^{(0,j)}$.
Hence, by appropriately concatenationg the $\alpha^{(0,j)}$ with the
$\widehat{\sigma^{(0,j)}}$ we obtain the desired
$(N-\Gamma_\C)$-semirun that is an $\ell$-embedding of $\alpha^{(0)}$.

It now only remains and suffices to show that
for each $\alpha^{(2i)}$ with $i\in[1,n/2-1]$, 
that there exists an $(N-\Gamma_\C)$-semirun that is a
min-rising and max-falling $\ell$-embedding of $\alpha^{(2i)}$ with same
source and target configuration as $\alpha^{(2i)}$.
Again by Remark~\ref{remark Type III} any such $\alpha^{(2i)}$ 
succeeds $\sigma^{(i)}=\beta^{(2i-1)}\alpha^{(2i-1)}\beta^{(2i)}$
and thus succeeds the doubly-crossing transition $\beta^{(2i)}$ and 
analogously
preceeds
$\sigma^{(i+1)}=\beta^{(2i+1)}\alpha^{(2i+1)}\beta^{(2i+2)}$ and
thus precedes the doubly-crossing $\beta^{(2i+1)}$.
Therefore, analogously as done for $\alpha^{(0)}$, one can factorize $\alpha^{(2i)}$
as
$$
\alpha^{(2i)}\quad=\quad\alpha^{(2i,0)}\sigma^{(2i,1)}\alpha^{(2i,1)}\quad\cdots\quad
\sigma^{(2i,k)}\alpha^{(2i,k)}
$$
for some $k\geq 0$, where $\alpha^{(2i,j)}$ satisfies
$\values(\alpha^{(2i,j)})\subseteq[\Bmax-N,\Bmin]$
(and is thus already an $(N-\Gamma_\C)$-semirun)
for all $j\in[0,k]$
and $\sigma^{(2i,j)}$ is a $(\Bmax-N-1)$-valley candidate
that is in fact a $(\Bmax-N-1)$-valley
for all $j\in[1,k]$.
\medskip

\noindent
{\em Case A.2: All but one of the $\Bmax$-hill candidates 
$\sigma^{(i)}$ 
in
(\ref{hill factorization}) are in fact $\Bmax$-hills.}

Recall the factorization
$$\rho\quad=\quad \alpha^{(0)}\sigma^{(1)}
\alpha^{(2)}\sigma^{(2)}\alpha^{(4)}\sigma^{(3)}\alpha^{(6)}\quad\cdots\quad\sigma^{(n/2)}\alpha^{(n)}$$
from (\ref{hill factorization}) 
where each
$\sigma^{(i)}=\beta^{(2i-1)}\alpha^{(2i-1)}\beta^{(2i)}$ is a $\Bmax$-hill candidate
for all $i\in[1,n/2]$.

First let us show that, in case one such $\Bmax$-hill candidate is not a $\Bmax$-hill, then it must be either $\sigma^{(1)}$  or $\sigma^{(n/2)}$. 
For every of the remaining $j\in[2,n/2-1]$ we have that $\sigma^{(j)}=\beta^{(2j-1)}\alpha^{(2j-1)}\beta^{(2j)}$ 
is such that $\beta^{(2j-1)}$ and $\beta^{(2j)}$ are both doubly-crossing,
implying that $\sigma^{(j)}$ is a $\Bmax$-hill:
indeed, both the source and target configuration of $\sigma^{(j)}$ have a counter value at most $\Bmin<\Bmax-\Upsilon_\C$,
which is sufficient since every $+p$-transition (resp. $-p$-transition) of $\sigma^{(j)}$ ends (resp. starts)
in a configuration with counter value at least $\Bmax$.

Let us assume without loss of generality that 
$\sigma^{(1)}=\beta^{(1)}\alpha^{(1)}\beta^{(2)}$ is the only $\Bmax$-hill candidate that is not a $\Bmax$-hill,
the case when $\sigma^{(n/2)}$ is not a $\Bmax$-hill can be treated analogously.

Clearly, by the above reasoning, either $\beta^{(1)}$ or $\beta^{(2)}$ must not be doubly-crossing.
Without loss of generality let us assume that the first crossing transition $\beta^{(1)}$ is not doubly-crossing
(for $\beta^{(2)}$ not doubly-crossing implies $n=2$ by definition of Type III
and Remark~\ref{remark Type III};
this case is thus included in the dual case when 
$\sigma^{(n/2)}$ is not a $\Bmax$-hill
and the last crossing transition $\beta^{(n)}$ is not doubly-crossing).

Remarking that $\alpha^{(0)}$ must be empty by our case, one can now factorize our $N$-semirun $\rho$ as
$$
\rho\quad=\quad \left( \beta^{(1)}\alpha^{(1)}\right)\beta^{(2)}\left(\alpha^{(2)}\sigma^{(2)}\alpha^{(4)}\quad\cdots\quad\sigma^{(n/2)}\alpha^{(n)}\right),
$$
where $\beta^{(2)}$ is a $-p$-transition.
For finishing this case we will proceed as follows.
\begin{enumerate}
	\item Firstly we show the existence of an $(N-\Gamma_\C)$-semirun that is
both a min-rising and max-falling $\ell$-embedding of $\beta^{(1)}\alpha^{(1)}$ with the
same source and target configuration as $\beta^{(1)}\alpha^{(1)}$.
\item Secondly, let us assume that $\beta^{(2)}$ is an $N$-semirun 
	from $q(x)$ to $q'(y)$, say, and moreover that
	$\alpha^{(2)}\sigma^{(2)}\alpha^{(4)}\cdots \ \sigma^{(n/2)}\alpha^{(n)}$ is an $N$-semirun
		from $q'(y)$ to $q''(z)$, say.
		Noting that $\beta^{(2)}=q(x)\semi{-p,N}q'(y)$, we explicitly lower $\beta^{(2)}$
	into the $(N-\Gamma_\C)$-semirun $q(x)\semi{-p,N-\Gamma_\C}q'(y+\Gamma_\C)$,
		which is --- since $\beta^{(2)}$ is doubly-crossing --- obviously both a min-rising and max-falling
		$\ell$-embedding of $\beta^{(2)}$ from $q(x)$ to $q'(y+\Gamma_\C)$.
	Finally, we show the existence of an $(N-\Gamma_\C)$-semirun that 
		is an
		$\ell$-embedding of 
		$\alpha^{(2)}\sigma^{(2)}\alpha^{(4)}\cdots \ \sigma^{(n/2)}\alpha^{(n)}$
		from $q'(y+\Gamma_\C)$ to 
		$q''(z)$ all of whose counter values lie
		in $[\min(\rho)-\Gamma_\C,\max(\rho)+\Gamma_\C]$.
\end{enumerate}

Let us first show Point 1.
Since $\beta^{(1)}$ is not doubly-crossing
and $\sigma^{(1)}=\beta^{(1)}\alpha^{(1)}\beta^{(2)}$ is a $\Bmax$-hill candidate that is not a $\Bmax$-hill,
the only reason for the latter is the existence
of a $-p$-transition $\tau$
such that the counter values of the source configuration of $\tau$ and $\sigma^{(1)}$ have an absolute difference 
at most $\Upsilon_\C$. 
Such a transition $\tau$ must have a source configuration with counter value in the interval 
$[\Bmax, \Bmax + \Upsilon_\C-1]$ and $\sigma^{(1)}$ must then necessarily have a source configuration with a counter
value in the interval $[\Bmax-\Upsilon_\C,\Bmax-1]$. 
Such a violation can only happen for 
$\tau = \beta^{(2)}$.
As a consequence, the target configuration 
$q(x)$ of $\alpha^{(1)}$ has a counter value inside $[\Bmax,\Bmax+\Upsilon_\C-1]$.
Recalling that $\beta^{(1)}$ is not doubly-crossing
one can (analogously as has been done in Case A.1) factorize $\beta^{(1)}\alpha^{(1)}$ as
$$
\beta^{(1)} \alpha^{(1)}\quad=\quad \chi^{(1)}\xi^{(1)}\cdots\chi^{(k)}\xi^{(k)},
$$
for some $k\geq 0$, where each $\chi^{(i)}$ is a $(\Bmin+N+1)$-hill 
and each $\xi^{(i)}$ satisfies 
$\values(\xi^{(i)})\subseteq[\Bmax,\Bmin+N]$.
Again, analogously as has been done in Case A.1, to each of the $\chi^{(i)}$ we can 
apply Claim~\ref{claim embedding} to turn them into a suitable
min-rising and max-falling $(N-\Gamma_\C)$-semirun
that is an $\ell$-embedding of $\chi^{(i)}$ with the same source and target configuration 
as $\chi^{(i)}$,
whereas each of the $\xi^{(i)}$ are already $(N-\Gamma_\C)$-semiruns since they do not 
contain any $+p$-transitions nor $-p$-transitions.
\medskip

Let us finally show Point 2.
Consider the remaining factorization
\begin{eqnarray}\label{rho suffix}
\gamma\quad=\quad\alpha^{(2)}\sigma^{(2)}\alpha^{(4)}\quad\cdots\quad\sigma^{(n/2)}\alpha^{(n)}
\end{eqnarray}
from $q'(y)$ to $q''(z)$.
We need prove the existence of an $(N-\Gamma_\C)$-semirun 
from $q'(y+\Gamma_\C)$ to $q''(z)$ that is an $\ell$-embedding of $\gamma$
and whose counter values lie in $[\min(\gamma)-\Gamma_\C,\max(\gamma)+\Gamma_\C]$.

	We first claim that $\Delta(\gamma)>\Upsilon_\C$. 
	Since $x\in[\Bmax,\Bmax+\Upsilon_\C-1]$
		it follows that the counter value $y$ of $\gamma$'s source
		configuration $q'(y)$ satisfies
	$y\in[\Bmax-N,\Bmax+\Upsilon_\C-1-N]$.
	Moreover, the target configuration $q''(z)$ of $\gamma$ is the target configuration
	of our Type III $N$-semirun $\rho$, thus $z\in[\Bmin+1,\Bmax-1]$.
	Hence by the definition of our constants on page~\pageref{constant definitions} we have
\begin{eqnarray*}
	\Delta(\gamma)&\geq& \Bmin+1-(\Bmax+\Upsilon_\C-1-N)\\
	&>& N-(\Bmax-\Bmin)-\Upsilon_\C\\
	&\stackrel{(\ref{5/6 upper})}{\geq}&N-(5/6 \cdot N+2\Upsilon_\C+4\Gamma_\C+2)-\Upsilon_\C\\
	&=&N/6-2\Upsilon_\C-4\Gamma_\C-2-\Upsilon_\C\\
	&>&M_\C/6-2\Upsilon_\C-4\Gamma_\C-2-\Upsilon_\C\\
	&=& M_\C/6-4\Upsilon_\C-4\Gamma_\C-2+\Upsilon_\C\\
	&>& (M_\C/6-4(\Upsilon_\C+\Gamma_\C+1))+\Upsilon_\C\\
	&>&\Upsilon_\C.
\end{eqnarray*}

Recall that each $\sigma^{(i)}$ is a $\Bmax$-hill for all $i\in[2,n/2]$
by our case.
Analogously, as has been done in Case A.1, for each $i\in[1,n/2]$ one can 
factorize $\alpha^{(2i)}$ as
$$
\alpha^{(2i)}\quad=\quad\ \alpha^{(2i,0)}\sigma^{(2i,1)}\quad\cdots\quad\sigma^{(2i,k_i)}\alpha^{(2i,k_i)},
$$
where $\alpha^{(2i,j)}$ satisfies $\values(\alpha^{(2i,j)})
\subseteq[\Bmax-N,\Bmin+1]$ for each $j\in[0,k_i]$
and $\sigma^{(2i,j)}$ is a $(\Bmax-N-1)$-valley for each 
$j\in[1,k_i]$: more precisely for the final 
$\alpha^{(n)}$ we have
$\values(\alpha^{(n)})\subseteq[\Bmax-N,\Bmin+1]$, 
however for all $i\in[1,n/2-1]$
we have $\alpha^{(2i)}\subseteq[\Bmax-N,\Bmin]$.

It is important to remark that each $\alpha^{(2i,j)}$ is 
already an $(N-\Gamma_\C)$-semirun
since it does not contain any $+p$-transition nor any $-p$-transition. 
The following remark summarizes the 
factorization of $\gamma$.
\begin{remark}
Our $N$-semirun $\gamma$ from $q'(y)$ to $q''(z)$  can be written as
	\begin{eqnarray}\label{gamma factorization}
\gamma\quad=\quad\alpha^{(2)}\sigma^{(2)}\alpha^{(4)}\quad\cdots\quad\sigma^{(n/2)}\alpha^{(n)}\quad=\quad
\alpha^{(2)} \left(\prod_{i=2}^{n/2}\sigma^{(i)} \ \alpha^{(2i,0)} \left(\prod_{j=1}^{k_i}
\sigma^{(2i,j)}\alpha^{(2i,j)}\right)
\right),
	\label{gamma fact}
\end{eqnarray}
	
\noindent
where
\begin{enumerate}		
\item each $\sigma^{(i)}$ is a $\Bmax$-hill,
\item each $\sigma^{(2i,j)}$ is a $(\Bmax-N-1)$-valley, 
\item each $\alpha^{(2i,j)}$ is already an
	$(N-\Gamma_\C)$-semirun, 
\item $\Delta(\gamma)>\Upsilon_\C$, and
\item every configuration in $\gamma$ (except for possibly the target
	configuration $q''(z)$) has a counter value 
	in $[0,\Bmin]\cup[\Bmax,4N]$ whose absolute difference
		with $\ell$ is thus strictly larger than $\Gamma_\C$
		(recall the definition of $\Bmin$ and $\Bmax$ of page~\pageref{Bmin Bmax def}).
\end{enumerate}
\end{remark}

We can now obtain suitable $(N-\Gamma_\C)$-semiruns with the same source
and target configuration for any of the above hills and valleys
by applying the Hill and Valley Lemma.
\begin{remark}\label{remark widehat}
By applying the Hill and Valley Lemma 
	(Lemma \ref{lemma hill/valley}) 
we obtain the following.
	\begin{enumerate}
		\item For each of the $\Bmax$-hills $\sigma^{(i)}$ there exists
			an $(N-\Gamma_\C)$-semirun $\widehat{\sigma^{(i)}}$
			that is both a min-rising and max-falling
			$(\Bmax-\Upsilon_\C-\Gamma_\C-1)$-embedding of $\sigma^{(i)}$
			from the same source and target configuration as $\sigma^{(i)}$.
			In particular, since $\sigma^{(i)}$ is a $\Bmax$-hill
			and $\Bmax-\Upsilon_\C-\Gamma_\C-1>\ell+\Gamma_\C$
			it follows that $\widehat{\sigma^{(i)}}$ is in fact an 
			$\ell$-embedding of $\sigma^{(i)}$ all of whose 
			configurations have a counter value
			whose absolute difference with $\ell$ is strictly
			larger than $\Gamma_\C$ 
			(except for the exotic case when $i=n/2$ and
			$\gamma$ in fact ends with $\sigma^{(i)}$, and hence the
			last configuration of $\sigma^{(i)}$ happens to be the last configuration
			$q''(z)$ of $\gamma$; recalling that $z\in[\Bmin+1,\Bmax-1]$).
		\item For each of the $(\Bmax-N-1)$-valleys $\sigma^{(2i,j)}$
			there exists an $(N-\Gamma_\C)$-semirun
			$\widehat{\sigma^{(2i,j)}}$ that is both a min-rising and max-falling
			$(\Bmax-N-1+\Upsilon_\C+\Gamma_\C+1)$-embedding of $\sigma^{(2i,j)}$
			with the same source and target configuration 
			as $\sigma^{(2i,j)}$. 
			In particular, since $\sigma^{(2i,j)}$ is a $(\Bmax-N-1)$-valley
			and 
			$$
				\Bmax-N+\Upsilon_C+\Gamma_\C\stackrel{(\ref{upper N})}{<}
				\Bmin+ \Upsilon_C+\Gamma_\C
				\leq
				\ell-\Gamma_\C
				$$
			(where the last inequality follows
			from the definition of $\Bmin$ on page~\pageref{Bmin Bmax def}),
			it follows that $\widehat{\sigma^{(2i,j)}}$
			is in fact an $\ell$-embedding of $\sigma^{(2i,j)}$ all
			of whose configurations have a counter value whose
			absolute difference with $\ell$ is is strictly larger than $\Gamma_\C$
			(except, similar as above, for the exotic case when $i=n/2$, 
			$j=k_{i}$ and
			$\gamma$ in fact ends with $\sigma^{(2i,j)}$, and hence
			the last configuration $\sigma^{(2i,j)}$
			happens to be
			the last configuration 
			$q''(z)$ of $\gamma$).
	\end{enumerate}
\end{remark}

It is worth pointing out that applying the remark immediately would only yield the existence
of an $(N-\Gamma_\C)$-semirun $\gamma'$
that is both a min-rising and max-falling $\ell$-embedding of
$\gamma$ with the same source configuration $q'(y)$ and
the same target configuration $q''(z)$ as $\gamma$
such that 
$\values(\gamma')\subseteq[\min(\gamma)-\Gamma_\C,
\max(\gamma)+\Gamma_\C]$.
However we need to show the existence of such an $\ell$-embedding
rather from $q'(y+\Upsilon_C)$ to $q''(z)$.
For this, we make a final case distinction on whether among the $\Bmax$-hills $\sigma^{(i)}$
and the $(\Bmax-N-1)$-valleys $\sigma^{(2i,j)}$ there exists one whose $\phi$-projection
contains strictly more occurrences of 
$\boldsymbol{[}$ as occurrences
of $\boldsymbol{]}$.

\begin{itemize}
	\item {\em Case A.2.i: Among the $\Bmax$-hills $\sigma^{(i)}$
and the $(\Bmax-N-1)$-valleys $\sigma^{(2i,j)}$ there exists one whose $\phi$-projection
		contains strictly more occurrences of $\boldsymbol{[}$ as
		occurrences of $\boldsymbol{]}$.}
		\medskip

		Without loss of generality let us
		assume that 
		there exists some
		$s\in[2,n/2]$ such that
		$\sigma^{(s)}=\beta^{(2s-1)}\alpha^{(2s-1)}\beta^{(2s)}$ is a $\Bmax$-hill 
		for which 
		$\phi(\sigma^{(s)})$ contains strictly more occurrences
		of $\boldsymbol{[}$ as occurrences of $\boldsymbol{]}$
		--- the case when there is a $(\Bmax-N-1)$-valley $\sigma^{(2s,j)}$
		for which $\phi(\sigma^{(2s,j)})$ has the above property can be
		proven analogously.

		Assume $\sigma^{(s)}=\beta^{(2s-1)}\alpha^{(2s-1)}\beta^{(2s)}$ has source configuration 
	 $r_1(x_1)$ and target configuration 
		$r_2(x_2)$, say.
		Since $\beta^{(2s-1)}$ was surely 
		neither the first nor the last 
		crossing transition
		of $\rho$ (recall that $s\geq 2$), 
		it follows that $\beta^{(2s-1)}$ is doubly-crossing by Remark~\ref{remark Type III}, 
		and therefore $x_1 \leq \Bmin$.
		Recalling  the notion of hybrid semirun 
		(Definition~\ref{definition hybrid semirun})
		we now apply Point 2 of
		Remark~\ref{hill bracket} to our 
		$\Bmax$-hill $\sigma^{(s)}$ and obtain a
		hybrid semirun $\eta$ 
		\begin{itemize}
			\item whose source configuration is $r_1(x_1)$ and
				whose target configuration is $r_2(x_2)$,
			\item that is both a min-rising and max-falling 
				$(\Bmax-\Upsilon_\C-\Gamma_\C-1)$-embedding of 
				$\sigma^{(s)}$, and
			\item that has breadth $1$ and contains precisely one 
				unlowered $+p$-transition.
		\end{itemize}
		From the above and the fact that $\sigma^{(s)}$ is a $\Bmax$-hill
		the following remark follows.
		\begin{remark}\label{remark intermediate}
			All configurations of $\eta$ 
			(except possibly the target configuration
			$r_2(x_2)$ in the exotic case when $\gamma$ ends with $\sigma^{(s)}$) 
		have a counter value whose absolute difference
			with $\ell$ is strictly larger than $\Gamma_\C$.
		Moreover on can write $\eta$ as
		$\eta=\alpha\beta\alpha'$, where
		for some intermediate configurations $r_1'(x_1')$ and $r_2'(x_2')$ we have
		that
		\begin{itemize}
			\item $\alpha$ is an $(N-\Gamma_\C)$-semirun from $r_1(x_1)$ to 
				$r_1'(x_1')$, 
			\item $\beta$ is an $N$-semirun 
				 $r_1'(x_1')\semi{+p,N}r_2'(x_2')$ 
				that is a $+p$-transition, i.e. $x_2'=x_1'+N$, and
			\item $\alpha'$ is an $(N-\Gamma_\C)$-semirun
				from $r_2'(x_2')$ to $r_2(x_2)$.
		\end{itemize}
		\end{remark}
		Let $\widehat{\beta}$ denote the lowering of $\beta$, i.e.
		$\widehat{\beta}$ is the $(N-\Gamma_\C)$-semirun
		$r_1'(x_1')\semi{+p,N-\Gamma_\C}r_2'(x_2'-\Gamma_\C)$.
		By Remark~\ref{remark intermediate}
		it follows that the $(N-\Gamma_\C)$-semirun
		$$\theta\quad=\quad\left(\left(\alpha\widehat{\beta}\right)+\Gamma_\C\right) \alpha'$$
		from $r_1'(x_1+\Gamma_\C)$ to $r_2(x_2)$ is an
		$\ell$-embedding of $\eta$.
		Bearing in mind our factorization of $\gamma$ from Remark~\ref{gamma fact}
		and taking into account Remark~\ref{remark widehat} 
		we obtain that
		$$
		\widetilde{\gamma^{(1)}}\quad=\quad
		\left(\left(\widehat{\alpha^{(2)}}
		\left(\prod_{i=2}^{s-1}\widehat{\sigma^{(i)}}\ 
		\alpha^{(2i,0)}\left(\prod_{j=1}^{k_{i}}
		\widehat{\sigma^{(2i,j)}}\
		\alpha^{(2i,j)}\right)\right) \right) +\Gamma_\C \right) \theta
		$$
		is an $(N-\Gamma_\C)$-semirun from
		$q'(y+\Gamma_\C)$ to $r_2'(x_2')$
		that is an $\ell$-embedding of
		$\gamma$'s prefix $N$-semirun
		$$
		\gamma^{(1)}\quad=\quad\alpha^{(2)}\left(\prod_{i=2}^{s-1}\sigma^{(i)}\ 
		\alpha^{(2i,0)}\left(\prod_{j=1}^{k_{i}}
		\sigma^{(2i,j)}\ 
		\alpha^{(2i,j)}\right)
		\right)\sigma^{(s)}
$$
from $q'(y)$ to $r_2'(x_2')$
 satisfying $\values(\widetilde{\gamma^{(1)}})
\subseteq[\min(\gamma^{(1)}),
\max(\gamma^{(1)})+\Gamma_\C]$.
Moreover we have by Remark~\ref{remark widehat}
that 
$$
\widehat{\gamma^{(2)}}\quad=\quad\alpha^{(2s,0)}\ \left(\prod_{j=1}^{k_s}\widehat{\sigma^{(2i,j)}}\alpha^{(2i,j)}\right)
\left(\prod_{i=s+1}^{n/2}
\widehat{\sigma^{(i)}}\
\alpha^{(2i,0)}\ \left(\prod_{j=1}^{k_i}\widehat{\sigma^{(2i,j)}}\ \alpha^{(2i,j)}\right)\right)
$$
is an $(N-\Gamma_\C)$-semirun from $r_2'(x_2')$ to $q''(z)$ that is both a min-rising and max-falling
$\ell$-embedding of $\gamma$'s remaining suffix $N$-semirun
$$
\gamma^{(2)}\quad=\quad\alpha^{(2s,0)}\ \left(\prod_{j=1}^{k_s}\sigma^{(2i,j)}\alpha^{(2i,j)}\right)
\left(\prod_{i=s+1}^{n/2}
\sigma^{(i)}\
\alpha^{(2i,0)}\ \left(\prod_{j=1}^{k_i}\sigma^{(2i,j)}\ \alpha^{(2i,j)}\right)\right)
$$
from $r_2'(x_2')$ to $q''(z)$.
Altogether 
$\gamma'=\widetilde{\gamma^{(1)}}\widehat{\gamma^{(2)}}$
is the desired $(N-\Gamma_\C)$-semirun
from $q'(y+\Gamma_\C)$ to $q''(z)$
that is an $\ell$-embedding of 
$\gamma=\gamma^{(1)}\gamma^{(2)}$
with $\values(\gamma')\subseteq[\min(\gamma)-\Gamma_\C,\max(\gamma)+\Gamma_\C]$.

\bigskip
	\item{\em Case A.2.ii: Among the $\Bmax$-hills $\sigma^{(i)}$
and the $(\Bmax-N-1)$-valleys $\sigma^{(2i,j)}$ all have a $\phi$-projection
		that contains at least as many occurrences of $\boldsymbol{]}$ as
		occurrences of $\boldsymbol{[}$.}

	Observe that by Remark~\ref{remark widehat} the $(N-\Gamma_\C)$-semirun
	 
$$\widehat{\gamma}\quad=\quad
\alpha^{(2)}\ \left(\prod_{i=2}^{n/2}\widehat{\sigma^{(i)}}\ \alpha^{(2i,0)}\ 
\left(\prod_{j=1}^{k_i}\widehat{\sigma^{(2i,j)}}\ \alpha^{(2i,j)}\right)\right)
$$
from $q'(y)$ to $q''(z)$
is both a min-rising and max-falling $\ell$-embedding of $\gamma$
all of whose configurations (except for possibly the target configuration $q''(z)$)
have a counter value whose absolute difference with $\ell$ is strictly
larger than $\Gamma_\C$.
Yet we need to show the existence of some $(N-\Gamma_\C)$-semirun
$\gamma'$ that is an $\ell$-embedding of $\gamma$
from $q'(y+\Gamma_\C)$ to $q''(z)$
that satisfies $\values(\gamma')
\subseteq[\min(\gamma)-\Gamma_\C,\max(\gamma)+\Gamma_\C]$.

	By Remark~\ref{remark hill/valley} all of the
	lowered $(N-\Gamma_\C)$-semiruns
	$\widehat{\sigma^{(i)}}$ and $\widehat{\sigma^{(2i,j)}}$
	mentioned in Remark~\ref{remark widehat}
	contain at least as many occurrences of $\boldsymbol{[}$ as of
	$\boldsymbol{]}$ or, vice versa, at least as many occurrences
	of $\boldsymbol{]}$ as of $\boldsymbol{[}$, if 
	$\sigma^{(i)}$ does, respectively if $\sigma^{(2i,j)}$ does.

	Thus, by our case we obtain that every 
	$\phi(\widehat{\sigma^{(i)}})$ and $\phi(\widehat{\sigma^{(2i,j)}})$ contains
at least as many occurrences of $\boldsymbol{]}$
as occurrences of $\boldsymbol{[}$.
	
	Recalling that neither $\alpha^{(2)}$ nor any of the $\alpha^{(2i,j)}$ contain
	any $+p$-transitions nor $-p$-transitions (and thus have all a $\phi$-projection
	$\varepsilon$) it follows that
	$\phi(\widehat{\gamma})$ contains at least as many 
	occurrences of $\boldsymbol{]}$ as occurrences
	of $\boldsymbol{[}$.

	Since $\Delta(\gamma)>\Upsilon_\C$ by Point 4 of
Remark~\ref{gamma fact}, and thus
	 $\Delta(\widehat{\gamma})>\Upsilon_\C$,
	there exists a subsemirun $\widehat{\gamma}[c,d]$
	satisfying $\Delta(\widehat{\gamma}[c,d])>\Upsilon_\C$
	and $\phi(\widehat{\gamma}[c,d])\in\Lambda_8$
	by Lemma~\ref{bracket lemma}.
	By now applying Lemma~\ref{lemma zero} there exists an $(N-\Gamma_\C)$-semirun 
	$\chi$ satisfying
	\begin{itemize}
		\item $\Delta(\chi)=\Delta(\widehat{\gamma}[c,d])-\Gamma_\C$ and
		\item $\chi=\widehat{\gamma}[c,d]-I_1-I_2\cdots-I_h$
	for pairwise disjoint intervals $I_1,\ldots,I_h\subseteq[c,d]$
			such that $\phi(\widehat{\gamma}[I_i])\in\Lambda_{16}$
			and $\Delta(\widehat{\gamma}[I_i])>0$ for all $i\in[1,h]$.
	\end{itemize}
	Note that from the definition of $\chi$ and the fact that all intermediate
	configurations of $\widehat{\gamma}$ have counter values
	whose absolute difference with $\ell$ is strictly larger than $\Gamma_\C$
	it follows that 
	$\chi+\Gamma_\C$ 
	is an $\ell$-embedding of $\widehat{\gamma}[c,d]$
	that has the same target configuration 
	as $\widehat{\gamma}[c,d]$ 
	and that satisfies 
	$\values(\chi+\Gamma_\C)\subseteq
	[\min(\widehat{\gamma}[c,d])-\Gamma_\C,\max(\widehat{\gamma}[c,d])+\Gamma_\C]$.
	Analogously, it follows that 
	$\delta=(\widehat{\gamma}[0,c]+\Gamma_\C)\ (\chi+\Gamma_\C)$
	is an $(N-\Gamma_\C)$-semirun from $q'(y+\Gamma_\C)$ to the
	same target configuration as $\widehat{\gamma}[0,d]$
	that is an $\ell$-embedding of $\widehat{\gamma}[0,d]$
	and that satisfies 
	$\values(\delta)\subseteq[\min(\gamma[0,d])-\Gamma_\C,\max(\gamma[0,d])+\Gamma_\C]$.
	
	Finally it follows that
	$$\gamma'\quad=\quad(\widehat{\gamma}[0,c]+\Gamma_\C)\ (\chi+\Gamma_\C)\ 
	\widehat{\gamma}[d,|\widehat{\gamma}|]
	$$
	is the desired 
	$(N-\Gamma_\C)$-semirun from $q'(y+\Gamma_\C)$ to $q''(z)$ that is an
	$\ell$-embedding of $\widehat{\gamma}$ 
	and hence of $\gamma$ 
	that satisfies $\values(\gamma')
	\subseteq[\min(\gamma)-\Gamma_\C,\max(\gamma)+\Gamma_\C]$.
	
\end{itemize}
\medskip

\noindent
{\em Case A.3: All but at least two of the $\Bmax$-hill candidates 
$\sigma^{(i)}$ 
in (\ref{hill factorization}) are in fact $\Bmax$-hills.}

Recall the factorization
$$\rho\quad=\quad \alpha^{(0)}\sigma^{(1)}\alpha^{(2)}\sigma^{(2)}\alpha^{(4)}\sigma^{(3)}\alpha^{(6)}\quad\cdots\quad\sigma^{(n/2)}\alpha^{(n)}$$
from (\ref{hill factorization}) 
where each
$\sigma^{(i)}=\beta^{(2i-1)}\alpha^{(2i-1)}\beta^{(2i)}$ is a $\Bmax$-hill candidate
for all $i\in[1,n/2]$.
By our case we must have $n\geq 4$.

By a similar reasoning as in Case A.2 one can show that the $\Bmax$-hill candidate that 
are not $\Bmax$-hills must be precisely the two subsemiruns $\sigma^{(1)}$ and $\sigma^{(n/2)}$.
In particular there cannot be strictly more than two
$\Bmax$-hill candidates in (\ref{hill factorization}) that are not in fact $\Bmax$-hills.
Moreover, as already reasoned in Case A.2, 
neither $\beta^{(1)}$ nor $\beta^{(n)}$ is doubly-crossing.
Thus $\alpha^{(0)}$ and $\alpha^{(n)}$ are empty.
Hence, one can now factorize our $N$-semirun $\rho$ as
$$
\rho\quad=\quad \left( \beta^{(1)}\alpha^{(1)}\right) \beta^{(2)} \alpha^{(2)}\beta^{(3)}\alpha^{(3)}\quad\cdots\quad
\beta^{(n-1)} \left( \alpha^{(n-1)} \beta^{(n)}\right)\quad.
$$

For finishing this case we will show the existence
\begin{enumerate}
	\item of an $(N-\Gamma_\C)$-semirun that is 
both a min-rising and max-falling $\ell$-embedding of the semirun $\beta^{(1)}\alpha^{(1)}$ with the
same source and target configuration as $\beta^{(1)}\alpha^{(1)}$,
	\item of an $(N-\Gamma_\C)$-semirun that is 
both a min-rising and max-falling $\ell$-embedding of the semirun $\alpha^{(n-1)}\beta^{(n)}$ with the
same source and target configuration as $\alpha^{(n-1)}\beta^{(n)}$, and
\item of an $(N-\Gamma_\C)$-semirun that is both a min-rising and max-falling
	$\ell$-embedding of the semirun $\beta^{(2)} \alpha^{(2)}\beta^{(3)}\alpha^{(3)}\cdots\beta^{(n-1)}$
	with same source and target configuration.
\end{enumerate}

Points 1 and 2 are proven analogously as Point 1 from Case A.2. For proving Point 3, we consider a different factorization
$$
\beta^{(2)} \alpha^{(2)}\beta^{(3)}\alpha^{(3)}\quad\cdots\quad \beta^{(n-1)}
\quad =\quad \tau^{(1)} \alpha^{(3)} \tau^{(2)} \alpha^{(5)} \quad \cdots \quad \tau^{((n-2)/2)},
$$
where $\tau^{(i)}=\beta^{(2i)}\alpha^{(2i)}\beta^{(2i+1)}$ is a $\Bmin$-valley candidate for all $i \in [1,(n-2)/2]$. Since $\beta^{(2i)}$ and $\beta^{(2i+1)}$ have to be doubly crossing for all $i \in [1,(n-2)/2]$, $\tau^{(i)}$ is in fact a $\Bmin$-valley for all $i \in [1,(n-2)/2]$
by a similar reasoning as used in Case A.2 to show that the $\Bmax$-hill candidate that 
is not a $\Bmax$-hill must be $\sigma^{(1)}$ or $\sigma^{(n/2)}$.

We can apply Claim~\ref{claim embedding}
to each $\tau^{(i)}$ and obtain 
an $(N-\Gamma_\C)$-semirun $\widehat{\tau^{(i)}}$ that is both a min-rising and max-falling $\ell$-embedding
of $\tau^{(i)}$ with the same source and target configuration.
Thus, it only remains to show the same for $\alpha^{(2i+1)}$ for each $i\in[1,(n-2)/2]$. 
This is done analogously as in Case A.1 when proving the same for each $\alpha^{(2i)}$ 
for each $i \in[0,n/2]$.

\medskip

\noindent
{\em Case B: The number of crossing transitions $n$ is odd.}

Recall that we had assumed without loss of generality that $\beta^{(1)}$ is a $+p$-transition. 
Since $n$ is odd one can consider the following first factorization
\begin{eqnarray}
\rho\quad=\quad \alpha^{(0)}\sigma^{(1)}\alpha^{(2)}\sigma^{(2)}\alpha^{(4)}\sigma^{(3)}\alpha^{(6)}\quad\cdots\quad\sigma^{(\lfloor n/2 \rfloor)}\alpha^{(n-1)} \beta^{(n)} \alpha^{(n)}, 
	\label{hill factorization 1}
\end{eqnarray}
where $\sigma^{(i)}=\beta^{(2i-1)}\alpha^{(2i-1)}\beta^{(2i)}$ is a $\Bmax$-hill candidate,
$\beta^{(2i-1)}$ is a $+p$-transition, 
$\beta^{(2i)}$ is a
$-p$-transition for all 
$i\in[1,\lfloor n/2 \rfloor]$
, and $\beta^{(n)}$ is a $+p$-transition;
as well as the following second factorization 
\begin{eqnarray}
\rho\quad=\quad \alpha^{(0)}\beta^{(1)} \alpha^{(1)} \tau^{(1)}\alpha^{(3)} \tau^{(2)} \alpha^{(5)} \tau^{(3)} 
	\quad\cdots\quad\tau^{(\lfloor n/2 \rfloor)}\alpha^{(n)}, 
	\label{hill factorization 2}
\end{eqnarray}
where $\beta_1$ is a $+p$-transition,
$\tau^{(i)}=\beta^{(2i)}\alpha^{(2i)}\beta^{(2i+1)}$ is a 
$\Bmin$-valley candidate,
 $\beta^{(2i)}$ is a $-p$-transition and $\beta^{(2i+1)}$ is a $+p$-transition for all 
 $i\in[1,\lfloor n/2 \rfloor]$.
Indeed, this--- as for the Case A factorization (\ref{hill factorization})--- 
immediately follows from the definition of crossing transitions
and the fact that neither $\alpha^{(2i-1)}$ (resp. $\alpha^{(2i)}$) contains
any configuration with counter value strictly less
than $\Bmax$ (resp. strictly larger than $\Bmin$)
for all $i\in[1,\lfloor n/2\rfloor]$.

Our proof next will make a case distinction on the number of 
$\Bmax$-hill candidates $\sigma^{(i)}$ in the factorization~(\ref{hill factorization 1})
that are in fact $\Bmax$-hills and on the number of 
$\Bmin$-valley candidates $\tau^{(i)}$ in the factorization~(\ref{hill factorization 2})
that are in fact $\Bmin$-valleys.

\medskip

\noindent
{\em Case B.1: All of the $\Bmax$-hill candidates $\sigma^{(i)}$ in
(\ref{hill factorization 1}) are in fact $\Bmax$-hills or all of the $\Bmin$-valley candidates 
$\tau^{(i)}$ in
(\ref{hill factorization 2}) are in fact $\Bmin$-valleys.}

Let us assume without loss of generality that all of the $\Bmax$-hill candidates in
(\ref{hill factorization 1}) are in fact $\Bmax$-hills. The case when all
$\Bmin$-valley candidates in
(\ref{hill factorization 2}) are in fact $\Bmin$-valleys can be proven analogously. 
Each $N$-semirun $\sigma^{(i)}$ 
can hence be turned into an $(N-\Gamma_\C)$-semirun $\widehat{\sigma^{(i)}}$ 
that is both a min-rising and max-falling 
$\ell$-embedding of $\sigma^{(i)}$ with same source and target configuration
as $\sigma^{(i)}$ 
according to Claim~\ref{claim embedding}. 
Moreover, the same holds for $\alpha^{(2i)}$ for all
$i \in [0,\lfloor n/2 \rfloor-1]$, as seen in Case A.1.
Thus it remains to deal with the subsemiruns
$\alpha^{(n-1)}$, $ \beta^{(n)}$ and $ \alpha^{(n)}$.

We make a final case distinction on the target configuration of the dangling $+p$-transition $\beta^{(n)}$.
\begin{itemize}
\item {\em Case B.1.i: $\beta^{(n)}$ has a target configuration with a counter
	value strictly larger than $\Bmax+\Upsilon_\C$.}

Then clearly $\beta^{(n)} \alpha^{(n)}$ is a $\Bmax$-hill 
		as well since
		$\alpha^{(n)}$ contains
		no configurations with counter value 
		strictly less than $\Bmax$ besides 
		its last one. 
The $N$-semirun $\beta^{(n)} \alpha^{(n)}$ can hence be turned into a $(N-\Gamma_\C)$-semirun that is both a min-rising and max-falling $\ell$-embedding with the
		same source and target configuration according to Claim~\ref{claim embedding}.
		Moreover, the same holds for $\alpha^{(n-1)}$,
		as analogously proven for 
		$\alpha^{(2i)}$ for all
$i \in [0,\lfloor n/2 \rfloor]$ in Case A.1.
The concatenation of these two $\ell$-embeddings yields 
		an $(N-\Gamma_\C)$-semirun that is a
min-rising and max-falling $\ell$-embedding of $\alpha^{(n-1)} \beta^{(n)} \alpha^{(n)}$ with 
the same source and target configuration.

\item {\em Case B.1.ii: $\beta^{(n)}$ has a target configuration with a counter value
	strictly less than $\Bmax$.}

		It immediately follows that $\beta^{(n)}$ is crossing but not doubly-crossing,
		thus $\alpha^{(n)}$ is empty. 
		The remaining $\alpha^{(n-1)} \beta^{(n)}$ can thus be factorized as
$$
 \alpha^{(n-1)} \beta^{(n)}\quad=\quad \xi^{(1)} \chi^{(1)}\cdots \xi^{(k)} \chi^{(k)},
$$
where each $\chi^{(i)}$ is a $(\Bmax-N-1)$-valley 
and each $\xi^{(i)}$ satisfies 
		$\values(\xi^{(i)})\subseteq[\Bmax-N,\Bmin+1]$, using a similar factorization as for proving Point 1 in Case A.2.
The $N$-semirun $\alpha^{(n-1)} \beta^{(n)}$ can hence be turned into an
		$(N-\Gamma_\C)$-semirun that is both a min-rising and max-falling $\ell$-embedding with same source and target
configuration.
Recalling that $\alpha^{(n)}$ is empty, the above embedding is 
		an $(N-\Gamma_\C)$-semirun that is both a
min-rising and max-falling $\ell$-embedding of $\alpha^{(n-1)} \beta^{(n)} \alpha^{(n)}$ 
		with same source and target configuration.
\item {\em Case B.1.iii: $\beta^{(n)}$ has a target configuration with counter value 
	in $[\Bmax, \Bmax + \Upsilon_\C]$.}

		Thus, the source configuration of
		$\beta^{(n)}$ has a counter value in 
		$[\Bmax-N, \Bmax + \Upsilon_\C-N]$. 
		One finishes this case 
		analogously as 
		Points 1 and 2 in Case A.2:
\begin{enumerate}
	\item Firstly, one shows the existence of an $(N-\Gamma_\C)$-semirun that is
both a min-rising and max-falling $\ell$-embedding of $\alpha^{(n)}$ with the
same source and target configuration as $\alpha^{(n)}$ 
		as follows:
		one factorizes $\alpha^{(n)}$ into $(N-\Gamma_\C)$-semiruns
		that have all counter values in 
		$[\Bmax-1,\Bmin+N]$
		and into $(\Bmin+N+1)$-hills.
\item Secondly, let us assume that $\beta^{(n)}$ is an $N$-semirun from $q'(y)$ to $q''(z)$ and that
moreover
	$\alpha^{(0)}\sigma^{(1)}\alpha^{(2)}\cdots\sigma^{(\lfloor n/2 \rfloor)}\alpha^{(n-1)}$ is an $N$-semirun
		from $q(x)$ to $q'(y)$.
Stipulating that $\beta^{(n)}=q'(y)\semi{+p,N}q''(z)$, we explicitly lower $\beta^{(n)}$
	into the $(N-\Gamma_\C)$-semirun $q'(y+\Gamma_\C)\semi{+p,N-\Gamma_\C}q''(z)$,
		which is --- since $\beta^{(n)}$ is doubly-crossing --- 
		obviously both a min-rising and max-falling
		$\ell$-embedding of $\beta^{(n)}$ from $q'(y+\Gamma_\C)$ to $q''(z)$.
	Then one shows the existence of an $(N-\Gamma_\C)$-semirun 
		that is an $\ell$-embedding of $\alpha^{(0)}\sigma^{(1)}\alpha^{(2)}\cdots\sigma^{(\lfloor n/2 \rfloor)}\alpha^{(n)}$
		from $q(x)$ to $q'(y+\Gamma_\C)$
		with configurations all of whose counter values lie in 
		the interval $[\min(\rho)-\Gamma_\C,\max(\rho)+\Gamma_\C]$ as follows:
		one subfactorizes each of the $\alpha^{(2i)}$
		into $(N-\Gamma_\C)$-semiruns that have a counter values
		in $[\Bmax-N,\Bmin]$ and into $(\Bmax-N-1)$-valleys
		and by recalling that $\sigma^{(i)}$ is a $\Bmax$-hill
		for all $i \in [1,\lfloor n/2 \rfloor]$.
\end{enumerate}
\end{itemize}

\medskip

\noindent
{\em Case B.2: Not all of the $\Bmax$-hill candidates $\sigma^{(i)}$ in
(\ref{hill factorization 1}) are in fact $\Bmax$-hills and not all 
of the $\Bmin$-valley candidates $\tau^{(i)}$ in (\ref{hill factorization 2}) are in fact $\Bmin$-valleys.}

Since they all start and end with a doubly-crossing transition
we remark that $\sigma^{(i)}$ is in fact a $\Bmax$-hill for all $i\in[2,\lfloor n/2 \rfloor]$
and $\tau^{(i)}$ is in fact a $\Bmin$-valley for all $i\in[1,\lfloor n/2 \rfloor-1]$.
Hence our case implies that $\sigma^{(1)}$ is in fact not a $\Bmax$-hill and
that $\tau^{(\lfloor n/2 \rfloor)}$ is in fact not a $\Bmin$-valley.
As in Case A.2, $\beta^{(1)}$ and $\beta^{(n)}$ are hence not doubly-crossing,
and hence $\alpha^{(0)}$ and $\alpha^{(n)}$ are empty.

By definition of a Type III semirun, $\rho$ contains at least one doubly-crossing
transition and thus $n \geq 3$.
Since the $+p$-transition $\beta^{(1)}$ is not doubly-crossing (and therefore ends
at a counter value strictly larger than $\Bmin+N$) but $\beta^{(2)}$ is,
it follows that
the only reason for $\sigma^{(1)}=\beta^{(1)}\alpha^{(1)}\beta^{(2)}$ not
to be a $\Bmax$-hill is that the $-p$-transition $\beta^{(2)}$ has a source
configuration with a counter value in $[\Bmax, \Bmax+\Upsilon_\C]$ 
and hence a target configuration with counter value in $[\Bmax-N, \Bmax+\Upsilon_\C-N]$,
similarly as seen in Case A.2.
Analogously, 
the only reason for $\tau^{(\lfloor n/2\rfloor)}=\beta^{(n-1)}\alpha^{(n-1)}\beta^{(n)}$ not to be a $\Bmin$-valley
is that the doubly-crossing $-p$-transition $\beta^{(n-1)}$ 
has a target configuration with counter value in $[\Bmin-\Upsilon_\C,\Bmin]$.

Recalling that $\sigma^{(\lfloor n/2\rfloor)}=\beta^{(n-2)}\alpha^{(n-2)}\beta^{(n-1)}$,
for finishing this case we will apply an analogous reasoning as in Case A.2:
\begin{enumerate}
	\item Firstly, one shows the existence of an $(N-\Gamma_\C)$-semirun that is
both a min-rising and max-falling $\ell$-embedding of $\beta^{(1)}\alpha^{(1)}$ with the
same source and target configuration as $\beta^{(1)}\alpha^{(1)}$.
\item Secondly, let us assume that $\beta^{(2)}$ is an $N$-semirun from $q(x)$ to $q'(y)$ and that moreover
	$\alpha^{(2)}\sigma^{(2)}\alpha^{(4)}\cdots\sigma^{(\lfloor n/2 \rfloor)}$ is an $N$-semirun
		from $q'(y)$ to $q''(z)$, with $y \in [\Bmax-N, \Bmax+\Upsilon_\C- N]$
		and $z \in [\Bmin-\Upsilon_\C,\Bmin]$.
		Noting that $\beta^{(2)}=q(x)\semi{-p,N}q'(y)$, we explicitly lower $\beta^{(2)}$
	into the $(N-\Gamma_\C)$-semirun $q(x)\semi{-p,N-\Gamma_\C}q'(y+\Gamma_\C)$,
		which is --- since $\beta^{(2)}$ is doubly-crossing --- obviously 
		both a min-rising and max-falling
		$\ell$-embedding of $\beta^{(2)}$ from $q(x)$ to $q'(y+\Gamma_\C)$.
		Then one shows, as done in Case 2.A, 
		the existence of an $(N-\Gamma_\C)$-semirun that is an
	$\ell$-embedding of $\alpha^{(2)}\sigma^{(2)}\alpha^{(4)}\cdots\sigma^{(\lfloor n/2 \rfloor)}$
		all of whose counter values lie in $[\min(\rho)-\Gamma_\C,\max(\rho)+\Gamma_\C]$
		from $q'(y+\Gamma_\C)$ to $q''(z)$ 		
		by subfactorizing each of the $\alpha^{(2i)}$
		into $(N-\Gamma_\C)$-semiruns that have counter values
		in $[\Bmax-N,\Bmin]$ and into $(\Bmax-N-1)$-valleys,
		by recalling that $\sigma^{(i)}$ is a $\Bmax$-hill 
		for all $i \in [2,\lfloor n/2 \rfloor]$,
		and that 
		
			\begin{eqnarray*}
				z-y &\geq&
		 N-(\Bmax-\Bmin)-2 \Upsilon_\C\\
				&\stackrel{(\ref{5/6 upper})}{>}&
				N-(5/6\cdot N+2\Upsilon_\C+4\Gamma_\C+2)-2\Upsilon_\C\\
				&=& N/6-(5\Upsilon_\C-4\Gamma_\C-2) + \Upsilon_\C\\
				&>&M_\C/6-(5\Upsilon_\C-4\Gamma_\C-2) + \Upsilon_\C\\
				&>& \Upsilon_\C. 
		\end{eqnarray*}

	\item Finally (analogously as Point 1) one shows the existence of an $(N-\Gamma_\C)$-semirun that is
both a min-rising and max-falling $\ell$-embedding of $\alpha^{(n-1)} \beta^{(n)}$ with the
same source and target configuration as $\alpha^{(n-1)} \beta^{(n)}$.
\end{enumerate}

\section{Proof of the Small Parameter Theorem}\label{application section}

This section is devoted to proving the Small Parameter Theorem (Theorem~\ref{theorem upper}).

\medskip

\noindent

\noindent
For proving this let us fix some $N> M_\C$ and 
some  accepting $N$-run $\pi$ in $\C$ with $\values(\pi)\subseteq[0,4N]$ of the form
$$
\pi \quad= \quad r_0(x_0) \quad\xrightarrow{\pi_0,N} \quad r_1(x_1) \quad\cdots \quad\xrightarrow{\pi_{n-1},N} \quad r_n(x_n)
$$
with $r_n \in F$.
We will assume that accepting runs in $\C$ end with counter value $0$ and hence, that $x_n=x_0=0$.
We do not lose generality by making this assumption. Indeed, from every POCA $\C$, one can build a POCA $\C'$  with all its accepting runs ending in configuration with counter value $0$ 
such that, for all $N \in \N$,
there exists an accepting
 $N$-run  in $\C$ with values in $[0, 4N ]$
if, and only if, there exists an accepting
$N$-run in $\C'$ with values in $[0, 4N ]$.
This is clear when one considers the construction $\C'$ obtained from $\C$ by adding two 
control states $r_-$ and $r_f$ such that every final control state of $\C$ has a $\geq 0$ rule leading to $r_-$, $r_-$ 
has a $-1$ rule that is a loop, and finally a $=0$ rule to $r_f$, the only final state of $\C'$.

\medskip

\noindent
Starting from the accepting $N$-run $\pi$, we need to prove the existence of an accepting $(N-\Gamma_\C)$-run in $\C$. For every $a,b\in\mathbb{Q}$ with $a < b$ we define 
$[a,b[=\{c\in\mathbb{Q}\mid a\leq c<b\}$ and
$]a,b]=\{c\in\mathbb{Q}\mid a<c\leq b\}$.

\medskip

\noindent
Since {$\frac{N}{3}<N-\Gamma_\C$}, as $\Gamma_\C < \frac{2 M_\C}{3} < \frac{2 N}{3}$ by definition of the constants on page~\pageref{constant definitions}, the following claim is clear.
\begin{claim}\label{claim N/3}
	Every subrun $\rho$ of $\pi$ with $\values(\rho)\subseteq[0, \frac{N}{3}[$ is already
	an $(N-\Gamma_\C)$-run.
\end{claim}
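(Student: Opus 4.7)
The plan is to argue transition by transition that every step of $\rho$ remains a legal transition after replacing the parameter value $N$ by $N-\Gamma_\C$. The only transitions whose semantics depend on the parameter value are the parametric updates $\pm p$ and the parametric comparison tests $\bowtie p$, so these are the only cases needing attention.

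First I would fix the key numerical inequality. Since the definition of the constants on page~\pageref{constant definitions} gives $\Gamma_\C = \LCM(17 \cdot |Q|) \cdot \macro_\C \leq M_\C < N$ with plenty of slack, in particular $\Gamma_\C < 2N/3$ holds, so
\[
\tfrac{N}{3} \ < \ N-\Gamma_\C \ < \ N.
\]
This inequality is the only quantitative ingredient in the proof.

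Next I would rule out parametric updates. A $+p$-transition in the $N$-run maps a configuration $q(z)$ with $z\in[0,N/3[$ to $q'(z+N)$; but $z+N \geq N > N/3$, contradicting $\values(\rho)\subseteq[0,N/3[$. Symmetrically, a $-p$-transition would produce a target counter value $z-N < 0$, again impossible. Hence $\rho$ contains no $\pm p$-transitions, and in particular their absence is preserved under any parameter value.

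Finally I would handle the parametric comparison tests. If $\rho$ contains a transition $q(z)\xrightarrow{\bowtie p,N} q(z)$ with $z\in[0,N/3[$, then $z\bowtie N$ must hold. Since $z<N/3<N$, this forces $\bowtie\in\{<,\leq\}$: the tests $=p$, $\geq p$ and $>p$ would each be violated. But then, using $z<N/3<N-\Gamma_\C$, the relation $z\bowtie (N-\Gamma_\C)$ also holds, so the same transition is legal under the parameter value $N-\Gamma_\C$. The remaining transition kinds ($\pm 1$, $0$, $\mathrm{mod}\ c$, $\bowtie c$) have semantics independent of the parameter, so they are automatically preserved. Combining these three observations yields that $\rho$ is an $(N-\Gamma_\C)$-run, which concludes the proof. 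I expect no real obstacle here; the claim is a direct consequence of the slack $N-\Gamma_\C>N/3$ combined with the fact that the confinement $\values(\rho)\subseteq[0,N/3[$ already excludes every parameter-sensitive operation that could fail after shrinking the parameter.
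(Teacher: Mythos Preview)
Your proof is correct and follows exactly the approach the paper intends: the paper simply notes that $\tfrac{N}{3}<N-\Gamma_\C$ (since $\Gamma_\C<\tfrac{2M_\C}{3}<\tfrac{2N}{3}$) and declares the claim clear, while you spell out the three cases (no $\pm p$-transitions survive the range restriction, parametric comparisons reduce to $<$ or $\leq$ and remain valid, all other operations are parameter-independent). Your argument is a faithful unpacking of the paper's one-line justification.
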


\noindent
We can therefore uniquely factorize $\pi$ as 
\begin{eqnarray}
	\pi \quad= \quad\rho^{(0)}\sigma^{(1)}\rho^{(1)} \quad\cdots \quad \sigma^{(m)}\rho^{(m)},\label{fact N/3}
\end{eqnarray}
where each $\rho^{(j)}$ satisfies $\values(\rho^{(j)})\subseteq[0, \frac{N}{3}[$
and each $\sigma^{(j)}$ is some subrun $\pi[c,d]$
with $x_c<\frac{N}{3}$, $x_d<\frac{N}{3}$ and $x_k \geq \frac{N}{3}$ for all $k \in[c+1,d-1]$,
where $[c+1,d-1]\not=\emptyset$.

\medskip

\noindent
To finish the proof of the Small Parameter Theorem (Theorem~\ref{theorem upper}), 
by Claim~\ref{claim N/3} it thus suffices to prove the following statement for the rest of this section.

\fbox{\parbox[t][2.6cm][c]{13cm}{
	For every $N>M_\C$ and every $N$-run 
	$$\sigma \quad= \quad q_0(z_0) \quad\xrightarrow{\sigma_0,N} \quad q_1(z_1)  \quad \cdots  \quad
	\xrightarrow{\sigma_{m-1},N}  \quad q_m(z_m)$$
	satisfying $\values(\sigma)\subseteq[0,4 N]$, 
	$z_0,z_m< \frac{N}{3}$ and $z_i\geq \frac{N}{3}$ for all $i\in[1,m-1]$,
	there exists an $(N-\Gamma_\C)$-run from $q_0(z_0)$ to $q_m(z_m)$.
}}

\begin{center}
	\begin{figure}
\includegraphics[width=0.6\textwidth]{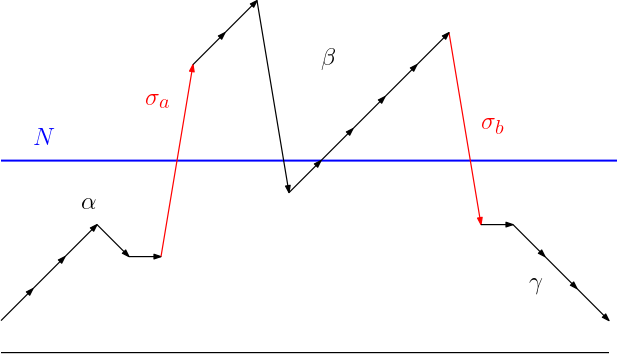}
	\caption{Illustration of the factorization (\ref{sigma fact}).}
\label{figure of the factorization}
	\end{figure}
\end{center}

\noindent
Let $\sigma$ be such an $N$-run. Let us first assume that $z_i\geq N$ for some $i\in[1,m-1]$ ; the case
$z_i<N$ for all $i\in[1,m-1]$ will be treated later.
By this asumption, one can uniquely factorize $\sigma$ --- as seen in Figure~\ref{figure of the factorization} --- as
\begin{eqnarray}
	\sigma \quad = \quad \alpha \ \sigma[a,a+1] \ \beta \ \sigma[b,b+1] \ \gamma,\label{sigma fact}
\end{eqnarray}
where, for some $a,b\in[0,m-1]$,

\begin{samepage}
\begin{itemize}
	\item $\alpha=\sigma[0,a]$ is the maximal 
		prefix of $\sigma$ satisfying $\values(\alpha)\subseteq[0,N[$, in particular the transition
$q_a(z_a)\xrightarrow{ \sigma_a ,N}q_{a+1}(z_{a+1})$ satisfies
		$z_a\in[0,N[$ and $z_{a+1}\in[N,4 N]$,
	\item $\gamma=\sigma[b+1,m]$ is the maximal 
		suffix of $\sigma$ satisfying $\values(\gamma)\subseteq[0,N[$, i.e.
		the transition
$q_b(z_b)\xrightarrow{ \sigma_b ,N}q_{b+1}(z_{b+1})$ satisfies
	$z_b\in[N,4 N]$ and $z_{b+1}\in[0,N[$,
		and
	\item $\beta=\sigma[a+1,b]$ is the remaining infix of $\sigma$ (note that $a+1=b$ is possible).
\end{itemize}
\end{samepage}

\noindent
We will apply the 5/6-Lemma (Lemma~\ref{lemma 5/6}) 
to one of the subruns 
\begin{eqnarray*}
\beta, \quad  \sigma[a,a+1] \ \beta, \quad
\beta \ \sigma[b,b+1], \text{ or } \quad \sigma[a,a+1] \ \beta \ \sigma[b,b+1],
\end{eqnarray*}
hereby showing the existence
of a suitable $(N-\Gamma_\C)$-semirun with same source and target configuration,
respectively. 
We then shift this $(N-\Gamma_\C)$-semirun by $-\Gamma_\C$ to obtain a suitable
$(N-\Gamma_\C)$-run.
To which of the subruns we will choose to apply
the 5/6-Lemma   will depend on the counter values 
$z_a,z_{a+1},z_b$ and $z_{b+1}$.
For deciding this, we make a case distinction
on which of the five intervals 
$\left\{\left[\frac{iN}{3},\frac{(i+1)N}{3}\right[\ : i\in[1,5]\right\}$
they lie in, respectively.
\medskip

Before the above-mentioned distinction 
on $z_a$, $z_{a+1}$, $z_b$ and $z_{b+1}$ we first 
claim that one can
turn the possible resulting prefixes
$\alpha$ and $\alpha \sigma[a,a+1] $ and
possible suffixes $ \sigma[b,b+1] \gamma$ and 
$\gamma$ into $(N-\Gamma_\C)$-runs separately. 
The following claim tells us 
when these latter prefixes (resp. suffixes) 
can be turned into $(N-\Gamma_\C)$-runs
whose target (resp. source) configuration has been shifted down by $\Gamma_\C$.

\begin{claim}[Possible lowering of the prefixes
	and suffixes]\label{claim fix}
	\ 
\begin{enumerate}
	\item If $z_{a+1}\in[N,\frac{5N}{3}[$, then 
	there exists an $(N-\Gamma_\C)$-run from
	$q_0(z_0)$ to $q_{a+1}(z_{a+1}-\Gamma_\C)$.
	\item If $z_a\in [\frac{N}{3}+\Upsilon_\C,N[$, then 
	there exists an $(N-\Gamma_\C)$-run from
	$q_0(z_0)$ to $q_{a}(z_{a}-\Gamma_\C)$.
	\item If $z_b\in[N,\frac{5N}{3}[$, then there exists an
	$(N-\Gamma_\C)$-run from 
		$q_b(z_b-\Gamma_\C)$ to $q_m(z_m)$.
	\item If $z_{b+1}\in[\frac{N}{3}+\Upsilon_\C,N[$,
		then there exists an $(N-\Gamma_\C)$-run
		from $q_{b+1}(z_{b+1}-\Gamma_\C)$
		to $q_m(z_m)$.
\end{enumerate}
\end{claim}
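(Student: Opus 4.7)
The plan is to handle each of the four parts of Claim~\ref{claim fix} separately; Parts 1 and 3 are dual, as are Parts 2 and 4, so I describe only Parts 1 and 2. A central observation used throughout is that $\values(\alpha) \subseteq [0, N[$ forces $\alpha$ to contain no $\pm p$-transition (such a transition would shift the counter by $N$ and exit the range), so $\phi(\alpha) = \varepsilon \in \Lambda_{8}$, every transition of $\alpha$ is equivalently an $(N-\Gamma_\C)$-transition (the operation itself makes no reference to the parameter value), and the only parametric comparison tests that can appear and be satisfied along $\alpha$ are the upper-bound tests $< p$ and $\leq p$ (since $z_i < N$ rules out $=p$, $\geq p$, $>p$); the symmetric statement applies to $\gamma$.

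For Part 1, since $z_{a+1} = z_a + N \in [N, 5N/3[$ gives $z_a \in [0, 2N/3[$, I build the $(N-\Gamma_\C)$-run as the concatenation of (i) an $(N-\Gamma_\C)$-run from $q_0(z_0)$ to $q_a(z_a)$ with (ii) the lowered transition $q_a(z_a) \xrightarrow{+p,\, N-\Gamma_\C} q_{a+1}(z_{a+1} - \Gamma_\C)$, exploiting that $z_a + (N-\Gamma_\C) = z_{a+1} - \Gamma_\C$. For (i), I apply the 5/6-Lemma to $\alpha$ with embedding level $\ell = N - \Gamma_\C$; the spread condition $\max(z_0, z_a, \ell) - \min(z_0, z_a, \ell) \leq 5N/6$ will be checked using $z_0 < N/3$, $z_a < 2N/3$ and the bound $\Gamma_\C < N/30$ that follows from $N > M_\C$ and the definition of our constants on page~\pageref{constant definitions}. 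That the resulting $(N-\Gamma_\C)$-semirun is in fact an $(N-\Gamma_\C)$-run is then argued via the $\ell$-embedding property: since only $< p$ and $\leq p$ tests appear in $\alpha$ and the embedding preserves the orientation of counter values with respect to $N - \Gamma_\C$, the tests that held with threshold $N$ continue to hold with threshold $N - \Gamma_\C$.

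For Part 2, the 5/6-Lemma cannot be applied verbatim because it preserves source and target counter values whereas the desired target is $q_a(z_a - \Gamma_\C)$. I instead appeal to the Depumping Lemma: the hypothesis $z_a \geq N/3 + \Upsilon_\C$ combined with $z_0 < N/3$ yields $\Delta(\alpha) > \Upsilon_\C$, and $\phi(\alpha) = \varepsilon \in \Lambda_{8}$, so the lemma produces an $N$-semirun $\alpha'$ from $q_0(z_0)$ to $q_a(z_a - \Gamma_\C)$ obtained by gluing subsemiruns of positive counter effect inside $\alpha$. Since gluing preserves the absence of $\pm p$-transitions, $\alpha'$ is simultaneously an $(N-\Gamma_\C)$-semirun, and its maximum is bounded by $\max(\alpha) < N$. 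The main obstacle — and the last remaining step — is to verify that $\alpha'$ is in fact an $(N-\Gamma_\C)$-run, i.e.~that all comparison tests still pass with threshold $N - \Gamma_\C$. I plan to handle this by a further application of the 5/6-Lemma (with $\ell = N - \Gamma_\C$) to reshape any portion of $\alpha'$ whose counter values still lie in $[N-\Gamma_\C, N[$ below that threshold, once more exploiting that only upper-bound parametric tests can appear in $\alpha'$. The careful choice $\ell = N-\Gamma_\C$ aligning with the new threshold, together with the slack in the case hypotheses on $z_0, z_a, z_{a+1}$, is precisely what is needed to meet the 5/6-Lemma's spread condition in every sub-application.
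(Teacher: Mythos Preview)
There is a genuine gap in the step where you pass from an $(N-\Gamma_\C)$-semirun to an $(N-\Gamma_\C)$-run. The $\ell$-embedding property with $\ell = N-\Gamma_\C$ does not do what you claim: it guarantees $y_i < N-\Gamma_\C$ \emph{iff} $z_{\psi(i)} < N-\Gamma_\C$, so any test $<p$ (or $\leq p$) sitting at a position of $\alpha$ whose counter value lies in $[N-\Gamma_\C,N[$ --- and nothing rules this out --- will correspond to a position in the embedding with value $\geq N-\Gamma_\C$, and the test fails under parameter $N-\Gamma_\C$. Your Part~2 ``reshape'' via a further 5/6-Lemma application cannot repair this, since that lemma only guarantees $\values(\pi')\subseteq[\min(\pi)-\Gamma_\C,\max(\pi)+\Gamma_\C]$, so the maximum may even grow. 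There are two further problems in Part~1: the 5/6-Lemma's spread hypothesis is not met (with $z_0$ possibly $0$ and $\ell=N-\Gamma_\C$ one gets a spread of $N-\Gamma_\C>29N/30>5N/6$), and the transition $\sigma_a$ need not be $+p$ --- when $z_a=N-1$ and $z_{a+1}=N$ it is a $+1$, so ``lowered $+p$'' is not always available.

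The paper does not use the 5/6-Lemma here. Instead it factorizes $\alpha=\chi_1\cdots\chi_t\,\zeta$, where each odd-indexed $\chi_i$ runs from a value strictly below $N-\Gamma_\C-\Upsilon_\C$ to the \emph{first} subsequent value at least $N-\Gamma_\C$, and each even-indexed $\chi_i$ does the reverse. The crucial feature is that every value in an odd $\chi_i$ except the last is already strictly below $N-\Gamma_\C$; applying the Depumping Lemma (legal since $|\Delta(\chi_i)|>\Upsilon_\C$ and $\phi(\chi_i)=\varepsilon$), which only shifts later values \emph{downward} because the removed intervals have positive effect, therefore yields $\widehat{\chi_i}$ with all values in $[\tfrac{N}{3}-\Gamma_\C,\,N-\Gamma_\C[$. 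Hence the tests $<p$, $\leq p$ with threshold $N-\Gamma_\C$ pass automatically, as do the non-parametric comparison tests (all values still exceed $\max(\Const(\C))$). A case split on the parity of $t$ then handles the tail $\zeta$ and the transition $\sigma_a$, in particular catching the $\sigma_a=+1$ possibility. The ingredient missing from your attempt is precisely this alternating factorization around the level $N-\Gamma_\C$: it is what guarantees \emph{a priori} that each depumped piece stays below the new parametric threshold, a conclusion that neither a global Depumping nor the 5/6-Lemma can deliver.
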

We postpone
the proof of Claim~\ref{claim fix} to the end of this section but refer to
Figure~\ref{prefix examples} for an illustration of Points 1 and 2.

We can use Point (1) or Point (2) of the claim to turn the possible resulting prefixes
$\alpha \sigma[a,a+1] $ or $\alpha$ respectively into $(N-\Gamma_\C)$-runs
with target configuration
 shifted down by $\Gamma_\C$.
Symmetrically, we can use Point (3) or Point (4) of the claim to turn the possible resulting suffixes
$ \sigma[b,b+1] \gamma$  or
 $\gamma$ respectively  into $(N-\Gamma_\C)$-runs
with source configuration
 shifted down by $\Gamma_\C$.
The claim will rely on the Depumping Lemma (Lemma~\ref{lemma zero})
and on the fact that a transition with operation $+p$ or $-p$
has an absolute counter effect 
of $N$ in an $N$-run but $N-\Gamma_\C$ in an
$(N-\Gamma_\C)$-run.

Let us  for the moment assume $z_i\geq N$ for some $i\in[1,m-1]$
along with the factorization~(\ref{sigma fact}) of $\sigma$ and Claim~\ref{claim fix}.

Assuming Claim~\ref{claim fix} we conclude the proof  
by treating the following exhaustive cases
on the positions
of $z_{a+1}$ and $z_b$ separately.

\begin{center}
	\begin{figure}
\includegraphics[width=0.5\textwidth]{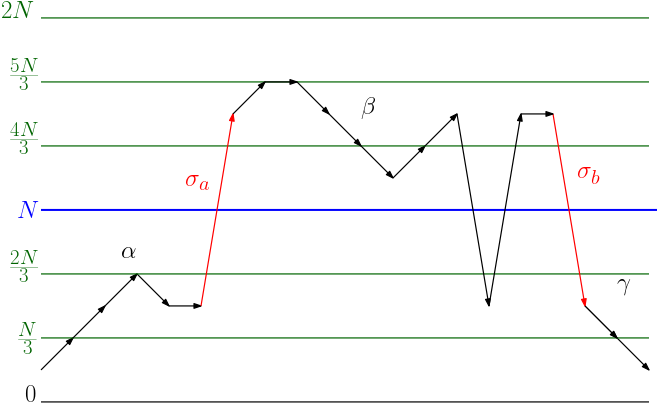}
\includegraphics[width=0.5\textwidth]{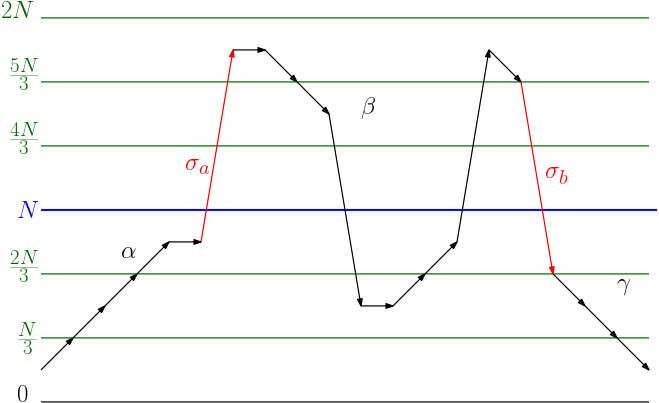}
	\caption{Illustration of Case 1, i.e. $z_{a+1},z_{b}\in[N,\frac{5N}{3} [$, on the left,
				and Case 2, i.e. $z_{a+1},z_{b}\in[\frac{5N}{3},2N[$, on the right.}
\label{case 1 and 2}
	\end{figure}
\end{center}

\noindent
{\em Case 1. $z_{a+1},z_{b}\in[N,\frac{5N}{3} [$, 
cf. Figure~\ref{case 1 and 2}.}\\ 

Recall that $\beta=\sigma[a+1,b]$ as defined in (\ref{sigma fact}) 
is an $N$-run from $q_{a+1}(z_{a+1})$ to $q_{b}(z_b)$
satisfying $\values(\beta)\subseteq[\frac{N}{3},4N]$.
We view $\beta$ as an $N$-semirun. We consider $\ell=N$ and 
observe that 
$$\max(z_{a+1},z_b,\ell)-\min(z_{a+1},z_b,\ell)<
\frac{5N}{3}-N=\frac{2N}{3}\leq\frac{5 N}{6}.
$$
Hence we can apply the 5/6-Lemma (Lemma~\ref{lemma 5/6}) to $\beta$:
there exists an $(N-\Gamma_\C)$-semirun $\widehat{\beta}$ from $q_{a+1}(z_{a+1})$
to $q_b(z_b)$ that is an $N$-embedding of $\beta$
with $\values(\widehat{\beta}) \subseteq [\min(\beta)-\Gamma_\C,\max(\beta)+\Gamma_\C]$.
Since moreover $\frac{N}{3} -  2 \Gamma_\C >  \max(\Const(\C))$,
from $M_\C$'s definition on page~\pageref{constant definitions},
and because $\min(\beta) \geq N/3 $,
it follows that $\widehat{\beta}-\Gamma_\C$, the shifting of 
$\widehat{\beta}$ by $-\Gamma_\C$, is
in fact an $(N-\Gamma_\C)$-run from $p(z_{a+1}-\Gamma_\C)$
to $q_{b}(z_b-\Gamma_\C)$. It thus remains to show the 
existence
of an $(N-\Gamma_\C)$-run from $q_0(z_0)$ to $q_{a+1}(z_{a+1}-\Gamma_\C)$
and one from $q_b(z_b-\Gamma_\C)$ to $q_m(z_m)$:
the former follows from Point (1) of Claim~\ref{claim fix},
and the latter follows from Point (3) of 
Claim~\ref{claim fix}.\\

\medskip 

\noindent
{\em Case 2. $z_{a+1},z_{b}\in[\frac{5N}{3},2N[$,
cf. Figure~\ref{case 1 and 2}.}\\

It follows that $z_a,z_{b+1}\in[\frac{2N}{3},N[$, and that $ \sigma_a $ and $ \sigma_b$ must
be a $+p$ and $-p$ respectively.
We apply the 5/6-Lemma (Lemma~\ref{lemma 5/6}) to $$ \sigma[a,a+1] \ \beta \ \sigma[b,b+1] $$ with $\ell=N$, then shift the output by $-\Gamma_\C$.
Then we apply Point (2) of Claim~\ref{claim fix} and Point (4) of Claim~\ref{claim fix}.

\begin{center}
	\begin{figure}
\includegraphics[width=0.5\textwidth]{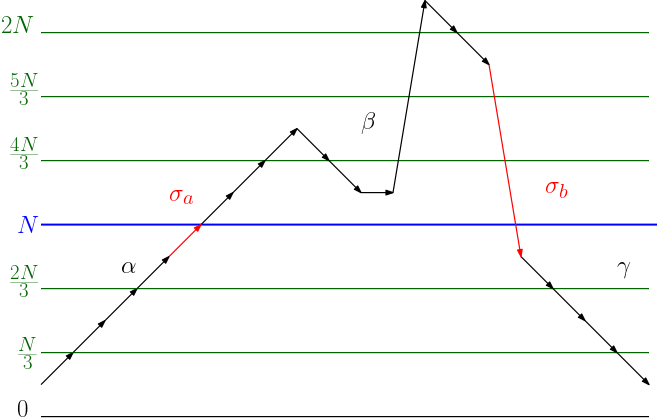}
\includegraphics[width=0.5\textwidth]{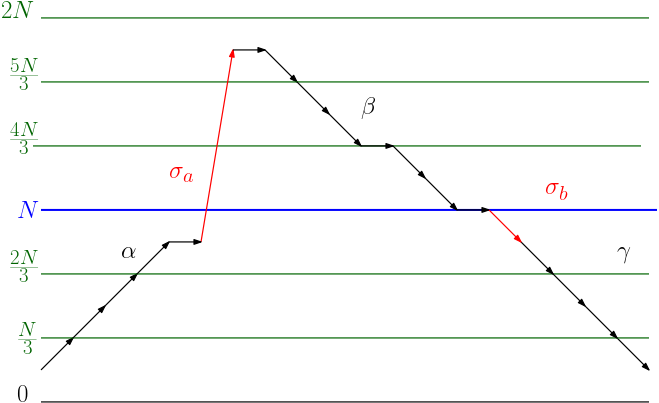}
	\caption{Illustration of Case 3, i.e. $z_{a+1}\in[N,\frac{4N}{3}[$ and $z_{b}\in[\frac{5N}{3},2N[$, on the left,
				 and Case 4, i.e. $z_{a+1}\in[\frac{5N}{3},2N[$ and $z_{b}\in[N,\frac{4N}{3}[$, on the right.}
\label{case 3 and 4}
	\end{figure}
\end{center} 

\noindent
{\em Case 3. $z_{a+1}\in[N,\frac{4N}{3}[$ and $z_{b}\in[\frac{5N}{3},2N[$,
 cf. Figure~\ref{case 3 and 4}.}\\

It follows that $z_{b+1}\in[\frac{2N}{3},N[$.
We apply the 5/6-Lemma (Lemma~\ref{lemma 5/6}) 
to $$ \beta \ \sigma[b,b+1] $$ with $\ell=N$, then shift the output by $-\Gamma_\C$.
Then we apply Point (1) of Claim~\ref{claim fix} and Point (4) of Claim~\ref{claim fix}.\\

\medskip \medskip

\noindent
{\em Case 4. $z_{a+1}\in[\frac{5N}{3},2N[$ and $z_{b}\in[N,\frac{4N}{3}[$,
 cf. Figure~\ref{case 3 and 4}.}\\

It follows that $z_a\in[\frac{2N}{3},N[$.
We apply the 5/6-Lemma (Lemma~\ref{lemma 5/6}) 
to $$ \sigma[a,a+1] \ \beta $$ with $\ell=N$, then shift the output by $-\Gamma_\C$.
Then we apply Point (2) of Claim~\ref{claim fix} and Point (3) of Claim~\ref{claim fix}.

\begin{center}
	\begin{figure}\label{case 5 and 6}
\includegraphics[width=0.5\textwidth]{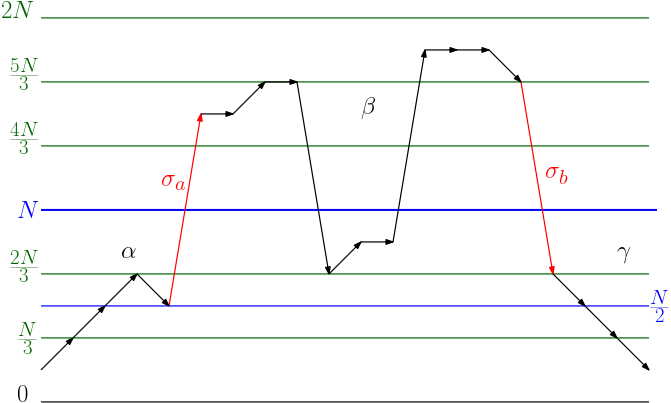}
\includegraphics[width=0.5\textwidth]{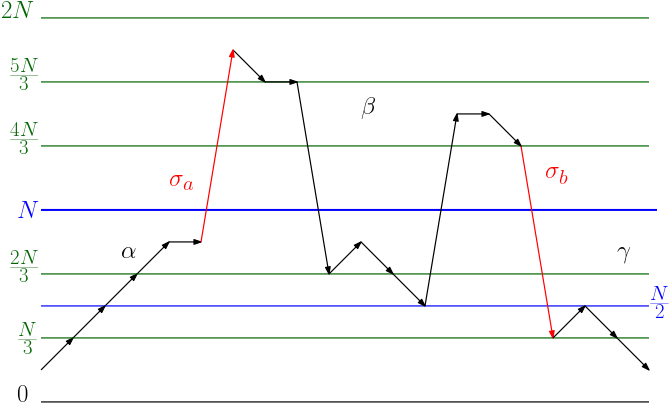}
		
	\caption{Illustration of Case 5, i.e. $z_{a+1}\in[\frac{4N}{3},\frac{5N}{3}[$
and $z_{b}\in[\frac{5N}{3},2N[$, on the left (with moreover $z_a \notin[\frac{N}{3},\frac{N}{2}[$), and Case 6, i.e. $z_{a+1}\in[\frac{5N}{3},2N[$ and $z_{b}\in[\frac{4N}{3},\frac{5N}{3}[$, on the right (with moreover $ z_{b+1} \in [\frac{N}{3},\frac{N}{2}[$ ).
		}
\label{case 5 and 6}
	\end{figure} 
\end{center}

\noindent
{\em Case 5. $z_{a+1}\in[\frac{4N}{3},\frac{5N}{3}[$
and $z_{b}\in[\frac{5N}{3},2N[$,
  cf. Figure~\ref{case 5 and 6}.}\\

It follows $z_{b+1}\in[\frac{2N}{3},N[$, and that $ \sigma_a $ and $ \sigma_b $ must
be a $+p$ and $-p$ respectively.
We distinguish whether $z_a\in[\frac{N}{3},\frac{N}{2}[$ or not.
\medskip

\noindent
{\em Case 5.A. $z_a\not\in[\frac{N}{3},\frac{N}{2}[$.}\\

It follows $z_a\in[\frac{N}{2},N[$.
We apply the 5/6-Lemma (Lemma~\ref{lemma 5/6}) to $$ \sigma[a,a+1] \ \beta \ \sigma[b,b+1] $$ with $\ell=N$, then shift the output by $-\Gamma_\C$.
Then we apply Point (2) of Claim~\ref{claim fix} and Point (4) of Claim~\ref{claim fix}.

\medskip

\noindent
{\em Case 5.B. $z_a\in[\frac{N}{3},\frac{N}{2}[$.}\\

It follows $z_{a+1}\in[\frac{4N}{3},\frac{3N}{2}[$.
We apply the 5/6-Lemma (Lemma~\ref{lemma 5/6}) to $$ \beta \ \sigma[b,b+1] $$ with $\ell=N$, then shift the output by $-\Gamma_\C$.
Then we apply Point (1) of Claim~\ref{claim fix} and Point (4) of Claim~\ref{claim fix}.

\medskip \medskip

\noindent
{\em Case 6. $z_{a+1}\in[\frac{5N}{3},2N[$ and $z_{b}\in[\frac{4N}{3},\frac{5N}{3}[$,
  cf. Figure~\ref{case 5 and 6}.}\\

It follows $z_a\in[\frac{2N}{3},N[$, and that $ \sigma_a $ and $ \sigma_b $ must
be a $+p$ and $-p$ respectively.
We distinguish whether $z_{b+1}\in[\frac{N}{3},\frac{N}{2}[$ or not.
\medskip

\noindent
{\em Case 6.A. $z_{b+1}\not\in[\frac{N}{3},\frac{N}{2}[$.}\\

It follows $z_{b+1}\in[\frac{N}{2},\frac{2N}{3}[$.
We apply the 5/6-Lemma (Lemma~\ref{lemma 5/6})
to $$ \sigma[a,a+1] \ \beta \ \sigma[b,b+1] $$ with $\ell=N$, then shift the output by $-\Gamma_\C$.
Then we apply Point (2) of Claim~\ref{claim fix} and Point (4) of Claim~\ref{claim fix}.\\

\medskip

\noindent
{\em Case 6.B. $z_{b+1}\in[\frac{N}{3},\frac{N}{2}[$.}\\

It follows $z_{b}\in[\frac{4N}{3},\frac{3N}{2}[$.
We apply the 5/6-Lemma (Lemma~\ref{lemma 5/6}) to $$ \sigma[a,a+1] \ \beta$$ with $\ell=N$, then shift the output by $-\Gamma_\C$.
Then we apply Point (2) of Claim~\ref{claim fix} and Point (3) of Claim~\ref{claim fix}.\\

\medskip

\noindent
It remains to provide the proof of the Claim~\ref{claim fix} before 
discussing the remaining case when $z_i<N$ for all $i\in[1,m-1]$

\begin{center}
	\begin{figure}
\includegraphics[width=0.5\textwidth]{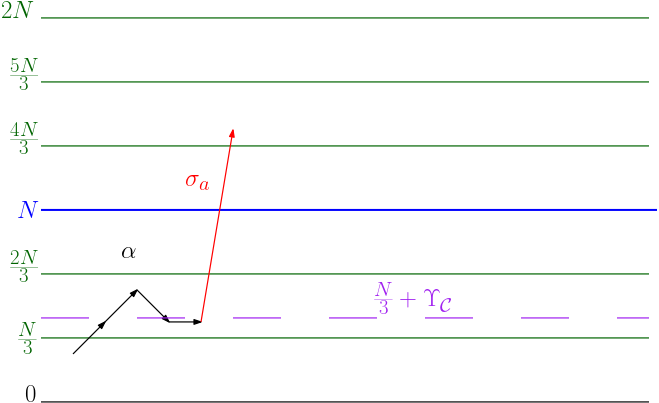}\newline
\includegraphics[width=0.5\textwidth]{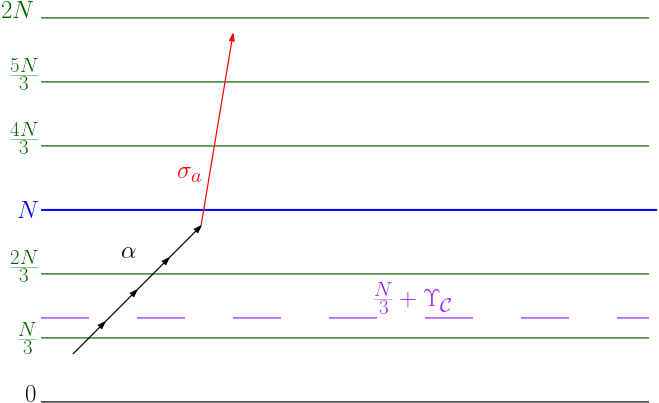}
		\caption{Claim~\ref{claim fix}: Examples for Point 1 (above) and Point 2 (below).}
	\label{prefix examples}
	\end{figure} 
\end{center}

\medskip

\noindent
\begin{proof}[Proof of Claim~\ref{claim fix}]
	Let us only prove Points (1) and (2).
	Points (3) and (4) can be proven in a symmetrical manner as Points (1) and (2). 
	Let us first prove Point (1),
	so let us assume that $z_{a+1}\in[N,\frac{5N}{3}[$.
	We refer to Figure~\ref{prefix examples} for an example of such a situation.
	Recall that $\alpha=\sigma[0,a]$,
	$z_0<\frac{N}{3}$ and 
	$z_i\in[\frac{N}{3},N[$ for all $i\in[1,a]$.

\noindent
	We first factorize $\alpha$, as seen in Figure~\ref{chi factor}, as 
	$$
	\alpha=\left(\prod_{i=1}^t \chi_i\right)\zeta,
	$$
	where 
	\begin{itemize}
	\item each $\chi_i$ is a subrun of $\alpha$ that either
	\begin{enumerate}
		\item[a)] starts in a configuration 
			with counter value 
		strictly less than $N-\Gamma_\C-\Upsilon_\C$
			and ends in the first next configuration with 
					counter value
		at least $N-\Gamma_\C$, or conversely
		\item[b)] starts in a configuration 
			with counter value 
		at least $N-\Gamma_\C$ and ends in the first next 
					configuration with 
			counter value 
		strictly less than $N-\Gamma_\C-\Upsilon_\C$, and
			\end{enumerate}
		\item the (possibly empty) suffix $\zeta$'s prefixes are neither of form a) nor b), i.e. $\values(\zeta)\subseteq[0,N-\Gamma_\C[$
			or
	$\values(\zeta)\subseteq[N-\Gamma_\C-\Upsilon_\C,N[$.
\end{itemize}

\begin{center}
	\begin{figure}
\includegraphics[width=0.7\textwidth]{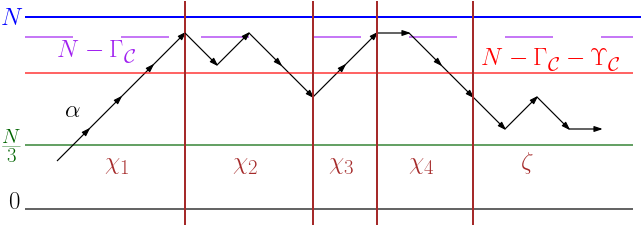}
	\caption{Illustration of the factors $\chi_1$, $\chi_2$, $\chi_3$, $\chi_4$ and $\zeta$ of $\alpha$.}
	\label{chi factor}
	\end{figure} 
\end{center}
	First observe that $\alpha$ and hence in particular $\chi_1,\ldots,\chi_t,\zeta$ all do not contain 
	any $+p$-transition nor any $-p$-transition, and that
	$|\Delta(\chi_i)|> \Upsilon_h$ for all $i \in [1,t]$ by definition. 

	Next observe that $t=0$ is possible; in this case we have $\alpha=\zeta$ and
	$\values(\alpha)\subseteq[0,N-\Gamma_\C[$.

	We will however first treat the case $t>0$, the case $t=0$ will be treated later.
	It follows 
	from $z_0<\frac{N}{3}$ 
	that $\chi_1$ must be of type a); more generally,
	$\chi_i$ is of type a) for all odd $i\in[1,t]$ and of
	type b) for all even $i\in[1,t]$.
	Since $z_0<\frac{N}{3}$, observe that if for $\alpha$'s last counter value
	$z_a$
	we have	$z_a\in[N-\Gamma_\C,N[$, then $t$ must be odd,
		and, similarly (but not entirely dually), if for $\alpha$'s last counter value
	$z_a$
	we have	$z_a\in[\frac{N}{3}, N-\Gamma_\C-\Upsilon_\C[$, then $t$ must be even.

\noindent
	In the following we prefer to write $\alpha$ as 
	$\alpha=\alpha[0,a]$ rather than $\sigma[0,a]$.
	It is important to recall that $z_0<\frac{N}{3}$ and 
	$z_s\in[\frac{N}{3},N[$ for all $s\in[1,a]$.

\medskip

\noindent
	Let 
	$$
q_0(z_0)=q_{j_1}(z_{j_1})\xrightarrow{\chi_1}q_{j_2}(z_{j_2})
\xrightarrow{\chi_2}q_{j_3}(z_{j_3})
\quad\cdots\quad
\xrightarrow{\chi_{t-1}}q_{j_{t}}(z_{j_{t}})\xrightarrow{\chi_t}
q_{j_{t+1}}(z_{j_{t+1}})
\xrightarrow{\zeta}q_{j_{t+2}}(z_{j_{t+2}}).
	$$

\noindent
Note that $j_{t+1}=j_{t+2}$ is possible if $\zeta$ is empty.
In the following, we will first show how to turn any $\chi_i$ of type a) (resp. b)) into a $(N-\Gamma_\C)$-run
with target (resp. source) configuration shifted down by $\Gamma_\C$,
and then we make a case distinction on how to end the proof based
on the parity of $t$.

\begin{subclaim}
Let $i\in[1,t]$ be odd.
Then \textcolor{black}{there} exists an $(N-\Gamma_\C)$-run $\widehat{\chi_i}$ from $q_{j_i}(z_{j_i})$ to
	$q_{j_{i+1}}(z_{j_{i+1}}-\Gamma_\C)$.
\end{subclaim}

\begin{proof}[Proof of Subclaim 1]

	Indeed, $\phi(\chi_i)=\varepsilon\in \Lambda_{8}$, as $\alpha$ contains neither $+p$-transitions nor 
	$-p$-transitions. 
	Since moreover $\Delta(\chi_i)> \Upsilon_\C$ we can now apply Lemma~\ref{lemma zero} to $\chi_i$ (viewed as an $N$-semirun)
	and obtain an $N$-semirun $\widehat{\chi_i}$ with 
	$\Delta(\widehat{\chi_i})=\Delta(\chi_i)-\Gamma_\C$ 
	that is 
	such that 
$\widehat{\chi_i}=\alpha[j_i,j_{i+1}]-I_1-I_2\cdots -I_k$ for pairwise disjoint intervals
			$I_1,\ldots,I_k\subseteq[j_i,j_{i+1}]$ such that 
						
\begin{itemize}	
\item $\phi(\alpha[I_h])\in\Lambda_{16}$,

\item $\Delta(\alpha[I_h]) \in \macro_\C \Z$ 
							and $\Delta(\alpha[I_h])>0$ 
							for all $h\in[1,k]$.
\end{itemize}

\noindent
	Recall 	
	$\values(\alpha[1,a])\subseteq[\frac{N}{3},N[$ and
	$\frac{N}{3}-\Gamma_\C>\frac{M_\C}{3}-\Gamma_\C> \Gamma_\C > \max(\Const(\C))$,
	where the inequalities follows from $M_\C$'s and 
	$\Gamma_\C$'s definition on page~\pageref{constant definitions}. 
	It follows that $\widehat{\chi_i}$
	has all its counter values (except for the first one) in  
	$ [\frac{N}{3}-\Gamma_\C , N - \Gamma_\C [$.
	Moreover, the first transition's operation must be a $+1$ update and
	therefore cannot be a test,	
	and hence
	$\widehat{\chi_i}$
	is an $(N-\Gamma_\C)$-run
	from $q_{j_i}(z_{j_i})$ to $q_{j_{i+1}}(z_{j_{i+1}}-\Gamma_\C)$.

\end{proof}

\begin{subclaim}
Let $i\in[1,t]$ be even.
Then there exists an $(N-\Gamma_\C)$-run $\widehat{\chi_i}$
	from $q_{j_i}(z_{j_i}-\Gamma_\C)$ to
	$q_{j_{i+1}}(z_{j_{i+1}})$.
\end{subclaim}

\begin{proof}[Proof of Subclaim 2.]
Analogously, by use of Lemma~\ref{lemma zero},
for $i\in[1,t]$ even,
	there exists an $(N-\Gamma_\C)$-semirun $\chi'$
	from $q_{j_i}(z_{j_i})$ to
	$q_{j_{i+1}}(z_{j_{i+1}}+\Gamma_\C)$,
	from which we obtain an 
	$(N-\Gamma_\C)$-semirun $\widehat{\chi_i}=\chi'-\Gamma_\C$
	from $q_{j_i}(z_{j_i}-\Gamma_\C)$ to
	$q_{j_{i+1}}(z_{j_{i+1}})$ by shifting $\chi'$ by $-\Gamma_\C$.
	Moreover, as $\values(\alpha[1,a])\subseteq[\frac{N}{3},N[$ and
	$\frac{N}{3}-\Gamma_\C> \Gamma_\C > \max(\Const(\C))$ as seen in the proof
	of Subclaim 1, $\widehat{\chi_i}$ is an $(N-\Gamma_\C)$-run	
	as required. \newline
\end{proof}

	To finish the proof of the existence of an $(N-\Gamma_\C)$-run from $q_0(z_0)$ to $q_{a+1}(z_{a+1}-\Gamma_\C)$
	we make a case distinction on the parity of $t$.

	Assume first that the parity of $t$ is odd.
	By applying Subclaims 1 and 2
	to the runs $\chi_1,\ldots,\chi_t$ appropriately
	 we obtain the  $(N-\Gamma_\C)$-run
	$$\widehat{\chi_1}\cdots\widehat{\chi_t}$$
	from
	$q_{j_1}(z_{j_1})$ to $q_{j_{t+1}}(z_{j_{t+1}}-\Gamma_\C)$.
	Since $t$ is odd we have that $\chi_t$ is of type a), $z_{j_{t+1}} \in [N-\Gamma_\C,N[$ 
	and 
	$\values(\zeta)\subseteq[N-\Gamma_\C-\Upsilon_\C,N[$. 
	As 
	 $(N-\Gamma_\C-\Upsilon_\C)-\Gamma_\C > (M_\C-\Gamma_\C-\Upsilon_\C)-\Gamma_\C> \max(\Const(\C))$,
	following from $M_\C$'s, $\Gamma_\C$'s and $\Upsilon_\C$'s definition on page~\pageref{constant definitions},
	and $\phi(\alpha)= \varepsilon$,
	it follows that
	$$
	\widehat{\chi_1}\cdots\widehat{\chi_t}\left(\zeta-\Gamma_\C\right)
	$$
	is an $(N-\Gamma_\C)$-run from 
	$q_{j_1}(z_{j_1})$ to $q_{j_{t+2}}(z_{j_{t+2}}-\Gamma_\C)$.
	Recall that $z_{a+1} < \frac{5N}{3}$ by case assumption and also recall that $N > M_\C$.
	Since $\values(\zeta)\subseteq[N-\Gamma_\C-\Upsilon_\C,N[$ we have
	$z_a=z_{j_{t+2}} \geq N-\Gamma_\C-\Upsilon_\C$ and hence 
	$0<\Delta(\sigma,{a}) < \frac{5N}{3}-(N-\Gamma_\C-\Upsilon_\C)
	\leq \frac{2N}{3} + \Gamma_\C + \Upsilon_\C < \frac{2N}{3} + \frac{M_\C}{3} < N$,
	where the penultimate inequality follows from the definition of $M_\C$
	on page~\pageref{constant definitions}.
	Hence, as $ \sigma_a $ is not a test nor a 
	$+p$-transition we have that 
	$$
	\widehat{\chi_1}\cdots\widehat{\chi_t}
	\left(\zeta-\Gamma_\C\right)( \sigma[a,a+1] -\Gamma_\C)
	$$
	is an $(N-\Gamma_\C)$-run 
	from $q_{j_1}(z_{j_1})=q_0(z_0)$ to $q_{j_{t+2}}(z_{j_{t+2}}-\Gamma_\C)=
	q_{a+1}(z_{a+1}-\Gamma_\C)$ as required.

\medskip

\noindent
	Let us now treat the case when $t$ is even. It follows
	$\values(\zeta)\subseteq[0,N-\Gamma_\C[$, in particular $z_a\in[0,N-\Gamma_\C[$. 
	Again,
	$$\widehat{\chi_1}\cdots
	\widehat{\chi_t}$$
	is an $(N-\Gamma_\C)$-run from $q_{j_1}(z_{j_1})=q_0(z_0)$ to
	$q_{j_{t+1}}(z_{j_{t+1}})$.
	Since $z_{a+1}\geq N$ and $z_a<N-\Gamma_\C$ it follows that
	$ \sigma_a $ is 
	a $+p$-transition, in particular	
	$\Delta(\sigma, a)>\Gamma_\C$. Thus, 
	$$
\widehat{\chi_1}\cdots\widehat{\chi_t}\zeta \tau,
	$$
	where $\tau = q_a(z_a) \xrightarrow{ \sigma_a , N-\Gamma_\C} q_{a+1}(z_{a+1}-\Gamma_\C)$ with
	$\Delta(\tau)=\Delta(\sigma, a)-\Gamma_\C= N-\Gamma_\C$\textcolor{black}{,} 
	is an $(N-\Gamma_\C)$-run
	from $q_{j_1}(z_{j_1})=q_0(z_0)$ to $q_{j_{t+2}}(z_{j_{t+2}}-\Gamma_\C)=
	q_{a+1}(z_{a+1}-\Gamma_\C)$, as required.

It remains to discuss the case when $t=0$. This case can be proven analogously.
Indeed, from $t=0$ it follows immediately that $\alpha=\zeta$ and $\values(\zeta)\subseteq[0,N-\Gamma_\C[$
and the proof 
is analogous as the case when $t>0$ and when $t$ is even.

\medskip

\noindent
Let us now sketch the proof of Point (2) of Claim~\ref{claim fix}. 
Let us assume $z_a\in[\frac{N}{3}+\Upsilon_\C,N[$.
Similarly as in \textcolor{black}{Point} 
 (1) we can factorize $\alpha$ as 
$\alpha=\left(\prod_{i=1}^t\chi_i\right)\zeta$
 and
 Subclaims 1 and 2 hold again.

If $t$ is odd, then 
by Subclaims 1 and 2
 we have that the run
$\left(\prod_{i=1}^t\widehat{\chi_i}\right)(\zeta-\Gamma_\C)$,
stipulating that $\widehat{\chi_i}$ is the of Subclaims 1 and 2 respectively 
(depending on the parity of $i$), 
is the desired $(N-\Gamma_\C)$-run from $q_0(z_0)$ to $q_a(z_a-\Gamma_\C)$.

If $t$ is even, then again by Subclaims 1 and 2 we have that
$\xi=\left(\prod_{i=1}^t\widehat{\chi_i}\right)\zeta$ 
is an $(N-\Gamma_\C)$-run
from $q_0(z_0)$ to $q_a(z_a)$, where again $\widehat{\chi_i}$ is defined as above.
By definition the run $\xi$ does not contain any $+p$-transitions nor $-p$-transitions, 
thus $\phi(\xi)=\varepsilon\in\Lambda_8$.
By construction also the run $\xi$ has all counter values, besides the first, above
$\frac{N}{3} -\Gamma_\C  >  \Gamma_\C + \max(\Const(\C))$.
Moreover, as $z_a \geq \frac{N}{3}+\Upsilon_\C$ and $z_0 < \frac{N}{3}$,
we have $\Delta(\xi)>\Upsilon_\C$.
We can thus apply Lemma~\ref{lemma zero} to $\xi$,
obtaining 
an $(N-\Gamma_\C)$-run $\xi'$ from $q_0(z_0)$ to $q_a(z_a-\Gamma_\C)$, as required.

\end{proof}

We now conclude the proof   of 
 our statement 
by treating the only remaining
case, the case when $\sigma$ is such that $z_i<N$ for all $i\in[1,m-1]$. 
In this case we can factorize $\sigma$ as
$
\sigma=\prod_{i=1}^t\chi_i \zeta
$
similarly as done in the proof of Point (1) of Claim~\ref{claim fix}, where $t$ is even,
and analogously prove that
$\prod_{i=1}^t\widehat{\chi_i}\zeta$ is an $(N-\Gamma_\C)$-run from $q_0(z_0)$ to $q_m(z_m)$,
where $\widehat{\chi_i}$ is the output of Subclaim 1 and 2 respectively (depending on the parity of $i$).

\section{Conclusion}
In this paper we have shown that the reachability problem
for parameteric timed automata with two parametric clocks and one
parameter is complete for exponential space.

For the lower bound proof, inspired by \cite{GHOW10,GollerL13}, we made use of two results from complexity theory.
First, we made use of a serializability characterization of $\EXPSPACE$ from \cite{GHOW10}
which is a padded version of the serializability characterization
of $\PSPACE$ from~\cite{HLSVW93}, which in turn has
its roots in Barrington's Theorem~\cite{Bar89}.
Second, we made use of a result of Chiu, Davida, Litow that 
states that numbers in chinese remainder representation can
be translated into binary representation in $\mathsf{NC}^1$ 
(and thus in logarithmic space).
We are convinced that it is worthwhile to develop a suitable
programming language that serves as a unifying framework
in that it provides an interface for proving lower bounds
for various problems involving automata.
In a sense, we have developed the corresponding interface ``by hand''
when defining how parametric timed automata can compute functions
(Definition~\ref{def computes}).

For the $\EXPSPACE$ upper bound we first followed the approach
of Bundala and Ouaknine~\cite{BundalaO17} by providing an exponential
time translation from reachability in parametric timed automata
with two parametric clocks and one parameter (i.e. $(2,1)$-PTA) to reachability in parametric
one-counter automata (POCA) over one parameter, yet on a slightly less expressive POCA model
as introduced in~\cite{BundalaO17}.
We then studied the reachability in POCA with one parameter $p$.
Our main result, the Small Parameter Theorem (Theorem~\ref{theorem upper}), states that 
such a parametric one-counter automaton (POCA)
has an accepting run all of whose counter values lie in $[0,4\cdot p]$ if, and only if, there
exists such an accepting run for some $p$ that is at most exponential in
the size of the POCA.
Since the translation from $(2,1)$-PTA to POCA is computable in exponential time,
this gives a doubly exponential upper bound on the parameter value of the original 
$(2,1)$-PTA and hence an $\EXPSPACE$ upper bound for reachability in 
$(2,1)$-PTA (Corollary~\ref{corollary}).

In proving the Small Parameter Theorem we introduced the notion of semiruns
and gave several techniques for manipulating them. 
The Depumping Lemma~\ref{lemma zero} allowed us to construct from semiruns with large absolute
counter effect new semiruns with a smaller absolute counter effect. 
The Bracket Lemma~\ref{bracket lemma} allowed us to find in semiruns having a sufficiently 
large absolute counter effect and satisfying some majority condition
on the number of occurrences of $+p$-transitions and $-p$-transitions some subsemirun
that has again a large absolute counter effect and moreover some equal bracketing 
properties.
Our Hill and Valley Lemma (Lemma~\ref{lemma hill/valley}) allowed to turn for sufficiently large $N$ any
$N$-semirun that is either a hill or a valley
into an $N'$-semirun for some $N'<N$.
Our $5/6$-Lemma (Lemma~\ref{lemma 5/6}) allowed to turn for sufficiently large $N$ any $N$-semirun
 with an absolute counter effect of at most $5/6N$ into an $N'$-semirun for some
$N'<N$.

We hope that extensions of our techniques provide a line of attack for finally showing decidability
(and the precise complexity) of reachability in parametric timed automata with two parametric
clocks over an arbitrary
number of parameters (i.e. $(2,*)$-PTA). For these however, it seems that the reduction to
POCA indeed requires the presence of the above-mentioned $+[0,p]$-transitions.
When analyzing runs in the corresponding more general POCA model that in turn also 
involves an arbitrary number of parameters, it will
become necessary to ``de-scale'' semiruns in the following sense.
Already in the presence of two parameters one can see that it becomes necessary
to decrease the value of both parameters simultaneously proportionally:
for instance one can build a $(2,2)$-PTA for which reachability holds only
if the first parameter is a multiple of the second parameter.

\newpage
\bibliography{bibliography/bib}

\newpage

\appendix
\section{Proof of the reduction (Theorem~\ref{theorem ptatopoca})}\label{appendix ptatopoca}

In this appendix, we concern ourselves with proving the reduction from parametric timed automata with two parametric clocks and one parameter to 
parametric one-counter automata with one parameter, which was stated in
the main theorem of Section~\ref{ptatopoca section},
Theorem~\ref{theorem ptatopoca}.

\subsection{Overview of the proof of the reduction}\label{poca reduction proof overview}

In this section, we recall the main theorem of Section~\ref{ptatopoca section}, and provide its proof overview.

Let us first recall Theorem~\ref{theorem ptatopoca}.

\begin{samepage}
\begin{theorem}(Theorem~\ref{theorem ptatopoca}) 
        The
        following is computable in exponential time:

	INPUT: A $(2,1)$-PTA $\A$.
	
	OUTPUT: A POCA $\C$ over one parameter\\
	such that
	\begin{enumerate}
\item for all $N \in \N$
all accepting $N$-runs $\pi$ in $\C$ satisfy 
$\values(\pi)\subseteq[0,4 \cdot \max(N,| \C|)]$, and
\item reachability holds for $\A$ if, and only if, 
	reachability holds for $\C$.
	\end{enumerate}
\end{theorem}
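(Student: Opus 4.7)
The plan is to follow Bundala and Ouaknine's reduction from \cite{BundalaO17} essentially verbatim, with two adjustments: (i) avoid the $+[0,p]$-transitions that their construction employs, so that the output POCA really does lie in the restricted operation set $\Op(P)$ defined in Section~\ref{poca def section}, and (ii) arrange that every accepting $N$-run of the resulting POCA $\C$ stays inside $[0,4 \cdot \max(N,|\C|)]$. I would proceed by inspecting their construction rule-by-rule and replacing each $+[0,p]$ transition and each unbounded counter excursion by a carefully designed gadget.

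First I would set up the simulation as in \cite{BundalaO17}: at any point during a run of $\A$ on parameter value $N$, one of the two parametric clocks, say $x$, is tracked in the POCA counter, while the state of $\C$ records (a) the current control state of $\A$, (b) the values of the non-parametric clocks (which are bounded by constants in $\Const(\A)$ and so fit into an exponential-sized state space), and (c) a symbolic "region descriptor" for the untracked parametric clock $y$ locating $y$ with respect to the constants in $\Const(\A)$ and with respect to $p$, together with a linear offset of the form $y = x + d$ valid between resets. Time-elapsings by one unit translate to a $+1$ in the POCA followed by a small sub-gadget that uses $\bowtie c$ and $\bowtie p$ tests to update the region descriptor of $y$ whenever a threshold is crossed; resets and guards on either clock translate to direct POCA operations, using $\mod c$ and $\bowtie p$ where necessary, and the role of the two clocks is swapped in the counter when the currently tracked clock is reset while the other is not.

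The adjustment for (i) exploits the fact that the only use Bundala and Ouaknine make of $+[0,p]$ is to let time elapse past the threshold at which $y$ crosses the value $p$ while $x$ is being tracked. I would replace each such macro-step by a deterministic gadget that temporarily shifts the tracked quantity by $-p$ using $-p$, performs the necessary tests and region update, and then restores it by $+p$, so that the effect is reproduced without any nondeterministic parameter-sized jump. For the bound (ii), I would cap the counter explicitly: as soon as the tracked clock exceeds $2 \cdot \max(N, |\C|)$ it is certainly larger than every constant in $\Const(\A)$ and than $p$, so its exact value is irrelevant to every future guard on it; the POCA then uses a "cap" gadget that absorbs further increments in its state until a reset occurs. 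The factor $4$ (rather than $2$) in the final statement leaves room for the transient excursions inside the gadgets that update region descriptors and swap tracked clocks.

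The step I expect to be delicate is verifying that the counter bound holds at every configuration along an accepting run, not merely at the boundary between simulated rules, since each local gadget can push the counter slightly above the cap before bringing it back. This will force a careful choice of the constant multiplier, and is precisely where the "slight subclass" description comes from: one must ensure that the cap gadgets and the replacement of $+[0,p]$ are globally consistent, in particular that the gadget for $y$ crossing $p$ while $x$ is near the cap does not momentarily drive the counter outside the target interval. Modulo this bookkeeping, correctness (equivalence of reachability in $\A$ and $\C$) is inherited from \cite{BundalaO17}, and the exponential time bound follows because the non-parametric clock valuations and the region descriptors contribute at most an exponential factor to $|\C|$, while each PTA rule gives rise to polynomially many POCA rules.
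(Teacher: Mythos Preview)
Your proposal has a genuine gap, and the paper's construction is structurally different from what you sketch.

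The problem is your plan to track one clock, say $x$, in the counter while recording ``a linear offset of the form $y=x+d$'' in the control state. Between two consecutive resets $d$ is indeed constant, but immediately after a reset of $y$ one has $d=-v(x)$, and $v(x)$ may be any integer up to (and beyond) $N$. Hence $d$ is a quantity of magnitude $\Theta(N)$ and cannot be stored in a state space whose size is bounded in $|\A|$ alone; doing so would make $|\C|$ depend on the parameter value. Without $d$ available, you cannot check guards on the untracked clock: a test $y\bowtie p$ becomes $x\bowtie p-d$, a comparison against a threshold your POCA cannot express. Your proposed replacement for $+[0,p]$ --- shift by $-p$, test, shift back by $+p$ --- also presupposes that the relevant threshold is a known constant offset from $p$, i.e.\ it presupposes exactly the state-stored $d$ that is unavailable; and in any case a shift-and-restore is a no-op on the counter, so it cannot reproduce a genuinely nondeterministic increment.

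The paper proceeds differently. It first reduces $\A$ to a $0/1$-PTA $\B$ with clocks $\{x,y\}$ only and $\Const(\B)=\{0\}$, so that there are just sixteen regions determined by comparing each clock to $0$ and to $N$. The counter of $\C$ stores the \emph{difference} $z=v(x)-v(y)$, which is constant between resets; consequently the POCA does \emph{not} simulate time step by step. Instead, an $N$-run of $\B$ is factorised at resets, and between resets the construction only checks that the required sequence of regions can be traversed along the line $v(x)-v(y)=z$. For each region the set of admissible time-displacements of a reset-free run is precomputed as a finite union of arithmetic progressions $\bigcup_j a_j+b_j\N$ with $a_j,b_j\in O(|Q_\B|^2)$ (via To's theorem on unary NFAs), and explicit gadgets using $\pm p$, $\bowtie p$, $\bowtie c$ and $\mod b$ operations test membership in these progressions and implement resets. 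The single-parameter assumption is what makes the entry and exit points of each region expressible in terms of $z$, $N$, and the small constants $a_j,b_j$, and this is precisely why $+[0,p]$ is unnecessary. Non-negativity and the $[0,4\cdot\max(N,|\C|)]$ bound are obtained at the end by a uniform additive shift of the counter by $2u$ with $u=\max(c_{\max}+2,N)$, not by a dynamic cap.
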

\end{samepage}
	A more general (but strictly speaking incomparable) 
	result involving two parametric clocks
	but an arbitrary number of parameters instead of only one
has already been proven in \cite{BundalaO17},
however with a different POCA formalism:
Bundala and Ouaknine's model for POCA differs in that
	it contains operations 
	that allow to nondeterministically add to the counter 
	a value that lies in $[0, p]$. 
By restricting ourselves to the case of only one parameter $p$, we will prove
in a thorough analysis that we no longer need such operations in the construction.

As in \cite{BundalaO17} we follow the following proof strategy:
\begin{itemize}
\item In Subsection~\ref{A to B section}, we reduce the reachability problem of a parametric timed automaton $\A=(Q_\A,\Omega_{\!\A},P,R_\A,q_{\A},F_\A)$
--- in our setting later with two parametric clocks ---
		to the reachability problem of a 
		so-called {\em parametric $0/1$ timed automaton}
		$\B=(Q_\B,\Omega_\B,P,R_{\B,0},R_{\B,1}, q_{\B},F_\B)$, 
where $\Omega_\B \subseteq \Omega_{\!\A}$ contains only the 
		non-parametric clocks of $\Omega_{\!\A}$,  and $\Const(\B)=\{ 0 \}$.

\item In Subsection~\ref{preliminary construction section}
we present the region abstraction technique introduced by Alur and Dill in \cite{alur1994theory} 
		to mimic region-restricted runs (runs inside a region)
		of parametric $0/1$ timed automata
		with one parameter by arithmetic progressions.

\item Finally, we present the final step of the reduction 
	in Subsection~\ref{A to C}, where it is shown
	how to use the above-mentioned technique to mimic 
		reset-free region-restricted runs in $\B$, and 
furthermore how to provide a construction in order to 
		mimic resets in $\B$.
The precise construction itself mainly deviates from \cite{BundalaO17}
in the gadget construction for resets.
\end{itemize}

\subsection{How to remove non-parametric clocks and non-parametric guards}\label{A to B section}

In this subsection we show how non-parametric guards and non-parametric clocks can 
be eliminated from parametric timed automata.
Initially introduced in \cite{AHV93-stoc} we define
the notion of parametric $0/1$ timed automata:
these are essentially parametric timed automata in which each rule 
dictates 
whether a unit of time passes or not.
Alur, Henzinger and Vardi have already shown 
in \cite{AHV93-stoc} how the reachability problem for 
parametric timed automata can be reduced to the 
reachability problem for parametric $0/1$ timed automata
that do not contain any non-parametric clocks. 
We will provide in Lemma~\ref{eliminating non-parametric clocks} below an analogous reduction by not only eliminating all non-parametric clocks, but also all non-parametric guards (except
for empty guards).

A {\em parametric $0/1$ timed automaton } ({\em $0/1$-PTA} for short) is a tuple
$$\B=(Q,\Omega,P, R_0, R_1, q_{init}, F),$$ where
$\B_i=(Q,\Omega,P, R_i, q_{init}, F)$ is a PTA for all $i \in \{0,1\}$.
For simplicity we define its {\em size}
as $|\B|=|\B_0|+|\B_1|$.
Analogously, a clock $\omega\in \Omega$ is {\em parametric}
if it is parametric in $\B_0$ or in $\B_1$.
We analogously denote the constants of $\B$ 
by $\Const(\B)$ and its configurations by  $\Conf(\B)$.

\begin{definition}
	For each $i\in\{0,1\}$,
	each parameter valuation $\mu:P\rightarrow\N$ 
and each $(\delta,t)\in R_i\times\N$ with $\delta = (q,g,U,q')\in R_i$,
 we 
define the binary relation $\xrightarrow{\delta,i,\mu}$ over 
$\Conf(\B)$ as 
$q(v)\xrightarrow{\delta,i, \mu} q'(v')$ if
	$v+i \models_{\mu} g$, 
	$v'(u)=0$ for all $u \in U$ and $v'(\omega)=v(\omega)+ i$ for all $\omega \in \Omega \setminus U$.
\end{definition}

As expected, we write $q(v)\xrightarrow{\mu}q'(v')$ if 
$q(v)\xrightarrow{\delta,i,\mu}q'(v')$ for some 
$i\in\{0,1\}$, and some $\delta \in R_i$.
The notions of a (reset-free) $\mu$-run (resp. $N$-run) and when reachability holds
for $\B$ are also defined as expected.

The convention used in this and the following subsections is that parametric $0/1$ timed automata are denoted by $\B$. 
The main result of this subsection is the following lemma,
stated slightly less general in \cite{AHV93-stoc} in that 
there is no requirement $\Const(\B)=\{0\}$.

\begin{samepage}
\begin{lemma}[\cite{AHV93-stoc}]
	\label{eliminating non-parametric clocks}
The following is computable in exponential time:\newline

        INPUT: A PTA $\A=(Q_\A,\Omega_{\!\A},P,R_\A,q_{\A},F_\A)$.

        OUTPUT: A  $0/1$-PTA $\B=(Q_\B,\Omega_\B,P,R_{\B,0},R_{\B,1}, q_{\B},F_\B)$, where $\Omega_\B \subseteq \Omega_{\!\A}$ contains precisely the parametric clocks of 
	$\Omega_{\!\A}$, $\Const(\B)=\{ 0 \}$, and
         such that
                          reachability holds for $\A$, if, and only if,
                          reachability holds for $\B$.
\end{lemma}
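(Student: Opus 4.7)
The plan is to use a region-style abstraction in which the values of all non-parametric clocks, together with the \emph{capped} values of the parametric clocks, are pushed into the control state, while the parametric clocks of $\A$ survive in $\B$ to evaluate parametric guards of the form $\omega\bowtie p$. Let $M=\max(\Const(\A))$ if $\Const(\A)\neq\emptyset$ and $M=0$ otherwise; since constants in $\A$ are written in binary, $M\leq 2^{|\A|}$. The plan is to define $\Omega_\B$ as the set of parametric clocks of $\A$ and to take
$$Q_\B \ = \ Q_\A\times [0,M+1]^{\Omega_{\!\A}},$$
interpreting the second component as a vector $\vec s$ where $\vec s(\omega)=\min(v(\omega),M+1)$ records the value of clock $\omega\in\Omega_{\!\A}$ capped at $M+1$ (which represents ``$\omega>M$''). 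The initial state is $(q_\A,\vec 0)$ and the final states are $F_\A\times[0,M+1]^{\Omega_{\!\A}}$. The rules are then built so that \emph{every} guard used in $\B$ is either an empty guard of the form $\omega\geq 0$ (which contributes only the constant $0$ to $\Const(\B)$) or a parametric guard $\omega\bowtie p$ (which contributes no constant at all), giving $\Const(\B)=\{0\}$.

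The rules of $\B$ come in two families. For each $(q,\vec s)\in Q_\B$, put into $R_{\B,1}$ a single ``tick'' rule $((q,\vec s),g_\epsilon,\emptyset,(q,\vec s\oplus 1))$ where $(\vec s\oplus 1)(\omega)=\min(\vec s(\omega)+1,M+1)$; this simulates one unit of time elapsing while updating the capped values. For each rule $\delta=(q,g,U,q')$ of $\A$ and each $(q,\vec s)\in Q_\B$ such that $\vec s$ satisfies the \emph{non-parametric} part of $g$ (that is: if $g=\omega\bowtie k$ with $k\in\N$, then $\vec s(\omega)\bowtie k$ holds in $[0,M+1]$ with the convention $M+1>k$; otherwise no restriction), put into $R_{\B,0}$ the rule $\bigl((q,\vec s),g',U\cap\Omega_\B,(q',\vec s')\bigr)$, where $\vec s'(\omega)=0$ if $\omega\in U$ and $\vec s'(\omega)=\vec s(\omega)$ otherwise, and where $g'=g$ if $g$ is parametric (hence on a parametric clock, hence tested directly on the surviving clock in $\Omega_\B$) and $g'=g_\epsilon$ otherwise.

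The plan for correctness is to show that the map $\Phi$ sending a configuration $q(v)$ of $\A$ to the configuration $(q,\vec s)(v{\restriction}\Omega_\B)$ of $\B$, with $\vec s(\omega)=\min(v(\omega),M+1)$, is a step-for-step bisimulation for every fixed parameter valuation $\mu$. The forward direction simulates a transition $q(v)\xrightarrow{\delta,t,\mu}q'(v')$ of $\A$ in $\B$ by $t$ consecutive tick rules from $R_{\B,1}$ followed by the $0$-time rule from $R_{\B,0}$ associated with $\delta$; the key sanity check is that for any non-parametric atomic guard $\omega\bowtie k$ with $k\leq M$ one has $(v+t)(\omega)\bowtie k$ if and only if $\min((v+t)(\omega),M+1)\bowtie k$, so the precondition on $\vec s$ correctly captures the guard. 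The backward direction groups any maximal run of tick rules preceding a $0$-rule to recover the elapsed time $t$ and the underlying rule $\delta$ of $\A$; resets of non-parametric clocks that are not in $\Omega_\B$ are correctly simulated because they only affect the $\vec s$ component. Running this bisimulation from $(q_\A,\vec 0)$ to any state in $F_\A\times[0,M+1]^{\Omega_{\!\A}}$ shows that reachability holds for $\A$ under $\mu$ iff it holds for $\B$ under $\mu$, which gives the existence equivalence required by the lemma.

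The main obstacle will be bookkeeping rather than ideas: one must carefully verify that the tick rules correctly update the cap (in particular, once $\vec s(\omega)=M+1$ it stays there, matching the fact that once $v(\omega)>M$ further time elapse is irrelevant for non-parametric guards), and that in the $0$-time rules the reset set $U\cap\Omega_\B$ together with the update of $\vec s$ correctly mirrors the reset set $U$ in $\A$ for both parametric and non-parametric clocks. For the complexity bound, $|Q_\B|=|Q_\A|\cdot(M+2)^{|\Omega_{\!\A}|}\leq |Q_\A|\cdot 2^{O(|\A|^2)}$, and each rule of $\A$ spawns at most $|Q_\B|$ rules in $\B$, so $|\B|=2^{\mathrm{poly}(|\A|)}$ and the whole construction can be produced in exponential time by enumerating $Q_\A\times[0,M+1]^{\Omega_{\!\A}}$ together with the rules of $\A$.
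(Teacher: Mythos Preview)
Your proposal is correct and matches the paper's construction essentially verbatim: the paper also takes $Q_\B=Q_\A\times[0,c_{\max}+1]^{\Omega_{\!\A}}$, uses $R_{\B,1}$ for the time-tick rules that increment all capped values, and for each rule of $\A$ puts into $R_{\B,0}$ a $0$-time rule that keeps parametric guards verbatim, replaces satisfied non-parametric guards by $g_\epsilon$, resets $U\cap\Omega_\B$, and zeroes the corresponding entries of the stored vector. Your bisimulation argument and complexity estimate are a bit more explicit than what the paper writes, but the underlying idea is the same.
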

\end{samepage}

We adjust the proof from \cite{AHV93-stoc}. While the idea of the construction remains the same, ours slightly deviates  in that we explicitly have
$\Const(\B)=\{ 0 \}$, i.e. we remove all non-parametric guards of the form $\omega \bowtie c$ with $c \neq 0$ as well as 
all non-parametric clocks.

\begin{proof}
Let us assume without loss of generality that $\A$ contains at least one parametric clock and let us fix one such clock $x$. We define the empty guard $g_\epsilon$ as $g_\epsilon = x \geq 0$ and observe that this guard is always satisfied.
Let $c_{max} = \max(\Const(\A))$ denote the largest constant appearing in $\A$.
Note that once the value assigned to a clock $\omega$ by a valuation $v$ is strictly above  $c_{max}$, the 
	precise value $v(\omega)$ is no longer of importance,
	merely the fact that $v(\omega)$ exceeds $c_{max}$ is relevant.
	Since we work with discrete time configurations, the value assigned to $\omega$ is always a non-negative integer. 
	We will eliminate all non-parametric clocks of $\Omega_{\!\A}$ by storing in the state space of $\B$ 
	the values of clocks up to $c_{max}+1$, where $c_{max}+1$ 
	will stand for any value greater $c_{\max}$.
	Moreover we eliminate all non-empty non-parametric 
	guards by also storing in the state space of $\B$ the values of 
	parametric clocks in the same fashion. 
	Formally, we define $ \Omega_\B = \{ \omega \in \Omega_{\!\A}\ | \text{ $\omega$ is parametric} \} $, $ Q_{\B} =  Q_{\A} \times [0, c_{max} + 1]^{  \Omega_{\!\A}  }$, $P$ is the same in both automata,  
	$F_\B = F_{\A} \times [0, c_{max} + 1]^{  \Omega_{\!\A}  }$,
	and $q_{\B} = (q_{\A}, v_0)$, where 
	$v_0(\omega)=0$ for all $\omega \in \Omega_{\!\A}$.

We ensure that the stored clocks progress simultaneously with the remaining parametric clocks by exploiting the fact 
	that the rules dictate whether or not time elapses, and build the rules of $\B$ such 
	that the $+1$ rules correspond to the progress of time in $\A$ whereas the $+0$ rules correspond to using a rule in $\A$. Formally,

\begin{itemize}
	\item for every $q \in Q_{\!\A}$, $ v \in [0, c_{max} + 1]^{  \Omega_{\!\A}  } $, 
 we introduce a rule of the form 
		$( (q,v) ,  g_\epsilon , \emptyset , (q,v'))$ in $R_{\B,1}$, where $v'(\omega)=\min\{v(\omega)+1,c_{max}+1\}$ 
		for all $\omega \in \Omega_{\!\A}$,
\item for every $(q,g,U,q') \in R_\A$ with $g \in \G(\Omega_\B,P)$ a parametric guard, 
every  $ v \in [0, c_{max} + 1]^{  \Omega_{\!\A}  } $
we introduce a rule $( (q,v) , +0 , g , U', (q',v'))\in R_{\B,0}$, 
where $v'$ 
is obtained from $v$ 
except for assigning $0$ to 
every clock in $U$ and $U' = U \cap \Omega_\B$ is the subset of parametric clocks of $U$, and

\item for every $(q,g,U,q') \in R_\A$ with $g \in \G(\Omega_{\!\A},P)$ a non-parametric guard, 
every  $ v \in [0, c_{max} + 1]^{  \Omega_{\!\A}  } $
 such that $v \models g$, 
we introduce a rule $( (q,v) , +0 , g_\epsilon , U', (q',v'))\in R_{\B,0}$, 
where $v'$ is obtained from $v$ 
except for assigning $0$ to 
every clock in $U$ and $U' = U \cap \Omega_\B$ is the subset of parametric clocks of $U$. 
\end{itemize}

\end{proof}

For the remaining subsections, let us fix a PTA $\A=(Q_\A,\Omega_{\!\A},P,R_\A,q_{\A},F_\A)$ with two parametric clocks $x$ and $y$, and with $P=\{p\}$. Let us
also fix the $0/1$-PTA $\B=(Q_\B,\Omega_\B,P,R_{\B,0}, R_{\B,1},q_{\B},F_{\B})$ 
produced by Theorem~\ref{eliminating non-parametric clocks} applied to PTA $\A$,
and recall that $\B$ satisfies
\begin{itemize}
\item $P=\{p\}$,
\item $\Omega_\B = \{x,y\}$, where $x$ and $y$ are parametric, 
\item $\Const(\B)=\{ 0 \}$, and
\item reachability holds for $\A$ if, and only if, reachability
	holds for $\B$.
\end{itemize}

\subsection{Capturing reset-free runs via the region abstraction technique}\label{preliminary construction section}

In this section we perform another preliminary construction before providing the proof of Theorem~\ref{theorem ptatopoca}. 
We build parametric one-counter automata without tests and 
with updates only in
$\{+0, +1\}$ that can mimic the behavior of parametric 
$0/1$ timed automata with two parametric clocks 
and one parameter 
inside a reset-free run
having only clocks valuations in a certain set. 
We first simply remove rules resetting at least one clock.
We then show how to remove non-empty guards from parametric $0/1$ timed automata taking inspiration from the region abstraction technique for timed automata first introduced in \cite{alur1994theory}. The technique appears already in the proofs of reduction
from parametric timed automata with two clocks to parametric one-counter automata
given in \cite{haase2012complexity,HaaseOW16} (for empty sets of parameters) and in \cite{BundalaO17}.
We refer to \cite{baier2008principles} for further discussions on the region abstraction technique.

Recall that our fixed $0/1$-PTA $\B$ satisfies
$P=\{p\}$,
$\Omega_\B = \{x,y\}$, where $x$ and $y$ are parametric, and
$\Const(\B)=\{0\}$.

Let us now explain the set of regions.
 For any valuation $\mu$ that assigns to our only paramter $p$
 the value $N$ we prefer to write $\models_N$ instead of 
 $\models_\mu$.
Moreover, we prefer
to view clock valuations $v : \{ x, y \} \rightarrow \N$ as pairs $(v(x),v(y))$. Sets of
clock valuations will correspondingly be denoted as subsets 
of $\N \times \N$.
The regions are essentially, 
when assigning $N$ to the one parameter $p$, maximal subsets of $\N \times \N$ 
 equivalent with regards to the sets of guards of $\B$ their valuations satisfy.
In other words, the regions we define are equivalence classes for the relation $\sim_N$, where 
$v \sim_N v'$ if for all possible guards $g$ of $\B$ we have $ v \models_N g$ if, and only if, $v' \models_N g$. 
Since the latter guards can only compare (using comparisons $<, \leq, = , \geq, >$)  the 
clock valuations against values from the
set $\{0,N\}$, it follows that $\sim_N$ has at most the following $16$ equivalence classes, each of which we call {\em region}
in the following:

	\begin{eqnarray*}
			&( 0 , 0 ), (0, N),(N,0),(N,N)  \\
			&\text{ to respectively denote the singleton sets } \{ (0,0) \},  \{ (0,N) \}, \{ (N,0) \}, \{ (N,N) \}, \\ \\
			&( 0 , 0 ) \leftrightarrow ( 0 , N ), (N , 0 ) \leftrightarrow ( N , N ),
( 0 , N ) \leftrightarrow ( 0 , +\infty ),  ( N , N ) \leftrightarrow ( N , +\infty ), \\
	&\text{ to respectively denote the sets }  \\
	& \{ (0,i) \ | \ 0 < i < N \}, \{ (N,i) \ | \ 0 < i < N \}, \{ (0,i) \ | \  i > N  \}, \{ (N,i) \ | \  i > N  \}, \\ \\
&( 0 , 0 ) \leftrightarrow ( N , 0 ), (0 , N ) \leftrightarrow ( N , N ),
(  N,0 ) \leftrightarrow (  +\infty,0 ),  ( N , N ) \leftrightarrow ( +\infty,N ), \\
	&\text{ to respectively denote the sets }  \\
	& \{ (i,0) \ | \ 0 < i < N \}, \{ (i,N) \ | \ 0 < i < N \}, \{ (i,0) \ | \  i > N  \}, \{ (i,N) \ | \  i > N  \}, \\ \\
	& \textsc{Lower-Left}, \textsc{Upper-Left}, \textsc{Lower-Right}, \textsc{Upper-Right} \\
	&\text{ to respectively denote the sets }  \\
	& \{ (i,j) \ | \ 0 < i,j < N \},\{ (i,j) \ | \ 0 < i < N, j > N \}, \{ (i,j) \ | \ i > N, 0 < j < N  \}, \{ (i,j) \ | \  i,j > N  \}. 
	\end{eqnarray*}

We refer to Figure~\ref{region} for an illustration
of the different regions.

\begin{center}
	\begin{figure}
		\hspace{2cm}
\includegraphics[width=0.7\textwidth]{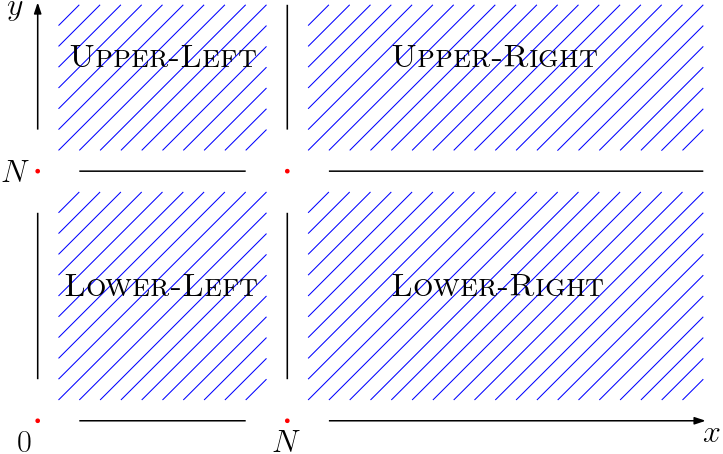}
	\caption{An illustration of the different regions. }
	\label{region}
	\end{figure}
\end{center}

\renewcommand{\R}{\mathcal{R}}

\noindent
As expected, for every guard $g$ of $\B$
and every region $\R$ we write $\R \models_N g$ if $v\models_N g$ for all $v \in \R$. 
For each region $\R$  we say a run 
$q_0(v_0)\xrightarrow{\delta_1,i_1,\mu}q_1(v_1)\cdots\xrightarrow{\delta_n,i_n,\mu} q_n(v_n)$
of $\B$ is {\em $\R$-restricted} if $v_j\in\R$ for all $j\in[0,n]$.

We remark that, in any $\R$-restricted run in $\B$, the set of guards being satisfied or
not are the same for all configurations appearing in it. 
Thus, the set of guards that are satisfied only depend on the region 
and not the particular configurations of the $\R$-restricted run.
We simply write $\R\models g$ when a region $\R$ satisfies guard $g$.

We use this property to remove guards from the parametric $0/1$ timed automaton
$\B$ while still mimicking reset-free $\R$-restricted runs. 

For each region $\R$ we introduce the {\em region automaton} 
$\B_{\R}$ obtained from $\B$ instantiating all comparisons appropriately and by 
removing all rules that reset some clock. 
We fix $g_\epsilon$ to be the empty guard  $ x \geq 0$.
Formally, the automaton $\B_{\R}$ is the $0/1$-PTA obtained from $\B$ by 
\begin{samepage}
\begin{itemize}
\item removing all rules $(q,g,U,q')$ with $U \neq \emptyset$,
\item removing all rules $(q,g,\emptyset,q')$ for which $\R \nmodels g$, and
\item replacing all rules  $(q,g,\emptyset,q')$ for which $\R \models g$ by 
	$(q,g_\epsilon,\emptyset,q')$.
\end{itemize}
\end{samepage}

The following lemma is immediate.

\begin{lemma}\label{region simulation}
	From the  $0/1$-PTA $\B=(Q_\B,\{x,y\},\{p\}, R_{\B,0}, R_{\B,1}, q_{\B}, F_\B)$  
	with $\Const(\B)=\{0\}$ one can compute in polynomial time (in $|\B|$) 
	the sixteen
	$0/1$-PTA $\{ \B_\R | \ \R\text{ is a region}\}$ such that for all $N \in \N$, all 
	regions $\R$, and all configurations $q(v)$ and $q'(v')$
	for which $v,v'\in\R$ the following are equivalent:
	\begin{itemize}
	\item There exists an $\R$-restricted reset-free $N$-run from $q(v)$ to $q'(v')$ in $\B$.
	\item There exists an $N$-run from $q(v)$ to $q'(v')$ in $\B_\R$.
	\end{itemize}
\end{lemma}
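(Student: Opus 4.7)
The construction itself is straightforward and polynomial in $|\B|$: for each of the sixteen regions $\R$, we build $\B_\R$ by keeping the state space, clocks, parameter and initial/final control states of $\B$; we then iterate once through $R_{\B,0}$ and $R_{\B,1}$, deleting every rule $(q,g,U,q')$ with $U \neq \emptyset$, deleting every rule $(q,g,\emptyset,q')$ with $\R \nmodels g$, and replacing every remaining rule by $(q,g_\epsilon,\emptyset,q')$. Whether $\R \models g$ can be decided in constant time by looking at the (at most two) constants $\{0,N\}$ that could appear in $g$ and at the defining inequalities of $\R$; crucially, since the regions are equivalence classes of $\sim_N$, the validity of any guard of $\B$ is uniform over $\R$, independently of $N$.

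For the forward direction, consider an $\R$-restricted reset-free $N$-run $q_0(v_0) \xrightarrow{\delta_1,i_1,\mu} \cdots \xrightarrow{\delta_n,i_n,\mu} q_n(v_n)$ in $\B$ with $v_0 = v$ and $v_n = v'$. Each rule $\delta_j = (q_{j-1},g_j,\emptyset,q_j)$ fires because $v_{j-1}+i_j = v_j \in \R$ satisfies $g_j$; by uniformity $\R \models g_j$, so the corresponding rule $(q_{j-1},g_\epsilon,\emptyset,q_j)$ survives in $\B_\R$. Applying this sequence in $\B_\R$ (with identical time elapses $i_j$) produces an $N$-run from $q(v)$ to $q'(v')$, as desired.

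For the backward direction, let $q_0(v_0) \xrightarrow{\delta_1',i_1,\mu} \cdots \xrightarrow{\delta_n',i_n,\mu} q_n(v_n)$ be an $N$-run in $\B_\R$ with $v_0 = v, v_n = v' \in \R$. Each $\delta_j'$ comes from a unique reset-free rule $\delta_j = (q_{j-1},g_j,\emptyset,q_j)$ of $\B$ satisfying $\R \models g_j$; if we can show that every intermediate $v_j$ lies in $\R$, then replaying the same rules with their original guards yields the sought $\R$-restricted reset-free $N$-run in $\B$. Reset-freeness forces $v_j = v + (t_j,t_j)$ with $t_j = i_1 + \cdots + i_j \in [0,T]$, where $T = t_n$ and $v' = v + (T,T)$. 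The step that remains is therefore a \emph{convexity} check: for each of the sixteen regions $\R$, if $v,v+(T,T) \in \R$ then $v+(t,t) \in \R$ for all $t\in[0,T]$.

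The only real (but entirely routine) work is this case distinction. For the singleton regions, $v = v'$ forces $T=0$. For any region pinning one of the coordinates to a constant in $\{0,N\}$ (e.g.\ $\{(0,i) : 0 < i < N\}$ or $\{(N,i) : i > N\}$), the equality $v' = v+(T,T)$ on that coordinate again forces $T=0$, so there are no intermediate configurations to examine. Only the four ``open'' regions $\textsc{Lower-Left}$, $\textsc{Lower-Right}$, $\textsc{Upper-Left}$, $\textsc{Upper-Right}$ allow $T > 0$, and there convexity is immediate: each coordinate varies inside the closed interval between its endpoints, which by assumption sits inside the open interval $(0,N)$ or $(N,\infty)$ defining the region. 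This finishes the proof.
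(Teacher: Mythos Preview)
Your proof is correct and carries out in full the argument the paper declines to give (it just declares the lemma ``immediate'' after the definition of $\B_\R$). The only nitpick is the word ``unique'' in the backward direction: several rules $(q,g,\emptyset,q')$ of $\B$ with different guards $g$ satisfied by $\R$ collapse to the same rule $(q,g_\epsilon,\emptyset,q')$ in $\B_\R$, so the preimage need not be unique---but existence of \emph{some} preimage is all you use, so the argument is unaffected.
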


\subsubsection{Capturing reset-free runs via arithmetic progressions}

A {\em one-counter automaton} is a POCA $\C = (Q,P,R,q_{init},F)$ with $P=\emptyset$, and with
only $>0$, $\geq 0$, and $= 0$ tests. 
As $P=\emptyset$, we write $q(z)\xrightarrow{} q'(z')$
instead of $q(z)\xrightarrow{\mu}q'(z')$ for one-counter automata.

Given a one-counter automaton $\C$ and two of its 
control states $q$ and $q'$ we define the set $\Pi(\C,q,q')$ of counter values 
that configurations in control state $q'$ can have from runs
starting in $q(0)$:
$$
\Pi(\C,q,q') = \{ v \in \N \ | \ q(0) \rightarrow^* q'(v) \}. 
$$

For all $a\geq 0$ and $b\geq 1$ we define the arithmetic progression $a+b\N$ as 
$a+b\N=\{a + b\cdot n \mid n\in\N\}$. 
The following theorem in an immediate consequence of a result by To
analyzing the succinctness between unary finite automata and arithmetic progressions \cite{To09B}.

\begin{theorem}[Theorem 2 in \cite{To09B}]\label{semilinearity}
Let $\C=(Q,\emptyset,R,q_{init},F)$ be a one-counter automaton with $+0, +1$ updates only. Then for every two control states $q,q' \in Q$ one can compute in polynomial time a set
$\{(a_j,b_j) \in \N^2\mid j \in[1,r]\}$  such that
$\Pi(\C, q, q') = \bigcup_{1 \leq j \leq r} a_j + b_j \N $, where moreover $r \in O(|Q|^2)$,
$a_j \in O(|Q|^2)$, and $b_j \in O(|Q|)$ for all $j \in [1,r]$.

\end{theorem}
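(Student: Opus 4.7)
The plan is to reduce this to a statement about unary nondeterministic finite automata, for which Chrobak's normal form theorem (with the polynomial-time refinements due to To) yields exactly the claimed bounds. The key observation is that since the only updates are $+0$ and $+1$, the counter value along any run is non-decreasing; consequently, the counter is $0$ precisely on a (possibly empty) prefix of the run before the first $+1$ transition, and is strictly positive thereafter. In particular, the $\geq 0$ tests are always satisfied, the $=0$ tests are only relevant in the ``zero phase,'' and the $>0$ tests are only relevant in the ``positive phase.''

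First I would construct an auxiliary automaton on the state set $Q \times \{0,+\}$, where the second component records whether the counter is currently zero or strictly positive. A $+0$ rule of $\C$ is copied in both layers; a $+1$ rule sends $(q,0)$ and $(q,+)$ both to $(q',+)$; a $=0$ test is allowed only from the $0$-layer, a $>0$ test only from the $+$-layer, and a $\geq 0$ test from either. Then I would view this automaton as a unary NFA $\mathcal{N}$ over the alphabet $\{a\}$, where every $+1$ transition reads one letter $a$ and every other transition is an $\varepsilon$-transition. By construction, for all $v \in \N$ we have $v \in \Pi(\C,q,q')$ if, and only if, $a^v$ is accepted by $\mathcal{N}$ with initial state $(q,0)$ and sole final state $(q',0)$ or $(q',+)$ (treating both possibilities by duplication). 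Note that $\mathcal{N}$ has $2|Q|$ states.

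Next I would invoke the polynomial-time version of Chrobak's normal form theorem as stated in \cite{To09B}: every unary NFA on $n$ states accepts a set of word lengths that is a union of $O(n^2)$ arithmetic progressions $a_j + b_j\N$ with $a_j \in O(n^2)$ and $b_j \in O(n)$, and this decomposition is computable in time polynomial in $n$. Applying this to $\mathcal{N}$ (with $n = 2|Q|$) yields the desired decomposition of $\Pi(\C,q,q')$, with all bounds polynomial in $|Q|$ as required.

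The main obstacle is really the invocation of Chrobak's normal form with the refined polynomial bounds: proving that a unary NFA admits such a decomposition into short-period arithmetic progressions (rather than a single progression whose period could blow up like Landau's function $e^{\Theta(\sqrt{n \log n})}$) is the nontrivial content, and it is exactly what \cite{To09B} supplies. The rest of the argument, namely the reduction from one-counter automata with $\{+0,+1\}$ updates to unary NFAs via the zero/positive annotation, is routine and only blows up the state space by a factor of two.
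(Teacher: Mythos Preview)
Your proposal is correct and matches the paper's treatment: the paper does not prove this theorem at all but simply cites it as ``an immediate consequence of a result by To analyzing the succinctness between unary finite automata and arithmetic progressions'' \cite{To09B}. Your reduction---annotating states by whether the counter is still zero or already positive (exploiting monotonicity of the counter under $+0,+1$ updates only), viewing $+1$ as reading a letter and everything else as $\varepsilon$, and then invoking Chrobak's normal form with To's polynomial bounds---is exactly the intended bridge, and the factor-two blowup in states is absorbed by the asymptotic bounds.
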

We remark that Theorem~\ref{semilinearity} also holds in the presence
of transitions that decrement the counter, cf. Lemma 6 in \cite{GMT09}.

\begin{remark}\label{Region POCA}
	Let $\R$ be a region and let $\B_{\R}=(Q_{\B_\R},\{x,y\},\{p\},R_{\B_\R,0},R_{\B_\R,1},
	q_{\B_\R,init},F_{\B_\R})$ be the $0/1$-PTA for $\R$.
	Then all rules in $R_{\B_\R,0}\cup R_{\B_\R,1}$ have as guard the empty guard $g_\varepsilon$.
	Let 
	$\widehat{\B_\R}=(Q_{\B_\R},\emptyset,R,q_{\B_\R,init},F_{\B_\R})$
be the one-counter automaton, 
	where $$R=\{(q,+i,q')\mid (q,g_\varepsilon,\emptyset,q')\in R_{\B_\R,i}, i\in\{0,1\}\}$$
	only contains $+0$ and $+1$ updates and does not contain any $=0$-tests,
	Then for all $(k,\ell)\in\N\times\N$, all $q,q'\in Q$, and
	all $n\in\N$ the following are equivalent:
	\begin{itemize}
		\item There is a run from $q(k,\ell)$ to $q'(k+n,\ell+n)$ in $\B_\R$.
		\item There is a run from $q(0)$ to $q'(n)$ in $\widehat{\B_\R}$.
	\end{itemize}
	Notably, every run from $q(k,\ell)$ to $q'(k',\ell')$ in $\B_{\R}$ satisfies
	$k'=k+n$ and $\ell'=\ell+n$ for some $n\in\N$.
\end{remark}

We apply Theorem~\ref{semilinearity} to all one-counter automata $\widehat{\B_R}$
from Remark~\ref{Region POCA}.
This yields the following characterization.

\begin{lemma}\label{lemma arithmetic progression}
	From the $0/1$-PTA $\B=(Q_\B,\{x,y\},\{p\}, R_{\B,0}, R_{\B,1}, q_{\B}, F_\B)$  
for every two control states $q,q' \in Q_\B$, for all regions $\R$
	one can compute in polynomial time (in $|\B|$) a set
$\{(a_j,b_j) \in \N^2 | \ j \in[1,r]\}$  
	such that for all $N,t\in \N$ and all $v,v+t\in\R$ the following are equivalent:
\begin{itemize}
\item There exists a reset-free $\R$-restricted $N$-run from $q(v)$ to $q'(v + t)$ in $\B$.
\item $t \in \bigcup_{1 \leq j \leq r} a_j + b_j \N $.
\end{itemize}
Moreover, $r \in O(|Q_\B|^2)$,
$a_j \in O(|Q_\B|^2)$, and $b_j \in O(|Q_\B|)$ for all $j \in [1,r]$.
\end{lemma}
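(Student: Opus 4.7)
The plan for proving Lemma~\ref{lemma arithmetic progression} is to compose the three preceding results, namely Lemma~\ref{region simulation}, Remark~\ref{Region POCA}, and Theorem~\ref{semilinearity}, into a single chain of equivalences. First, I would apply Lemma~\ref{region simulation} to pass from reset-free $\R$-restricted $N$-runs in $\B$ to arbitrary $N$-runs in the region $0/1$-PTA $\B_{\R}$. Since, by the very definition of $\B_{\R}$, every surviving rule carries the empty guard $g_\epsilon$ and resets no clock, the integer $N$ plays no role in whether an $N$-run exists in $\B_{\R}$; only the sequence of $+0/+1$ edges chosen matters.

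Next, since $\B_{\R}$ has only empty guards and no resets, I would invoke Remark~\ref{Region POCA} to translate a run from $q(v)$ to $q'(v+t)$ in $\B_{\R}$ into a run from $q(0)$ to $q'(t)$ in the one-counter automaton $\widehat{\B_{\R}}$, which has $|Q_\B|$ control states and updates only in $\{+0,+1\}$ (and no zero tests). The hypothesis $v,v+t\in\R$ in the lemma is precisely what makes the $\R$-restricted run hypothesis nontrivial at the endpoints; no further regional constraint is needed since $\B_{\R}$ has no guard involving the clock values after the replacement by $g_\epsilon$.

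Then I would apply Theorem~\ref{semilinearity} to $\widehat{\B_{\R}}$ to obtain in polynomial time a finite family $\{(a_j,b_j)\in\N^2\mid j\in[1,r]\}$ satisfying
\[
\Pi(\widehat{\B_{\R}}, q, q') \;=\; \bigcup_{1\le j\le r} a_j + b_j\N,
\]
together with the bounds $r\in O(|Q_\B|^2)$, $a_j\in O(|Q_\B|^2)$, and $b_j\in O(|Q_\B|)$. Concatenating all three equivalences yields precisely the statement of the lemma: a reset-free $\R$-restricted $N$-run from $q(v)$ to $q'(v+t)$ exists in $\B$ if, and only if, $t\in\bigcup_{j=1}^{r} a_j+b_j\N$.

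Concerning complexity, the construction of $\B_{\R}$ by Lemma~\ref{region simulation} is polynomial in $|\B|$, the conversion to $\widehat{\B_{\R}}$ described in Remark~\ref{Region POCA} is immediate, and Theorem~\ref{semilinearity} runs in polynomial time. Performing this procedure for each of the constantly many (sixteen) regions $\R$ and for each of the $|Q_\B|^2$ pairs $(q,q')$ thus remains polynomial in $|\B|$. I do not foresee any real obstacle here: essentially all the technical content has already been discharged in the earlier results, and the only point requiring care is the bookkeeping that the endpoint condition $v,v+t\in\R$ is exactly what Lemma~\ref{region simulation} and Remark~\ref{Region POCA} consume, so no additional regional constraint on intermediate configurations needs to be imposed.
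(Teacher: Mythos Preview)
Your proposal is correct and follows essentially the same approach as the paper: the paper's entire proof is the single sentence ``We apply Theorem~\ref{semilinearity} to all one-counter automata $\widehat{\B_\R}$ from Remark~\ref{Region POCA},'' and you have spelled out precisely that chain Lemma~\ref{region simulation} $\to$ Remark~\ref{Region POCA} $\to$ Theorem~\ref{semilinearity} with the accompanying complexity bookkeeping.
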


\subsection{Proof of Theorem~\ref{theorem ptatopoca}: Construction of $\C$}\label{A to C}

Let us recall the fixed $0/1$-PTA  $\B=(Q_\B,\Omega_\B,\{p\},R_\B,q_{\B},F_\B)$ the $0/1$-PTA obtained from $\A$ by Theorem~\ref{eliminating non-parametric clocks}, and recall that $\B$ satisfies
\begin{itemize}
\item $P=\{p\}$,
\item $\Omega_\B = \{x,y\}$ and $x$ and $y$ are parametric, and
\item $\Const(\B)=\{ 0 \}$.
\end{itemize}
Recall also the set of regions of $\B$ defined in Section~\ref{preliminary construction section}.
We want to construct some POCA $\C=(Q_\C,\{p\},R_\C,q_c,F_\C)$ such that 
reachability holds for $\B$ if, and only if, reachability holds for $\C$ and moreover
for all $N \in \N$, every accepting $N$-run $\pi$ in $\C$ 
	satisfies
$\values(\pi)\subseteq[0, 4 \cdot \max(N,|\C|)]$.

\medskip

\noindent
The to be constructed POCA $\C$ (again over one parameter
that will be evaluated to the same value as the only parameter of $\B$)
will test whether an accepting $N$-run exists in $\B$ by using the definitions of regions and Lemma~\ref{lemma arithmetic progression} from the last subsection, but also using additional
gadgets to mimic the reset of a clock inside a particular region. 

\medskip

\noindent
In what follows we denote the current value of the counter of $\C$ by $z$. 
For the time being in our construction $z$ can be negative: we will later show how to 
obtain non-negativity and the required restriction that all $N$-runs $\pi$
of $\C$ satisfy $\values(\pi)\subseteq[0,4N]$.

\medskip

\noindent
The idea of the reduction is to factorize any possible accepting
$N$-run into maximal reset-free subruns. 
We will use the current counter value $z$ of $\C$ to store the clock valuation difference 
$v(x)-v(y)$, thus initially $0$. 
We remark that between two consecutive resets, the difference $v(x)-v(y)$ stays the same throughout, 
but after some clock of $\Omega_\B$ (either $x$ or $y$) is reset, this particular reset 
clock will be equal to zero but not necessarily the other one.
The counter of $\C$ therefore needs to be modified accordingly. 
As expected, we construct $\C$ in such a way that after a reset of $y$, the counter value 
$z$ equals $v(x)$, and after a reset of $x$ the counter value $z$ equals
$-v(y)$. 
See Figure~\ref{xyz} for an idea of the relationship between $v(x),v(y)$ and 
$z$ along the curve of the clock values.

Notice that once the value of a clock becomes strictly larger than $N$, its exact value is irrelevant to any future parametric comparison in $\B$, hence one only needs to remember that its value 
is strictly larger than $N$. 
Thus, our counter $z$ will only track the values $v(x)$ and $v(y)$ up to $N$ 
and possibly remember which of the two clock values exceeds $N$.
Therefore, when a reset occurs and we store the value of the other clock in the counter, if it exceeds this 
$N$
 we can and will  replace it by $N+1$, and if it is strictly below $-N$, we can and will replace it by $-N-1$. Let us therefore assume for now that the value of the counter $z$ following the last reset is in this interval $[-N-1, N+1]$. Initially this is surely true as initially the value of 
 the counter is $0$. 
 We will show how to provide this invariant on the next reset assuming it holds on the last reset.  

\begin{center}
	\begin{figure}
			\hspace{2cm}
\includegraphics[width=0.6\textwidth]{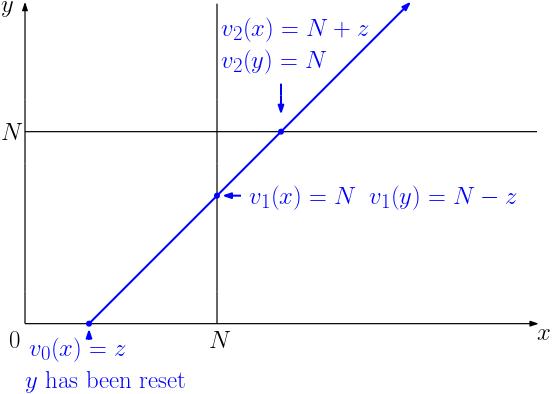}
	\caption{Curve of the clock values after a reset of clock $y$. Initially the difference $z$ between the values of $x$ and $y$ is equal to the value of $x$.}
	\label{xyz}
	\end{figure}
\end{center}

Recall the definition of regions from Subsection~\ref{preliminary construction section}. 
Let us assume a subrun $q(v)\xrightarrow{N}q'(v') \xrightarrow{N}^* q''(v'')\xrightarrow{N} q'''(v''')$ starting and ending by a reset of at least one of the two clocks $\{x,y\}$
and where $q'(v')\xrightarrow{N}^*q''(v'')$ is reset-free.
We want $\C$ to be able to check whether such a run can exist. 

\fbox{\parbox[t][1.4cm][c]{14.4cm}{
	For rest of the proof let us assume without loss of generlity that
$y$ is reset along $q(v)\xrightarrow{N}q'(v')$, where the latter configuration
can hence be written as $q'(z,0)$, as we want the counter $z$ to store the 
value of $x$.}}
\medskip

\begin{samepage}
The POCA $\C$ guesses
\begin{itemize}
\item the regions $\R_0$, $\R_1$, $\ldots$, $\R_l$ visited  and the order in which they are visited, where here by convention $\R_k$ denotes the region assumed to be the $k$-th visited region,
\item the control states $s_0, \ldots$, $ s_l$ when each region is visited for the first time,
\item the control states $q_0, \ldots$, $ q_l$ when each region is visited for the last time,
\item the control state $q''$ in which the next reset of $\B$ occurs, and
\item which clock is going to be reset next (either $x$ or $y$).
\end{itemize}
\end{samepage}
Note that there are only a finite number of regions.
Our POCA $\C$ then checks that the sequence $\R_0$, $\R_1$, $\ldots$ $\R_l$ is valid, 
retaining the counter value $z$.

First $\C$ checks that $(z,0)$ lies in $\R_0$ i.e. that $z$ is 
equal to $0$ if $\R_0 =  (0,0) $, 
strictly between $0$ and $N$ if $\R_0=(0,0) \leftrightarrow (N,0)$, 
equal to $N$ if $\R_0 = (N,0)$ and
strictly above $N$ if $\R_0=(N,0) \leftrightarrow (+\infty,0)$.
and moreover checks that the guessed regions are adjacent, and
that the regions can be visited in the guessed order.

\medskip

\noindent
Then $\C$ checks reachability within each individual region using 
Lemma~\ref{lemma arithmetic progression} as follows.
To each region $\R_k$ one can associate a set 
$\{(a_{k,j},b_{k,j}) \in \N^2 \mid j \in[1,r_k]\}$ obtained by
Lemma~\ref{lemma arithmetic progression}. 
This allows $\C$ to check, for every $k < l$, for every $v \in \R_k$, $v + t \in \R_k$, reachability of $q_k(v+t)$ from $s_k(v)$ in the region $\R_k$
by checking whether or not $t \in \bigcup_{1 \leq j \leq r} a_j + b_j \N $.
In order to check reachability inside a region $\R_k$ of the form 
$(\alpha,\beta)$ or
$(\alpha,\beta) \leftrightarrow (\gamma,\eta)$ 
for $\alpha,\beta \in \{0,N\}$,
and $\gamma,\eta \in \{0,N, +\infty\}$, it suffices to check that
$\bigcup_{1 \leq j \leq r} a_{k,j} + b_{k,j} \N $ contains $0$, as the clock values cannot both 
increment and remain inside these regions, i.e. for any such $\R_k$, for all $v \in \R_k$,
$v+t \in \R_k$ implies that $t = 0$.
Indeed, one can check easily check whether
$0 \in \bigcup_{1 \leq j \leq r} a_{k,j} + b_{k,j} \N $ by computing 
$\{(a_{k,j},b_{k,j}) \in \N^2 \mid j \in[1,r_k]\}$, which can be done
in polynomial time in $|\B|$.

Now, to check that an $N$-run exists in $\B$ in a given region $\R_k$ of the form 
$\textsc{Lower-Left} $, $\textsc{Lower-Right}$, $ \textsc{Upper-Left} $ or $\textsc{Upper-Right}$,
the automaton $\C$ furthermore distinguishes
whether the computation in the region $\R_k$ starts on the left side
	or on the bottom side,   
and whether 
	the computation in the region $\R_k$    
	 ends on the right side
	or on the top side,
 and uses the semilinearity property to check that the value added to the clocks is indeed in  
$\bigcup_{1 \leq j \leq r} a_{k,j} + b_{k,j} \N $.
Note that the first configuration of $\textsc{Lower-Left} $ is necessarily of the form 
$s_k(z+1,1)$ as $y$ has been assumed to be the last clock to be reset,
the first configuration of $\textsc{Lower-Right} $ is of the form
$s_k(z+1,1)$ or $s_k(N+1,N+1 - z)$, depending on whether it has been reached from 
the bottom or from the left corner (or possibly both),
and finally
note that $ \textsc{Upper-Left}$ cannot be reached if $y$ was the last clock to be reset.

Thus, to check reachability inside $\R_k$, our POCA $\C$ guesses an offset 
$a = a_{k,j}$ and a period $b=b_{k,j}$ among the generators of  
$\{(a_{k,j},b_{k,j}) \in \N^2 \mid j \in[1,r_k]\}$
 that it will use to reach $q_k$.
Secondly we define four gadgets in order to handle the three regions
possibly traversed, namely $\textsc{Lower-Left}$,$\textsc{Lower-Right}$, and $\textsc{Upper-Right}$.
\newline

\noindent
{\em Case 1. Checking reachability in the $\textsc{Lower-Left} $ region.}\\

Here the region is necessarily reached from bottom side as $y$ was the last clock to be reset.
Moreover, as clocks progress at the same rate, the region is necessarily exited in the 
right corner (or both in the right and upper corner).
Here
$\C$ checks that $q_{k}(N-1,N-1-z)$ is reachable from $s_k(z+1,1)$, i.e.
$\C$ checks that $(N -1) - (z +1) \in a+b\N$ 
which in turn is equivalent to checking if 
  $z +2 -N +a  =  -n \cdot b$ for some $n\in\N$.
Figure~\ref{xyz} for an illustration of the trajectories of the counter values.
In order to restore the value $z$ the POCA $\C$ does this by a carefully chosen
gadget shown in Figure~\ref{gadget reachability 1}.
Since $(z+1,1) \in \textsc{Lower-Left}$ it follows $z\in[0,N-2]$,
thus the counter value along the gadget stays inside the interval 
$[-(N-2),\max(N,a)]$.

\begin{center}
	\begin{figure}
				\hspace{1cm}
\includegraphics[width=0.8\textwidth]{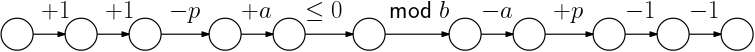}
	\caption{Gadget testing reachability for Case 1.}
	\label{gadget reachability 1}
	\end{figure}
\end{center}

\noindent
{\em Case 2. Checking reachability in the $\textsc{Lower-Right} $ region when reached from the bottom side.}\\

Here the region is necessarily exited in the top side, and
we will show how 
$\C$ can check that $q_{k}(N+z-1,N-1)$ is reachable from $s_k(z+1,1)$
and then restore $z$.
Indeed, since $y$ was the last clock that was reset,
due our convention $z\in[-(N+1),N+1]$ and by our case we must have $z+1\in\{N+1,N+2\}$,
and therefore $z\in\{N,N+1\}$.
Our POCA distinguishes the two cases $z=N$ and $z=N+1$ explicitly as follows.
To check that that $q_{k}(N+z-1,N-1)$ is reachable from $s_k(z+1,1)$
we need to test if $N-2\in a+b\N$.
Our POCA $\C$ first tests if $z$ equals $N$ or if $z$ equals $N+1$,
then does the test by a carefully chosen sequence of operations
that allow to restore the counter value $z\in\{N,N+1\}$
as can be seen in the gadget in Figure~\ref{gadget reachability 2}.
Since $(z+1,1) \in \textsc{Lower-Right}$ the counter value along the gadget 
stays inside the interval $[-(a+2),N+1]$.

\begin{center}
	\begin{figure}
				\hspace{1cm}
\includegraphics[width=0.8\textwidth]{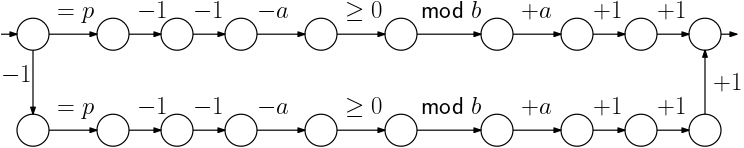}
	\caption{Gadget testing reachability for Case 2.}
	\label{gadget reachability 2}
	\end{figure}
\end{center}

\noindent
{\em Case 3. Checking reachability in the $\textsc{Lower-Right} $ region when reached from the left side.}\\

Here the region is necessarily exited in the top side, and
$\C$ checks that $q_{k}(N+z-1,N-1)$ is reachable from $s_k(N+1,N+1-z)$
, i.e.
$\C$ checks that $  (z + N-1) - (N+1) \in a + b\N$
or equivalently if $z-2\in a+b\N$.
Since $(N+1,N+1-z)\in\textsc{Lower-Right}$ it follows $z\in[0,N]$.
Again by a carefully chosen sequence of operations
that allow to restore the counter value $z\in[0,N]$
we can realize this test as seen in the gadget in Figure~\ref{gadget reachability 3}.
Since $(N+1,N+1-z) \in \textsc{Lower-Right}$ the counter value along the gadget 
stays inside the interval $[-(a+2),N]$.

\begin{center}
	\begin{figure}
			\hspace{2.8cm}
\includegraphics[width=0.6\textwidth]{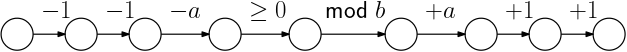}
	\caption{Gadget testing reachability for Case 3.}
	\label{gadget reachability 3}
	\end{figure}
\end{center}

No other region is reachable from $\textsc{Upper-Right}$. Moreover, if $y$ was among the last clocks to be reset, as the clocks valuations increment at the same rate, region  $ \textsc{Upper-Left} $ is not reachable. Thus the three treated above cases conclude the question of reachability inside a region.
Next, in order to test whether or not it is possible to reach $\R_{k+1}$ in state $s_{k+1}$ from $\R_k$ and state $q_k$, we check whether or not 
 in $\B$
there exists some $+1$ rule of the form $(q_k,g,\emptyset,s_{k+1})$ such that $\R_k\models g$
(and there is hence a corresponding rule in $\B_{\R_k}$). \newline

To finish the construction our POCA $\C$ needs to be able to simulate clock resets in an $N$-run in $\B$. 
The process will depend on the guessed region $\R_l$ in which the reset is assumed to occur. 
For $\R_l$ of the form 
$(\alpha,\beta)$ 
with $\alpha,\beta \in \{0,N\}$, the precise value of each clock is known: if 
$x$ is the next clock to be reset, then the new counter value should be $-v(y)$, i.e. $-\beta$, 
and if $y$ is the next
clock to be reset, then the new counter value should be $v(x)$, i.e. $\alpha$.
For $\R_l$ of the form 
$(\alpha,\beta) \leftrightarrow (\gamma,\beta)$,
with $\alpha,\beta \in \{0,N\}$, and
with $\gamma \in \{0,N, +\infty\}$,
the precise value of each clock again is known:  if $x$ is the next clock to be reset, then 
the new counter value should be $-v(y)$, i.e. $- \beta$. If $y$ is the next
clock to be reset, then the new counter value should be $v(x)$, which, when $z$ is the 
value of $x$ when $y$ was last reset, is equal to $z$ plus the value of $y$, i.e. $z+\beta$. 
If $z$ has has absolute value at most $N$, then
 $z+N$ has absolute value at most $2 \cdot N$. 
 We thus test whether or not the absolute value of $z+\beta$'s exceeds $N+1$ or not, and, 
 if it is the case, we set it to $N+1$ before performing any other operation. 

 The case when $\R_l$ is of the form 
$(\alpha,\beta) \leftrightarrow (\alpha,\delta)$
with $\alpha,\beta \in \{0,N\}$, and
with $\delta \in \{0,N, +\infty\}$
is only possible if $\alpha=\delta=N$ (we refer to Figure~\ref{xyz}) and is done as follows.
 The case when $y$ is the next clock to be reset is again easy, we
 set the new counter value to $N$.
If $x$ is the next
clock to be reset, then the new counter value should be $-v(y)$.
To do so, observe that $v(y)$, when $z$ was the 
value of $x$ when $y$ was last reset, is equal to $N-z$, thus 
the new counter value should be $-(N-z)=z-N$. 
If already $z = -(N+1)$, we do not add anything.
Since $z\in[0,N+1]$ by our case the new counter value has absolute value at
most $N$.

Observe that since we have assumed without loss of generality that
$y$ was the last clock to be reset, we cannot have a reset inside
the region $\textsc{Upper-Left}$.
Thus, it remains to simulate resets in the regions 
$\textsc{Lower-Left} $, $\textsc{Lower-Right}$, and $\textsc{Upper-Right}$.
For this observe that the precise value of each clock is not known, but it is feasible to nondeterministically guess the value of the clocks when the reset occurs, based on the region and whether it was reached from the bottom side or the left side.
This case distinction allows us to know the exact starting clock valuation $v_l$ of the $\R_l$-restricted run preceding the reset. From this, we guess an element $t$ of $\bigcup_{1 \leq j \leq r_l} a_{l,j} + b_{l,j} \N $ to increment 
the clock valuation by $t$ in such a way that $v_l + t \in \R_l$. 
We will distinguish which of the two clocks $x$ and $y$ will be reset next.
\newline

\noindent
{\em Case 1. Simulating resets in the $\textsc{Lower-Left} $ region.}\\

Let us first discuss the case when $y$ (and only $y$) is the next clock to be reset.
In this case $\C$ nondeterministically guesses a configuration
$q (z+1 + \delta,1+ \delta)$
with 
$z+1 + \delta \leq N - 1$ 
reachable from 
$s_l (z+1,1)$, i.e. $\delta \in \bigcup_{1 \leq j \leq r_l} a_{l,j} + b_{l,j} \N $. 
To do that $\C$ adds a number of the form $1+a+ b\cdot n$ for some $n\in\N$ to the counter
and checks that it is at most $N-1$, as seen in Figure~\ref{gadget reset 1}.
We remark that counter values along this gadget
stay inside $[0,N-1]$.

Let us now discuss the case when $x$ (and only $x$) is the next clock to be reset.
In this case $\C$ nondeterministically
establishes a counter value of the form $-\delta-1$ such that $-(\delta+1)\geq z-N+1$, 
where $\delta=a+b\cdot n$ for some $n\in\N$, as seen in Figure~\ref{gadget reset 1 x}
We remark that the counter values along this gadget
stay inside $[-(N-1),N-1]$.

The case when $x$ and $y$ are next to be reset simultaneously can be done analogously by setting the new counter
to $0$ and is not discussed in detail here.

\begin{center}
	\begin{figure}
				\hspace{4.5cm}
\includegraphics[width=0.3\textwidth]{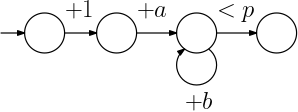}
	\caption{A gadget implementing a reset of clock $y$ in the Case 1.}
	\label{gadget reset 1}
	\end{figure}
\end{center}

\begin{center}
	\begin{figure}
				\hspace{2.8cm}
\includegraphics[width=0.6\textwidth]{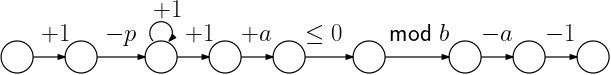}
	\caption{A gadget implementing a reset of clock $x$ in the Case 1.}
	\label{gadget reset 1 x}
	\end{figure}
\end{center}

\noindent
{\em Case 2. Simulating resets in the $\textsc{Lower-Right} $ region when reached from the left side.}\\

Let us first discuss the case when $y$ (and only $y$) is the next clock to be reset.
In this case our POCA $\C$ nondeterministically guesses a configuration $q(N +1 + \delta,N+1+\delta-z)$
with 
$N+1+ \delta -z\leq N - 1 $ 
reachable from $s_l(N+1,N+1-z)$, i.e. where $\delta  \in \bigcup_{1 \leq j \leq r_l} a_{l,j} + b_{l,j} \N $
is of the form $a + b \cdot n $ with $a,b,n \in \N$. Then $\C$ will have counter value
$N +1 + \delta > N$, and thus  
$\C$ sets the counter value to $ N+1$.
To do that, $\C$ works as seen in Figure~\ref{gadget reset 2}.
We remark that the counter values along this gadget
stay inside $[-1,2N]$. 

Let us now discuss the case when $x$ (and only $x$) is the next clock to be reset.
In this case our POCA $\C$ establishes the new counter
value $z-\delta-N-1$, realized by the gadget seen in Figure~\ref{gadget reset 2 x}.
We remark that the counter values along this gadget
stay inside $[-(N-1),N+1]$. 

The case when $x$ and $y$ are next to be reset simultaneously can be done analogously by setting the new counter
to $0$ and is not discussed in detail here.

\begin{center}
	\begin{figure}
			\hspace{0.4cm}
\includegraphics[width=0.9\textwidth]{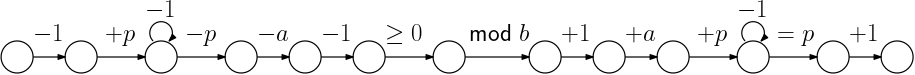}
	\caption{A gadget implementing a reset of clock $y$ in the Case 2.}
	\label{gadget reset 2}
	\end{figure}
\end{center}

\begin{center}
	\begin{figure}
			\hspace{2.1cm}
\includegraphics[width=0.7\textwidth]{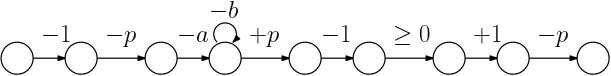}
	\caption{A gadget implementing a reset of clock $x$ in the Case 2.}
	\label{gadget reset 2 x}
	\end{figure}
\end{center}

\noindent
{\em Case 3. Simulating resets in the $\textsc{Lower-Right} $ region when reached from the bottom side.}\\

Let us first discuss the case when $y$ (and only $y$) is the next clock to be reset.
In this case our POCA $\C$ nondeterministically guesses a configuration
$s_l (z+1 + \delta,1+ \delta)$
with 
$1 + \delta \leq N-1$ 
reachable from 
$s_l (z+1,1)$.
We
 need to check that there exists  $\delta  \in \bigcup_{1 \leq j \leq r_l} a_{l,j} + b_{l,j} \N $ which moreover satisfies the inequality
$1+\delta \leq N-1$, or equivalently 
$z+1+\delta \leq N-1+z$
. Moreover, as by assumption $z \leq N+1$, and moreover $(z+1,1) \in \textsc{Lower-Right}$, 
we must have $z \in \{ N, N+1\}$.
Our POCA distinguishes the two cases $z=N$ and $z=N+1$ explicitly similarly
as checking reachability in the \textsc{Lower-Right} region when reached from
the bottom side. 
The gadget can be found in Figure~\ref{gadget reset 3}.
We remark that the counter values along this gadget
stay inside $[1,2N]$. 

Let us now discuss the case when $x$ (and only $x$) is the next clock to be reset.
The gadget can be found in Figure~\ref{gadget reset 3 x}.
We remark that the counter values along this gadget
stay inside $[-(N-1),N+1]$.\newline

The case when $x$ and $y$ are next to be reset simultaneously can be done analogously by setting the new counter
to $0$ and is not discussed in detail here.\newline

\begin{center}
	\begin{figure}
			\hspace{1cm}
\includegraphics[width=0.7\textwidth]{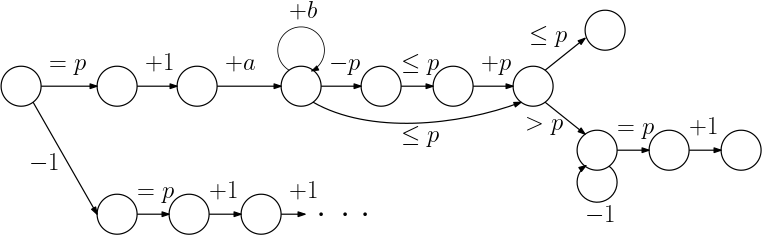}
		\caption{A gadget implementing a reset of clock $y$ in the Case 3 with details for the case 
		$z=N$. The $\cdots$ corresponds to the case $z=N+1$ and works the same way.}
	\label{gadget reset 3}
	\end{figure}
\end{center}

\begin{center}
	\begin{figure}
				\hspace{4.4cm}
\includegraphics[width=0.3\textwidth]{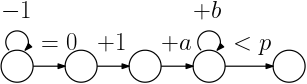}
	\caption{A gadget implementing a reset of clock $x$ in the Case 3.}
	\label{gadget reset 3 x}
	\end{figure}
\end{center}

\noindent
{\em Case 4. Simulating resets in the $\textsc{Upper-Right} $ region.}\\

Here by definition of the region the values of the clocks are above $ N +1 $ and hence again
their precise
value is not relevant, only the existence of a way to reach the configuration when the
reset occurs.
Here we precompute in our reduction whether $\bigcup_{1 \leq j \leq r_l} a_{l,j} + b_{l,j} \N $
is not empty, and then set the counter to $N+1$ (if $y$ is to next to be reset)
and to $-(N+1)$ (if $x$ is next to be reset)
and to $0$ if both are to be reset.

We notice that for each gadget implementation for testing reachability inside a region and for implementing the 
resets of clock $x$, clock $y$ or both simultaneously, the value of the counter stays
inside the interval $[ - 2 \cdot \max(a+2,N), 2 \cdot \max(a+2,N)]$, 
where $a$ is the value of the offset used in the gadget.

Checking reachability and simulating resets when $x$ was the last clock to be reset, instead of $y$, works 
again in a symmetrical way and can be dually shown to be such that the value of the counter stays inside 
the same interval.  
Testing reachability of a guessed final state inside a region works the same way as the implementation of a reset in the region, with $\C$ guessing a final control state in which the computation ends instead of a control
state in which the next reset occurs.

\begin{center}
	\begin{figure}
				\hspace{2.5cm}
\includegraphics[width=0.5\textwidth]{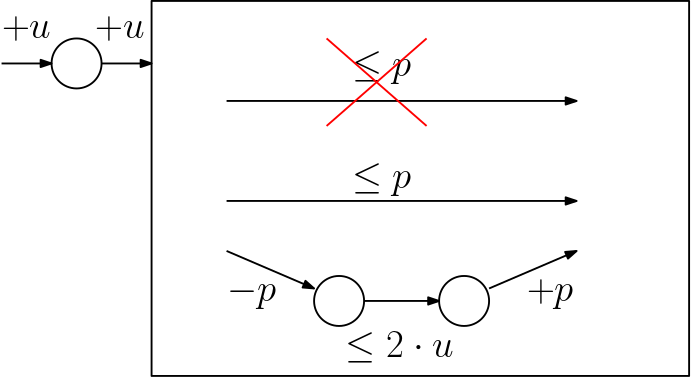}
	\caption{A gadget for adjusting a $\leq p$-test when intially offsetting the counter by $2\cdot u$.}
	\label{non-negativity}
	\end{figure}
\end{center}

Finally we show how to achieve non-negativity. 
First, our final automaton checks  whether or not the value
$N$  is greater than $(2+ c_{\max})$, where $c_{\max}$ 
is the maximal of all offsets $a_{k,j}$ and all periods $b_{k,j}$ used in any gadget.
Then, fixing $ u = \max(c_{\max}+2,N)$, we transition into a new POCA obtained
from the POCA described above (the construction where we allowed the counter to take negative
values)
 by first adding two $+u$ gadgets before entering the initial state, as seen in Figure~\ref{non-negativity}.
Furthermore, any comparison operation $\leq p$ (resp. $\leq c$) is replaced by a
gadget as seen in Figure~\ref{non-negativity},
using an appropriate adjusted gadget for $\leq (2 \cdot u)$ comparison.
Comparisons of the form $> p$, $=p$, $<p$, and
$ \leq p$ (resp. $> c$, $=c$, $<c$, and
$ \leq c$) are performed in an analogous manner.

Finally, for any modulo test, to simulate a $ \mod \ b$ rule, we have two parallel branches, 
\begin{itemize}
\item firstly a $\geq (2 \cdot u)$ comparison followed by determining the residual modulo $b$ of the current counter value, say $r_1$, using the state space (by repeatedly substractiong at most $b$ from the counter, performing $\mod \ b$, then adding the same amount as substracted), then substracting $u$, then determining the new residual modulo $b$, say $r_2$, keeping track of it using the state space too (by repeatedly substracting at most $b$ to the counter, then
	performing $\mod \ b$, and then adding the same amount as substracted), 
\item secondly a $\leq (2 \cdot u)$ comparison,
followed by a similar gadget but where instead of using a $-u$ operation, we use
a $+u$ operation and instead of substracting at most $b$, adding at most $b$.
\end{itemize}
We then compare the two residual $r_1$ and $r_2$ stored in the state space, and check whether or not
$r_1 - 2 \cdot (r_1 - r_2)$, the residual the counter value would have had without the $2 \cdot u$ offset, 
is equal to $0$ (in the state space), before restoring the counter value to the value it had before entering the gadget. 
Notice that this enforces that the value of the counters stays between 
$0$ and $4 \cdot (\max (N, (2+c_{\max}) ) $, and
by observing that $|\C|\geq 2+c_{\max}$, this enforces that the counter value stays 
between $0$ and $4 \cdot (\max (N, |\C|) ) $.

\end{document}